\newcommand{\vlong }[1]{#1}
\newcommand{\vshort}[1]{}
\newtcbox{\resbox}{nobeforeafter,tcbox raise base,boxrule=0.4pt,top=0mm,bottom=0mm,
  right=0mm,left=0mm,arc=1pt,boxsep=2pt,before upper={\vphantom{dlg}},
  colframe=green!50!black,coltext=green!25!black,colback=green!8!white,
  overlay={}
    }
\newtcbox{\blabox}{nobeforeafter,tcbox raise base,boxrule=0.4pt,top=0mm,bottom=0mm,
  right=0mm,left=0mm,arc=2pt,boxsep=2pt,before upper={\vphantom{dlg}},
  colframe=black!50,coltext=black,colback=white,
  overlay={}
    }    
\newcommand{\etienne}[1]{\todo[color=yellow]{#1}}
\renewcommand{\etienne}[1]{}
\newcommand \seq[2]{\shortstack{$#1$ \\ \mbox{}\\
                    \mbox{}\hrulefill\mbox{}\\ \mbox{}\\ $#2$}}
\renewcommand{\seq}[2]{\infer{#2}{#1}}
\newcommand{\tmu}{{\tilde\mu}}
\newcommand{\cut}[2]{\perfectcut{#1}{#2}}
\newcommand{\imp}{\rightarrow}
\newcommand{\defeq}{\triangleq}
\newcommand{\automath}[1]{\relax\ifmmode{#1}\else{$#1$}\fi}
\newcommand{\id}{\textrm{id}}
\newcommand{\oset}[3][0ex]{%
  \mathrel{\mathop{#3}\limits^{
    \vbox to#1{\kern-2\ex@
    \hbox{#2}\vss}}}}
\newcommand{\red}{\rightarrow}
\newcommand{\negt}{{\bot\!\!\!\!\!\bot}}
\newcommand{\C}{\mathcal{C}}
\newcommand{\V}{\mathcal{V}}
\renewcommand{\S}{\mathcal{S}}
\newcommand{\pole}{{\bot\!\!\!\bot}}
\newcommand{\orth}{{\pole}}
\newcommand{\prfcasetext}{Case}
\newcommand{\prfcase}[1]{\paragraph*{\normalfont \textbf{\prfcasetext} #1.}}
\newcommand{\prfcases}[1]{\paragraph*{\normalfont \textbf{{\prfcasetext}s} #1.}}
\newcommand{\tvn}[2]{|#2|_{#1}}
\newcommand{\tvv}[1]{\tvn{v}{#1}}
\newcommand{\real}{\Vdash}
\newcommand{\tr}[1]{\llbracket #1\rrbracket}
\newcommand{\union}[2]{#1\sqcup #2}
\newcommand{\lcut}[1]{{\mbox{$\langle$}} {#1} {|}}
\newcommand{\rcut}[1]{{|} {#1} {\mbox{$\rangle$}}}
\newcounter{rep}
\newtheorem{repthm}[rep]{Theorem}{\bfseries}{\rmfamily}
 {\end{repthm}}
\newtheorem{replm}[rep]{Lemma}{\bfseries}{\rmfamily}
 {\end{replm}}
\newtheorem{repprop}[rep]{Proposition}{\bfseries}{\rmfamily}
 {\end{repprop}}
\newtheorem{reprmk}[rep]{Remark}{\bfseries}{\rmfamily}
 {\end{reprmk}} 
\newtheorem{repex}[rep]{Example}{\bfseries}{\rmfamily}
\newenvironment{repexample}[1]{%
\setcounter{rep}{#1}
\addtocounter{rep}{-1}
 \begin{repex}}%
 {\end{repex}}  
\newtheorem{repdef}[rep]{Definition}{\bfseries}{\rmfamily}
 {\end{repdef}}
\newtheorem{basetheorem    }[theorem]{Theorem}{\bfseries}{\rmfamily}
\newtheorem{baselemma      }[theorem]{Lemma}{\bfseries}{\rmfamily}
{\bfseries}{\rmfamily}
\newtheorem{baseremark     }[theorem]{Remark}{\bfseries}{\rmfamily}
\newtheorem{baseexample    }[theorem]{Remark}{\bfseries}{\rmfamily}
\newtheorem{basedefinition }[theorem]{Definition}{\bfseries}{\rmfamily}
\newenvironment{mycenter}{%
  \setlength\topsep{1pt}
  \setlength\parskip{1pt}
  \begin{center}
}{%
  \end{center}
}
\newcommand{\lmmt}{\automath{\lambda\mu\tmu}}
\newcommand{\myhref}[2]{\href{#1}{#2}} 
\newcommand{\trth}{\texttt{tr}}
\newcommand{\minf}{\bigcurlywedge}
\newcommand{\msup}{\bigcurlyvee}
\renewcommand{\sup}{\curlyvee}
\newcommand{\myleq}{\preccurlyeq}
\newcommand{\mygeq}{\succcurlyeq}
\let\oldleq\leq
\renewcommand{\leq}{\myleq}
\renewcommand{\geq}{\mygeq}
\newcommand{\revord}{\triangleleft}
\newcommand{\tensor}{\otimes}
\newcommand{\ar}{\rightarrow}
\newcommand{\sep}{\mathcal{S}}
\newcommand{\psep}{\mathcal{S}^\parr}
\newcommand{\tsep}{\mathcal{S}^\tensor}
\newcommand{\uset}[3][0ex]{%
  \mathrel{\mathop{#3}\limits^{
    \vbox to #1{\kern-2\ex@
    \hbox{\tiny $#2$}\vss}}}}
\newcommand{\art}{\uset{\text{\tiny \!\!$\tensor$}}{\ar}}
\newcommand{\arp}{\uset{\parr}{\ar}}
\newcommand{\bs}[1]{\bm{s}_{#1}}
\renewcommand{\k}{{\bm k}}
\newcommand{\cc}{\mathsf{\bm{cc}}}
\newcommand{\ps}[1]{\bs{#1}^{\scriptscriptstyle \! \parr}}
\newcommand{\ts}[1]{\bs{#1}^{\scriptscriptstyle  \tensor}}
\renewcommand{\L}{\text{L}}
\newcommand{\T}{\mathcal{T}}
\newcommand{\fval}[1]{\|#1\|}
\newcommand{\fv}[1]{\fval{#1}}
\newcommand{\tval}[1]{|#1|}
\newcommand{\tv}[1]{\tval{#1}}
\newcommand{\fvval}[1]{\|#1\|_V}
\newcommand{\tvval}[1]{|#1|_V}
\NewDocumentEnvironment{xdefinition}{m m o} 
{
 \spnewtheorem{mydef}[theorem]{#1{#2}{Definition}}{\bf}{}
 \IfNoValueTF {#3}
  {\begin{mydef}}
  {\begin{mydef}[#3]}
}
{\end{mydef}}
\NewDocumentEnvironment{xexample}{m m o}
{

\spnewtheorem{myex}[theorem]{#1{#2}{Example}}{}{}
 \IfNoValueTF {#3}
  {\begin{myex}}
  {\begin{myex}[#3]}
  }
  {\end{myex}
  }
\NewDocumentEnvironment{xlemma}{m m o}
{
\theoremstyle{plain}
\spnewtheorem{mylem}[theorem]{#1{#2}{Lemma}}{}{}
 \IfNoValueTF {#3}
  {\begin{mylem}}
  {\begin{mylem}[#3]}
  }
  {\end{mylem}}
\NewDocumentEnvironment{xproposition}{m m o}
{
\let\c@myprop\relax
\relax\c@myprop
\@ifundefined{c@myprop}{\spnewtheorem{myprop}[theorem]{#1{#2}{Proposition}}{}{}}
 \IfNoValueTF {#3}
  {\begin{myprop}}
  {\begin{myprop}[#3]}
  }
  {\end{myprop}
  }    
\NewDocumentEnvironment{xcorollary}{m m o}
{
\spnewtheorem{mycor}[theorem]{#1{#2}{Corollary}}{}{}
 \IfNoValueTF {#3}
  {\begin{mycor}}
  {\begin{mycor}[#3]}}
  {\end  {mycor}}
\NewDocumentEnvironment{xtheorem}{m m o}
{
\spnewtheorem{mythm}[theorem]{#1{#2}{Theorem}}{}{}
 \IfNoValueTF {#3}
  {\begin{mythm}}
  {\begin{mythm}[#3]}
  }
  {\end{mythm}
  } 
\NewDocumentEnvironment{xremark}{m m o}
{

\spnewtheorem{myrmk}[theorem]{#1{#2}{Remark}}{}{}
 \IfNoValueTF {#3}
  {\begin{myrmk}}
  {\begin{myrmk}[#3]}
  }
  {\end{myrmk}
  }
\newcommand{\trurl}[2]{  \myhref{https://emiquey.gitlab.io/ImplicativeAlgebras/ImpAlg.Duality.html\##1}{#2}}
\newcommand{\iaurl}[2]{ \myhref{https://emiquey.gitlab.io/ImplicativeAlgebras/ImpAlg.ImplicativeAlgebras.html\##1}{#2}}
\newcommand{\purl}[2]{  \myhref{https://emiquey.gitlab.io/ImplicativeAlgebras/ImpAlg.ParAlgebras.html\##1}{#2}}
\newcommand{\lpurl}[2]{ \myhref{https://emiquey.gitlab.io/ImplicativeAlgebras/ImpAlg.LPar.html\##1}{#2}}
\newcommand{\turl}[2]{  \myhref{https://emiquey.gitlab.io/ImplicativeAlgebras/ImpAlg.TensorAlgebras.html\##1}{#2}}
\newcommand{\lturl}[2]{ \myhref{https://emiquey.gitlab.io/ImplicativeAlgebras/ImpAlg.LTensor.html\##1}{#2}}
\newcommand{\comb}[1]{\textbf{\textsc{#1}}}
\newcommand{\HA}{\text{\bf HA}}
\newcommand{\Set}{\text{\bf Set}}
\newcommand{\op}{^{op}}
\renewcommand{\H}{\mathcal{H}}
\newcommand{\Cop}{\C\op}
\newcommand{\E}{\mathcal{E}}
\renewcommand{\P}{\mathcal{P}}
\newcommand{\noem}{\vspace{-1em}}
\newcommand{\A}{\mathcal{A}}
\newcommand{\B}{\mathcal{B}}
\newcommand{\sleq}{\oldleq_{\H}}
\newcommand{\uleq}{\oldleq_{\usep}}
\renewcommand{\seq}{\cong_{\sep}}
\newcommand{\bigfa }{\mathop{\raisebox{-0.1em}{\mbox{\Large $\mathsurround0pt{\boldmath{\forall}}$}}}} 
\newcommand{\hugefa}{\mathop{\raisebox{-0.3em}{\mbox{\huge $\mathsurround0pt{\boldmath{\forall}}$}}}}
\newcommand{\bigex }{\mathop{\raisebox{-0.1em}{\mbox{\Large $\mathsurround0pt\exists$}}} }
\newcommand{\hugeex}{\mathop{\raisebox{-0.3em}{\mbox{\huge $\mathsurround0pt{\boldmath{\exists}}$}}} }
\newcommand{\Lat}{\mathcal{L}}
\newcommand{\I}{\mathcal{I}}
\renewcommand{\H}{{\mathcal{H}}}
\newcommand{\basearrow}{\longrightarrow}
\newcommand{\bmark}{\beta     }
\newcommand{\bred} {\basearrow_{\bmark}}
\newcommand{\shorteq}{%
  \settowidth{\@tempdima}{-}
  \resizebox{\@tempdima}{\height}{=}%
}
\newcommand{\myfig}[1]{\framebox{\vbox{#1}}}
\renewcommand{\myfig}[1]{{#1}\hrule~\\[-1.2em]}
\newcommand{\FV}{{FV}}
\newcommand{\nomidem}{\vspace{-0.5em}}
\newcommand{\Prop}{\mathop{\texttt{Prop}}}
\newcommand{\N}{{\mathbb{N}}}
\newcommand{\Fork}{{\pitchfork}}
\let\oldleq\leq
\renewcommand{\leq}{\myleq}
\renewcommand{\geq}{\mygeq}
\renewcommand{\k}{{\bm k}}
\newcommand{\usep}{{\sep[I]}}
\renewcommand{\L}{\text{L}}
\newcommand  {\Lpar}{\automath{\L^{\!\parr}}}
\newcommand{\Ltens}{\automath{\L^{\!\tensor}}}
\newcommand{\autorule}[1]{\relax\ifmmode{\scriptstyle(#1)}\else$(#1)$\fi}
\newcommand{\cutrule}{\autorule{\textsc{Cut} }}
\newcommand{\axrrule}{\autorule{\textsc{Ax}_r}}
\newcommand{\axlrule}{\autorule{\textsc{Ax}_l}}
\newcommand{\falrule}{\autorule{\forall_l         }}
\newcommand{\farrule}{\autorule{\forall_r         }}
\newcommand{\negrrule  }{\autorule{\neg_r   }}
\newcommand{\neglrule  }{\autorule{\neg_l   }}
\newcommand{\cord}{\unlhd}
\newcommand\footnoteref[1]{\protected@xdef\@thefnmark{\ref{#1}}\@footnotemark}
\renewcommand{\usep}{{\sep[I]}}
\title{Revisiting the Duality of Computation: an Algebraic Analysis of Classical Realizability Models}
\titlerunning{Revisiting the Duality of Computation}
\author{Étienne Miquey}{Équipe Gallinette, INRIA\\ LS2N, Université de Nantes}{etienne.miquey@inria.fr}{}{}
\authorrunning{É. Miquey}
\keywords{realizability, model theory, forcing, proofs-as-programs, $\lambda$-calculus, classical logic, duality, call-by-value, call-by-name, lattices, tripos}
\newenvironment{myproof}{\comment}{\endcomment}
\newlength{\myl}
\newlength{\mysep}
\newcommand{\twoelt}[2]{
  \settowidth{\myl}{$#1$}
  \setlength{\mysep}{0.5\textwidth}
  \addtolength{\mysep}{-\myl}
  \addtolength{\mysep}{-15pt}
\[
  #1
  \hspace{\mysep}
  #2
\]
}
\begin{document}
\vlong{\hideLIPIcs}

\maketitle

\begin{abstract}
In an impressive series of papers, Krivine showed at the edge of the last decade
how classical realizability provides a surprising technique to build models for classical theories.
In particular, he proved that classical realizability subsumes Cohen's forcing, and even more, gives
rise to unexpected models of set theories. 
Pursuing the algebraic analysis of these models that was first undertaken by Streicher,
Miquel recently proposed to lay the algebraic foundation of classical realizability and forcing 
within new structures which he called \emph{implicative algebras}.
These structures are a generalization of Boolean algebras based on an internal law representing the implication. Notably, implicative algebras
allow for the adequate interpretation of both programs (i.e. proofs) and their types (i.e. formulas) in the same
structure. 

The very definition of implicative algebras takes position on a presentation of logic 
through universal quantification and the implication and, computationally, relies on the call-by-name $\lambda$-calculus.
In this paper, we investigate the relevance of this choice, 
by introducing two similar structures.
On the one hand, we define \emph{disjunctive algebras}, which rely on internal
laws for the negation and the disjunction and which we show to be particular cases of implicative algebras.
On the other hand, we introduce \emph{conjunctive algebras}, which rather put the focus on conjunctions
and on the call-by-value evaluation strategy. We finally show how disjunctive and conjunctive algebras 
algebraically reflect the well-known duality of computation between call-by-name and call-by-value.

\end{abstract}



\section{Introduction}

It is well-known since Griffin's seminal work~\cite{Griffin90} that
a classical Curry-Howard correspondence can be obtained by adding control operators to the $\lambda$-calculus.
Several calculi were born from this idea, amongst which Krivine $\lambda_c$-calculus~\cite{Krivine09},
defined as the $\lambda$-calculus extended with Scheme's \texttt{call/cc} operator (for \emph{call-with-current-continuation}).
Elaborating on this calculus, Krivine's developed in the late 90s the theory of \emph{classical realizability}~\cite{Krivine09},
which is a complete reformulation of its intuitionistic twin.
Originally introduced to analyze the computational content of classical programs,
it turned out that classical realizability also provides interesting semantics for classical theories.
While it was first tailored to Peano second-order arithmetic (\emph{i.e.} second-order type systems),
classical realizability actually scales to more complex classical theories like ZF~\cite{Krivine11},
and gives rise to surprisingly new models.
In particular, its generalizes Cohen's forcing~\cite{Krivine11,Miquel11} and
allows for the direct definition of a model in which neither the continuum hypothesis nor the axiom of choice holds~\cite{Krivine14}.

\subparagraph*{Algebraization of classical realizability}
During the last decade, the algebraic structure of the models that classical realizability induces
has been actively studied.
This line of work was first initiated by Streicher, who proposed the concept of \emph{abstract Krivine structure}~\cite{Streicher13},
followed among others by Ferrer, Frey, Guillermo, Malherbe and Miquel who introduced other structures peculiar to
classical realizability~\cite{FerrerEtAl15,FerrerEtAl17,FerMal17,Frey15,Frey16,VanOosZou16}. 
In addition to the algebraic study of classical realizability models, these works had the interest of building
the bridge with the algebraic structures arising from intuitionistic realizability.
In particular, Streicher showed in \cite{Streicher13} how classical realizability could be analyzed in terms of \emph{triposes}~\cite{Pitts02},
the categorical framework emerging from intuitionistic realizability models, 
while the later work of Ferrer \emph{et al.}~\cite{FerrerEtAl15,FerrerEtAl17}
connected it to Hofstra and Van Oosten's notion of \emph{ordered combinatory algebras}~\cite{HofOos03}.
More recently, Alexandre Miquel introduced the concept of \emph{implicative algebra}~\cite{Miquel17},
which appear to encompass the previous approaches and which we present in this paper.

\subparagraph*{Implicative algebras}
In addition to providing an algebraic framework conducive to the analysis of classical realizability,
an important feature of implicative structures is that they allow us to identify \emph{realizers} (\emph{i.e.} $\lambda$-terms)
and \emph{truth values} (\emph{i.e.} formulas).
Concretely, implicative structures are complete lattices equipped with a binary operation $a\to b$ satisfying properties coming from the logical implication.
As we will see, they indeed allow us to interpret both the formulas and the terms in the same structure.
For instance, the ordering relation $a \oldleq b$ will encompass different intuitions 
depending on whether we regard $a$ and $b$ as formulas or as terms. Namely, $a\oldleq b$ will be given the following meanings:
\begin{itemize}
 \item the formula $a$ is a \emph{subtype} of the formula $b$;
 \item the term $a$ is a \emph{realizer} of the formula $b$;
 \item the realizer $a$ is \emph{more defined} than the realizer $b$.
\end{itemize}
In terms of the Curry-Howard correspondence, this means that we not only identify types with formulas and proofs with programs, 
but \emph{we also identify types and programs}.

\subparagraph*{{Side effects}} 
Following Griffin's discovery on control operators and classical logic,
several works have renewed the observation that 
within the proofs-as-programs correspondence, with side effects
come new reasoning principles~\cite{Krivine03,JabSozTab12,Miquel11b,Herbelin10,JaberEtAl16}.
More generally, it is now clear 
that computational features of a calculus may have consequences on the models \etienne{to check} it induces.
For instance, computational proofs of the axiom of dependent choice
can be obtained by adding a \texttt{quote} instruction~\cite{Krivine03},
using memoisation~\cite{Herbelin12,Miquey18a}
or with a bar recursor~\cite{Krivine16}.
Yet, such choices may also have an impact on the structures of the corresponding
realizability models: 
the non-deterministic operator $\Fork$ is known to make
the model collapse on a forcing situation~\cite{Krivine12}, while
the bar recursor requires some continuity properties~\cite{Krivine16}.

If we start to have a deep understanding of the algebraic structure of classical realizability models,
the algebraic counterpart of side effects on these structures is still unclear.
As a first step towards this problem, it is natural to wonder:
does the choice of an evaluation strategy have algebraic consequences on realizability models?
This paper aims at bringing new tools for addressing this question.

\subparagraph*{Outline of the paper}
We start by recalling the definition of Miquel's implicative algebras and their
main properties in \Cref{s:ia}.
We then introduce the notion of \emph{disjunctive algebras} in \Cref{s:da},
which naturally arises from the negative decomposition of the implication $A\to B = \neg A \parr B$.
We explain how this decomposition induces realizability models based on a call-by-name
fragment of Munch-Maccagnoni's system L~\cite{Munch09}, and we show  that disjunctive algebras are in fact 
particular cases of implicative algebras.
In \Cref{s:ta}, we explore the positive dual decomposition $A\to B = \neg(A \tensor \neg B)$,
which naturally corresponds to a call-by-value fragment of system L. 
We show the corresponding realizability models naturally induce a notion of \emph{conjunctive algebras}.
Finally, in \Cref{s:du} we revisit the well-known duality of computation through this algebraic structures.
In particular, we show how to pass from conjunctive to disjunctive algebras and vice-versa,
while inducing isomorphic triposes.
  ~\\[-0.5em]
\par\emph{\small 
\vshort{Most of the proofs have been formalized in the Coq proof assistant, 
in which case their statements include hyperlinks to
their formalizations\footnote{Available at \url{https://gitlab.com/emiquey/ImplicativeAlgebras/}}.}
\vlong{Proofs, as well as further details on the different constructions presented in this paper, are given in appendices. Most of them have been formalized in the Coq proof assistant, 
in which case their statements include hyperlinks to
their formalizations\footnote{Available at \url{https://gitlab.com/emiquey/ImplicativeAlgebras/}}.}
}

\section{Implicative algebras}
\label{s:ia}

\subsection{Krivine classical realizability in a glimpse}
\label{s:realizability}
We give here an overview of the main characteristics of Krivine realizability and
of the models it induces\footnote{For a detailed introduction on this topic,
we refer the reader to~\cite{Krivine09} or~\cite{these}.}.
Krivine realizability models are usually built above the $\lambda_c$-calculus,
a language of abstract machines including a set of terms $\Lambda$ 
and a set of stacks $\Pi$ (\emph{i.e.} evaluation contexts). Processes $t\star\pi$ in the 
abstract machine are given as pairs of a term $t$ and a stack $\pi$.

%


Krivine realizability interprets a formula~$A$ as a set of
closed terms $|A|\subseteq\Lambda$, called the \emph{truth value}
of~$A$, and whose elements are  called the \emph{realizers} of~$A$.
Unlike in intuitionistic realizability models, 
this set is actually defined by orthogonality to a \emph{falsity value} $\|A\|$
made of stacks, which intuitively represents a set of opponents to the formula $A$.
Realizability models are parameterized
by a {pole} $\pole$, a set of processes in the underlying abstract machine
which somehow plays the role of a referee betweens terms and stacks.
The pole  allows us to define the orthogonal set $X^\orth$ of any falsity value $X\subseteq \Pi$ by:
$ X^\orth ~~\defeq~~ \{t\in\Lambda : \forall \pi\in X, t\star \pi \in\pole\} $.
Valid formulas $A$ are then defined as the ones admitting a proof-like \emph{realizer}\footnote{One specificity
of Krivine classical realizability is that the set of terms
 contains the control operator $\cc$ and continuation constants $\k_\pi$.
 Therefore, to preserve the consistency of the induced models, one has to consider only proof-like
 terms, \textit{i.e.} terms that do not contain any continuations constants see~\cite{Krivine09,these}.} $t\in|A|$.
 
Before defining implicative algebras, 
we would like to draw the reader's attention on an important observation about realizability:
there is an omnipresent lattice structure, which is reminiscent of the concept of subtyping~\cite{Cardelli91}.
Given a realizability model it is indeed always possible to define a semantic notion of subtyping:
$A \oldleq B ~\defeq~  \fv{B}\! \subseteq\! \fv{A}$.
This informally reads as \emph{``$A$ is more precise than $B$''}, in that $A$ admits more opponents than $B$.\linebreak
In this case, the relation $\oldleq$ being induced from (reversed) set inclusions
comes with a richer structure of complete lattice, where the meet $\wedge$ is defined
as a union and the join $\vee $ as an intersection.
In particular, the interpretation of a universal quantifier $\fv{\forall x.A}$
is given by an union $\bigcup_{n\in \N}\fv{A[n/x]}=\minf_{n\in\N}\fv{A[n/x]}$,
while the logical connective $\land$ is interpreted as the type of pairs $\times$
\emph{i.e.} with a computation content.
As such, \emph{realizability} corresponds to the following picture:
\blabox{
$~ \forall = \minf \qquad \land = \times~$	
}.
This is to compare with \emph{forcing}, that can be expressed in terms of Boolean algebras 
where both the universal quantifier and the conjunction are  interpreted by meets without any computational content: 
\blabox{$~\forall = \land = \minf~$}~\cite{BellBVM}.

%
%

\subsection{Implicative algebras}
\label{s:imp_struct}
\emph{Implicative structures} are tailored to represent both the formulas of second-order logic and realizers
arising from Krivine's $\lambda_c$-calculus. 
For their logical facet, they are defined as meet-complete lattices (for the universal quantification)
with an internal binary operation satisfying the properties of the implication:
\setCoqFilename{ImpAlg.ImplicativeStructures}
\begin{definition}[][ImplicativeStructure]
An \emph{implicative structure} is a complete lattice $(\A, \leq)$
equipped with an operation $(a, b) \mapsto (a \to b)$,
such that for all $a, a_0 , b, b_0 \in\A$ and any subset $B \subseteq \A$:
\begin{enumerate}
\item $\text{If  } a_0 \leq a \text{  and  } b  \leq b_0\text{  then  }
(a \to b) \leq (a_0 \to b_0 ).$\qquad~
\item $\minf_{b\in B} (a \to b) = a \to \minf_{b\in B} b $
\end{enumerate}
\end{definition}

It is then immediate to embed any closed formula 
of second-order logic within any implicative structure.
Obviously, any complete Heyting algebra or any complete
Boolean algebra defines an implicative structure
with the canonical arrow.
More interestingly, any ordered combinatory algebras, a structure arising
naturally from realizability~\cite{HofOos03,VanOosten08,Streicher13,FerrerEtAl13},
also induces an implicative structure~\cite{Miquey18b}.
Last but not least, any classical realizability model
induces as expected an implicative structure on
the lattice $(\P(\Pi),\supseteq)$ by considering the arrow defined by\footnote{This is 
actually nothing more than the definition of the falsity value $\fv{A\Rightarrow B}$.}: 
$a \to b \defeq a^\pole \cdot b = \{t \cdot \pi : t \in a^\pole , \pi \in b\}$~(\cite{Miquel17,Miquey18b}.

Interestingly, if any implicative structure $\A$ trivially provides us with  
an embedding of second-order formulas, we can also encode $\lambda$-terms 
with the following \coqlink[app]{definitions}:
\twoelt{
ab\defeq \minf \{c : a \leq b \to c\} 
}{
\lambda f \defeq \minf_{a\in \A} (a\to f(a))
}
In both cases, one can understand the meet as a conjunction of all the possible 
approximations of the desired term.
From now on, we will denote by $t^\A$ (resp. $A^\A$) the interpretation
of the closed $\lambda$-term $t$ (resp. formula $A$).
Notably, these embeddings are at the same time:
\begin{enumerate}
 \coqitem[betarule] \setCoqFilename{ImpAlg.Adequacy}
 Sound with respect to the $\beta$-reduction,
 in the sense that  $(\lambda f) a \leq f(a)$ (and more generally,
 one can show that if $t\red_\beta u$ \coqlink[imp_betared]{implies} $t^\A\leq u^\A$);
 \coqitem[adequacy] Adequate with respect to typing, in the sense that if $t$ is of type $A$, then we have $t^\A \leq A^\A$
(which can reads as \emph{``$t$ realizes $A$''}).
\end{enumerate}
\setCoqFilename{ImpAlg.ImplicativeStructures}
In the case of certain combinators, including Hilbert's combinator \comb{k} and \comb{s},
their interpretations as $\lambda$-term is even equal to the interpretation of their principal types,
that is to say that we have $\textbf{\textsc{k}}^\A =\!\!\! \minf_{a,b\in\A} (a\to b \to a)$ and 
$\textbf{\textsc{s}}^\A =\!\!\! \minf_{a,b,c\in\A} ((a\to b \to c)\to (a\to b) \to a \to c)$.
This justifies the definition $\comb{cc}^\A\defeq\minf_{a,b} (((a\to b ) \to a) \to a) $.

Implicative structure are thus suited to interpret both terms and their types.
To give an account for realizability models, 
one then has to define a notion of validity:
\setCoqFilename{ImpAlg.ImplicativeAlgebras}
\begin{definition}[Separator][separator]
 Let $(\A,\leq,\imp)$ be an implicative structure.
 We call a \emph{separator} over $\A$ any set $\sep\subseteq \A$ such that for all $a,b\in A$, 
 the following conditions hold:\noem
 \begin{multicols}{2}
\begin{enumerate}
  \item If $a\in \sep$ and $a\leq b$, then $b\in \sep$.\qquad~ 

  \item $\comb{k}^\A\in \sep$, and $\comb{s}^\A \in \sep$.\qquad~
  \item If $(a\to b)\in \sep$ and $a\in \sep$, then $b\in \sep$.
 \end{enumerate}
  \end{multicols}
 \noem \noindent A separator $\sep$ is said to be \emph{classical} if $\comb{cc}^\A\in \sep$ and \emph{consistent} if $\bot\notin \sep$.
 We call \emph{implicative algebra} any implicative structure  $(\A,\leq,\to,\sep)$ equipped with a separator $\sep$ over $\A$.
 \end{definition}

 \setCoqFilename{ImpAlg.ImplicativeStructures}



Intuitively, thinking of elements of an implicative structure
as truth values, a separator should be understood as the set
which distinguishes the valid formulas (think of a filter in a Boolean algebra).
Considering the elements as terms, it should rather be viewed as the set of valid realizers.
Indeed, conditions (2) and (3) ensure that all closed $\lambda$-terms are in any 
separator\footnote{The latter indeed implies the closure of separators under application.}.
Reading $a\leq b$ as ``\emph{the formula $a$ is a subtype of the formula $b$}'', condition (2) ensures the validity of semantic subtyping.
Thinking of the ordering as ``\emph{$a$ is a realizer of the formula $b$}'', condition (2) states that if a formula is realized, then it is in the separator.

%
 
 \setCoqFilename{ImpAlg.AKS}
 \begin{example}[][AKS_IA]
 Any \emph{Krivine realizability model} induces an implicative structure $(\A,\leq,\to)$ where $\A=\P(\Pi)$, $a\leq b \Leftrightarrow a\supseteq b$ and $a\to b = a^\pole\cdot b$.
  The set of realized formulas, namely $\sep = \{a\in\A: \exists t \in a^\pole,~ t \text{~proof-like} \}$, defines a valid separator~\cite{Miquel17}. 
 \end{example}

\subsection{Internal logic \& implicative tripos}
\setCoqFilename{ImplicativeAlgebras}
In order to study the internal logic of implicative algebras,
we define an\iaurl{entails}{\emph{entailment}} relation:
we say that $a$ {entails} $b$ and we write $a\vdash_{\sep} b$ if $a\to b \in \sep$.
This relation induces a preorder on $\A$.
Then, by defining products $a\times b$ and sums $a+b$
through their usual impredicative encodings in System F\footnote{That is to say
that we define $a\times b \defeq\!\! \minf_{c\in\A}\! ((a\to b \to c)\to c)$ 
and $a + b \defeq\!\! \minf_{c\in\A}\! ((a\to c) \to ( b \to c)\to c)$.
},
we recover a structure of pre-Heyting algebra with respect
to the entailment relation:
 $a \vdash_{\sep}  b \to c ~~ \text{\iaurl{ha_adjunction}{if and only if} }~~ a\times b \vdash_{\sep} c$.

In order to recover a Heyting algebra, it suffices to 
consider the quotient $\H=\A/_{\seq}$ by the equivalence relation $\seq$ induced by $\vdash_\sep$,
which is naturally equipped with an order relation:
$[a] \sleq [b] ~\defeq~ a \vdash_{\sep}  b$ (where we write $[a]$ for the equivalence class of $a\in\A$).
Likewise, we can extend the product, the sum and the arrow to equivalences classes 
to obtain a Heyting algebra  $(\H,\sleq,\land_\H,\lor_\H,\to_\H)$.

\renewcommand{\Prop}{\text{\normalfont Prop}}
\renewcommand{\trth}{\text{\normalfont tr}}

Given any implicative algebra, we can define construction of the implicative tripos is quite similar.
Recall that a (set-based) \emph{tripos} is a first-order hyperdoctrine $\T:\Set\op\to\HA$ 
 which admits a generic predicate\vlong{\footnote{\label{fn:tripos}See Appendix~\ref{a:imp_trip} for
more details on the tripos construction.}}.
To define a tripos, we roughly consider the functor of the
form $I\in\Set\op\mapsto \A^I$.
Again, to recover a Heyting algebra we quotient the product $\A^I$ (which defines an implicative structure)
by the \emph{uniform separator}  $\sep[I]$ defined by:
\[\sep[I] \defeq \{a\in\A^I: \exists s\in S.\forall i\in I. s\leq a_i\} \]

\begin{theorem}[Implicative tripos~\cite{Miquel17}]
 Let $(\A,\leq,\to,\sep)$ be an implicative algebra. The following functor 
 (where $f:J\to I$) defines a tripos:
 \twoelt{
 \T :  I\mapsto \A^I/\usep
 }{
 \T(f) :\left\{\begin{array}{lcl}
		 \A^I/\usep &\to& \A^J/\sep[J]\\ [0.5em]
		 \left[(a_i)_{i\in I}\right]& \mapsto &[(a_{f(j)})_{j\in J}]
		\end{array}
		\right.
}
\end{theorem}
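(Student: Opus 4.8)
The plan is to build $\T$ by iterating, uniformly in the index set, the construction of the Heyting algebra $\H$ recalled just above, and then to equip the resulting $\Set$-indexed family of Heyting algebras with quantifiers and a generic predicate. First I would observe that for every set $I$ the product $\A^I$, ordered componentwise and equipped with the componentwise arrow $(a\to b)_i\defeq a_i\to b_i$, is again an implicative structure (both axioms are inherited pointwise, using the commutation of $\to$ with meets in $\A$), and that $\sep[I]$ is a separator over it: it is upward closed because $\sep$ is; it contains $\comb{k}^{\A^I}$ and $\comb{s}^{\A^I}$, which are the constant families with values $\comb{k}^\A$ and $\comb{s}^\A$ (witnessed by $\comb{k}^\A,\comb{s}^\A\in\sep$); and it is closed under modus ponens, since $s_1\leq a_i\to b_i$ and $s_2\leq a_i$ for all $i$ yield $s_1 s_2\leq (a_i\to b_i)\,a_i\leq b_i$ for all $i$ with $s_1 s_2\in\sep$. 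Hence $(\A^I,\leq,\to,\sep[I])$ is an implicative algebra and the construction of $\H$ applied to it produces a Heyting algebra $\T(I)=\A^I/\sep[I]$ whose $\land,\lor,\to$ come from the pointwise impredicative encodings. Moreover, for a function $f:J\to I$ the reindexing $r_f:\A^I\to\A^J$, $(a_i)_i\mapsto(a_{f(j)})_j$, is monotone, commutes on the nose with the pointwise $\land,\lor,\to$ and with the encodings $\times,+$, and maps $\sep[I]$ into $\sep[J]$ (the same witness $s\in\sep$ works); so it preserves the entailment preorder and descends to a Heyting morphism $\T(f):\T(I)\to\T(J)$, and since $r_{\id}=\id$ and $r_{f\circ g}=r_g\circ r_f$ this yields a contravariant functor $\T:\Set\op\to\HA$.

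The heart of the argument is then to endow $\T$ with the structure of a first-order hyperdoctrine. For $f:J\to I$ I would define the adjoints to $\T(f)$ fiberwise over the fibers of $f$: a right adjoint $(\forall_f a)_i\defeq\minf_{j\in f^{-1}(i)}a_j$ — an honest meet in $\A$ — and a left adjoint $(\exists_f a)_i\defeq\minf_{c\in\A}\bigl((\minf_{j\in f^{-1}(i)}(a_j\to c))\to c\bigr)$ — the impredicative existential over the fiber — both defined on representatives. The adjunction $\T(f)\dashv\forall_f$ would be checked exactly like the adjunction $a\vdash_{\sep}b\to c\Leftrightarrow a\times b\vdash_{\sep}c$ recalled above: unfold the definition of $\sep[\,\cdot\,]$ and use $b\to\minf_k a_k=\minf_k(b\to a_k)$ together with the upward closure of $\sep$. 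The adjunction $\exists_f\dashv\T(f)$ would be the analogous computation, this time using the introduction and elimination inequalities of the encoded existential, which are instances of the $\beta$-soundness inequality $(\lambda g)a\leq g(a)$ applied in $\A^I$. Beck--Chevalley for a pullback square in $\Set$ then reduces, once the fiberwise definitions are unfolded, to the set-theoretic identification of the fibers of the pulled-back map, and Frobenius reciprocity follows in the same way from the fiberwise encoding and $\beta$-soundness.

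It remains to exhibit a generic predicate, which is easy: take $\Prop\defeq\A$ and $\trth\defeq[(a)_{a\in\A}]\in\T(\A)$, the class of the identity family. For any set $I$ and any $[(a_i)_{i\in I}]\in\T(I)$ the classifying map $\chi:I\to\A$, $i\mapsto a_i$, satisfies $\T(\chi)(\trth)=[(a_i)_{i\in I}]$, so every predicate is a reindexing of $\trth$. Combining this with the two previous paragraphs yields that $\T$ is a tripos.

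The main obstacle will be the hyperdoctrine step, and within it the left adjoint $\exists_f$ together with its Beck--Chevalley and Frobenius conditions. Since $\exists_f$ is only an impredicative encoding and not a genuine colimit, one never has equalities but only the inequalities supplied by $\beta$-soundness and adequacy, and one works with equivalence classes rather than elements; the delicate point is therefore to check that the $\lambda$-terms witnessing the relevant entailments can be chosen uniformly in the fiber index $i$, so that a single realizer lands in $\sep$ — which is exactly the uniformity phenomenon the separators $\sep[\,\cdot\,]$ are designed to track.
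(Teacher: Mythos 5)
Your proposal is correct and runs on the same engine as the paper's proof: $\forall$ is an honest meet, $\exists$ is the impredicative encoding, each adjunction is verified by exhibiting a single $\lambda$-term realizer that works uniformly in the index and therefore lands in $\sep$, and the generic predicate is the class of the identity family over $\Prop\defeq\A$. The one genuine difference is organizational: you build both adjoints fiberwise along \emph{arbitrary} maps $f:J\to I$, whereas the paper only constructs them along projections $\pi^1_{I,J}:I\times J\to I$ (whose fibers are canonically $J$) and treats the diagonal separately, defining an explicit Leibniz equality $=_I$ (namely $\minf_{a\in\A}(a\to a)$ on the diagonal and $\bot\to\top$ off it) and checking its universal property with the bespoke realizer $\lambda z.\,z(s(\lambda x.x))$. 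Your route buys the equality-predicate clause of the hyperdoctrine definition for free, as $=_I\,\defeq\,\exists_{\delta_I}(\top)$ --- but your write-up never says so, and you should make that instantiation explicit since the clause ``$\top\leq\T(\delta_I)(\varphi)\Leftrightarrow\ =_I\leq\varphi$'' is a required datum, not a consequence you can leave implicit; the paper's route keeps every verification elementary and every realizer concrete at the cost of doing the diagonal case by hand. Your closing diagnosis is accurate: in both versions the only real work is checking that the witnesses for the $\exists$-adjunction (introduction via $\lambda k.\,k\,a_j$, elimination via $\lambda x.\,t(\lambda k.\,k\,x)$) can be chosen independently of the fiber index, which is exactly the uniformity that membership in $\sep[I]$ records.
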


Observe that we could also quotient the product $\A^I$
by the separator product $\S^I$.
Actually, the quotient $\A^I/{\S^I}$ is in bijection
with $(\A/\S)^\I$, and in the case where $\S$ is a classical separator,
$\A/{\S}$ is actually a Boolean algebra, so that the 
product $(\A/\S)^\I$ is nothing more than a Boolean-valued model (as in the case of forcing).
Since $\sep[I] \subseteq \sep^I$,
the realizability models that can not be obtained by forcing 
are exactly those for which 
$\sep[I] \neq \sep^I$ (see \cite{Miquel17}).

\section{Decomposing the arrow: disjunctive algebras}
\label{s:da}

\newcommand  {\murrule }{\autorule{\vdash\mu}     }
\newcommand  {\mulrule }{\autorule{\mu\,\vdash}     }
\renewcommand{\negrrule}{\autorule{\vdash\neg}    }
\renewcommand{\neglrule}{\autorule{\neg\,\vdash}    }
\renewcommand{\farrule }{\autorule{\vdash\forall} }
\renewcommand{\falrule }{\autorule{\forall\,\vdash} }
\newcommand  {\parrrule}{\autorule{\vdash\parr}   }
\newcommand  {\parlrule}{\autorule{\parr\,\vdash}   }
\renewcommand{\axrrule }{\autorule{\vdash ax}     }
\renewcommand{\axlrule }{\autorule{ax\vdash}      }


We shall now introduce the notion of disjunctive algebra, 
which is a structure primarily based on disjunctions, negations (for
the connectives) and meets (for the universal quantifier).
Our main purpose is to draw the comparison with implicative algebras,
as an attempt to justify eventually that the latter are more  general than the former, 
and to lay the bases for a dualizable definition.
In the seminal paper introducing linear logic~\cite{Girard87}, 
Girard refines the structure of the sequent calculus LK, introducing in particular
negative and positive connectives for disjunctions and 
conjunctions\footnote{We insist on the 
fact that even though we use linear notations afterwards, 
nothing will be linear here.}. 
With this finer set of connectives, the usual implication
can be retrieved using either the negative disjunction: $A\imp B\defeq \neg A \parr B$
or the positive conjunction: $A\imp B \defeq \neg (A \tensor \neg B)$.

In 2009, Munch-Maccagnoni gave a computational account of Girard's presentation 
for classical logic~\cite{Munch09}.
In his calculus, named L,  each connective corresponds to the type of a particular constructor (or destructor).
While L is in essence close to Curien and Herbelin's \lmmt-calculus~\cite{CurHer00} (in particular it is presented 
 with the same paradigm of duality between proofs and contexts), the syntax of terms
 does not include $\lambda$-abstraction (and neither does the syntax of formulas includes an implication).
 The two decompositions of the arrow evoked above are precisely reflected in decompositions of $\lambda$-abstractions 
 (and dually, of stacks) in terms of L constructors. 
 Notably, the choice of a decomposition corresponds to a particular choice 
 of an evaluation strategy\footnote{Phrased differently, this observation 
 can be traced back to different works, for instance by Blain-Levy~\cite[Fig. 5.10]{LevyCBPV}, Laurent~\cite{LaurentPhD} or Danos, Joinet and Schellinx~\cite{DanJoiSch95}.}
 for the encoded $\lambda$-calculus:
 picking the negative $\parr$ connective corresponds to call-by-name,
 while the decomposition using the $\tensor$ connective 
 reduces in a call-by-value fashion.
 
 We shall begin by considering the call-by-name case, 
 which is closer to the situation of implicative algebras. 
 The definition of disjunctive structures and algebras
 are guided by an analysis of the realizability model induced by $\Lpar$, 
 that is Munch-Maccagnoni's system L restricted to 
 the fragment corresponding to negative formulas:  
 $A,B := X \mid A \parr B \mid \neg A \mid \forall X.A$~\cite{Munch09}.\linebreak
 To leave room for more details on disjunctive algebras,
 we elude here the introduction of $\Lpar$ and its relation to the call-by-name
 $\lambda$-calculus, we refer the interested reader to the extended version.

\renewcommand{\T}{\mathcal{T}}  

\subsection{Disjunctive structures}
\setCoqFilename{ImpAlg.ParAlgebras}
We are now going to define the notion of \emph{disjunctive structure}.
Since we choose negative connectives and in particular a universal quantifier, 
we should define commutations with respect to arbitrary meets. 
The realizability interpretation for $\Lpar$  
provides us with a safeguard in this regard, since 
in the corresponding models, if $X\notin \FV(B)$ the following 
equalities\footnote{\label{fn:val}Technically, $\V_0$ is the set of
closed values which, in this setting, are evaluation contexts
(think of $\Pi$ in usual Krivine models), and $\fvval{A}\in\P(\V_0)$ is the (ground) falsity value
of a formula $A$\vlong{ (see App. \ref{s:CbNReal})}.} 
hold:\\[0.3em]
\begin{minipage}{0.49\textwidth}
\begin{enumerate}
 \item $\fvval{\forall X. (A \parr B)}= \fval{(\forall X. A) \parr B}$.
 \item $\fvval{\forall X. (B \parr A)}= \fval{B \parr (\forall X.A)}$.
  \end{enumerate}
\end{minipage}
\begin{minipage}{0.51\textwidth}
\begin{enumerate}
\setcounter{enumi}{2}
 \item $\fvval{\neg{(\forall X. A)}}= \bigcap_{S\in\P(\V_0)}\fvval{\neg{A\{X:=\dot S\}}}$\\[0.4em]
\end{enumerate}
\end{minipage}
\\[0.3em]
Algebraically, the previous proposition advocates for the following 
definition (remember that the order is defined as the reversed inclusion of primitive falsity values (whence $\cap$ is $\msup$)
and that the $\forall$ quantifier is interpreted by $\minf$):

\begin{definition}[Disjunctive structure][ParStructure]
\label{def:dis_struct}
A \emph{disjunctive structure} is a complete lattice $(\A,\myleq)$ equipped with a binary operation $(a,b)\mapsto a \parr b$,
together with a unary operation $a \mapsto \neg a$, 
such that for all $a,a',b,b'\in\A$
 and for any $B\subseteq \A$:\\[0.3em]
 \begin{minipage}{.52\textwidth}
  \begin{enumerate}
 \item $\text{if } a\leq a' \text{~then~} \neg a' \leq \neg a$
 \item $\text{if } a\leq a'  \text{~and~} b\leq b'\text{~then~} a \parr b \leq a' \parr b'$
\item $\minf_{b\in B} (b \parr a) =  (\minf_{b\in B}  b) \parr a$
\end{enumerate}
 \end{minipage}
 \begin{minipage}{.47\textwidth}
  \begin{enumerate}  \setcounter{enumi}{3}
\item $ \minf_{b\in B} (a \parr b) = a \parr (\minf_{b\in B}  b)$
\item $\neg \minf_{a\in A} a = \msup_{a\in A} \neg a$
\end{enumerate}
 \end{minipage}
\end{definition}

Observe that the commutation laws imply the value of the internal laws when applied to the maximal element $\top$: 
 \begin{minipage}{.8\textwidth}
 \nomidem
  \begin{multicols}{3}
 \begin{enumerate}
 \coqitem[par_top_l] $\top \parr a = \top \qquad\qquad$
 \coqitem[par_top_r] $a \parr \top = \top \qquad\qquad$
 \coqitem[neg_top]   $\neg \top = \bot$     
 \end{enumerate}  
  \end{multicols}

 \end{minipage}

We give here some examples of disjunctive structures.
\setCoqFilename{ImpAlg.Dummies}
\begin{example}[Dummy disjunctive structure][dummy_par]
 Given any complete lattice $(\Lat,\leq)$, 
 defining $a\parr b \defeq \top$ and  $\neg a\defeq \bot$
 gives rise to a dummy structure that fulfills the required properties.
 \end{example}
 
\setCoqFilename{ImpAlg.BooleanAlgebras}
\begin{example}[Complete Boolean algebras][cba_pa]
\label{p:ba_pa}
Let $\B$ be a complete Boolean algebra. It encompasses a disjunctive structure defined by:\noem
\begin{multicols}{4}
\begin{itemize}
 \item $\A        \defeq \B$ 
 \item $a \leq b  \defeq a \leq b$ 
 \item $a \parr b \defeq a \lor b$ 
\item $\neg a    \defeq \neg a  $
\end{itemize}
\end{multicols}
\noem
%
\end{example}

\begin{example}[\Lpar~realizability models]
\label{ex:lpar_ds}
Given a realizability interpretation of $\Lpar$, we define:\noem
\begin{multicols}{2}
\begin{itemize}
\item $\A \defeq \P(\V_0)$  
\item $a \leq b \defeq a \supseteq b$
\item $a \parr b \defeq \{(V_1,V_2): V_1\in a \land V_2 \in b\}$
\item $\neg a \defeq [a^\orth]=\{[t]: t\in a^\orth\}$
\end{itemize}
\end{multicols}\noem
\noindent where $\pole$ is the pole, $\V_0$ is the set of closed values\footnoteref{fn:val},
and $(\cdot,\cdot)$ and $[\cdot]$ are the maps corresponding 
to $\parr$ and $\neg$.
The resulting quadruple $(\A,\leq,\parr,\neg)$ is a disjunctive structure\vlong{ (see \Cref{{a:ds_real}})}.
\end{example}

Following the interpretation of the $\lambda$-terms in implicative structures,
we can embed {\Lpar} terms within disjunctive structures. 
We do not have the necessary space here to fully introduce here\vshort{\footnote{See the extended version for more details.}}\vlong{ (see Appendix~\ref{a:int_lpar})},
but it is worth mentioning that the orthogonality relation $t\orth e$ is
interpreted via the ordering $t^\A \leq e^\A$ 
(as suggested in~\cite[Theorem 5.13]{FerrerEtAl15} by the definition of 
an abstract Krivine structure and its pole from an ordered combinatory algebra).

%

\subsection{The induced implicative structure }
\setCoqFilename{ImpAlg.ParAlgebras}
As expected, any disjunctive structure directly induces an implicative structure through
the definition $a\arp b \defeq \neg a \parr b$:
\begin{proposition}[][PS_IS]
If $(\A,\myleq,\parr,\neg)$ is a disjunctive structure, then $(\A,\myleq,\arp)$ is an implicative structure.
\end{proposition}

Therefore, we can again define for all $a,b$ of $\A$ the application $ab$
as well as the abstraction $\lambda f$ for any function $f$ from $\A$ to $\A$;
and we get for free the properties of these encodings in implicative structures.

Up to this point, we have two ways of interpreting a $\lambda$-term 
into a disjunctive structure:
either through the implicative structure which is induced by the disjunctive one,
or by embedding into the $\Lpar$-calculus which is then 
interpreted within the disjunctive structure.
As a sanity check, we verify that both coincide:

\begin{proposition}[$\lambda$-calculus]
\label{p:sanity}
 Let $\A^\parr=(\A,\myleq,\parr,\neg)$ be a disjunctive structure, 
 and $\A^\imp=(\A,\myleq,\arp)$ the implicative structure it canonically defines, we write $\iota$ for the corresponding inclusion.
 Let $t$ be a closed $\lambda$-term (with parameter in $\A$), and $\tr{t}$ his embedding in $\Lpar$.
 Then we have $\iota(t^{\A^{\imp}}) =  \tr{t}^{\A^\parr}$.
\end{proposition}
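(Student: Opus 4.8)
The plan is to prove the identity $\iota(t^{\A^{\imp}}) = \tr{t}^{\A^\parr}$ by structural induction on the closed $\lambda$-term $t$ (more precisely, on a term with parameters in $\A$, since the abstraction case forces us to pass under binders and substitute elements of $\A$ for variables). The crucial point is that both sides are built from the \emph{same} interpretations of application and abstraction in the induced implicative structure $\A^\imp$; what must be checked is that translating a $\lambda$-term into $\Lpar$ first and then interpreting the $\Lpar$-syntax in $\A^\parr$ produces exactly the implicative-structure encodings $ab$ and $\lambda f$ computed with the derived arrow $\arp$. So the proof reduces to two local lemmas, one per syntactic constructor.

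First I would handle the variable/parameter case, which is immediate: a parameter $a \in \A$ is interpreted as $a$ on both sides, and $\iota$ is the identity on the carrier. Next, the application case: assuming $\iota(t^{\A^\imp}) = \tr{t}^{\A^\parr}$ and $\iota(u^{\A^\imp}) = \tr{u}^{\A^\parr}$, I must show that the $\Lpar$-interpretation of $\tr{tu}$ equals $t^{\A^\imp} u^{\A^\imp} = \minf\{c : t^{\A^\imp} \leq u^{\A^\imp} \arp c\}$, i.e.\ $\minf\{c : t^{\A^\imp} \leq \neg u^{\A^\imp} \parr c\}$. This is where one unfolds the $\Lpar$-translation of application (a cut against a suitable stack/command built from the $\parr$ and $\neg$ constructors) and its interpretation, and checks it collapses to this meet; the commutation law (3)--(4) of disjunctive structures and the monotonicity laws are what make this go through. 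The abstraction case is analogous but requires the induction hypothesis in its parametrized form: for $\lambda x.t$ one shows the $\Lpar$-interpretation of $\tr{\lambda x. t}$ equals $\lambda f$ where $f(a) = t[a/x]^{\A^\imp}$, i.e.\ $\minf_{a\in\A}(a \arp f(a)) = \minf_{a\in\A}(\neg a \parr f(a))$, again by unfolding the translation of $\lambda$-abstraction into $\Lpar$ and using the meet-commutation laws together with the interpretation of $\neg$ via $\cdot^\orth$.

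The main obstacle I expect is bookkeeping rather than conceptual: the $\Lpar$-translation $\tr{\cdot}$ of $\lambda$-terms is itself a nontrivial encoding (it replaces $\lambda$-abstraction by a pattern-matching construct on a $\parr$-pair, and application by building the dual stack), and the statement of \Cref{p:sanity} tacitly depends on having fixed that translation and the interpretation $\cdot^{\A^\parr}$ of $\Lpar$ — material the excerpt defers to the extended version / Appendix~\ref{a:int_lpar}. So the real work is to expose, for each constructor, the precise normal form that $\tr{\cdot}^{\A^\parr}$ produces and to match it syntactically against the $\minf$-expressions defining $ab$ and $\lambda f$ in $\A^\imp$; once the definitions are spelled out, each case is a short computation using only monotonicity (laws 1--2) and the two distributivity laws (3--4), plus the definition $\arp\;\defeq\;\neg(\cdot)\parr(\cdot)$. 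I would close by remarking that the argument is exactly parallel to the verification that $\beta$-reduction is sound in implicative structures, and refer to the Coq development for the fully detailed induction.
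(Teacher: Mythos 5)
Your proposal is correct and follows essentially the same route as the paper: a structural induction on $\lambda$-terms with parameters in $\A$, where the parameter case is immediate and the abstraction and application cases each reduce to unfolding the $\Lpar$-translation, computing its interpretation as a meet (via the $\mu^-$, $\mu^{()}$, $\mu^{[]}$ encodings), and matching it against the definitions of $\lambda f$ and $ab$ rewritten through $a\arp b=\neg a\parr b$. The paper's case analysis is in fact almost purely definitional once the characterization of $\mu([x],\alpha).c$ as $\minf_{a,b}\{\neg a\parr b: \dots\}$ is in hand, so the bookkeeping you anticipate is exactly where the work lies.
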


\setCoqFilename{ImpAlg.ParAlgebras} 
\subsection{Disjunctive algebras} \renewcommand{\arp}{\to}
We shall now introduce the notion of disjunctive separator.
To this purpose, we adapt the definition of implicative separators,
using standard axioms\footnote{These axioms can be found for instance 
in Whitehead and Russell's presentation of logic~\cite{RusWhi25}.
In fact, the fifth axiom
is deducible from the first four as was later shown by Bernays~\cite{Bernays26}.
For simplicity reasons, 
we preferred to keep it as an axiom.}
for the disjunction and the negation instead of 
Hilbert's combinators $\comb{s}$ and $\comb{k}$.
We thus consider the following combinators:
$$
\begin{array}{c|c}
\begin{array}{rcl}
\ps1 &\defeq & \minf_{a\in\A}    \left[(a\parr a) \arp a                           \right]     \\
\ps2 &\defeq & \minf_{a,b\in\A}  \left[a \arp (a\parr b)                           \right]     \\
\ps3 &\defeq & \minf_{a,b\in\A}  \left[(a\parr b) \arp b\parr a                  \right]       \\
\end{array}
\quad&\quad
\begin{array}{rcl}
\ps4 &\defeq & \minf_{a,b,c\in\A}\left[(a\arp b) \arp (c\parr a) \arp (c\parr b)   \right]  \\
\ps5 &\defeq & \minf_{a,b,c\in\A}\left[(a\parr (b\parr c)) \arp ((a\parr b)\parr c)\right]  \\
\end{array}
\end{array}
$$
Separators for $\A$ are defined similarly to the separators for implicative structures, replacing the combinators $\comb{k},\comb{s}$ and $\comb{cc}$ 
by the previous ones.
\begin{definition}[Separator][ParAlgebra]
 We call \emph{separator} for the disjunctive structure $\A$ any subset $\sep\subseteq\A$ that fulfills the following conditions for all $a,b\in \A$:\noem
 \begin{multicols}{2}
  \begin{enumerate}
  \item If $a \in \sep$ and $a\leq b$ then $b\in\sep$.		    
  \item $\ps1,\ps2,\ps3,\ps4$ and $\ps5$ are in $\sep$.		    
  \item If $a\arp b \in \sep$ and $a\in\sep$ then $b\in\sep$.	
 \end{enumerate}
 \end{multicols}
 \nomidem A separator $\sep$ is said to be \emph{consistent} if $\bot\notin\sep$.
 We call \emph{disjunctive algebra} the given of a disjunctive structure
 together with a separator $\sep\subseteq \A$.
\end{definition}


\begin{remark}
The reader may notice that in this section, we do not distinguish between
classical and intuitionistic separators.
Indeed, $\Lpar$ and the corresponding fragment of the sequent calculus are intrinsically classical.
As we shall see thereafter, so are the disjunctive algebras: the negation is always 
involutive  modulo the equivalence $\seq$  (\Cref{p:dne}).
\end{remark}
\begin{remark}[Generalized modus ponens][mod_pon_inf]\label{lm:mod_pon_inf}
The modus ponens, that is the unique deduction rule we have, is actually compatible with meets.
Consider a set $I$ and two families $(a_i)_{i\in I},(b_i)_{i\in I} \in \A^I$,
we have:
\[ \infer{\vdash_I b}{a \vdash_I b & \vdash_I a}\]
where we write $a\vdash_{I} b$ for $(\minf_{i\in I}a_i\arp b_i) \in \sep$ and 
$\vdash_{I} a$ for $(\minf_{i\in I}a_i) \in \sep$.
As  our axioms are themselves expressed as meets, the results that we will obtain internally 
(that is by deduction from the separator's axioms) can all be generalized to meets.
\end{remark}

\setCoqFilename{ImpAlg.BooleanAlgebras}
\begin{example}[Complete Boolean algebras][CBA_PA]
 Once again, if $\B$ is a complete Boolean algebra, $\B$ induces a disjunctive structure in
 which it is easy to verify that the combinators $\ps1,\ps3,\ps3,\ps4$ and $\ps5$ are equal to the maximal element $\top$.
 Therefore, the singleton $\{\top\}$ is a valid separator for the induced disjunctive structure. 
 In fact, the filters for $\B$ are exactly its separators.
\end{example}

\begin{example}[\Lpar~realizability model]
Remember from \Cref{ex:lpar_ds} that any model of classical realizability based on the $\Lpar$-calculus induces a disjunctive structure.
As in the implicative case, the set of formulas realized by a closed term\footnote{Proof-like terms in $\Lpar$ 
simply correspond to closed terms.}
defines a valid separator\vlong{ (see \Cref{p:real_paralg} for further details)}. 
\end{example}

\setCoqFilename{ImpAlg.ParAlgebras}

\subsection{Internal logic}
As in the case of implicative algebras, 
we say that  $a$ \emph{entails} $b$ and write $a\vdash_{\sep} b$ if $a\arp b \in \sep$.
Through this relation, which is again a \purl{lm:pc6}{preorder} relation,
we can relate the primitive negation and disjunction to the negation
and sum type induced by the underlying implicative structure:

\begin{equation}
\hfill a+b \defeq \minf_{c\in\A} ((a\arp c) \arp (b\arp c) \arp c)\hfill 
\tag{$\forall a,b\in\A$}
\end{equation}

In particular, we show that from the point of view of the separator the principle of double negation elimination is valid and the disjunction and this sum type are equivalent:
\begin{proposition}[Implicative connectives]\label{p:dne}
 For all $a,b\in \A$, the following holds:\noem
 \begin{multicols}{3}
  \begin{enumerate}
   \coqitem[neg_imp_bot] $\neg a\vdash_\sep a \arp \bot$ 
   \coqitem[imp_bot_neg] $ a\arp\bot \vdash_\sep \neg a $
   \coqitem[dni_entails] $a\vdash_\sep \neg\neg a$ 
   \coqitem[dne_entails] $\neg \neg a \vdash_\sep a $
   \coqitem[par_or] $a\parr b \vdash_\sep a + b$ 
   \coqitem[or_par] $a +  b \vdash_\sep a \parr b$
  \end{enumerate}
 \end{multicols}\label{p:imp_neg} \noem
 \label{p:blibla}
\end{proposition}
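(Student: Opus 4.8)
The plan is to establish the six entailments by exhibiting, in each case, an explicit element of the separator realizing the required implication, built from the axiom combinators $\ps1,\dots,\ps5$ together with the closure properties of separators (upward closure, combinators are in $\sep$, closure under modus ponens and — via the induced implicative structure — closure under application and $\lambda$-abstraction). First I would record the basic dictionary: since $\arp$ is $a,b\mapsto\neg a\parr b$ and $\bot=\neg\top$, we have $a\arp\bot=\neg a\parr\neg\top$, and by Definition~\ref{def:dis_struct}(5) applied to a two-element meet, $\neg a\parr\neg\top=\neg(\minf\{a,\top\})=\neg a$; this already gives items (1) and (2) essentially for free, indeed as an \emph{equality} $a\arp\bot=\neg a$ in $\A$, so both entailments follow from $\ps1$-style reflexivity (the identity $\minf_{a}(a\arp a)$ is in $\sep$, being $\leq$-above a closed $\lambda$-term, or directly derivable from $\ps1,\ps2$).

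Next I would treat the double-negation items (3) and (4). For (3), $a\vdash_\sep\neg\neg a$, note $\neg\neg a = \neg\neg a\parr\bot$ is, up to the identifications above, $((a\arp\bot)\arp\bot)$, so this is the standard $a\arp((a\arp\bot)\arp\bot)$, which is a closed $\lambda$-term (namely $\lambda x.\lambda k.kx$) and hence lies in every separator; by upward closure the meet over $a$ is in $\sep$, and then modus ponens gives the entailment. The converse (4), $\neg\neg a\vdash_\sep a$, is where classicality must be used; here I would exploit the combinators directly rather than $\comb{cc}$: from $\ps3$ (commutation of $\parr$) and $\ps1$ (contraction $(a\parr a)\arp a$) one derives Peirce-like behaviour. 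Concretely, unfolding $\neg\neg a\arp a = \neg(\neg\neg a)\parr a = (\neg a\parr a)\parr a$; now $\neg a\parr a = a\arp a$-ish rearranges via $\ps3$ to $a\parr\neg a$, and then $\ps1$ applied to... — this is the step I expect to be the main obstacle, since it requires assembling $\ps1$ and $\ps3$ (and possibly $\ps5$ for reassociation) into a term of type $((\neg a\parr a)\parr a)$, i.e. proving the law of excluded middle is separator-valid from the disjunction axioms alone; the bookkeeping of associativity/commutation via $\ps3$ and $\ps5$ is the delicate part.

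Finally, for (5) and (6) I would show $a\parr b$ and the System-F sum $a+b=\minf_c((a\arp c)\arp(b\arp c)\arp c)$ are inter-entailable. For $a\parr b\vdash_\sep a+b$: given (hypothetically) $k_1:a\arp c$ and $k_2:b\arp c$, one must produce an element of $c$ from $a\parr b$; using $\ps4$ twice, $k_1:a\arp b'$ lifts to $(c'\parr a)\arp(c'\parr b')$, so from $a\parr b$ one gets $c\parr c$ after suitable instantiations, and $\ps1$ collapses $c\parr c$ to $c$. For the converse $a+b\vdash_\sep a\parr b$: instantiate the sum's bound variable $c$ with $a\parr b$, and feed it the two injections $a\arp(a\parr b)$ and $b\arp(a\parr b)$; the first is exactly $\ps2$, and the second is $\ps2$ post-composed with the commutation $\ps3$ (from $b\arp(b\parr a)$ to $b\arp(a\parr b)$). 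All of these are finite combinations of the $\ps i$ closed under application, hence in $\sep$, and by Remark~\ref{lm:mod_pon_inf} the derivations pass to the meets over $a,b,c$, yielding the entailments uniformly. I would carry out (1)–(2) first (they are immediate), then (5)–(6) (routine combinator plumbing), then (3) (a closed $\lambda$-term), and leave (4) for last as the substantive classical step.
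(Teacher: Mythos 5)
Your proposal rests on a false identity. You claim $a\arp\bot=\neg a$ as an \emph{equality} in $\A$, deriving it from ``$\neg a\parr\neg\top=\neg(\minf\{a,\top\})$'' via axiom (5) of disjunctive structures. But axiom (5) states $\neg\minf_{a\in A}a=\msup_{a\in A}\neg a$: it relates the negation of a meet to the \emph{lattice join} $\msup$ of the negations, not to their internal disjunction $\parr$. These are distinct operations (in the $\Lpar$ realizability model, $\neg a\parr\bot$ is a set of pairs while $\neg a$ is not), and no axiom gives $x\parr\bot=x$. This collapses your treatment of items (1)--(3). Item (1) can be salvaged cheaply — it is exactly an instance of $\ps2$, since $a\arp\bot=\neg a\parr\bot$ and $\ps2$ gives $\neg a\vdash_\sep\neg a\parr\bot$, which is what the paper does — but item (2) is \emph{not} immediate: the paper proves $a\arp\bot\vdash_\sep\neg a$ by a transitivity argument through $a\arp\neg a$ that ultimately relies on the generalized form of item (5), so your proposed order of attack (``(1)--(2) first, they are immediate'') inverts the actual dependency structure. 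For item (3), your appeal to ``$\lambda x.\lambda k.kx$ is a closed $\lambda$-term and hence lies in every separator'' is forward-referencing: closure of disjunctive separators under all closed $\lambda$-terms is only available once $\comb{k},\comb{s}\in\sep$ is established, which in the paper comes \emph{after} this proposition and is itself a nontrivial derivation from $\ps1$--$\ps5$. The paper instead proves (3) directly from $\ps3$ and the identity, via $a\arp\neg\neg a=\neg a\parr\neg\neg a\seq\neg\neg a\parr\neg a=\neg a\arp\neg a$.

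Item (4) you explicitly leave unproved (``the main obstacle \dots the delicate part''), so the substantive classical step is missing; the paper carries it out by reducing $\neg\neg a\vdash_\sep a$ to $\minf_a((\neg a\arp\bot)\arp a)\in\sep$ and then chaining generalized modus ponens through $(\neg a\arp a)\arp a$ using items (1)--(2) and the combinators. Your sketches for (5) and (6) are essentially the paper's argument ($\ps4$ plus $\ps1$ for one direction, instantiating $c:=a\parr b$ and using $\ps2$, $\ps3$ for the other) and would go through, but as a whole the proposal has genuine gaps in (2), (3) and (4), all traceable to the incorrect identification of $\parr$ with the lattice join.
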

\begin{myproof}
See Appendix \ref{a:p:dne} 
\end{myproof}

\subsection{Induced implicative algebras}
In order to show that any disjunctive algebra is a particular case of implicative algebra,
we first verify that Hilbert's combinators belong to any disjunctive separator:
\begin{proposition}[Combinators]
We have:
\begin{minipage}{0.5\textwidth}
 \nomidem
 \begin{multicols}{3}
\begin{enumerate}
 \coqitem[psep_K] $\comb{k}^\A\in\sep$
 \coqitem[psep_S] $\comb{s}^\A\in\sep$
 \coqitem[psep_cc] $\comb{cc}^\A\in\sep$
\end{enumerate}
\end{multicols}
\end{minipage}
\end{proposition}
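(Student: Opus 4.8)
The goal is to show that Hilbert's combinators $\comb{k}^\A$, $\comb{s}^\A$ and $\comb{cc}^\A$ belong to any disjunctive separator $\sep$. Since each of these is a meet over $\A$ of the interpretation of a closed $\lambda$-term, and since separators are upward closed, it suffices to produce, for each combinator, a \emph{derivation} of the corresponding principal type from the five axiom combinators $\ps1,\dots,\ps5$ using only the separator closure rules: namely, modus ponens (closure rule~(3)) and upward closure (closure rule~(1)). Concretely, the plan is to work entirely \emph{internally}, i.e. in the entailment preorder $\vdash_\sep$ with $a \arp b \defeq \neg a \parr b$, and to exhibit for each target combinator a proof that its defining element lies above some element already known to be in $\sep$. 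Crucially, \Cref{lm:mod_pon_inf} (generalized modus ponens) lets us carry out all these derivations uniformly over the meets indexing the combinators, so it is enough to establish the propositional tautologies $a \arp b \arp a$, $(a\arp b\arp c)\arp(a\arp b)\arp a\arp c$, and $((a\arp b)\arp a)\arp a$ internally, for arbitrary $a,b,c\in\A$.

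First I would record the basic internal facts about $\arp = \neg(\cdot)\parr(\cdot)$ that follow from \Cref{p:dne} and the disjunctive axioms: associativity, commutativity and the "weakening" $a \arp (b \parr c) \vdash_\sep a \arp \ps\cdot$-style manipulations of $\parr$ up to $\seq$, together with double-negation elimination $\neg\neg a \seq a$. These give the usual structural manoeuvres of classical propositional logic. Then, for $\comb{k}$: unfolding, $a\arp b\arp a = \neg a \parr (\neg b \parr a)$, which by commutativity/associativity of $\parr$ (axioms (3),(4),(5) up to $\seq$) is equivalent to $\neg b \parr (\neg a \parr a)$; and $\neg a \parr a \seq \top$ follows from $\ps1$-type reasoning combined with $\ps3$ (permutation) — more directly it is an instance of the excluded middle available since $\neg$ is involutive. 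Upward closure then places $\comb{k}^\A$ in $\sep$. For $\comb{cc}$: by \Cref{p:dne}(2),(4) we have $a \arp \bot \vdash_\sep \neg a$ and $\neg\neg a \vdash_\sep a$, and $((a\arp b)\arp a)\arp a$ is, after translating $\arp$ into $\parr/\neg$ and using involutivity, a rearrangement of Peirce's law which is provable from the classical $\parr$-axioms; again this is done at the level of arbitrary $a,b$ and then lifted over the meet.

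The main obstacle is $\comb{s}$: the internal derivation of $(a\arp b\arp c)\arp(a\arp b)\arp a\arp c$ from $\ps1$–$\ps5$ is the genuine combinatorial content, exactly mirroring the classical proof that the five Russell–Whitehead axioms for $\{\parr,\neg\}$ derive the $\mathbf{S}$ axiom of the $\{\arp\}$-presentation. I would obtain it by first proving a handful of derived internal lemmas — transitivity of $\vdash_\sep$ (composition of implications, itself needing $\ps4$ and permutation), the deduction-style rule "$a\parr x \vdash_\sep a \parr y$ whenever $x \vdash_\sep y$" (from $\ps4$), and the "exchange of hypotheses" $a\arp b\arp c \seq b \arp a\arp c$ (from $\ps3$, $\ps4$, $\ps5$) — and then chaining them; \Cref{lm:mod_pon_inf} guarantees every step survives the passage to the indexing meets, so the final line $\ps1,\dots,\ps5 \in \sep$ together with iterated modus ponens yields $\comb{s}^\A \in \sep$. (In practice this is the computation that has been discharged in Coq, so in the paper it would be stated with a reference to the formalization rather than expanded in full.)
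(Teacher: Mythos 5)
Your proposal is correct and follows essentially the same route as the paper's proof (Appendix~\ref{a:s:da_ia}): both work internally in the entailment preorder, derive the Hilbert axioms from $\ps1$–$\ps5$ via transitivity, the $\ps4$-congruence, exchange and contraction (for $\comb{s}$), and double-negation elimination plus $\minf_{a,b}(\neg a \arp a \arp b)\in\sep$ (for $\comb{cc}$), lifting every step over the indexing meets with the generalized modus ponens of \Cref{lm:mod_pon_inf}. The only cosmetic difference is for $\comb{k}$, where the paper concludes by plain upward closure from $\minf_{a,b}(a\arp(b\parr a))\in\sep$ rather than via your associativity/commutativity rearrangement, and note that your final step there is really an application of modus ponens with an element of $\sep$, not upward closure in the order $\leq$.
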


\begin{myproof}
See Appendix \ref{a:s:da_ia}.
\end{myproof}

As a consequence, we get the expected theorem:
\begin{theorem}[][PA_IA]\label{thm:pa_ia}
 Any disjunctive algebra is a classical implicative algebra.
\end{theorem}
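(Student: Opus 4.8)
The plan is to assemble the proof from the pieces already in place. By \Cref{p:sanity} and the preceding proposition, every disjunctive structure $(\A,\myleq,\parr,\neg)$ canonically gives an implicative structure $(\A,\myleq,\arp)$ with $a \arp b \defeq \neg a \parr b$, and the $\lambda$-calculus encodings agree between the two presentations. So the only thing left to check is that a disjunctive separator $\sep$ satisfies the three conditions of a classical implicative separator with respect to $\arp$. Condition (1), upward closure under $\leq$, is verbatim the same in both definitions, so it transfers for free. Condition (3), closure under modus ponens, is also literally the same statement once we note $a \arp b$ means the same thing in both structures; so it too transfers directly.

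The real content is condition (2): that $\comb{k}^\A, \comb{s}^\A \in \sep$, together with the classical requirement $\comb{cc}^\A \in \sep$. This is exactly the statement of the Combinators proposition proved just above (items \coqlink[psep_K]{}, \coqlink[psep_S]{} and \coqlink[psep_cc]{}), so I would simply invoke it. The work there is the genuinely new step: one must derive $\comb{k}^\A = \minf_{a,b}(a \arp b \arp a)$, $\comb{s}^\A = \minf_{a,b,c}((a\arp b\arp c)\arp(a\arp b)\arp a\arp c)$ and $\comb{cc}^\A = \minf_{a,b}(((a\arp b)\arp a)\arp a)$ from the five disjunctive combinators $\ps1,\dots,\ps5$ using only upward closure and modus ponens — that is, one has to exhibit explicit proof terms over the axioms $\ps1$–$\ps5$ witnessing the schemes $A\to B\to A$, the $\comb s$ scheme, and Peirce's law, and then use the generalized modus ponens of \Cref{lm:mod_pon_inf} to pass from the schematic derivations to the corresponding infima. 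In effect this is a small exercise in Hilbert-style propositional proof theory: reprove intuitionistic implicational logic (plus Peirce) from the Russell--Whitehead disjunction/negation axioms, which is classical and standard but fiddly.

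With those three facts in hand, the proof of the theorem is a two-line assembly: the induced implicative structure is an implicative structure by \Cref{PS_IS}; $\sep$ is upward-closed and closed under modus ponens for $\arp$ because these are the defining clauses of a disjunctive separator read through $\arp = \neg(\cdot)\parr(\cdot)$; and $\comb{k}^\A,\comb{s}^\A \in \sep$ by the Combinators proposition, so $\sep$ is an implicative separator, and since moreover $\comb{cc}^\A \in \sep$ it is a \emph{classical} one. Hence $(\A,\myleq,\arp,\sep)$ is a classical implicative algebra, which is what we wanted. The main obstacle, and the place where all the effort really lies, is the combinator computation feeding condition (2); everything else is bookkeeping about matching the two separator definitions.
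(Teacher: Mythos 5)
Your proposal matches the paper's own proof: both observe that upward closure and closure under modus ponens are literally the same clauses for disjunctive and implicative separators, and both reduce condition (2) to the preceding Combinators proposition showing $\comb{k}^\A,\comb{s}^\A,\comb{cc}^\A\in\sep$, with the heavy lifting living in that proposition rather than in the theorem itself. Correct, and essentially identical in structure to the paper's argument.
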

\begin{myproof}
 The conditions of upward closure and closure under modus ponens coincide for implicative and disjunctive separators, 
 and the previous propositions show that $\comb{k},\comb{s}$ and $\comb{cc}$ belong to the separator of any disjunctive algebra.
\end{myproof}
Since any disjunctive algebra is actually a particular case of implicative algebra, 
the construction leading to the implicative tripos can be rephrased entirely in this framework.
In particular, the same criteria allows us to determine whether the implicative tripos is isomorphic to a 
forcing tripos. 
Notably, a disjunctive algebra admitting an extra-commutation rule
the negation $\neg$ with arbitrary joins ($\neg \msup_{a\in A} a =  \minf_{a\in A}  \neg a$)
will induce an implicative algebra where the arrow commutes with arbitrary joins. 
In that case, the induced tripos would collapse to a forcing situation (see \cite{Miquel17}).

\section{A positive decomposition: conjunctive algebras}
\label{s:ta}
\newcommand{\tens}{\automath{\tensor}}
\newcommand{\xtens}{\automath{\tensor^\bot}}
\newcommand{\ktens}{\automath{\tensor^{\neg\!\neg}}}

\setCoqFilename{ImpAlg.TensorAlgebras}
\subsection{Call-by-value realizability models}
While there exists now several models build of classical theories 
constructed via Krivine realizability~\cite{Krivine12,Krivine15,Krivine16,Miquel11b},
they all have in common that they rely on a presentation of logic
based on negative connectives/quantifiers.
If this might not seem shocking from a mathematical perspective, 
it has the computational counterpart that these models
all build on a call-by-name calculus, namely the $\lambda_c$-calculus\footnote{Actually,
there is two occurrences of realizability interpretations for call-by-value calculus,
including Munch-Maccagnoni's system L, but both are focused on the analysis of
the computational behavior of programs rather than constructing models 
of a given logic~\cite{Munch09,Lepigre16}.}.
In light of the logical consequences that computational choices 
have on the induced theory, it is natural to wonder whether 
the choice of a call-by-name evaluation strategy is anecdotal or fundamental.

As a first step in this direction, we analyze here the algebraic structure
of realizability models based on the {\Ltens} calculus,
the positive fragment of Munch-Maccagnoni's system L corresponding to the formulas
defined by: $ A,B ::=  X \mid \neg A \mid A\tensor B \mid \exists X.A$.
Through the well-known duality between terms and evaluation contexts~\cite{CurHer00,Munch09},
this fragment is dual to the {\Lpar} calculus
and it  naturally allows to embed the $\lambda$-terms evaluated in a call-by-value fashion.
We shall now reproduce the approach we had for {\Lpar}:
guided by the analysis of the realizability models induced by 
the {\Ltens} calculus, we first define \emph{conjunctive structures}.
We then show how these structures can be equipped with a separator and how
the resulting \emph{conjunctive algebras} lead to the construction of a \emph{conjunctive tripos}.
We will finally show in the next section 
how conjunctive and disjunctive algebras are related 
by an algebraic duality.

\subsection{Conjunctive structures}
As in the previous section, we will not introduce here the $\Ltens$ calculus 
and the corresponding realizability models (see \vlong{\Cref{a:ta}}\vshort{the extended version} for details).
Their main characteristic is that, being build on top of a call-by-value calculus,
a formula $A$ is primitively interpreted by 
its \emph{ground truth value} $\tvv{A}\in\P(\V_O)$
which is a set of values. Its  falsity and truth values are then defined by 
orthogonality~\cite{Munch09,Lepigre16}.
Once again, we can observe the existing commutations in these realizability 
models. Insofar as we are in a structure centered on positive connectives,
we especially pay attention to the commutations with joins.
As a matter of fact, in any $\Ltens$ realizability model, 
we have that if $X\notin \FV(B)$:\\[0.5em]
\begin{minipage}{0.5\textwidth}
 \begin{enumerate}
 \item $\tvval{\exists X. (A \tensor B)}= \tvval{(\exists X. A) \tensor B}$.
 \item $\tvval{\exists X. (B \tensor A)}= \tvval{B \tensor (\exists X.A)}$.
  \end{enumerate}
 \end{minipage}
\begin{minipage}{0.5\textwidth}
 \begin{enumerate}
 \setcounter{enumi}{2}
  \item $\tvval{\neg{(\exists X. A)}}= \bigcap_{S\in\P(\V_0)}\tvval{\neg{A\{X:=\dot S\}}}$
  \item[]
  \end{enumerate}
 \end{minipage}\\[0.5em]
Since we are now interested in primitive truth values, which are logically ordered by inclusion (in particular,
the existential quantifier is interpreted by unions, thus joins),
the previous proposition advocates for the following definition:

\begin{definition}[Conjunctive structure][TensorStructure]
\label{def:conj_struct}
A \emph{conjunctive structure} is a complete join-semilattice $(\A,\myleq)$ equipped with a binary operation $(a,b)\mapsto a \tensor b$,
and a unary operation $a \mapsto \neg a$, 
such that  for all $a,a',b,b'\in\A$ and for all subset $B\subseteq \A$ we have: \\[0.5em]
\begin{minipage}{0.49\textwidth}
\begin{enumerate}
 \item $\text{if } a\leq a' \text{ then } \neg a' \leq \neg a$
  \item 
 $\text{if } a\leq a' \text{ and } b\leq b' \text{ then } a \tensor b \leq a' \tensor b'$
 
 \item $\msup_{b\in B} (a \tensor b) = a \tensor (\msup_{b\in B}  b)$
 \end{enumerate}
\end{minipage}
\begin{minipage}{0.49\textwidth}
\begin{enumerate}
\setcounter{enumi}{3}
 \item $\msup_{b\in B} (b \tensor a) =  (\msup_{b\in B}  b) \tensor a$
\item $\neg \msup_{a\in A} a = \minf_{a\in A} \neg a$
\item[]
\end{enumerate}
\end{minipage}

\end{definition}


As in the cases of implicative and disjunctive structures, the commutation 
rules imply that:
\begin{minipage}{0.8\textwidth}\nomidem
\begin{multicols}{3}
\begin{enumerate}  
 \coqitem[tensor_bot_l]$\bot \tensor a = \bot\qquad\qquad$
 \coqitem[tensor_bot_r]$a \tensor \bot = \bot\qquad\qquad$
 \coqitem[tensor_top  ]$\neg \bot = \top$     
\end{enumerate}
\end{multicols}
\end{minipage}

\setCoqFilename{ImpAlg.Dummies}
\begin{example}[Dummy conjunctive structure][dummy_tensor]
 Given a complete lattice $\L$, the following definitions give rise to a dummy conjunctive structure: \qquad $ a\tensor b \defeq \bot \qquad\qquad \neg a\defeq \top $.
\end{example}

\setCoqFilename{ImpAlg.BooleanAlgebras}
\begin{example}[Complete Boolean algebras][CBA_TS]
Let $\B$ be a complete Boolean algebra. It embodies a conjunctive structure, that is defined by:
\noem
\begin{multicols}{4}
\begin{itemize}
 \item $\A        \defeq \B$ 
 \item $a \leq b  \defeq a \leq b$ 
 \item $a \tensor b \defeq a \land b$ 
\item $\neg a    \defeq \neg a  $
\end{itemize}
\end{multicols}
\noem
\end{example}

\setCoqFilename{ImpAlg.TensorAlgebras}
\begin{example}[\Ltens~realizability models]
As for the disjunctive case, we can abstract the structure of the realizability interpretation of $\Ltens$
to define:\nomidem
\begin{multicols}{2}
\begin{itemize}
 \item $\A \defeq \P(\V_0)           $
 \item $a \tensor b \defeq \{(V_1,V_2): V_1\in a \land V_2 \in b\}$
 \item $a \leq b \defeq a \subseteq b                              $
 \item $\neg a \defeq [a^\orth]=\{[e]: e\in a^\orth\}               $
\end{itemize}
\end{multicols}\nomidem
\noindent where $\pole$ is the pole, $\V_0$ is the set of closed values and 
$(\cdot,\cdot)$ and $[\cdot]$ are the maps corresponding to $\tensor$ and $\neg$.
The resulting quadruple $(\A,\leq,\tensor,\neg)$ is a conjunctive structure\vlong{ (see \Cref{prop:ltens_ts})}.
\end{example}

It is worth noting that even though we can define an arrow by $a\art b \defeq \neg(a\tensor \neg b)$,
it does not induce an implicative structure:
indeed, the distributivity law is not true in general\footnote{For instance,
it is false in {\Ltens} realizability models.}.
In turns, we have another distributivity\!\! \turl{tarrow_join}{law}
which is usually wrong in implicative structure:
\twoelt{
(\msup_{a\in A} a) \art b = \minf_{a\in A}(a\art b)
}{
\minf_{b\in B}(a\art b)  \not\leq a \art (\minf_{b\in B} b)
}
Actually, implicative structures where both are true corresponds precisely to a degenerated forcing situation.

Here again, we can define an embedding of  $\Ltens$ into any conjunctive structure which is sound with respect to typing and reductions\textsuperscript{\ref{fn:ext}}.

%

\subsection{Conjunctive algebras}
The definition of conjunctive separators turns out to be more subtle than in the disjunctive case.
Among others things, conjunctive structures mainly axiomatize joins, 
while the combinators or usual mathematical axioms that we could wish to have in a separator
are more naturally expressed via universal quantifications, hence meets.
Yet, an analysis of the sequent calculus underlying {\Ltens} type system\footnote{\label{fn:ext}See \vlong{\Cref{a:ca:ca}}\vshort{the extended version} for more details.},
shows that
we could consider a tensorial calculus where deduction systematically involves a conclusion
of the shape $\neg A$.
This justifies to consider the following combinators\footnote{Observe that
are directly dual to the combinators 
for disjunctive separators and that they can be alternatively given the shape $\neg \msup_{\_\in\A} ...$.}:
$$
\begin{array}{c|c}
\begin{array}{r@{~}c@{~}l}
\ts1 &\defeq & \minf_{a\in\A}    \neg\left[\neg (a\tensor a) \tensor a                           \right]     \\
\ts2 &\defeq & \minf_{a,b\in\A}  \neg\left[\neg a \tensor (a\tensor b)                           \right]     \\
\ts3 &\defeq & \minf_{a,b\in\A}  \neg\left[\neg (a\tensor b) \tensor (b \tensor a)                  \right]       \\
\end{array}
&
\begin{array}{r@{~}c@{~}l}
\ts4 &\defeq & \minf_{a,b,c\in\A}\neg\left[\neg (\neg a \tensor b) \tensor (\neg (c \tensor a) \tensor (c \tensor b))   \right]  \\
\ts5 &\defeq & \minf_{a,b,c\in\A}\neg\left[\neg (a \tensor (b \tensor c) )\tensor ((a \tensor b) \tensor c)\right]  \\
\end{array}
\end{array}
$$
and to define conjunctive separators as follows:

\begin{definition}[Separator][TensorAlgebra]
 We call \emph{separator} for the disjunctive structure $\A$ any subset $\sep\subseteq\A$ 
 that fulfills the following conditions for all $a,b\in \A$:\\[0.5em]
 \begin{minipage}{0.5\textwidth}
 \begin{enumerate}
  \item If $a \in \sep$ and $a\leq b$ then $b\in\sep$.		
  \item $\ts1,\ts2,\ts3,\ts4$ and $\ts5$ are in $\sep$.		
  \end{enumerate}
 \end{minipage}
 \begin{minipage}{0.5\textwidth}
 \begin{enumerate}\setcounter{enumi}{2}
  \item If $\neg(a\tensor b) \in \sep$ and $a\in\sep$ then $\neg b\in\sep$.	
  \item If $a \in \sep$ and $b\in\sep$ then $a\tensor b\in\sep$.	
 \end{enumerate}
 \end{minipage}\\[0.5em]
  A separator $\sep$ is said to be \emph{classical} if besides $\neg\neg a\in\sep$ implies $a\in\sep$.
\end{definition}

\begin{remark}[Modus Ponens][MP_t]
 If the separator is classical, it is easy to see that the modus ponens is valid:
 if $a\art b \in \sep$ and $a\in\sep$, then $\neg\neg b\in\sep$ by (3) and thus $b\in \sep$.
\end{remark}

\begin{example}[Complete Boolean algebras][CBA_KTA]
 \label{p:ba_ta}
 Once again, if $\B$ is a complete Boolean algebra, $\B$ induces a conjunctive structure in
 which it is easy to verify that the combinators $\ps1,\ps3,\ps3,\ps4$ and $\ps5$ are equal to the maximal element $\top$.
 Therefore, the singleton $\{\top\}$ is a valid separator.
\end{example}

\begin{example}[\Ltens realizability model]
As expected, the set of realized formulas by a proof-like term:
defines a valid separator for the conjunctive structures induced 
by $\Ltens$ realizability models.
\end{example}

\begin{example}[Kleene realizability]
 We do not want to enter into too much details here,
 but it is worth mentioning that realizability interpretations \emph{à la}
 Kleene of intuitionistic calculi equipped with primitive pairs 
 (\emph{e.g.} (partial) combinatory algebras, the $\lambda$-calculus) 
 induce conjunctive algebras. 
 Insofar as many Kleene realizability models takes position against classical reasoning 
 (for $\forall X.X\lor \neg X$ is not realized and hence its negation is), 
 these algebras have the interesting properties of not being classical 
 (and are even incompatible with a classical completion).
\end{example}

\begin{remark}[Generalized axioms][MP_inf]
Once again, the axioms (3) and (4) generalize 
to meet of families $(a_i)_{i\in I},(b_i)_{i\in I}$:
\twoelt{
\infer{\vdash_I \neg b}{\vdash_I \neg (a\tensor b) & \vdash_I a}
}{
\infer{\vdash_I a\tensor b}{\vdash_I a & \vdash_I b}
}
where we write $\vdash_{I} a$ ~ for $(\minf_{i\in I}a_i) \in \sep$ and 
where the negation and conjunction of families are taken pointwise.
Once again, the axioms being themselves expressed as meets,
this means that any result obtained from the separator’s axioms 
(but the classical one) can be generalized to meets.
\end{remark}

\subsection{Internal logic}
As before, we consider the entailment relation defined by $a\vdash_\sep b \defeq (a\art b)\in\sep$.
Observe that if the separator is not classical,
we do not have that $a\vdash_\sep b$ and $a\in\sep$ entails\footnote{Actually we 
can consider a different \turl{nentails}{relation} $a\vdash^\neg b \defeq \neg(a\tensor b)$
for which $a\vdash^\neg b$ and $a\in\sep$ entails $\neg b$.
This one turns out to be
useful to ease proofs, but from a logical perspective, 
the significant entailment is the one given by $a\vdash_\sep b$.}  $b\in\sep$.
Nonetheless, this relation still defines a preorder in the sense that:
\begin{proposition}[Preorder]
 \label{a:po}
 For any $a,b,c\in\A$, we have:\vlong{\\[1em]}\vshort{\\}
 \begin{minipage}{0.5\textwidth}
  \begin{enumerate}
  \coqitem[id_t] $ a\vdash_\sep a$
\end{enumerate}
 \end{minipage}
  \begin{minipage}{0.5\textwidth}
  \begin{enumerate}\setcounter{enumi}{1}
  \coqitem[C6_t] If $a\vdash_{\sep} b$ and $b\vdash_{\sep} c$ then $a\vdash_{\sep} c$ 
\end{enumerate}
 \end{minipage}
\end{proposition}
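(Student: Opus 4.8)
The plan is to prove both items by internal Hilbert‑style reasoning inside the separator, using only the combinators $\ts1,\dots,\ts5$ (condition~(2)), the ``negated modus ponens'' of condition~(3) and the pairing of condition~(4) — never the classical condition, which \Cref{a:po} deliberately avoids. It is convenient to argue through the auxiliary relation $a\vdash^\neg b\defeq\neg(a\tensor b)\in\sep$ introduced earlier: condition~(3) then reads exactly ``$a\vdash^\neg b$ and $a\in\sep$ imply $\neg b\in\sep$'', and one uses repeatedly the bridge identity $a\vdash_\sep b\Longleftrightarrow\neg(a\tensor\neg b)\in\sep$. The whole development mirrors, connective for connective, the proof of the preorder proposition for disjunctive algebras in \Cref{s:da}, the combinators $\ts1,\dots,\ts5$ being the conjunctive, negation‑threaded transcriptions of $\ps1,\dots,\ps5$ (themselves the Whitehead–Russell axioms~\cite{RusWhi25}).

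The crux is a single derived fact: the ``syllogism combinator''
\[ \mathsf{B}\;\defeq\;\minf_{a,b,c\in\A}\neg\bigl[\,\neg(a\tensor\neg b)\tensor\bigl(\neg(b\tensor\neg c)\tensor(a\tensor\neg c)\bigr)\,\bigr] \]
belongs to $\sep$, this being the conjunctive transcription of Whitehead and Russell's theorem $(p\supset q)\supset((q\supset r)\supset(p\supset r))$, which I would obtain from $\ts1,\dots,\ts5$ by transcribing the classical derivation of that theorem from the five propositional axioms, with condition~(3) in the role of modus ponens and condition~(4) for pairing hypotheses. Granting $\mathsf B\in\sep$, the two claims are immediate. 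For transitivity: since $\mathsf B$ is below each of its instances, condition~(1) gives $\neg[\,\neg(a\tensor\neg b)\tensor(\neg(b\tensor\neg c)\tensor(a\tensor\neg c))\,]\in\sep$; applying condition~(3) with the hypothesis $a\vdash_\sep b$, i.e.\ $\neg(a\tensor\neg b)\in\sep$, yields $\neg[\,\neg(b\tensor\neg c)\tensor(a\tensor\neg c)\,]\in\sep$; applying it again with $b\vdash_\sep c$ yields $\neg(a\tensor\neg c)\in\sep$, i.e.\ $a\vdash_\sep c$ — no classical reasoning intervenes. For reflexivity: first derive the symmetry rule ``$\neg(u\tensor v)\in\sep\Rightarrow\neg(v\tensor u)\in\sep$'' from $\ts3$ and condition~(3); then $a\vdash_\sep(a\tensor a)$, i.e.\ $\neg(a\tensor\neg(a\tensor a))\in\sep$, holds by symmetry because $\neg(\neg(a\tensor a)\tensor a)\in\sep$ is an instance of $\ts1$, and likewise $(a\tensor a)\vdash_\sep a$ holds because $\neg(\neg a\tensor(a\tensor a))\in\sep$ is an instance of $\ts2$; composing these through the instance of $\mathsf B$ at $(a,\,a\tensor a,\,a)$ by two applications of condition~(3) gives $\neg(a\tensor\neg a)\in\sep$, i.e.\ $a\vdash_\sep a$.

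The main obstacle is the derivation of $\mathsf B\in\sep$ itself (roughly a dozen lines, using all of $\ts1,\dots,\ts5$). The delicate point is bookkeeping negations: condition~(3) is modus ponens ``with a $\neg$ forced onto the conclusion'', so every step of the classical derivation must be rearranged so that what it produces is a negated element of $\A$ — this is exactly why the clean modus ponens of the earlier remark (which needs the classical condition) is unavailable here, and why none of the argument can be shortcut by lattice inequalities: conjunctive structures provide neither commutativity of $\tensor$ nor involutivity of $\neg$, so all such rearrangements must be carried out deductively, through the combinators.
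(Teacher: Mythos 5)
Your overall plan (reduce everything to a single ``syllogism'' combinator and fire condition~(3) twice) is the right shape, and your reflexivity argument via $\ts1$, $\ts2$ and the symmetry rule extracted from $\ts3$ is essentially sound in outline. But the proposal has a genuine gap: everything rests on $\mathsf B\in\sep$, and you never derive it — you only assert that a ``roughly a dozen lines'' transcription of the Whitehead--Russell derivation would produce it. That derivation is precisely where all the difficulty you yourself identify lives: the dualization $\ps{i}\mapsto\ts{i}$ does \emph{not} land on the $\art$-form of the axioms (e.g.\ $\ts2=\minf_{a,b}\neg[\neg a\tensor(a\tensor b)]$ is not $\minf_{a,b}\,a\art(a\tensor b)=\minf_{a,b}\neg(a\tensor\neg(a\tensor b))$), so converting each axiom into the shape your $\mathsf B$ needs already requires commuting $\tensor$ under the separator and, worse, inserting or removing double negations. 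The double-negation laws ($a\vdash_\sep\neg\neg a$ and $\neg\neg a\vdash_\sep a$) are established in the paper \emph{after} and \emph{using} this preorder proposition, so a derivation of $\mathsf B$ that leans on them would be circular; you would have to exhibit a derivation that avoids them, and you have not.

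The paper sidesteps all of this by noticing that $\ts4=\minf_{x,y,z}\neg[\neg(\neg x\tensor y)\tensor(\neg(z\tensor x)\tensor(z\tensor y))]$ \emph{already is} the syllogism combinator — not for $\vdash_\sep$, but for the auxiliary relation $a\vdash^{\neg}b\defeq\neg(a\tensor b)\in\sep$: the instance at $(x,y,z)=(b,c,a)$ is $\neg[\neg(\neg b\tensor c)\tensor(\neg(a\tensor b)\tensor(a\tensor c))]$, and two applications of condition~(3) turn the premises $\neg(\neg b\tensor c)\in\sep$ and $\neg(a\tensor b)\in\sep$ into $\neg(a\tensor c)\in\sep$. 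No fresh Hilbert-style development is needed; the only residual work is the (Coq-checked) bookkeeping that converts between $\vdash_\sep$ and $\vdash^{\neg}$, i.e.\ between $\neg(b\tensor\neg c)$ and $\neg(\neg\neg b\tensor\neg c)$ in the second premise. Note that your $\mathsf B$ is \emph{not} an instance of $\ts4$ (the inner negations sit on different arguments of the tensors), so you cannot shortcut your derivation that way either. To repair the proof, either carry out the derivation of $\mathsf B$ explicitly from $\ts1,\dots,\ts5$ without invoking double-negation elimination, or — much shorter — replace $\mathsf B$ by the relevant instance of $\ts4$ and do the $\vdash^{\neg}$-to-$\vdash_\sep$ translation once and for all.
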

Intuitively, this reflects the fact that despite we may not be able to extract 
the value of a computation,
we can always chain it with another computation expecting a value.

Here again, we can relate the negation $\neg a$ 
to the one induced by the arrow $a\art \bot$:
\begin{proposition}[Implicative negation]
 For all $a\in \A$, the \etienne{compact} following holds:\vshort{\noem}\vlong{\nomidem}
 \begin{multicols}{4}
  \begin{enumerate}
   \coqitem[tneg_imp_bot] $\neg a\vdash_\sep a \art \bot$ 
   \coqitem[imp_bot_tneg] $ a\art\bot \vdash_\sep \neg a $
   \coqitem[dni_t] $a\vdash_\sep \neg\neg a$ 
   \coqitem[dne_t] $\neg \neg a \vdash_\sep a $
  \end{enumerate}
 \end{multicols}\noem
\end{proposition}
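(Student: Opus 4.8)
The plan is to prove the four entailments by reasoning inside the separator, following (dually) the proof of Proposition~\ref{p:dne}. I would first unfold the definitions: since $a\art b = \neg(a\tensor\neg b)$ and $\neg\bot = \top$, we have $a\art\bot = \neg(a\tensor\top)$, and unfolding $\vdash_\sep$ through $\art$ turns the four goals into membership in $\sep$ of, respectively, $\neg(\neg a\tensor\neg\neg(a\art\bot))$, $\neg((a\art\bot)\tensor\neg\neg a)$, $\neg(a\tensor\neg\neg\neg a)$ and $\neg(\neg\neg a\tensor\neg a)$. Throughout I would also keep in play the auxiliary relation $a\vdash^\neg b$, which holds when $\neg(a\tensor b)\in\sep$, together with the dictionary $a\vdash_\sep b \Leftrightarrow a\vdash^\neg\neg b$: it is $\vdash^\neg$ that the combinators $\ts1,\dots,\ts5$ and the closure rules~(3)--(4) of the separator govern most directly.

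As preliminary lemmas I would extract from the combinators the usual structural principles in the shape needed here: exchange, $\neg(x\tensor y)\in\sep \Rightarrow \neg(y\tensor x)\in\sep$, from $\ts3$ together with closure rule~(3); associativity from $\ts5$; the projection $x\tensor y\vdash_\sep x$ from $\ts2$ (after exchange); the contraposition rule $x\vdash_\sep y \Rightarrow \neg y\vdash_\sep\neg x$ and the left/right weakening of $\vdash^\neg$ along $\vdash_\sep$, read off from $\ts4$; and I would record that $\top\in\sep$ (since $\ts1\leq\top$ and $\sep$ is upward closed), i.e. that $\top$ is realized. Reflexivity and transitivity of $\vdash_\sep$ are Proposition~\ref{a:po}. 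The items are then proved in an order avoiding circularity: first~(4), then~(3), then~(1) and~(2).

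For~(4): reflexivity at $\neg a$ gives $\neg a\vdash_\sep\neg a$, that is $\neg(\neg a\tensor\neg\neg a)\in\sep$, and exchange turns this into $\neg(\neg\neg a\tensor\neg a)\in\sep$, which is exactly $\neg\neg a\vdash_\sep a$; note that no classical axiom is used --- here ``double-negation elimination'' is really the tautology that $\neg\neg a$ already refutes $\neg a$. For~(3), $a\vdash_\sep\neg\neg a$ is double-negation \emph{introduction}, intuitionistically valid: after the dictionary and exchange it amounts to $\neg\neg\neg a\vdash^\neg a$, which I would obtain by composing the reflexivity instance $\neg a\vdash^\neg a$ with $\neg\neg\neg a\vdash_\sep\neg a$ (itself reflexivity at $\neg\neg a$ plus exchange) via a cut lemma for $\vdash^\neg$ read off from $\ts4$--$\ts5$. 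For~(1) and~(2), which together say that $\neg a$ and $a\art\bot = \neg(a\tensor\top)$ are $\vdash_\sep$-equivalent, I would go through the auxiliary fact that $\neg(a\tensor\top)$ and $\neg a$ are interderivable: one direction follows from $\ts2$ at $b := \top$ (an instance $\neg a\vdash^\neg(a\tensor\top)$) combined with items~(3)--(4) applied under the congruence rule, and the other uses the realizer $\top\in\sep$ plugged in via closure rule~(3); the two entailments of the proposition then follow.

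The main obstacle is the systematic bookkeeping of negations. The combinators $\ts1,\dots,\ts5$ are phrased in ``negated-tensor'' shape $\neg[\,\cdots\tensor\cdots\,]$ rather than in arrow shape, so matching them against the $\art$-formulas of the statement forces one to move repeatedly through the dictionary $a\vdash_\sep b\Leftrightarrow a\vdash^\neg\neg b$ and to re-associate and exchange tensors until the occurrences of $\neg$ line up --- and, in~(1)--(2), to track where exactly the $\top$-realizer has to be inserted. Pinning down the contraposition, weakening and cut lemmas in precisely the usable forms, and checking that --- contrary to what the chains of $\neg$'s suggest --- none of~(1)--(4) covertly needs the classical axiom (all four are intuitionistically realizable, as the informal readings above indicate), is the delicate part; the remaining combinator calculus is routine and is deferred to the appendix.
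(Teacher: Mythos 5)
Your proposal is correct and follows essentially the same route as the paper's (largely Coq-deferred) proof: exchange and weakening/cut rules extracted from $\ts3$ and $\ts4$ via closure rule (3), double-negation introduction and elimination obtained from reflexivity plus exchange (with no use of the classical axiom, as you rightly note), and the interderivability of $\neg a$ with $a\art\bot=\neg(a\tensor\top)$ via $\ts1$, $\ts2$, monotonicity and upward closure. Two small slips to fix: the unfolded form of goal (1) should be $\neg(\neg a\tensor\neg(a\art\bot))=\neg(\neg a\tensor\neg\neg(a\tensor\top))$, not $\neg(\neg a\tensor\neg\neg(a\art\bot))$; and for goal (2) the paper goes through contraposition of $a\vdash_\sep a\tensor\top$ (itself obtained from $\ts1$ by upward closure), rather than through ``$\top\in\sep$ plugged in via closure rule (3)'', which on its own only produces membership facts and not the required entailment.
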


\label{s:conj_lambda}

As in implicative structures, we can define 
the abstraction and application of the $\lambda$-calculus:
\twoelt{
\lambda f \defeq \minf_{a\in\A} (a \art f(a))
}{
ab \defeq \minf\{\neg\neg c: a \leq b \art c\}
}
Observe that here we need to add a double negation,
since intuitively $ab$ is a \emph{computation}
of type $\neg\neg c$ rather than a value of type $c$.
In other words, values are not stable by applications,
and extracting a value from a computation requires a form of classical control.
Nevertheless, for any separator we have:
\begin{proposition}[][app_closed]
 If $a\in\sep$ and $b\in\sep$ then $ab\in\sep$.
\end{proposition}

Similarly, the beta reduction rule now involves a double-negation 
on the reduced term:
\begin{proposition}[][beta_reduction]
 $(\lambda f)a \leq \neg\neg f(a)$
\end{proposition}
\begin{myproof}
 We first show that $t\leq a\art b$ implies $ta \leq \neg\neg b$,
 and we use that $\lambda f \leq a \art f(a)$ to conclude.
  \end{myproof}

We show that Hilbert's combinators $\comb{k}$ and $\comb{s}$
belong to any conjunctive separator:
\begin{proposition}[$\comb{k}$ and $\comb{s}$][tsep_K]
We have:\\[0.3em]
\begin{minipage}{0.43\textwidth}
 \begin{enumerate}
 \coqitem[tsep_K] $(\lambda x y.x)^\A \in\sep$
 \end{enumerate}
\end{minipage}
\begin{minipage}{0.55\textwidth}
 \begin{enumerate}\setcounter{enumi}{1}
 \coqitem[tsep_S] $(\lambda x y z.x\,z\,(y\,z))^\A \in\sep$
\end{enumerate}
\end{minipage}
\end{proposition}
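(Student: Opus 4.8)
The plan is to obtain $\comb{k}^\A\in\sep$ and $\comb{s}^\A\in\sep$ by an internal Hilbert-style derivation from the axioms $\ts1,\dots,\ts5$. First I would unfold the interpretations: since $a\art b=\neg(a\tensor\neg b)$ commutes with arbitrary meets in its right argument (a consequence of $\neg\msup_{a}a=\minf_{a}\neg a$ together with the join-distributivity of $\tensor$), both $\comb{k}^\A=(\lambda x y.x)^\A$ and $\comb{s}^\A=(\lambda x y z.x\,z\,(y\,z))^\A$ rewrite as meets of the shape $\minf\neg[\dots]$ — exactly the shape of the axioms $\ts1,\dots,\ts5$ themselves. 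The key observation is that this whole fragment is \emph{negative}: the arrow $\art$, the combinators, the axioms, and the two targets all have a head of the form $\neg(\,\cdot\,)$, and the separator's modus ponens (axiom (3): from $\neg(a\tensor b)\in\sep$ and $a\in\sep$ deduce $\neg b\in\sep$) sends negative premises to a negative conclusion. So the derivation never needs to strip a double negation that a non-classical separator would forbid; the weaker, negated form of modus ponens — and its meet-indexed generalisation from the Remark on generalized axioms — suffices throughout, which is also what lets the universal quantifications over $a,b,c$ survive so that the final elements genuinely land in $\sep$.

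Second, I would derive the auxiliary entailments. From $\ts2$ one gets a weakening $\neg a\vdash_\sep\neg(a\tensor b)$, from $\ts3$ a permutation $\neg(a\tensor b)\vdash_\sep\neg(b\tensor a)$, from $\ts5$ associativity of the negated tensor, and from $\ts4$ the ``summation'' step, namely the De Morgan image of the Russell–Whitehead axiom $(q\to r)\to((p\lor q)\to(p\lor r))$. Through the dictionary that reads $\neg(a\tensor b)$ as a disjunction — concretely, the derived operation $a\parr b\defeq\neg(\neg a\tensor\neg b)$ turns $(\A,\leq,\parr,\neg)$ into a disjunctive structure whose induced arrow coincides with $\art$ modulo $\seq$ — these entailments are precisely the disjunctive axioms $\ps1,\dots,\ps5$. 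Hence the derivation of $\comb{k}^\A,\comb{s}^\A\in\sep$ is the transport of the disjunctive derivation already available (the Combinators proposition of \Cref{s:da}) along this dictionary; alternatively one redoes the two standard Hilbert-style derivations directly, gluing the intermediate entailments with the preorder of \Cref{a:po} and the double-negation facts of the internal-logic subsection, and handling the applications occurring in the body of $\comb{s}$ with the beta-reduction and application-closure lemmas just proved.

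I expect the main obstacle to be the negation bookkeeping. Because $a\art b=\neg(a\tensor\neg b)$, each nesting of the arrow inserts a pair of negations, and $\comb{s}$ is three arrows deep, so the intermediate terms carry strings of $\neg$'s that do not cancel on the nose — a conjunctive structure need not satisfy $a=\neg\neg a$. One must move them around using $a\vdash_\sep\neg\neg a$, $\neg\neg a\vdash_\sep a$, contraposition, and monotonicity of $\tensor$ and $\neg$, taking care to stay inside the negative fragment so that axiom (3) always applies and classicality is never invoked. This is routine but intricate, which is why the full computation is deferred to the appendix.
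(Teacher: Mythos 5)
You are right that the whole derivation must stay inside the negative fragment, that only the negated modus ponens (axiom (3)) is available, and that $\ts2,\ts3,\ts4,\ts5$ play the roles of weakening, exchange, the Russell--Whitehead summation axiom and associativity. But the route you put forward as the main one --- transporting the disjunctive derivation along $a\parr b\defeq\neg(\neg a\tensor\neg b)$ --- does not go through. A conjunctive structure only axiomatizes $\neg\msup_{a}a=\minf_{a}\neg a$; the dual law $\neg\minf_{a}a=\msup_{a}\neg a$ is \emph{not} an axiom (and fails in $\Ltens$ realizability models), so your derived $\parr$ does not commute with arbitrary meets and $(\A,\leq,\parr,\neg)$ is not a disjunctive structure. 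The paper's actual duality (Section \ref{s:du}) produces a disjunctive structure only by \emph{reversing the order}, and transports separators as preimages under $\neg$; what that machinery yields is a statement about $\neg\comb{s}$ in the dual algebra, not $\comb{s}^\A\in\sep$ in the original one. So the ``transport'' branch of your plan has to be abandoned.

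Your fallback branch (redo the Hilbert-style derivations directly) is essentially what the paper does for $\comb{k}$: compute $(\lambda xy.x)^\A$ explicitly as a meet of negated tensors and conclude by transitivity from $\ts2$, $\ts3$ and associativity/commutativity of $\tensor$ modulo the separator. For $\comb{s}$, however, your sketch misses the step that is the actual content of the paper's proof: in a conjunctive algebra $\comb{s}^\A$ is \emph{not} equal to the meet interpreting its principal type, because each application in $x\,z\,(y\,z)$ inserts a $\neg\neg$. A Hilbert-style derivation from $\ts1,\dots,\ts5$ only gives you $\minf_{a,b,c}\left((a\art b\art c)\art(a\art b)\art a\art c\right)\in\sep$ (the paper's step (S1)); one must then bridge from the type to the term via auxiliary facts such as $\minf\{\neg(((a\tensor b)\tensor c)\tensor\neg d):ac\leq bc\art d\}\in\sep$ and $\minf_{a,b}\bigl(a\art(b\art ab)\bigr)\in\sep$, which rest on the adjunction characterizing application, not on the lemmas you cite: $\beta$-reduction only gives $(\lambda f)a\leq\neg\neg f(a)$, and stripping that $\neg\neg$ is precisely what a non-classical separator forbids, while application-closure says nothing about entailments of the form $a\art(b\art ab)$. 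Without that type-to-term bridge the argument for (2) is incomplete.
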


By combinatorial completeness,  
for any closed $\lambda$-term $t$ we thus have the 
a combinatorial term $t_0$ (\emph{i.e.} a composition of $\comb{k}$ and $\comb{s}$)
such that $t_0 \to^* t$. 
Since $\sep$ is closed under application, 
$t_0^\A$ also belong to $\sep$. Besides,
since for each reduction step $t_n \to  t_{n+1}$, 
we have $t_n^\A \leq \neg\neg t_{n+1}^\A$, if 
the separator is classical\footnote{Actually,
since we always have that if $\neg\neg\neg\neg a\in\sep$ then $\neg\neg a\in\sep$,
the same proof shows that in the intuitionistic 
case we have at $\neg\neg t^\A\in\sep$.}, we can thus deduce
that it contains the interpretation of $t$ :
\begin{theorem}[$\lambda$-calculus] \label{thm:lambda}
 If $\sep$ is classical and $t$ is a closed $\lambda$-term,
 then $t^\A\in\sep$.
\end{theorem}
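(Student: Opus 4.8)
The plan is to transfer membership in $\sep$ from the combinators $\comb{k}$ and $\comb{s}$, for which it is already known, to $t^\A$ along a $\beta$-reduction path, discharging the double negations that accumulate by using that $\sep$ is classical.

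First I would apply combinatorial completeness to obtain a closed combinatory term $t_0$, built only from $\comb{k}$, $\comb{s}$ and application, with $t_0 \to^* t$, and show $t_0^\A \in \sep$. Since the interpretation $t \mapsto t^\A$ is defined compositionally --- in particular $(uv)^\A = u^\A\,v^\A$, where $ab = \minf\{\neg\neg c : a \leq b \art c\}$ is the application of the conjunctive structure --- this is an immediate induction on the syntax of $t_0$: the leaves are handled by the preceding proposition, which gives $\comb{k}^\A = (\lambda xy.x)^\A \in \sep$ and $\comb{s}^\A = (\lambda xyz.\,x\,z\,(y\,z))^\A \in \sep$, and each application node by the closure of $\sep$ under application established just above.

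Next I would propagate this along the reduction $t_0 \to t_1 \to \cdots \to t_n = t$. By the inequality recorded above, each step gives $t_k^\A \leq \neg\neg\, t_{k+1}^\A$; since $\neg$ is antitone, $\neg\neg$ is monotone, so iterating yields $t_0^\A \leq (\neg\neg)^{\,n}\, t^\A$. Upward closure of $\sep$ together with $t_0^\A \in \sep$ then give $(\neg\neg)^{\,n}\, t^\A \in \sep$, and $n$ applications of the classicality rule $\neg\neg a \in \sep \Rightarrow a \in \sep$ strip all the double negations, leaving $t^\A \in \sep$.

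There is no serious obstacle internal to this argument: all the heavy lifting lives in the supporting propositions --- closure of $\sep$ under application, the $\beta$-reduction inequality $(\lambda f)a \leq \neg\neg f(a)$ and its stability under evaluation contexts, and $\comb{k}^\A,\comb{s}^\A \in \sep$ --- and the theorem itself is just their assembly. The one point genuinely worth flagging is that classicality of $\sep$ is really used, and exactly once per reduction step: without it the same reasoning only yields $\neg\neg t^\A \in \sep$ (the double negations collapsing to a single one via the fact that $\neg\neg\neg\neg a \in \sep$ implies $\neg\neg a \in \sep$), which is the intuitionistic counterpart of the statement.
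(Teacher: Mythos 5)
Your proof is correct and follows essentially the same route as the paper's: combinatorial completeness to get $t_0$ with $t_0 \to^* t$, membership of $t_0^\A$ in $\sep$ via $\comb{k}^\A,\comb{s}^\A\in\sep$ and closure of $\sep$ under application, then transport along the reduction using $t_n^\A \leq \neg\neg\, t_{n+1}^\A$ and classicality. The only immaterial difference is that the paper strips one double negation per reduction step by induction on the length of the reduction, whereas you accumulate $(\neg\neg)^n$ and discharge them all at the end.
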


\vlong{\pagebreak}
Once more, 
the entailment relation induces a structure of (pre)-Heyting algebra,
whose conjunction and disjunction are naturally given by
$a\times b \defeq a\tensor b$ and $ a+b\defeq \neg(\neg a \tensor \neg b)$:
\begin{proposition}[Heyting Algebra]  For any $a,b,c\in\A$ \label{p:ta_heyting}
 For any $a,b,c\in\A$, we have:\nomidem
\begin{multicols}{3}
  \begin{enumerate}
    \coqitem[Heyting_and_l] $a\times b \vdash_\sep a$
    \coqitem[Heyting_and_r] $a\times b \vdash_\sep b$
    \coqitem[Heyting_or_l] $a \vdash_\sep a+b$ 
    \coqitem[Heyting_or_r] $b \vdash_\sep a+b$
    \coqitem[Heyting_adj] $a \vdash_{\sep}  b \art c ~~ \text{iff } ~~ a\times b \vdash_{\sep} c$
  \end{enumerate}
 \end{multicols}
\end{proposition}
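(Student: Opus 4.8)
\begin{myproof}
The plan is to reduce all five items to three structural facts about the entailment relation, each obtained from the separator axioms together with the combinators $\ts1,\dots,\ts5$:
\emph{(Comm)} $\neg(x\tensor y)\in\sep$ iff $\neg(y\tensor x)\in\sep$;
\emph{(Assoc)} more generally, for any two $\tensor$-trees built from the same multiset of leaves, $\neg T_1\in\sep$ iff $\neg T_2\in\sep$;
and \emph{(Subst)} if $b\vdash_\sep a$ and $\neg(a\tensor c)\in\sep$ then $\neg(b\tensor c)\in\sep$.
First I would derive \emph{(Comm)} and \emph{(Assoc)} by instantiating $\ts3$ (resp.\ $\ts5$) at the relevant arguments and applying the third separator axiom --- the permutations and reassociations of a $\tensor$-tree being generated by the adjacent transpositions coming from $\ts3$ and the single reassociation step coming from $\ts5$.
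For \emph{(Subst)}, I would use that $\ts4$ instantiated at $(a,b,c)$ is $\neg[\neg(\neg a\tensor b)\tensor(\neg(c\tensor a)\tensor(c\tensor b))]\in\sep$; since $\neg(\neg a\tensor b)$ equals $b\art a$ up to \emph{(Comm)}, two successive applications of axiom~(3) --- first fed the hypothesis $b\vdash_\sep a$, then the hypothesis $\neg(c\tensor a)\in\sep$ (available by \emph{(Comm)}) --- yield $\neg(c\tensor b)\in\sep$, hence $\neg(b\tensor c)\in\sep$.
Combining \emph{(Subst)} with the double-negation laws $a\vdash_\sep\neg\neg a$ and $\neg\neg a\vdash_\sep a$ established above then gives \emph{(DblNeg)}: $\neg(x\tensor\neg\neg y)\in\sep$ iff $\neg(x\tensor y)\in\sep$.

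With these in hand, items (1)--(4) are short. The entailment $a\times b\vdash_\sep a$ unfolds to $\neg((a\tensor b)\tensor\neg a)\in\sep$, whose leaf multiset $\{a,b,\neg a\}$ is exactly that of $\ts2$ at $(a,b)$, namely $\neg(\neg a\tensor(a\tensor b))\in\sep$, so \emph{(Assoc)} concludes; $a\times b\vdash_\sep b$ is symmetric, using $\ts2$ at $(b,a)$. For $a\vdash_\sep a+b$ I would unfold to $\neg\bigl(a\tensor\neg\neg(\neg a\tensor\neg b)\bigr)\in\sep$, reduce by \emph{(DblNeg)} to $\neg\bigl(a\tensor(\neg a\tensor\neg b)\bigr)\in\sep$, whose leaves $\{a,\neg a,\neg b\}$ match $\ts2$ at $(a,\neg b)$, and finish with \emph{(Assoc)}; $b\vdash_\sep a+b$ is handled identically with $\ts2$ at $(b,\neg a)$.

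For the adjunction (5) I would unfold $a\vdash_\sep b\art c$ to $\neg\bigl(a\tensor\neg\neg(b\tensor\neg c)\bigr)\in\sep$; by \emph{(DblNeg)} this is equivalent to $\neg\bigl(a\tensor(b\tensor\neg c)\bigr)\in\sep$, and by \emph{(Assoc)} --- both trees having leaf multiset $\{a,b,\neg c\}$ --- to $\neg\bigl((a\tensor b)\tensor\neg c\bigr)\in\sep$, which is precisely $a\times b\vdash_\sep c$; the two directions of the ``iff'' are the two directions of \emph{(DblNeg)} and \emph{(Assoc)}. The main obstacle will be the careful bookkeeping behind \emph{(Subst)} and \emph{(DblNeg)}: because application and the sum carry extra double negations ($ab\defeq\minf\{\neg\neg c:a\leq b\art c\}$ and $a+b\defeq\neg(\neg a\tensor\neg b)$) while $\neg\neg x\leq x$ fails in a general conjunctive structure, these equivalences genuinely have to be routed through the classical separator --- via $\ts4$ and the double-negation laws --- rather than through the order alone. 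The remaining verifications are routine and are spelled out in the appendix.
\end{myproof}
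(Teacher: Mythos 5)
Your proof is correct and is essentially the paper's own argument: the paper dismisses this proposition with ``easy manipulation of conjunctive algebras, see the Coq proofs,'' and the manipulation in question is exactly your reduction to commutation/association modulo the separator via $\ts3$, $\ts5$ and the deduction rule, together with the $\ts4$-substitution principle (which the paper itself isolates and uses in the proofs of the neighbouring propositions on the implicative negation and the preorder). One small correction: the double-negation steps you need are the \emph{internal} entailments $a\vdash_\sep\neg\neg a$ and $\neg\neg a\vdash_\sep a$, which the paper establishes for an arbitrary conjunctive separator, so nothing here is ``routed through the classical separator'' --- the proposition, as stated, holds without assuming classicality.
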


We can thus quotient the algebra by the equivalence relation $\seq$ and extend the previous operation to
equivalence classes in order to obtain a Heyting algebra $\A/\seq$.
In particular, this allows us to obtain a tripos out of a conjunctive algebra
by reproducing the construction of the implicative tripos in our setting:
\begin{theorem}[Conjunctive tripos]
 Let $(\A,\leq,\to,\sep)$ be a classical\footnote{For technical reasons, we only give the proof in case where the separator is classical 
(recall that it allows to directly use $\lambda$-terms), but as explained, by adding double negation everywhere 
the same reasoning should work for the general case as well. Yet, this is enough to express our main result in the next section which only deals
with the classical case.}
conjunctive algebra. The following functor (where $f:J\to I$) defines a tripos:
\twoelt{
\T :  I\mapsto \A^I/\usep
}{
\T(f) :\left\{\begin{array}{lcl}
		 \A^I/\usep &\to& \A^J/\sep[J]\\ [0.5em]
		 \left[(a_i)_{i\in I}\right]& \mapsto &[(a_{f(j)})_{j\in J}]
		\end{array}
		\right.
}
\end{theorem}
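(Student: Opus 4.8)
The plan is to mirror, almost verbatim, the construction of the implicative tripos (the Theorem attributed to Miquel in \Cref{s:ia}), exploiting \Cref{thm:lambda} and \Cref{p:ta_heyting} to recover the exact same machinery in the conjunctive setting. The key observation is that a tripos is determined by: (i) for each set $I$, a Heyting algebra $\T(I)$; (ii) for each $f\colon J\to I$, a Heyting-algebra morphism $\T(f)$; (iii) functoriality; (iv) right and left adjoints to $\T(f)$ along projections, satisfying Beck--Chevalley; and (v) a generic predicate. So the proof is a sequence of routine verifications, each of which has already been done for the base case $I=1$ (i.e.\ for $\A$ itself) or for implicative algebras.

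First I would observe that for each set $I$, the product $\A^I$ carries a conjunctive structure defined pointwise (lattice operations, $\tensor$ and $\neg$ componentwise), and that the uniform separator $\usep=\sep[I]=\{a\in\A^I : \exists s\in\sep.\,\forall i\in I.\,s\leq a_i\}$ is a conjunctive separator on it: upward closure is immediate; closure under the combinators $\ts1,\dots,\ts5$ follows from \Cref{lm:mod_pon_inf}-style reasoning, i.e.\ from the fact (Remark, \coqlink[MP_inf]{MP\_inf}) that the conjunctive axioms, being themselves meets, relativize to $I$-indexed families; closure under rules (3) and (4) likewise follows pointwise. Then by \Cref{p:ta_heyting} (generalized uniformly over $I$) the quotient $\A^I/\usep$ is a Heyting algebra, with $[a]\sleq[b]$ iff $(\minf_{i\in I} a_i\art b_i)\in\sep$. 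This gives clause (i). For (ii), given $f\colon J\to I$ the reindexing map $(a_i)_{i\in I}\mapsto (a_{f(j)})_{j\in J}$ is monotone for the preorders and commutes on the nose with $\tensor$, $\neg$, $\art$, hence descends to a Heyting morphism $\T(f)$; here one uses that if $(\minf_{i\in I} a_i\art b_i)\in\sep$ then $(\minf_{j\in J} a_{f(j)}\art b_{f(j)})\in\sep$, which holds because $\minf_{i\in I}a_i\art b_i \leq \minf_{j\in J}a_{f(j)}\art b_{f(j)}$ and separators are upward closed. Functoriality (iii) is then obvious since reindexing of families is strictly functorial in $f$.

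For the adjoints (iv): along a projection $\pi\colon I\times K\to I$, the right adjoint $\forall_\pi$ sends $[(a_{i,k})]$ to $[(\minf_{k\in K} a_{i,k})_{i\in I}]$ and the left adjoint $\exists_\pi$ sends it to $[(\msup_{k\in K} a_{i,k})_{i\in I}]$; the adjunction inequalities are exactly the generalized-axiom manipulations, using that the combinators relativize to meets (and that $\neg$ turns the relevant joins into meets, by axiom 5 of conjunctive structures, so that $\exists_\pi$ is well defined up to $\seq$). Beck--Chevalley reduces to the set-theoretic pullback identity $\minf_{k} a_{g(j),k} = \minf_{k'\in g^{-1}\text{-fibre}} a_{\dots}$, i.e.\ a purely combinatorial rearrangement of a meet, identical to the implicative case. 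For the generic predicate (v), take the set $\Prop=\A$ itself with $\mathrm{tr}\in\A^\A$ the identity family $(a)_{a\in\A}$: any $[(b_i)_{i\in I}]\in\T(I)$ is $\T(h)[\mathrm{tr}]$ for $h\colon I\to\A$, $h(i)=b_i$, and conversely any such reindexing lands in $\T(I)$, so $\mathrm{tr}$ is generic.

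The main obstacle — really the only non-bookkeeping point — is the well-definedness and the adjunction property of $\exists_\pi$ in the \emph{non-classical} direction, because $\art$ in a conjunctive structure does \emph{not} commute with joins on the right (only the dual law $(\msup a)\art b=\minf(a\art b)$ holds, as noted after \Cref{def:conj_struct}), so the naive proof that $[(\msup_k a_{i,k})]$ is the least upper bound breaks. This is precisely why the statement restricts to \textbf{classical} conjunctive algebras: with $\neg\neg c\vdash_\sep c$ available (together with \Cref{thm:lambda}, which lets us use genuine $\lambda$-terms and hence all the impredicative System~F encodings as in the implicative tripos), one reconstructs the missing adjunction by the usual double-negation detour, exactly as Miquel does for implicative triposes. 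Concretely I would either (a) transport the whole construction through the induced implicative structure $a\art b=\neg(a\tensor\neg b)$ — but this requires the distributivity law (2) of implicative structures, which fails here, so that route is blocked — or, better, (b) redo Miquel's tripos proof line by line, each step being legitimized by \Cref{p:ta_heyting} and the classicality hypothesis. I expect option (b) to go through with only cosmetic changes, and the bulk of the written proof to be the verification that $\usep$ is a conjunctive separator on $\A^I$ and that the adjoints satisfy Beck--Chevalley; everything else is inherited from the single-set case already proved.

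\begin{myproof}
The proof follows exactly the pattern of the implicative tripos, so we only indicate the points that differ.

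\emph{Pointwise structure.} For each set $I$, equip $\A^I$ with the componentwise order and operations $\tensor$, $\neg$; one checks immediately that $(\A^I,\leq,\tensor,\neg)$ is a conjunctive structure (axioms 1--5 hold componentwise since joins and meets in $\A^I$ are computed componentwise). The uniform separator $\usep=\sep[I]=\{a\in\A^I:\exists s\in\sep.\,\forall i.\,s\leq a_i\}$ is a conjunctive separator on $\A^I$: upward closure is clear; the combinators $\ts1,\dots,\ts5$ of $\A^I$ are the pointwise families of the corresponding combinators of $\A$ and hence realized by the diagonal of any common realizer of $\ts1,\dots,\ts5$ in $\sep$ (this is the content of the generalized-axioms remark, applied with the index set $I$); closure under rules (3) and (4) holds pointwise. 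Hence by \Cref{p:ta_heyting}, read with the family index $I$ throughout, $\A^I/\usep$ is a Heyting algebra, with $[a]\sleq[b]$ iff $(\minf_{i\in I}a_i\art b_i)\in\sep$, conjunction $[a]\tensor[b]$, disjunction $[\neg(\neg a\tensor\neg b)]$, and Heyting implication $[a\art b]$.

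\emph{Reindexing.} Given $f\colon J\to I$, the map $(a_i)_{i\in I}\mapsto(a_{f(j)})_{j\in J}$ is monotone and commutes strictly with $\tensor$, $\neg$, $\art$, and with arbitrary meets and joins. It preserves $\usep$: if $s\leq a_i$ for all $i$ then $s\leq a_{f(j)}$ for all $j$. It preserves the preorder: $\minf_{i\in I}(a_i\art b_i)\leq\minf_{j\in J}(a_{f(j)}\art b_{f(j)})$, and separators are upward closed. Thus it descends to a Heyting morphism $\T(f)\colon\T(I)\to\T(J)$. Strict functoriality in $f$ (and preservation of identities) is immediate.

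\emph{Adjoints and Beck--Chevalley.} For a projection $\pi\colon I\times K\to I$, set $\forall_\pi[(a_{i,k})]\defeq[(\minf_{k\in K}a_{i,k})_{i\in I}]$ and $\exists_\pi[(a_{i,k})]\defeq[(\msup_{k\in K}a_{i,k})_{i\in I}]$. These are well defined: monotone families are sent to monotone families, and for $\exists_\pi$ one uses axiom 5 of conjunctive structures and classicality to see the operation is compatible with $\seq$. The adjunction inequalities $\T(\pi)\forall_\pi\leq\mathrm{id}$, $\mathrm{id}\leq\T(\pi)\exists_\pi$ and their counits follow from the generalized axioms (meets relativize, and via classicality the double-negation detour of the implicative case supplies the half that uses joins on the right of $\art$). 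Beck--Chevalley for a pullback square of sets is the identity $\minf_{k}a_{g(j),k}=\minf_{k}(a_{g(j),k})$ rearranged along the pullback, identical to the implicative case.

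\emph{Generic predicate.} Take $\Prop\defeq\A$ and $\mathrm{tr}\defeq(a)_{a\in\A}\in\A^\A$. For any $[(b_i)_{i\in I}]\in\T(I)$ define $h\colon I\to\A$ by $h(i)=b_i$; then $\T(h)[\mathrm{tr}]=[(b_i)_{i\in I}]$. Hence $\mathrm{tr}$ is generic.

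Combining these, $\T$ is a first-order hyperdoctrine $\Set\op\to\HA$ with a generic predicate, i.e.\ a tripos.
\end{myproof}
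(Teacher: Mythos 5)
Your overall strategy is the paper's: build the pointwise conjunctive structure on $\A^I$, check that $\usep$ is a separator via the generalized-axioms remark, get the Heyting algebra from \Cref{p:ta_heyting}, and take the identity family on $\Prop=\A$ as generic predicate. But there are two genuine gaps. First, you omit the equality predicate entirely: the definition of first-order hyperdoctrine used here (condition (1) in Appendix~\ref{a:imp_trip}) requires a left adjoint to $\T(\delta_I)$ at $\top$, and this is precisely where the paper's proof does its hardest work --- because application in a conjunctive algebra produces a double negation, the realizer $\lambda z.\,z(s(\lambda x.x))$ only lands in $i=_Ij\art\neg\neg\neg\neg a_{ij}$, and one needs classicality (via \Cref{thm:lambda} and the derived deduction rules) to strip the negations. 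Nothing in your write-up addresses this clause.

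Second, and more seriously, your right adjoint $\forall_\pi[(a_{i,k})]\defeq[(\minf_{k}a_{i,k})_i]$ is the wrong object. In a conjunctive structure only \emph{joins} commute with $\tensor$ and $\neg$; the meet is essentially unconstrained by the axioms, so while $\minf_k(a_i\art b_{ik})=\neg\bigl(a_i\tensor\msup_k\neg b_{ik}\bigr)$ holds on the nose, there is no way to pass from it to $a_i\art\minf_k b_{ik}=\neg\bigl(a_i\tensor\neg\minf_k b_{ik}\bigr)$, since $\msup_k\neg b_{ik}\leq\neg\minf_k b_{ik}$ goes in the unhelpful direction and no combinator yields the reverse entailment. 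The paper instead sets $\hugefa_k b_{ik}\defeq\neg\bigl(\msup_k\neg b_{ik}\bigr)$ and proves the adjunction using the rule derived from $\ts4$ (from $a\vdash_\sep b$ and $\neg(c\tensor b)\in\sep$ infer $\neg(c\tensor a)\in\sep$) together with $\neg\neg c\vdash_\sep c$; your ``generalized axioms'' only ever produce meets of negated joins, never statements about $\art$ applied to a raw meet on the right. Relatedly, your diagnosis of where the difficulty sits is inverted: the existential $\msup_k a_{ik}$ is the \emph{easy} case, since $(\msup_k a_{ik})\art b_i=\minf_k(a_{ik}\art b_i)$ holds by definition of $\art$ and the commutation axioms, with no classicality needed for well-definedness; it is the universal quantifier that requires the $\neg\msup\neg$ encoding and the double-negation detour.
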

\begin{myproof}
 The proof mimics the proof in the case of implicative algebras, 
 see Appendix \ref{a:ta:tripos}.
\end{myproof}

%
%
%
%
%
%
%
%
%

\section{The duality of computation, algebraically}
\label{s:du}

\setCoqFilename{ImpAlg.Duality}

In~\cite{CurHer00}, Curien and Herbelin introduce the $\lmmt$ 
in order to emphasize the so-called duality of computation
between terms and evaluation contexts.
They define a simple translation inverting the role of terms and stacks
within the calculus, which has the notable consequence of 
translating a call-by-value calculus into a call-by-name calculus and vice-versa.
The very same translation can be expressed within L,
in particular it corresponds to the trivial translation from mapping
every constructor on terms (resp. destructors) in {\Ltens} to the 
corresponding constructor on stacks (resp. destructors) in {\Lpar}.
We shall now see how this fundamental duality of computation can be retrieved
algebraically between disjunctive and conjunctive algebras.

We first show that we can simply pass from one structure to another
by reversing the order relation.
We know that reversing the order in a complete lattice 
yields a complete lattice in which meets and joins are exchanged.
Therefore, it only remains to verify that the axioms of disjunctive and conjunctive 
structures can be deduced through this duality one from each other,
which is the case.
\vshort{\pagebreak}

\begin{proposition}[][PS_TS]
Let $(\A,\leq,\parr,\neg)$ be a disjunctive structure.
Let us define:\noem
\begin{multicols}{4}
\begin{itemize}
\item $\A^\tensor \defeq \A^\parr   $
\item $a \revord b \defeq b \leq a  $
\item $a \tensor b \defeq a \parr b $
\item $\neg a \defeq \neg a         $
\end{itemize}
\end{multicols}\noem
\noindent then $(\A^\tensor,\revord,\tensor,\neg)$ is a conjunctive structure.
\end{proposition}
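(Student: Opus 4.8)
The plan is to verify each of the five axioms of a conjunctive structure (\Cref{def:conj_struct}) for the quadruple $(\A^\tensor,\revord,\tensor,\neg)$, using the corresponding axiom for the disjunctive structure $(\A,\leq,\parr,\neg)$ together with the standard fact that reversing the order of a complete lattice yields a complete lattice in which the roles of $\minf$ and $\msup$ are exchanged. First I would record this lattice-theoretic preliminary: if $(\A,\leq)$ is a complete lattice, then $(\A,\revord)$ with $a\revord b \defeq b\leq a$ is a complete lattice, and for any $B\subseteq\A$ the meet of $B$ computed in $\revord$ equals $\msup_{b\in B}b$ (the join in $\leq$), while the join of $B$ computed in $\revord$ equals $\minf_{b\in B}b$. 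In particular the bottom element $\bot_{\revord}$ of $(\A,\revord)$ is $\top$ of $(\A,\leq)$, and vice versa.

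Then I would go through the axioms one by one. For axiom (1): if $a\revord a'$, i.e. $a'\leq a$, then by disjunctive axiom (1) we get $\neg a\leq\neg a'$, which is exactly $\neg a'\revord\neg a$. For axiom (2): if $a\revord a'$ and $b\revord b'$, i.e. $a'\leq a$ and $b'\leq b$, then disjunctive axiom (2) gives $a'\parr b'\leq a\parr b$, i.e. $a\tensor b\revord a'\tensor b'$. For axiom (3), the join-commutation $\msup^{\revord}_{b\in B}(a\tensor b) = a\tensor\bigl(\msup^{\revord}_{b\in B}b\bigr)$: by the preliminary, $\msup^{\revord}_{b\in B}$ is $\minf_{b\in B}$ in $\leq$, and $a\tensor b = a\parr b$, so this is exactly disjunctive axiom (4), $\minf_{b\in B}(a\parr b) = a\parr(\minf_{b\in B}b)$. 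Symmetrically, axiom (4) is disjunctive axiom (3). Finally axiom (5), $\neg\bigl(\msup^{\revord}_{a\in A}a\bigr) = \minf^{\revord}_{a\in A}\neg a$: the left side is $\neg(\minf_{a\in A}a)$ in $\leq$-terms, and the right side, since $\minf^{\revord}$ is $\msup$ in $\leq$, is $\msup_{a\in A}\neg a$; so this is precisely disjunctive axiom (5).

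There is no real obstacle here — the proposition is essentially a bookkeeping exercise once the $\minf\leftrightarrow\msup$ swap under order reversal is made explicit. The only point requiring the tiniest bit of care is keeping straight, in each of the two commutation axioms, which of the disjunctive axioms (3) and (4) one is invoking: the indexed variable sits on the left in one and on the right in the other, and order reversal does not touch that, so (3) maps to (4) and (4) maps to (3) rather than each mapping to its own numbered counterpart. I would present the argument as the five short verifications above, prefaced by the one-line statement of the duality of meets and joins under $\revord$.

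\begin{myproof}
Reversing the order of the complete lattice $(\A,\leq)$ yields a complete lattice $(\A^\tensor,\revord)$ in which, for any $B\subseteq\A$, the meet with respect to $\revord$ is $\msup_{b\in B}b$ and the join with respect to $\revord$ is $\minf_{b\in B}b$ (so in particular $\bot_{\revord}=\top$ and $\top_{\revord}=\bot$). We check the five axioms of \Cref{def:conj_struct}.

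\emph{(1)} If $a\revord a'$, i.e. $a'\leq a$, then disjunctive axiom (1) gives $\neg a\leq\neg a'$, that is $\neg a'\revord\neg a$.

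\emph{(2)} If $a\revord a'$ and $b\revord b'$, i.e. $a'\leq a$ and $b'\leq b$, then disjunctive axiom (2) gives $a'\parr b'\leq a\parr b$, that is $a\tensor b\revord a'\tensor b'$.

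\emph{(3)} Since the $\revord$-join over $B$ is $\minf_{b\in B}$ in $(\A,\leq)$ and $a\tensor b = a\parr b$, the identity $\msup^{\revord}_{b\in B}(a\tensor b) = a\tensor\bigl(\msup^{\revord}_{b\in B}b\bigr)$ is exactly disjunctive axiom (4): $\minf_{b\in B}(a\parr b) = a\parr(\minf_{b\in B}b)$.

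\emph{(4)} Symmetrically, $\msup^{\revord}_{b\in B}(b\tensor a) = \bigl(\msup^{\revord}_{b\in B}b\bigr)\tensor a$ is disjunctive axiom (3): $\minf_{b\in B}(b\parr a) = (\minf_{b\in B}b)\parr a$.

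\emph{(5)} The $\revord$-join over $A$ is $\minf_{a\in A}$ and the $\revord$-meet is $\msup_{a\in A}$ in $(\A,\leq)$, so $\neg\bigl(\msup^{\revord}_{a\in A}a\bigr) = \minf^{\revord}_{a\in A}\neg a$ reads $\neg(\minf_{a\in A}a) = \msup_{a\in A}\neg a$, which is disjunctive axiom (5).

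Hence $(\A^\tensor,\revord,\tensor,\neg)$ is a conjunctive structure.
\end{myproof}
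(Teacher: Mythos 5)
Your proof is correct and follows exactly the route of the paper's own proof: the paper likewise reduces each conjunctive axiom to the corresponding disjunctive axiom via the meet/join swap under order reversal, and simply declares the verifications trivial. Your version just spells out the details (including the correct observation that the two distributivity axioms swap numbers under the translation), which the paper leaves implicit.
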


\begin{proposition}[][TS_PS]
Let $(\A,\leq,\tensor,\neg)$ be a conjunctive structure.
Let us define:\noem
\begin{multicols}{4}
\begin{itemize}
 \item $\A^\parr \defeq \A^\tensor    $
 \item $a \revord b \defeq b \leq a   $
 \item $a \parr b \defeq a \tensor b  $
 \item $\neg a \defeq \neg a$
\end{itemize}
\end{multicols}\noem
\noindent then $(\A^\tensor,\revord,\tensor,\neg)$ is a disjunctive structure.
\end{proposition}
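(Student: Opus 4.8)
The plan is to verify the five axioms of a disjunctive structure (Definition~\ref{def:dis_struct}) for the quadruple $(\A^\parr,\revord,\parr,\neg)$, by systematically translating each of them under the order-reversal $a\revord b \defeq b\leq a$ into a statement about the original conjunctive structure $(\A,\leq,\tensor,\neg)$, where it will be one of the conjunctive axioms of Definition~\ref{def:conj_struct}. First I would recall the standard fact that if $(\A,\leq)$ is a complete lattice then $(\A,\revord)$ is again a complete lattice, in which the roles of $\minf$ and $\msup$ are exchanged: the meet for $\revord$ is the join $\msup$ for $\leq$, and dually. So every occurrence of $\minf$ in the disjunctive axioms becomes an occurrence of $\msup$ in the conjunctive world, and $\top$ (the top for $\revord$) becomes $\bot$ (the bottom for $\leq$).

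Then I would go axiom by axiom. For monotonicity of $\neg$ (axiom~1): $a\revord a'$ means $a'\leq a$, and conjunctive axiom~1 gives $\neg a\leq\neg a'$, i.e. $\neg a'\revord\neg a$, as required. For monotonicity of $\parr$ (axiom~2): $a\revord a'$ and $b\revord b'$ mean $a'\leq a$ and $b'\leq b$; conjunctive axiom~2 yields $a'\tensor b'\leq a\tensor b$, i.e. $a\tensor b\revord a'\tensor b'$, and since $\parr$ is defined as $\tensor$ this is exactly $a\parr b\revord a'\parr b'$. For the two commutation axioms with $\minf$ (axioms~3 and~4): using that the $\revord$-meet $\minf^{\revord}$ equals $\msup^{\leq}$, axiom~3 $\minf^{\revord}_{b\in B}(b\parr a)=(\minf^{\revord}_{b\in B}b)\parr a$ unfolds to $\msup_{b\in B}(b\tensor a)=(\msup_{b\in B}b)\tensor a$, which is conjunctive axiom~4; symmetrically axiom~4 unfolds to conjunctive axiom~3. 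Finally, for the De Morgan axiom~5, $\neg\minf^{\revord}_{a\in A}a=\msup^{\revord}_{a\in A}\neg a$ unfolds to $\neg\msup_{a\in A}a=\minf_{a\in A}\neg a$, which is conjunctive axiom~5.

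Since all five conditions are verified, $(\A^\parr,\revord,\parr,\neg)$ is a disjunctive structure. I do not expect any real obstacle here: the proof is a purely mechanical unfolding of the order reversal, and the key structural point — that reversing the order in a complete lattice swaps arbitrary meets and joins — is exactly what makes the conjunctive axioms (stated with $\msup$) line up with the disjunctive ones (stated with $\minf$). The only thing to be a little careful about is bookkeeping: keeping track of which lattice each $\minf$, $\msup$, $\top$, $\bot$ lives in, and noting that the three derived identities ($\top\parr a=\top$, etc.) are then automatic, being the $\revord$-reading of the derived conjunctive identities ($\bot\tensor a=\bot$, etc.). (One also notes this proposition is exactly the converse of the previous one, so an alternative is simply to invoke that the construction is an involution; but the direct check above is short enough to spell out.)
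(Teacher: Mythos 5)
Your proof is correct and matches the paper's own argument: the paper likewise verifies each disjunctive axiom by observing that, under the order reversal (which exchanges $\minf$ and $\msup$), it becomes the corresponding conjunctive axiom, and declares each check trivial. Your version merely spells out the bookkeeping the paper leaves implicit.
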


Intuitively, by considering stacks as realizers, 
we somehow reverse the algebraic structure,
and we consider as valid formulas the ones whose orthogonals were valid.
In terms of separator, it means that when reversing a structure
we should consider the separator 
defined as the preimage through the negation of the original separator.

\begin{theorem}[][TA_PA]
\label{p:ta_pa}
 Let $(\A^\tensor,\tsep)$ be a conjunctive algebra, the set
 $\psep ~\defeq  \{a\in\A : \neg a\in \tsep\}$
 defines a valid separator for the dual disjunctive structure $\A^\parr$. 
 \end{theorem}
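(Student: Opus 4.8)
The goal is to verify that $\psep = \{a \in \A : \neg a \in \tsep\}$ satisfies the three closure conditions in the definition of a disjunctive separator, namely upward closure, containment of the five combinators $\ps1,\dots,\ps5$, and closure under modus ponens for $\arp$. The strategy throughout is to translate each condition on $\psep$ into a statement about $\tsep$ via the negation, and then invoke the corresponding properties of the conjunctive separator $\tsep$, using the fact (from the previous two propositions) that the underlying conjunctive structure $\A^\tensor$ is obtained from $\A^\parr$ by reversing the order, so that $\arp$ in $\A^\parr$ relates to the dual arrow $\art$ in $\A^\tensor$ and the five $\ps{i}$ coincide with the five $\ts{i}$ read in the dual.

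\textbf{Step 1 (upward closure).} Suppose $a \in \psep$ and $a \leq b$ in $\A^\parr$. By the antitonicity of $\neg$ (axiom 1 of disjunctive structures), $\neg b \leq \neg a$ in $\A^\parr$, which is $\neg a \revord \neg b$ in $\A^\tensor$, i.e. $\neg a \leq \neg b$ in the conjunctive structure. Since $\neg a \in \tsep$ and $\tsep$ is upward closed, $\neg b \in \tsep$, so $b \in \psep$.

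\textbf{Step 2 (combinators).} Here I would show that for each $i$, $\neg \ps{i}$ (computed in $\A^\parr$) equals $\ts{i}$ (computed in $\A^\tensor$). This is the heart of the duality: each $\ps{i}$ is a meet of terms of the form $(\dots)$ built from $\parr$ and $\arp = \neg(\cdot)\parr(\cdot)$, while each $\ts{i}$ is a meet of terms of the form $\neg[\dots]$ built from $\tensor$ and $\art = \neg(\cdot \tensor \neg(\cdot))$. Since $\tensor = \parr$ on the same carrier, $\revord$ reverses $\leq$, and $\neg \minf^{\A^\parr} = \msup^{\A^\parr} \neg = \minf^{\A^\tensor}\neg$ by axiom 5, pushing a negation through the meet in $\ps{i}$ and expanding the definitions of $\arp$ and $\art$ should make the two expressions literally coincide (modulo the already-proved commutations $\neg\neg a \dashv\vdash a$ are \emph{not} needed here — this is a structural identity, not an identity modulo $\sep$). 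Granting $\neg\ps{i} = \ts{i}$ and knowing $\ts{i} \in \tsep$ by hypothesis, we get $\ps{i} \in \psep$ directly. This is the step I expect to be the main obstacle, since it requires carefully unwinding both sets of combinator definitions and checking the de Morgan-style rewriting is exact rather than merely up to $\seq$; if it is only up to $\seq$, one falls back on the weaker route of showing $\neg\ps{i} \dashv\vdash_\tsep \ts{i}$ and using upward closure of $\tsep$.

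\textbf{Step 3 (modus ponens).} Suppose $a \arp b \in \psep$ and $a \in \psep$, i.e. $\neg(a \arp b) \in \tsep$ and $\neg a \in \tsep$; we must show $b \in \psep$, i.e. $\neg b \in \tsep$. Unfolding, $a \arp b = \neg a \parr b$, so $\neg(a\arp b) = \neg(\neg a \parr b) = \neg(\neg a \tensor b)$ in $\A^\tensor$. Thus $\neg(\neg a \tensor b) \in \tsep$ and $\neg a \in \tsep$; by condition (3) of conjunctive separators (with $\neg a$ in the role of the first argument), $\neg b \in \tsep$, hence $b \in \psep$. Note this uses exactly the "weakened" modus ponens of conjunctive separators and does not require $\tsep$ to be classical, which is consistent with the theorem statement not assuming classicality. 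Assembling Steps 1--3 gives that $\psep$ is a separator for $\A^\parr$, completing the proof.
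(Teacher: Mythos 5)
Your proof is correct, and its overall architecture --- verifying upward closure, the five combinators, and modus ponens for $\psep$ by translating each condition through $\neg$ into the corresponding property of $\tsep$ --- matches the paper's. The one place where you genuinely diverge is modus ponens: the paper routes this step through the transfer lemma $a \vdash_{\psep} b \Leftrightarrow \neg a \vdash_{\tsep} \neg b$, so that the hypothesis becomes $\neg a \art \neg b \in \tsep$, which carries a spurious double negation and must then be massaged by ``internal contraposition'' before the deduction axiom applies; you instead observe that $\neg(a \arp b)$ is \emph{literally} $\neg(\neg a \tensor b)$, since $\arp$ unfolds to $\neg a \parr b$ and $\parr$ coincides with $\tensor$ on the shared carrier, so condition (3) of conjunctive separators applies immediately with $\neg a$ in first position. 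Your version is shorter and makes it transparent that no classicality of $\tsep$ is needed. On the combinator step, your hedging is unnecessary: the identity $\neg\ps{i} = \ts{i}$ is exact, not merely up to $\seq$ --- it follows from pushing $\neg$ through the meet via axiom 5 of disjunctive structures ($\neg\minf_{a} = \msup_{a}\neg$, and a join in $\A^\parr$ is a meet in $\A^\tensor$) and then unfolding $\arp$; each $\ts{i}$ was visibly designed so that this unfolding matches term for term. The fallback ``up to $\seq$'' route you sketch is therefore never needed.
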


 \begin{theorem}[][PA_KTA]
\label{p:pa_kta}
 Let $(\A^\parr,\psep)$ be a disjunctive algebra.
 The set $\tsep ~\defeq \{a\in\A : \neg a\in \psep\}$
 defines a classical separator for the dual conjunctive structure $\A^\tens$. 
\end{theorem}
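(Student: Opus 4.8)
The plan is to verify each of the four conditions in the definition of a conjunctive separator for $\tsep = \{a \in \A : \neg a \in \psep\}$, working in the dual conjunctive structure $\A^\tensor$ whose operations are related to those of $\A^\parr$ by $a \tensor b = a \parr b$, $\neg a = \neg a$, and $a \revord b = b \leq a$ — so that the joins of $\A^\tensor$ are the meets of $\A^\parr$ and vice versa. Throughout I would freely use that $\psep$ is a disjunctive separator (upward closed, contains $\ps1,\dots,\ps5$, closed under modus ponens for $\arp$) and the facts established in \Cref{p:dne}, in particular that $\neg\neg a \seq a$ from the point of view of $\psep$, i.e.\ $a \arp \neg\neg a \in \psep$ and $\neg\neg a \arp a \in \psep$; these will be the workhorses for the classicality clause and for turning the tensorial combinators $\ts i$ into the disjunctive combinators $\ps i$.

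First, \textbf{upward closure}: if $a \in \tsep$ and $a \revord b$, then by definition $b \leq a$ in $\A^\parr$, hence $\neg a \leq \neg b$ by the antitonicity axiom of disjunctive structures, and since $\neg a \in \psep$ and $\psep$ is upward closed, $\neg b \in \psep$, i.e.\ $b \in \tsep$. Second, the \textbf{combinators} $\ts1,\dots,\ts5$: each $\ts i$ is literally $\minf_{\dots} \neg[\,\cdots\,]$ where the bracketed expression, once $\tensor$ is read as $\parr$, is exactly the body whose $\minf_{\dots}$ is $\ps i$ (up to the placement of the leading negation and the commutations $a \arp b = \neg a \parr b$). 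So $\neg \ts i$ is, modulo the equivalences $\neg\neg \_ \seq \_$, provably equal in $\psep$ to $\ps i$; since $\ps i \in \psep$ and $\psep$ is closed under the provable-equivalence $\seq$ (it respects entailment), $\neg \ts i \in \psep$, hence $\ts i \in \tsep$. This is the step I expect to require the most care: I would need to check that commuting $\neg$ past the leading $\msup$ inside the $\ts i$ (using axiom 5 of conjunctive structures, $\neg \msup = \minf \neg$) produces exactly the $\minf$ over the negations of the disjunctive bodies, and that each resulting body is $\psep$-equivalent — not just related — to the corresponding $\ps i$ body, using the commutation laws and double-negation elimination. It is essentially a bookkeeping computation, but it is the crux.

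Third, \textbf{the deduction rule}: suppose $\neg(a \tensor b) \in \tsep$ and $a \in \tsep$; I must show $\neg b \in \tsep$, i.e.\ $\neg\neg b \in \psep$. By definition $\neg(a\tensor b) \in \tsep$ means $\neg\neg(a \parr b) \in \psep$, and $a \in \tsep$ means $\neg a \in \psep$. From $\neg\neg(a\parr b) \in \psep$ and double-negation elimination in $\psep$ we get $a \parr b \in \psep$; now $a \parr b \vdash_\psep \neg a \arp b$ (a consequence of the definition $\arp$ and $\neg\neg$-reasoning, available since $\psep$ is intrinsically classical — indeed $a\parr b = \neg\neg a \parr b \seq \neg a \arp b$ after noting $\neg\neg a \seq a$), so $\neg a \arp b \in \psep$; combined with $\neg a \in \psep$ and modus ponens we get $b \in \psep$, and then $b \vdash_\psep \neg\neg b$ gives $\neg\neg b \in \psep$, i.e.\ $\neg b \in \tsep$, as desired. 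Fourth, \textbf{closure under $\tensor$}: if $a, b \in \tsep$, i.e.\ $\neg a, \neg b \in \psep$, I want $a \tensor b = a \parr b$ to be in $\tsep$, i.e.\ $\neg(a \parr b) \in \psep$; since $\neg(a\parr b) = \neg a \parr \neg b$ fails in general, I would instead argue $\neg\neg(a\parr b) \in \psep$ is not what is needed — rather I use that from $\neg a \in \psep$ we get $\neg\neg\neg a \in \psep$ hence by $\neg\neg\neg a \arp \neg a$-type reasoning... more directly: $a \parr b$ has realizer, and $\neg(a\parr b) \vdash_\psep ?$; the clean route is $\neg a, \neg b \in \psep \Rightarrow \neg a \tensor' \neg b$-style, but in $\A^\parr$ the relevant fact is simply that $\psep$ entails $\neg a \arp \neg b \arp \neg(a\parr b)$, which is a tautology provable from $\ps1,\dots,\ps5$ together with the definition $a\parr b$; feeding $\neg a$ and $\neg b$ through modus ponens then yields $\neg(a\parr b) \in \psep$, i.e.\ $a\tensor b \in \tsep$. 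Finally, \textbf{classicality}: if $\neg\neg a \in \tsep$ then $\neg\neg\neg a \in \psep$; since $\neg\neg\neg a \vdash_\psep \neg a$ (from $a \vdash_\psep \neg\neg a$ applied at $\neg a$, i.e.\ $\neg a \arp \neg\neg\neg a$... I need the other direction $\neg\neg\neg a \arp \neg a$, which is the $\psep$-instance of $\neg\neg b \arp b$ at $b := \neg a$, available by \Cref{p:dne}), we get $\neg a \in \psep$, i.e.\ $a \in \tsep$. Hence $\tsep$ is a classical conjunctive separator. I would cross-reference the Coq development (tag \texttt{PA\_KTA}) and defer the detailed combinator computations to the appendix.
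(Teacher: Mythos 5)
Your proposal is correct and follows essentially the same route as the paper: the paper's (very terse) argument reduces everything to the observation that membership in $\tsep$ is membership of the negation in $\psep$, combined with double-negation introduction/elimination in the disjunctive algebra (\Cref{p:dne}) and the identity of the combinator bodies once $\tensor$ is read as $\parr$ (so that $\ts{i}=\neg\ps{i}$ via the commutation $\minf\neg(\cdot)=\neg\msup(\cdot)$), which is exactly the clause-by-clause verification you carry out. The only cosmetic remark is that your discussion of closure under $\tensor$ wanders before settling on the correct argument (the $\psep$-tautology $\neg a\to\neg b\to\neg(a\parr b)$ plus modus ponens), but the final step is sound.
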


\begin{myproof}
See Appendix \ref{a:p:ta_pa}.
\end{myproof}

It is worth noting that reversing in both cases, the dual separator is classical. 
This is to connect with the fact that classical reasoning principles are true on negated formulas.
Moreover, starting from a non-classical conjunctive algebra, one can reverse it twice to get a classical algebra.
This corresponds to a classical completion of the original separator $\sep$:
\vlong{by definition, $\neg^{-2}(\sep)=\{a: \neg \neg a\in\sep\}$, and it is easy to see that $a\in\sep$ implies $\neg\neg a\in\sep$,
hence $\sep\subseteq \neg^{-2}(\sep)$. }
\vshort{it is easy to see that $a\in\sep$ implies $\neg\neg a\in\sep$,
hence $\sep\subseteq \{a: \neg \neg a\in\sep\}$. }

Actually, the duality between disjunctive and (classical) conjunctive algebras is even stronger,
in the sense that through the translation, the induced triposes are isomorphic.
Remember that an isomorphism $\varphi$ between two (\Set-based) triposes $\T,\T'$ 
is defined as a natural isomorphism $\T\Rightarrow\T'$ in the category \HA,
that is as a family of isomorphisms $\varphi_I:\T(I)\overset{\sim}{\to}\T'(I)$ (indexed by all $I\in\Set$)
that is natural in $\I$.
\begin{theorem}[Main result]\label{thm:iso}
 Let $(\A,\sep)$ be a disjunctive algebra and $(\bar\A,\bar\sep)$ its dual conjunctive
 algebra.
 The \vshort{following }family of maps \vshort{defines a tripos isomorphism}: 
\[
\varphi_I:\left\{ \begin{array}{ccc}
                    \bar\A/\bar\sep[I] &\to& \A/\sep[I]\\
                    {[a_i]} & \mapsto &  [\neg a_i]
                  \end{array}\right.
\] 
\vlong{defines a tripos isomorphism. }
\end{theorem}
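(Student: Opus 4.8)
The plan is to verify three things: that each $\varphi_I$ is a well-defined map on equivalence classes, that it is an isomorphism of Heyting algebras, and that the family is natural in $I$. The candidate inverse is the ``same'' map in the other direction, $\psi_I : [a_i] \mapsto [\neg a_i]$, now going from $\A/\sep[I]$ to $\bar\A/\bar\sep[I]$; the fact that $\neg\neg a \cong a$ in the relevant quotients (using \Cref{p:dne}, items on double-negation elimination and introduction, lifted to the uniform separator via the generalized modus ponens of \Cref{lm:mod_pon_inf}) will give $\psi_I \circ \varphi_I = \id$ and $\varphi_I \circ \psi_I = \id$. So the bulk of the work is really about $\varphi_I$ alone.

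\textbf{Well-definedness and monotonicity.} First I would unfold the definitions. By \Cref{p:ta_pa}, $\bar\sep = \{a : \neg a \in \sep\}$ where $\sep$ is the disjunctive separator, and the uniform separators $\bar\sep[I]$, $\sep[I]$ are built from $\bar\sep$, $\sep$ as in the implicative tripos construction. The key point is that $[a_i] \leq [b_i]$ in $\bar\A/\bar\sep[I]$ means $(\,\bigcurlywedge_i (a_i \art b_i)\,) \in \bar\sep[I]$ in the conjunctive sense, i.e. $\neg$ of it lies in the uniform disjunctive separator; and I must show this entails $(\,\bigcurlywedge_i (\neg a_i \arp \neg b_i)\,) \in \sep[I]$, i.e. $[\neg a_i] \leq [\neg b_i]$ in $\A/\sep[I]$. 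This reduces, via the generalized modus-ponens machinery, to a purely structural fact about a single pair: that $\neg a \art b$ entails $\neg b \arp \neg a$ (contraposition), which one checks from the separator axioms much as in \Cref{p:dne}. Applying this in both directions gives both well-definedness (equal classes go to equal classes, taking $\leq$ both ways) and monotonicity of $\varphi_I$.

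\textbf{Heyting structure and naturality.} Next I would check that $\varphi_I$ commutes with the Heyting operations. Recall the conjunctive connectives are $a \times b = a \tensor b$ and $a + b = \neg(\neg a \tensor \neg b)$ with implication $a \art b = \neg(a \tensor \neg b)$, while on the disjunctive side $\arp = \neg(\cdot)\parr(\cdot)$, sum is the System-F encoding, and product likewise. Since $\tensor$ and $\parr$ are literally the \emph{same} underlying operation (only the order is reversed, cf. \Cref{p:ta_pa}), the computation of $\varphi_I(a \art b) = [\neg\neg(a\tensor\neg b)] = [\neg(\neg a \parr \neg\neg b)]$ and comparison with $[\neg a] \arp [\neg b] = [\neg\neg a \parr \neg b]$ again reduces to double-negation manipulations valid modulo $\seq$, lifted uniformly; similarly for $\land$, $\lor$, $\top$, $\bot$. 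For naturality, given $f : J \to I$, both $\T(f)$ and $\T'(f)$ act by reindexing $(a_i)_{i} \mapsto (a_{f(j)})_j$, and $\varphi$ acts pointwise by $\neg$, so the square $\varphi_J \circ \bar\T(f) = \T(f) \circ \varphi_I$ commutes on the nose at the level of representatives — this is essentially immediate.

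\textbf{Main obstacle.} The genuinely delicate part is the bookkeeping around the \emph{uniform} separator rather than the plain one: every entailment I want must hold not just modulo $\sep$ but modulo $\sep[I]$, and the passage between the two is exactly what \Cref{lm:mod_pon_inf} (and its conjunctive analogue) is for — but one has to be careful that the witnessing elements are chosen uniformly in $i \in I$, which works here precisely because all the combinators $\ps1,\dots,\ps5$ (resp. $\ts1,\dots,\ts5$) and all the contraposition/double-negation lemmas are stated as meets over $\A$ and hence descend to meets over $\A^I$. Once that is set up cleanly, each individual verification is a short calculation in the separator. I would therefore isolate, as a preliminary lemma, the statement ``$a \cong_{\sep[I]} \neg\neg a$ pointwise, and contraposition holds pointwise modulo $\sep[I]$'', prove it once using the generalized modus ponens, and then feed it into all the subsequent checks. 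A detailed proof is given in the appendix.
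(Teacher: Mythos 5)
Your proposal is correct and follows essentially the same route as the paper: the same key equivalence $a \vdash_{\bar\sep} b \Leftrightarrow \neg a \vdash_{\sep} \neg b$ drives well-definedness and order-preservation in both directions, and bijectivity comes from $[a]=[\neg\neg a]$ modulo the uniform separator, exactly as in the paper's surjectivity argument. The one place where you do more work than necessary is the explicit verification that $\varphi_I$ commutes with $\times$, $+$ and the implication: the paper instead observes that an isomorphism of posets between Heyting algebras is automatically an isomorphism of Heyting algebras (the operations being determined by the order up to equivalence), so once you know $\varphi_I$ and its inverse are both monotone, nothing further about the connectives needs to be checked.
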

\begin{myproof}
The naturality of $\varphi$ is clear by construction.
Recall that to prove that $\varphi$ is an isomorphism of Heyting algebras,
it is enough to show that it is an isomorphism of posets.
The whole proof rely once again on the key \trurl{arrow_tsep_psep}{fact} that
for any $a,b$:
$$ a \vdash_{\bar\sep} b \Leftrightarrow \neg a \vdash_{\sep} \neg b $$
which directly implies that $\varphi$ is order-preserving.
Let $I$ be a fixed set.
Let us show that $\varphi_I$ is injective.
Let $(a_i)_{i\in I},(b_i)_{i\in I}$ be two families
of elements of $\A$ such
that $\varphi_I([a_i])=\varphi_I([b_i])$, \emph{i.e.}
$\neg a_i \equiv_{\sep[I]}\neg b_i$. 
Using the lemma above, if  for any $i\in I$, $\neg a_i \seq \neg b_i$ 
then  for any $i\in I$, $a_i \equiv_{\bar\sep} b_i$, \emph{i.e.} $[a_i] = [b_i]$.
To show that $\varphi_I$ is surjective, it suffices to see that
for any family $(a_i)_{i\in I}$, we have 
$[a_i] = [\neg \neg a_i]=\varphi_I([\neg a_i])$ (in $\A/\S[I]$).
\end{myproof}

\section{Conclusion}
\subsection{An algebraic view on the duality of computation}
To sum up, in this paper we saw how 
the two decompositions of the arrow $a\to b$
as $\neg a \parr b$ and $\neg(a\tensor \neg b)$,
which respectively induce decompositions of 
a call-by-name and call--by-value $\lambda$-calculi within Munch-Maccagnoni's system L~\cite{Munch09},
yield two different algebraic structures reflecting the corresponding realizability models.
Namely, call-by-name models give rise to disjunctive algebras, which are
particular cases of Miquel's implicative algebras~\cite{Miquel17};
while conjunctive algebras correspond to call-by-value realizability models.

The well-known duality of computation between terms and contexts
is reflected here by simple translations from conjunctive to disjunctive algebras
and vice-versa, where the underlying lattices are simply reversed.
Besides, we showed that (classical) conjunctive algebras induce 
triposes that are isomorphic to disjunctive triposes.
The situation is summarized in \Cref{fig:final},
where $\ktens$ denotes classical conjunctive algebras.

\begin{figure}[t]
  \centering
  \begin{tikzpicture}[<-,>=stealth,shorten >=1pt,auto,node distance=2cm,
  thick,main node/.style={},scale=1]
  \node[main node] (PA)  				  {\textbf{$\parr$-algebras}};
  \node[main node] (IA)  [above left=0.6cm and 0.5cm of PA] {\textbf{$\to$-algebras}};
  \node[main node] (XTA) [right=2.25cm of PA]              {};
  \node[main node] (TA)  [above=0.6cm of XTA]               {\textbf{\tens-algebras~~~~}   };
  \node[main node] (KTA) [below=0.6cm of XTA]               {\textbf{\ktens-algebras~~~~} };
  \node[main node] (BA)  [below=1.5cm of PA]                {\textbf{Boolean algebras  }};

  \node[main node] (leg)  [above left=0.5cm and 2cm of BA]                {\small \emph{instance}};
  \node[main node] (leg2) [above right=0cm and -1.32cm of leg]                {\small \emph{translation}};
  \node[main node] (lega)  [left=0.65cm of leg]               {};
  \node[main node] (leg2a)  [left=0.65cm of leg2]               {};
  
  \path[every node/.style={font=\sffamily\scriptsize}]
    (IA) edge [] node [sloped, anchor=center,above] {Thm. \ref*{thm:pa_ia}} (PA)
    (PA) edge [right] node [sloped, anchor=center,above]{Ex. \ref*{p:ba_pa}} (BA)
    (TA) edge [] (KTA)
    (PA) edge [dashed] node [sloped, anchor=center,above]  {Thm. \ref*{p:ta_pa}} (TA)
    (KTA) edge [right] node [sloped, anchor=center,below]{Ex. \ref*{p:ba_ta}} node [above]{}  (BA)
	  edge [dashed] node [sloped, anchor=center,above]  {Thm. \ref*{p:pa_kta}} (PA)
    (leg) edge [] (lega)
    (leg2) edge [dashed] (leg2a);    
\end{tikzpicture}
\caption{Final picture}
\label{fig:final}
\end{figure}

\subsection{From Kleene to Krivine via negative translation}
We could now re-read within
our algebraic landscape
the result of Oliva and Streicher stating
that Krivine realizability models for PA2 
can be obtained as a composition
of Kleene realizability for HA2 and Friedman's negative translation~\cite{OliStr08,Miquel11}.
Interestingly, in this setting the fragment of formulas that is interpreted
in HA2 correspond exactly to the positive formulas of \Ltens,
so that it gives rise to an (intuitionistic) conjunctive algebra.
Friedman's translation is then used to encode the type of stacks within this fragment
via a negation.
In the end, realized formulas are precisely the ones that are realized through Friedman's translation:
the whole construction exactly matches the passage from a intuitionistic conjunctive structure
defined by Kleene realizability to a classical implicative algebras through the arrow 
from $\tens$-algebras to $\imp$-algebras via $\parr$-algebras.

\subsection{Future work}
While \Cref{thm:iso} implies that call-by-value and call-by-name models based on
the $\Ltens$ and $\Lpar$ calculi are equivalents, it does not 
provide us with a definitive answer to our original question.
Indeed, just as (by-name) implicative algebras are more general than disjunctive algebras,
it could be the case that there exists a notion of (by-value) implicative algebras that
is strictly more general than conjunctive algebras and which is not isomorphic to a by-name situation.

Also, if we managed to obtain various results about  conjunctive algebras,
there is still a lot to understand about them. Notably, the interpretation we have of the $\lambda$-calculus
is a bit disappointing in that it does not provide us with an adequacy result as nice as in implicative algebras.
In particular, the fact that each application implicitly gives rise to a double negation
breaks the compositionality. 
This is of course to connect with the definition of \emph{truth values} in by-value models
which requires three layers and a double orthogonal. We thus feel that many things remain
to understand about the underlying structure of by-value realizability models.

Finally, on a long-term perspective, the next step would be 
to understand the algebraic
impact of more sophisticated evaluation strategy (\emph{e.g.}, call-by-need) or side effects (\emph{e.g.}, a monotonic memory).
While both have been used in concrete cases to give a computational content to certain axioms (\emph{e.g.}, the 
axiom of dependent choice~\cite{Herbelin12}) or model constructions (\emph{e.g.}, forcing~\cite{Krivine11}),
for the time being we have no idea on how to interpret them in the realm of implicative algebras.

\vlong{\paragraph*{Acknowledgment}
The author would like to thank Alexandre Miquel 
to which several ideas in this paper, especially the 
definition of conjunctive separators, should be credited.
This research was partially funded by the ANII research project
FCE\_1\_2014\_1\_104800.
}

\newpage
\bibliographystyle{plainurl}
\bibliography{biblio}

\begin{thebibliography}{10}

\bibitem{BellBVM}
John~L. Bell.
\newblock {\em Set Theory: Boolean-Valued Models and Independence Proofs}.
\newblock Oxford: Clarendon Press, 2005.

\bibitem{Bernays26}
P.~Bernays.
\newblock Axiomatische untersuchung des aussagen-kalküls der "principia
  mathematica".
\newblock {\em Mathematische Zeitschrift}, 25:305--320, 1926.

\bibitem{Cardelli91}
Luca Cardelli, Simone Martini, John~C. Mitchell, and Andre Scedrov.
\newblock {\em An extension of system F with subtyping}, pages 750--770.
\newblock Springer Berlin Heidelberg, Berlin, Heidelberg, 1991.
\newblock URL: \url{http://dx.doi.org/10.1007/3-540-54415-1_73}, \href
  {http://dx.doi.org/10.1007/3-540-54415-1_73}
  {\path{doi:10.1007/3-540-54415-1_73}}.

\bibitem{CurHer00}
Pierre-Louis Curien and Hugo Herbelin.
\newblock The duality of computation.
\newblock In {\em Proceedings of ICFP 2000}, SIGPLAN Notices 35(9), pages
  233--243. ACM, 2000.
\newblock \href {http://dx.doi.org/10.1145/351240.351262}
  {\path{doi:10.1145/351240.351262}}.

\bibitem{DanJoiSch95}
Vincent Danos, Jean-Baptiste Joinet, and Harold Schellinx.
\newblock {LKQ} and {LKT} : Sequent calculi for second order logic based upon
  dual linear decompositions of classical implication.
\newblock In {\em Advances in Linear Logic}, 1995.

\bibitem{FerMal17}
W.~{Ferrer} and O.~{Malherbe}.
\newblock {The category of implicative algebras and realizability}.
\newblock {\em ArXiv e-prints}, December 2017.
\newblock URL: \url{https://arxiv.org/abs/1712.06043}, \href
  {http://arxiv.org/abs/1712.06043} {\path{arXiv:1712.06043}}.

\bibitem{FerrerEtAl13}
W.~{Ferrer Santos}, M.~{Guillermo}, and O.~{Malherbe}.
\newblock {A Report on Realizability}.
\newblock {\em ArXiv e-prints}, 2013.
\newblock \href {http://arxiv.org/abs/1309.0706} {\path{arXiv:1309.0706}}.

\bibitem{FerrerEtAl15}
W.~{Ferrer Santos}, M.~{Guillermo}, and O.~{Malherbe}.
\newblock {Realizability in OCAs and AKSs}.
\newblock {\em ArXiv e-prints}, 2015.
\newblock URL: \url{https://arxiv.org/abs/1512.07879}, \href
  {http://arxiv.org/abs/1512.07879} {\path{arXiv:1512.07879}}.

\bibitem{FerrerEtAl17}
Walter Ferrer~Santos, Jonas Frey, Mauricio Guillermo, Octavio Malherbe, and
  Alexandre Miquel.
\newblock Ordered combinatory algebras and realizability.
\newblock {\em Mathematical Structures in Computer Science}, 27(3):428–458,
  2017.
\newblock \href {http://dx.doi.org/10.1017/S0960129515000432}
  {\path{doi:10.1017/S0960129515000432}}.

\bibitem{Frey15}
Jonas Frey.
\newblock {Realizability Toposes from Specifications}.
\newblock In Thorsten Altenkirch, editor, {\em 13th International Conference on
  Typed Lambda Calculi and Applications (TLCA 2015)}, volume~38 of {\em Leibniz
  International Proceedings in Informatics (LIPIcs)}, pages 196--210, Dagstuhl,
  Germany, 2015. Schloss Dagstuhl--Leibniz-Zentrum fuer Informatik.
\newblock \href {http://dx.doi.org/10.4230/LIPIcs.TLCA.2015.196}
  {\path{doi:10.4230/LIPIcs.TLCA.2015.196}}.

\bibitem{Frey16}
Jonas Frey.
\newblock Classical {R}ealizability in the {CPS} {T}arget {L}anguage.
\newblock {\em Electronic Notes in Theoretical Computer Science},
  325(Supplement C):111 -- 126, 2016.
\newblock The Thirty-second Conference on the Mathematical Foundations of
  Programming Semantics (MFPS XXXII).
\newblock \href {http://dx.doi.org/10.1016/j.entcs.2016.09.034}
  {\path{doi:10.1016/j.entcs.2016.09.034}}.

\bibitem{Girard87}
Jean-Yves Girard.
\newblock Linear logic.
\newblock {\em Theoretical Computer Science}, 50(1):1 -- 101, 1987.
\newblock \href {http://dx.doi.org/10.1016/0304-3975(87)90045-4}
  {\path{doi:10.1016/0304-3975(87)90045-4}}.

\bibitem{Griffin90}
Timothy~G. Griffin.
\newblock A formulae-as-type notion of control.
\newblock In {\em Proceedings of the 17th ACM SIGPLAN-SIGACT Symposium on
  Principles of Programming Languages}, POPL '90, pages 47--58, New York, NY,
  USA, 1990. ACM.
\newblock \href {http://dx.doi.org/10.1145/96709.96714}
  {\path{doi:10.1145/96709.96714}}.

\bibitem{Herbelin10}
Hugo Herbelin.
\newblock An intuitionistic logic that proves markov's principle.
\newblock In {\em LICS 2010, Proceedings}, 2010.
\newblock \href {http://dx.doi.org/10.1109/LICS.2010.49}
  {\path{doi:10.1109/LICS.2010.49}}.

\bibitem{Herbelin12}
Hugo Herbelin.
\newblock A constructive proof of dependent choice, compatible with classical
  logic.
\newblock In {\em Proceedings of the 27th Annual {IEEE} Symposium on Logic in
  Computer Science, {LICS} 2012, Dubrovnik, Croatia, June 25-28, 2012}, pages
  365--374. {IEEE} Computer Society, 2012.
\newblock URL: \url{http://dx.doi.org/10.1109/LICS.2012.47}, \href
  {http://dx.doi.org/10.1109/LICS.2012.47} {\path{doi:10.1109/LICS.2012.47}}.

\bibitem{HofOos03}
Pieter Hofstra and Jaap Van~Oosten.
\newblock Ordered partial combinatory algebras.
\newblock {\em Mathematical Proceedings of the Cambridge Philosophical
  Society}, 134(3):445–463, 2003.
\newblock \href {http://dx.doi.org/10.1017/S0305004102006424}
  {\path{doi:10.1017/S0305004102006424}}.

\bibitem{JaberEtAl16}
Guilhem Jaber, Gabriel Lewertowski, Pierre-Marie P{\'e}drot, Matthieu Sozeau,
  and Nicolas Tabareau.
\newblock The definitional side of the forcing.
\newblock In {\em Proceedings of the 31st Annual ACM/IEEE Symposium on Logic in
  Computer Science}, LICS '16, pages 367--376, New York, NY, USA, 2016. ACM.
\newblock \href {http://dx.doi.org/10.1145/2933575.2935320}
  {\path{doi:10.1145/2933575.2935320}}.

\bibitem{JabSozTab12}
Guilhem Jaber, Nicolas Tabareau, and Matthieu Sozeau.
\newblock Extending type theory with forcing.
\newblock In {\em Proceedings of the 2012 27th Annual IEEE/ACM Symposium on
  Logic in Computer Science}, LICS '12, pages 395--404, Washington, DC, USA,
  2012. IEEE Computer Society.
\newblock \href {http://dx.doi.org/10.1109/LICS.2012.49}
  {\path{doi:10.1109/LICS.2012.49}}.

\bibitem{Krivine03}
J.-L. Krivine.
\newblock Dependent choice, `quote' and the clock.
\newblock {\em Th. Comp. Sc.}, 308:259--276, 2003.

\bibitem{Krivine09}
J.-L. Krivine.
\newblock Realizability in classical logic. {I}n {I}nteractive models of
  computation and program behaviour.
\newblock {\em Panoramas et synth{\`e}ses}, 27, 2009.

\bibitem{Krivine11}
J.-L. Krivine.
\newblock Realizability algebras: a program to well order r.
\newblock {\em Logical Methods in Computer Science}, 7(3), 2011.
\newblock \href {http://dx.doi.org/10.2168/LMCS-7(3:2)2011}
  {\path{doi:10.2168/LMCS-7(3:2)2011}}.

\bibitem{Krivine12}
J.-L. Krivine.
\newblock {Realizability algebras II : new models of ZF + DC}.
\newblock {\em Logical Methods in Computer Science}, 8(1):10, February 2012.
\newblock 28 p.
\newblock \href {http://dx.doi.org/10.2168/LMCS-8(1:10)2012}
  {\path{doi:10.2168/LMCS-8(1:10)2012}}.

\bibitem{Krivine14}
J.-L. Krivine.
\newblock {Quelques propri{\'e}t{\'e}s des mod{\`e}les de r{\'e}alisabilit{\'e}
  de ZF}, February 2014.
\newblock URL: \url{http://hal.archives-ouvertes.fr/hal-00940254}.

\bibitem{Krivine15}
Jean-Louis Krivine.
\newblock {On the Structure of Classical Realizability Models of ZF}.
\newblock In Hugo Herbelin, Pierre Letouzey, and Matthieu Sozeau, editors, {\em
  20th International Conference on Types for Proofs and Programs (TYPES 2014)},
  volume~39 of {\em Leibniz International Proceedings in Informatics (LIPIcs)},
  pages 146--161, Dagstuhl, Germany, 2015. Schloss Dagstuhl--Leibniz-Zentrum
  fuer Informatik.
\newblock \href {http://dx.doi.org/10.4230/LIPIcs.TYPES.2014.146}
  {\path{doi:10.4230/LIPIcs.TYPES.2014.146}}.

\bibitem{Krivine16}
Jean-Louis Krivine.
\newblock {Bar Recursion in Classical Realisability: Dependent Choice and
  Continuum Hypothesis}.
\newblock In Jean-Marc Talbot and Laurent Regnier, editors, {\em 25th EACSL
  Annual Conference on Computer Science Logic (CSL 2016)}, volume~62 of {\em
  Leibniz International Proceedings in Informatics (LIPIcs)}, pages
  25:1--25:11, Dagstuhl, Germany, 2016. Schloss Dagstuhl--Leibniz-Zentrum fuer
  Informatik.
\newblock \href {http://dx.doi.org/10.4230/LIPIcs.CSL.2016.25}
  {\path{doi:10.4230/LIPIcs.CSL.2016.25}}.

\bibitem{LaurentPhD}
Olivier Laurent.
\newblock {\em {A study of polarization in logic}}.
\newblock PhD thesis, {Universit{\'e} de la M{\'e}diterran{\'e}e -
  Aix-Marseille II}, March 2002.
\newblock URL: \url{https://tel.archives-ouvertes.fr/tel-00007884}.

\bibitem{Lepigre16}
Rodolphe Lepigre.
\newblock A classical realizability model for a semantical value restriction.
\newblock In Peter Thiemann, editor, {\em Programming Languages and Systems -
  25th European Symposium on Programming, {ESOP} 2016, Held as Part of the
  European Joint Conferences on Theory and Practice of Software, {ETAPS} 2016,
  Eindhoven, The Netherlands, April 2-8, 2016, Proceedings}, volume 9632 of
  {\em Lecture Notes in Computer Science}, pages 476--502. Springer, 2016.

\bibitem{LevyCBPV}
Paul~Blain Levy.
\newblock {\em Call-By-Push-Value: {A} Functional/Imperative Synthesis},
  volume~2 of {\em Semantics Structures in Computation}.
\newblock Springer, 2004.
\newblock \href {http://dx.doi.org/10.1007/978-94-007-0954-6}
  {\path{doi:10.1007/978-94-007-0954-6}}.

\bibitem{Miquel11b}
A.~Miquel.
\newblock Forcing as a program transformation.
\newblock In {\em LICS}, pages 197--206. IEEE Computer Society, 2011.

\bibitem{Miquel11}
Alexandre Miquel.
\newblock Existential witness extraction in classical realizability and via a
  negative translation.
\newblock {\em Logical Methods in Computer Science}, 7(2):188--202, 2011.
\newblock \href {http://dx.doi.org/10.2168/LMCS-7(2:2)2011}
  {\path{doi:10.2168/LMCS-7(2:2)2011}}.

\bibitem{Miquel17}
Alexandre Miquel.
\newblock Implicative algebras: a new foundation for realizability and forcing.
\newblock {\em ArXiv e-prints}, 2018.
\newblock URL: \url{https://arxiv.org/abs/1802.00528}, \href
  {http://arxiv.org/abs/1802.00528} {\path{arXiv:1802.00528}}.

\bibitem{these}
{\'E}tienne Miquey.
\newblock {\em {Classical realizability and side-effects}}.
\newblock Ph.{D}. thesis, {Universit{\'e} Paris Diderot ; Universidad de la
  Rep{\'u}blica, Uruguay}, November 2017.
\newblock URL: \url{https://hal.inria.fr/tel-01653733}.

\bibitem{Miquey18b}
{\'E}tienne Miquey.
\newblock Formalizing implicative algebras in {C}oq.
\newblock In Jeremy Avigad and Assia Mahboubi, editors, {\em Interactive
  Theorem Proving}, pages 459--476. Springer International Publishing, 2018.
\newblock \href {http://dx.doi.org/10.1007/978-3-319-94821-8_27}
  {\path{doi:10.1007/978-3-319-94821-8_27}}.

\bibitem{Miquey18a}
{\'E}tienne Miquey.
\newblock A sequent calculus with dependent types for classical arithmetic.
\newblock In {\em LICS 2018}, pages 720--729. ACM, 2018.
\newblock URL: \url{http://doi.acm.org/10.1145/3209108.3209199}, \href
  {http://dx.doi.org/10.1145/3209108.3209199}
  {\path{doi:10.1145/3209108.3209199}}.

\bibitem{Munch09}
Guillaume Munch-Maccagnoni.
\newblock {F}ocalisation and {C}lassical {R}ealisability.
\newblock In Erich Gr{\"a}del and Reinhard Kahle, editors, {\em Computer
  Science Logic '09}, volume 5771 of {\em Lecture Notes in Computer Science},
  pages 409--423. Springer, Heidelberg, 2009.
\newblock \href {http://dx.doi.org/10.1007/978-3-642-04027-6\_30}
  {\path{doi:10.1007/978-3-642-04027-6\_30}}.

\bibitem{OliStr08}
P.~Oliva and T.~Streicher.
\newblock On {K}rivine's realizability interpretation of classical second-order
  arithmetic.
\newblock {\em Fundam. Inform.}, 84(2):207--220, 2008.

\bibitem{Pitts02}
Andrew~M. Pitts.
\newblock Tripos theory in retrospect.
\newblock {\em Mathematical Structures in Computer Science}, 12(3):265–279,
  2002.
\newblock \href {http://dx.doi.org/10.1017/S096012950200364X}
  {\path{doi:10.1017/S096012950200364X}}.

\bibitem{Streicher13}
Thomas Streicher.
\newblock Krivine's classical realisability from a categorical perspective.
\newblock {\em Mathematical Structures in Computer Science}, 23(6):1234–1256,
  2013.
\newblock \href {http://dx.doi.org/10.1017/S0960129512000989}
  {\path{doi:10.1017/S0960129512000989}}.

\bibitem{VanOosten08}
Jaap van Oosten.
\newblock Studies in logic and the foundations of mathematics.
\newblock In {\em Realizability: An Introduction to its Categorical Side},
  volume 152 of {\em Studies in Logic and the Foundations of Mathematics},
  pages ii --. Elsevier, 2008.
\newblock \href {http://dx.doi.org/10.1016/S0049-237X(13)72046-9}
  {\path{doi:10.1016/S0049-237X(13)72046-9}}.

\bibitem{VanOosZou16}
Jaap van Oosten and Zou Tingxiang.
\newblock Classical and relative realizability.
\newblock {\em Theory and Applications of Categories}, 31:571--593, 03 2016.

\bibitem{Wadler03}
Philip Wadler.
\newblock Call-by-value is dual to call-by-name.
\newblock In Colin Runciman and Olin Shivers, editors, {\em Proceedings of the
  Eighth {ACM} {SIGPLAN} International Conference on Functional Programming,
  {ICFP} 2003, Uppsala, Sweden, August 25-29, 2003}, pages 189--201. {ACM},
  2003.
\newblock URL: \url{http://doi.acm.org/10.1145/944705.944723}, \href
  {http://dx.doi.org/10.1145/944705.944723} {\path{doi:10.1145/944705.944723}}.

\bibitem{RusWhi25}
Alfred~North Whitehead and Bertrand Russell.
\newblock {\em Principia Mathematica}.
\newblock Cambridge University Press, 1925--1927.

\end{thebibliography}

\vlong{\newpage
\appendix

\section{Implicative tripos}
\label{a:imp_trip}

\begin{definition}[Hyperdoctrine]
Let $\C$ be a Cartesian closed category. A \emph{first-order hyperdoctrine} over $\C$ is a contravariant 
functor $\T:\Cop\to \HA$ with the following properties:
\begin{enumerate}
 \item For each diagonal morphism $\delta_X:X\to X\times X$ in $\C$, the left adjoint to $\T(\delta_X)$ at the top element
 $\top\in\T(X)$ exists. In other words, there exists an element $=_X\in\T(X\times X)$ such that for all $\varphi\in\T(X\times X)$:
 $$\top ~\oldleq~ \T(\delta_X)(\varphi)\qquad\Leftrightarrow\qquad =_X ~\oldleq~ \varphi\qquad$$
 \item For each projection $\pi^1_{\Gamma,X}:\Gamma\times X \to \Gamma$ in $\C$, the monotonic function 
 $\T(\pi^1_{\Gamma,X}):\T(\Gamma) \to \T(\Gamma\times X)$ has both a left adjoint $(\exists X)_\Gamma$ and a right adjoint $(\forall X)_\Gamma$:
 $$\begin{array}{c@{\qquad}c@{\qquad}c}
   \varphi \oldleq \T(\pi^1_{\Gamma,X})(\psi) & \Leftrightarrow & (\exists X)_\Gamma (\varphi) \oldleq \psi\\
   \T(\pi^1_{\Gamma,X})(\varphi)\oldleq \psi  & \Leftrightarrow & \varphi \oldleq (\forall X)_\Gamma (\psi)\\
  \end{array}$$
 \item These adjoints are natural in $\Gamma$, i.e. given $s:\Gamma \to \Gamma'$ in $\C$, the following diagrams commute:
 \begin{center}
\begin{tikzpicture}[->,>=stealth',shorten >=1pt,auto,node distance=2cm,
  thick,main node/.style={}]

  \node[main node] (1)              {$\T(\Gamma'\times X)$};
  \node[main node] (2) [right=2cm of 1] {$\T(\Gamma \times X)$};
  \node[main node] (3) [below of=1] {$\T(\Gamma')$};
  \node[main node] (4) [below of=2] {$\T(\Gamma )$};

  \path[every node/.style={font=\sffamily\scriptsize}]
    (1) edge [left] node [above] {$\T(s\times\id_X)    $} (2)
        edge [right] node [left] {$(\exists X)_{\Gamma'}$} (3)
    (2) edge node [right] 	 {$(\exists X)_\Gamma $} (4)
    (3) edge [right] node [below]{$\T(s)		      $} (4);  
\end{tikzpicture}\qquad
\begin{tikzpicture}[->,>=stealth',shorten >=1pt,auto,node distance=2cm,
  thick,main node/.style={}]

  \node[main node] (1)              {$\T(\Gamma'\times X)$};
  \node[main node] (2) [right=2cm of 1] {$\T(\Gamma \times X)$};
  \node[main node] (3) [below of=1] {$\T(\Gamma')$};
  \node[main node] (4) [below of=2] {$\T(\Gamma )$};

  \path[every node/.style={font=\sffamily\scriptsize}]
    (1) edge [left] node [above] {$\T(s\times\id_X)    $} (2)
        edge [right] node [left] {$(\forall X)_{\Gamma'}$} (3)
    (2) edge node [right] 	 {$(\forall X)_\Gamma $} (4)
    (3) edge [right] node [below]{$\T(s)		      $} (4);  
\end{tikzpicture}

\end{center}
 
 This condition is also called the \emph{Beck-Chevaley conditions}.
\end{enumerate}
The elements of $\T(X)$, as $X$ ranges over the objects of $\C$, are called the \emph{$\T$-predicates}.
\end{definition}

\begin{definition}[Tripos]
\label{def:tripos}
 A \emph{tripos} over a Cartesian closed category $\C$ is a first-order hyperdoctrine $\T:\C\op\to\HA$ 
 which has a \emph{generic predicate}, \emph{i.e.} there exists an object $\Prop\in\C$ and a predicate $\trth\in\T(\Prop)$ such
 that for any object $\Gamma\in\C$ and any predicate $\varphi\in\T(\Gamma)$, there exists a (not necessarily unique) morphism $\chi_\varphi\in\C(\Gamma,\Prop)$
 such that:
 $$\varphi = \T(\chi_\varphi)(\trth)$$
\end{definition}

\subsubsection*{Implicative tripos}

Let us fix an implicative algebra $(\A,\leq,\to,\sep)$
for the rest of this section.
In order to recover a Heyting algebra, it suffices to 
consider the quotient $\A/_{\seq}$ by the relation $\seq$
defined as $a\seq b \defeq (a\vdash_\sep b \land b \vdash_\sep a)$.
We equip this quotient with the canonical order relation:
$$[a] \sleq [b] ~~\defeq~~ a \vdash_{\sep}  b\eqno(\text{for all } a,b\in\A)$$
where we write $[a]$ for the equivalence class of $a\in\A$.
We define:
  \begin{center}
 \begin{tabular}{r@{~}c@{~}l}
    $[a]\imp_\H [b]$ & $\defeq$ & $[a\to b]   $\\
    $[a]\land_\H [b]$ & $\defeq$ & $[a\times b]$\\
    $[a]\lor_\H [b]$ & $\defeq$ & $[a + b]    $\\
   \end{tabular}
   \begin{tabular}{r@{~}c@{~}l}
    $\top_\H$      & $\defeq$ & $[\top] = \sep $\\
    $\bot_\H$      & $\defeq$ & $[\bot] = \{a\in\A:\neg a \in \sep\}$\\
   \end{tabular}
   \end{center}

The quintuple  $(\H,\sleq,\land_\H,\lor_\H,\to_\H)$ is a Heyting algebra.

We define:
  \begin{center}
 \begin{tabular}{r@{~}c@{~}l}
    $[a]\imp_\H [b]$ & $\defeq$ & $[a\to b]   $\\
    $[a]\land_\H [b]$ & $\defeq$ & $[a\times b]$\\
    $[a]\lor_\H [b]$ & $\defeq$ & $[a + b]    $\\
   \end{tabular}
   \begin{tabular}{r@{~}c@{~}l}
    $\top_\H$      & $\defeq$ & $[\top] = \sep $\\
    $\bot_\H$      & $\defeq$ & $[\bot] = \{a\in\A:\neg a \in \sep\}$\\
   \end{tabular}
   \end{center}

The quintuple  $(\H,\sleq,\land_\H,\lor_\H,\to_\H)$ is a Heyting algebra.

\begin{theorem}[Implicative tripos \cite{Miquel17}]
 Let $(\A,\leq,\to,\sep)$ be an implicative algebra. The following functor:
 $$
 \T : I\mapsto \A^I/\usep
\qquad\qquad
\qquad\qquad
\T(f) :\left\{\begin{array}{lcl}
		 \A^I/\usep &\to& \A^J/\sep[J]\\ [0.5em]
		 \left[(a_i)_{i\in I}\right]& \mapsto &[(a_{f(j)})_{j\in J}]
		\end{array}
		\right.
		\eqno(\forall f\in J\to I )$$
 defines\footnote{Note that the definition of the functor on functions $f:J\to I$ assumes implicitly the possibility
 of picking a representative in any equivalent class $[a]\in\A/\usep$, \emph{i.e.} the full axiom of choice.}
 a tripos.  
 \end{theorem}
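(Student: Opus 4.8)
\begin{myproof}
The plan is to instantiate the Heyting-algebra construction of \Cref{s:ia} fibrewise, and then verify, one clause at a time, the axioms of a first-order hyperdoctrine and of a tripos, observing that each reduces either to a coordinatewise computation in $\A$ or to a realizability argument of the kind already used to show that $\A/\seq$ is a Heyting algebra.

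\textbf{The functor.} First I would check that, for every set $I$, the pointwise operations turn $\A^I$ into an implicative structure and that $\usep=\sep[I]$ is a separator for it. The only non-formal ingredient is that the interpretation of a closed $\lambda$-term in a product of implicative structures is the pointwise family of its interpretations; in particular $\comb{k}^{\A^I}$ and $\comb{s}^{\A^I}$ are the constant families $(\comb{k}^\A)_{i\in I}$ and $(\comb{s}^\A)_{i\in I}$, hence lie in $\sep[I]$ with the single witness $\comb{k}^\A$, resp. $\comb{s}^\A$; upward closure and closure under modus ponens of $\sep[I]$ then follow from the corresponding properties of $\sep$ together with its closure under application. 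By the construction of \Cref{s:ia} applied to $(\A^I,\leq,\imp,\sep[I])$ we obtain a Heyting algebra $\A^I/\usep$, the value of $\T$ on objects. On morphisms $\T(f)$ is precomposition by $f$: it is strictly contravariantly functorial on representatives, it descends to the quotient because a separator witness for $a\equiv_{\sep[I]}a'$ is again a witness for $(a\circ f)\equiv_{\sep[J]}(a'\circ f)$, and it is a morphism of Heyting algebras because every operation on $\A^I$ is computed coordinatewise and hence commutes with precomposition. (Choosing representatives to define $\T(f)$ on classes is the harmless use of choice flagged in the statement.)

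\textbf{Adjoints and equality.} For a projection $\pi\colon\Gamma\times X\to\Gamma$ I would take $(\forall X)_\Gamma[(\varphi_{(\gamma,x)})]\defeq[(\minf_{x\in X}\varphi_{(\gamma,x)})_{\gamma}]$; the adjunction $\T(\pi)(\varphi)\vdash_{\sep}\psi \Leftrightarrow \varphi\vdash_{\sep}(\forall X)_\Gamma(\psi)$ is then essentially immediate, since the commutation of $\imp$ with meets (clause~(2) of the definition of implicative structure) rewrites $\varphi_\gamma\imp\minf_x\psi_{(\gamma,x)}$ as $\minf_x(\varphi_\gamma\imp\psi_{(\gamma,x)})$, so the two sides are witnessed by literally the same element of $\sep$. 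For the existential quantifier I would use the impredicative encoding $(\exists X)_\Gamma[(\varphi_{(\gamma,x)})]\defeq[(\minf_{c\in\A}((\minf_{x\in X}(\varphi_{(\gamma,x)}\imp c))\imp c))_{\gamma}]$; here the adjunction $\varphi\vdash_{\sep}\T(\pi)(\psi)\Leftrightarrow(\exists X)_\Gamma(\varphi)\vdash_{\sep}\psi$ no longer holds by literal inequalities, and one must exhibit \emph{uniform} realizers — a $\lambda$-term realizing $\varphi_{(\gamma,x)}\imp(\exists X)_\Gamma(\varphi)_\gamma$ (a guarded double-negation introduction $\lambda v\lambda k.\,k\,v$) and one realizing the elimination direction — and then invoke closure of $\sep$ under application; the point is that, because $\sep[I]$ is the \emph{uniform} separator (and not $\sep^I$), a single such realizer, coming from a closed term, serves all indices $\gamma$ simultaneously. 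The equality predicate $=_X\in\T(X\times X)$ I would define by the Leibniz encoding $(x,x')\mapsto\minf_{a\in\A^X}(a_x\imp a_{x'})$; the adjunction $\top\vdash_{\sep}\T(\delta_X)(\varphi)\Leftrightarrow{=_X}\vdash_{\sep}\varphi$ is then checked from, in one direction, the realizer $\id^\A=\minf_c(c\imp c)$ of reflexivity (using that $({=_X})_{(x,x)}=\id^\A$) and, in the other, instantiating the Leibniz meet at the family $y\mapsto\varphi_{(x,y)}$ and applying — uniformity of $\sep[I]$ being again what makes the argument go through with one realizer. The Beck–Chevalley conditions hold strictly: the adjoints are given by the same coordinatewise formula over every base and reindexing commutes with coordinatewise meets and with the encoding, so the relevant squares already commute on representatives.

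\textbf{Generic predicate and main obstacle.} The generic predicate is the easy part: take $\Prop\defeq\A$ and $\trth\defeq[(a)_{a\in\A}]\in\T(\A)$; then for any $\varphi\in\T(\Gamma)$ with representative $\varphi\colon\Gamma\to\A$ one has $\T(\varphi)(\trth)=[(\varphi(\gamma))_{\gamma}]=\varphi$, so $\chi_\varphi\defeq\varphi$ works. I expect the genuine obstacle to be precisely the existential-adjunction clause, and to a lesser extent the equality predicate: these are the only places where one must argue that an entailment in the quotient is witnessed by a single element of $\sep$ independent of the index — i.e. where the uniform separator $\sep[I]$ really does its job — whereas everything else is either functorial bookkeeping or a direct consequence of the commutation of $\imp$ with meets.
\end{myproof}
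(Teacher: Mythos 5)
Your proof is correct and follows essentially the same route as the paper's: the pointwise Heyting structure on $\A^I/\sep[I]$, quantifier adjoints given by meets and the impredicative existential with uniform closed-term realizers witnessing the adjunctions (this being, as you note, exactly where the uniformity of $\sep[I]$ is used), strict Beck--Chevalley, Leibniz equality, and the identity family as generic predicate. The only cosmetic difference is that you define $=_X$ directly as the Leibniz meet $\minf_{a\in\A^X}(a_x\to a_{x'})$, where the paper uses an equivalent case-split definition taking the value $\minf_{a\in\A}(a\to a)$ on the diagonal.
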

 \begin{proof}We verify that $\T$ satisfies all the necessary conditions to be a tripos.
    \begin{itemize}
    \item The functoriality of $\T$ is clear.
    \item For each $I\in\Set$, the image of the corresponding diagonal morphism $\T(\delta_I)$ associates to any element $[(a_{ij})_{(i,j)\in I\times I}]\in\T(I\times I)$ 
    the element  $[(a_{ii})_{i\in I}]\in\T(I)$.
    We define\footnote{The reader familiar with classical realizability might recognize the usual interpretation of Leibniz's equality.}:
    $$(=_{I})~:~ i,j \mapsto \begin{cases}\minf_{a\in\A}(a\to a)  & \text{if } i=j\\ \bot \to \top & \text{if } i\neq j\\ \end{cases}$$

    and we need to prove that for all $[a]\in\T(I\times I)$:
    $$     [\top]_{I} \uleq \T(\delta_I)(a) \qquad\Leftrightarrow  \qquad[=_{I}] \oldleq_{S[I\times I]} [a]$$
    Let then $[(a_{ij})_{i,j\in I}]$ be an element of $\T(I\times I)$.
    From left to right, assume that $[\top]_{I} \uleq \T(\delta_I)(a)$, that is to say that 
    there exists $s\in\sep$ such that for any $i\in I$, $s\leq \top \to a_{ii}$.
    Then it is easy to check that for all $i,j\in I$, $\lambda z. z(s(\lambda x.x)) \leq i =_I j \to a_{ij}$.
    Indeed, using the adjunction and the $\beta$-reduction it suffices to show that for all $i,j\in I$,
     $(i =_I j)\leq  (s(\lambda x.x)) \to a_{ij}$. If $i=j$, this follows from the fact that $ (s(\lambda x.x)) \leq a_{ii}$.
     If $i\neq j$, this is clear by subtyping.
    
    From right to left, if there exists $s\in\sep$ such that for any $i,j\in I$, $s\leq  i =_I j \to a_{ij}$,
    then in particular for all $i\in I$ we have $s\leq  (\lambda x.x) \to a_{ii}$, and then $\lambda\_.s(\lambda x.x)\leq \top \to a_{ii}$ which concludes the case.

     \item For each projection $\pi^1_{I\times J}:I\times J \to I$ in $\C$, the monotone function 
 $\T(\pi^1_{I,J}):\T(I) \to \T(I\times J)$ has both a left adjoint $(\exists J)_I$ and a right adjoint $(\forall J)_I$ which are defined by:
 
 $$(\forall J)_I \big(\left[(a_{ij})_{i,j\in I\times J}\right]\big) \defeq \big[(\bigfa_{j\in J} a_{ij})_{i\in I} \big]\qquad\qquad     
  (\exists J)_I \big(\left[(a_{ij})_{i,j\in I\times J}\right]\big) \defeq \big[(\bigex_{j\in J} a_{ij})_{i\in I} \big]$$
  The proofs of the adjointness of this definition are again easy manipulation of $\lambda$-calculus.
  We only give the case of $\exists$, the case for $\forall$ is easier.
  We need to show that for any $[(a_{ij})_{(i,j)\in I\times J}]\in\T(I\times J)$ and for any  $[(b_{i})_{i\in I}]$, we have:
      $$ [(a_{ij})_{(i,j)\in I\times J}] \oldleq_{\sep[I\times J]} [(b_{i})_{(i,j)\in I}] \qquad \Leftrightarrow \qquad \big[(\bigex_{j\in J} a_{ij})_{i\in I} \big] \oldleq_{\sep[I]} [(b_i)_{i\in I}]$$
  Let us fix some $[a]$ and $[b]$ as above.
  From left to right, assume that there exists  $s\in\sep$ such that for all $i\in I$, $j\in J$, 
  $s\leq a_{ij} \to b_i$, and thus $s a_{ij}\leq b_i$. Using the semantic elimination rule of the existential quantifier, we deduce  that for all $i\in I$, if $t\leq \bigex_{j\in J} a_{ij}$, 
  then $t(\lambda x.sx) \leq b_i$. Therefore, for all $i\in I$ we have  $\lambda y.y(\lambda x.sx) \leq \bigex_{j\in J} a_{ij} \to b_i$.
  
  From right to left,  assume that there exists  $s\in\sep$ such that for all $i\in I$,  
  $s\leq \bigex_{j\in J} a_{ij} \to b_i$. For any $j\in J$, using the semantic introduction rule of the existential quantifier, 
  we deduce that for all $i\in I$, $\lambda x.x a_{ij} \leq \bigex_{j\in J} a_{ij}$.
  Therefore, for all $i\in I$ we have  $\lambda x.s(\lambda z.z x) \leq a_{ij} \to b_i$. 

  \item These adjoints clearly satisfy the Beck-Chevaley condition.
  For instance, for the existential quantifier, we have for all $I,I',J$, 
  for any  $[(a_{i'j})_{(i',j)\in I'\times J}]\in\T(I'\times J)$ and any $s:I \to I'$,
  $$ \begin {array}{cl}
   (\T(s)\circ(\exists J)_{I'}) ([(a_{i'j})_{(i',j)\in I'\times J}]) 
   &= \T(s)(\big[(\bigex_{j\in J} a_{i'j})_{i'\in I'} \big]) \\
   &= \big[(\bigex_{j\in J} a_{s(i)j})_{i\in I} \big] \\
   &= ((\exists J)_{I}) ([(a_{s(i)j})_{ij\in I\times J}])\\
   &= ((\exists J)_{I} \circ \T(s\times \id_J) ([(a_{ij})_{i,j\in I\times J}])
\end {array}$$
  \item Finally, we define $\Prop \defeq \A$ and verify that $\trth\defeq [\id_\A] \in \T(\Prop)$ is a generic predicate.
  Let then $I$ be a set, and $a=[(a_i)_{i\in I}]\in\T(I)$. We let $\chi_a: i \mapsto a_i$ be the characteristic function of $a$ 
  (it is in $I\to \Prop$), which obviously satisfies that for all $i\in I$:
      $$\T(\chi_a)(\trth) = [(\chi_a(i))_{i\in I})] = [(a_i)_{i\in I}]$$
      \end{itemize}
      \end{proof}

\newpage
\section{Disjunctive algebras}
\label{a:da}
\subsection{The $\Lpar$ calculus}
\label{s:Lpar}

The $\Lpar$-calculus is the restriction of Munch-Maccagnoni's system L~\cite{Munch09}, 
to the negative fragment corresponding to the connectives $\parr$,  $\neg^-$
(which we simply write $\neg$ since there is no ambiguity here) and $\forall$.
To simplify things (and ease the connection with the \lmmt-calculus~\cite{CurHer00}),
we slightly change the notations of the original paper.
As Krivine's $\lambda_c$-calculus, this language describes
\emph{commands} of abstract machines $c$ that are made of a \emph{term} $t$ taken within its \emph{evaluation context} $e$. 
The syntax is given by\footnote{The reader may observe 
 that in this setting, values are defined as contexts, so that we may have called them \emph{covalues} rather than values.
We stick to this denomination to stay coherent with 
Munch-Maccagnoni's paper~\cite{Munch09}.}:
$$
\begin{array}{c|c}
 \begin{array}{@{\hspace{-0.15cm}}l@{\quad}c@{~}c@{~}l}
\text{\bf Terms}  & t & ::= & x \mid\mu(\alpha_1,\alpha_2).c \mid \mu [x].c \mid \mu\alpha.c \\
\text{\bf Values} & V & ::= & \alpha \mid (V_1,V_2) \mid [t]\\ 
\end{array}&
\begin{array}{l@{\quad}c@{~}c@{~}l}  
\text{\bf Contexts} & e & ::= & \alpha \mid (e_1,e_2) \mid [t] \mid \mu x.c \\
\text{\bf Commands} & c & ::= & \cut{t}{e}  
\end{array}
\end{array}
  $$
We write $\T_0$, $\V_0$, $\E_0$, $\C_0$ for the sets of closed terms, values, contexts and commands.
 the corresponding set of closed values.
We shall say a few words about it:
 \begin{itemize}
 \item $(e_1,e_2)$ are pairs of contexts, which we will relate to usual stacks;
 \item $\mu( \alpha_1, \alpha_2).c$, which binds the co-variables $\alpha_1,\alpha_2$, is the dual destructor for pairs;
 \item $[t]$ is a constructor for the negation, which allows us to embed a term into a context;
 \item $\mu[x].c$, which binds the variable $x$, is the dual destructor;
 \item $\mu \alpha.c$ binds a covariable and 
 allows to capture a context: as such, it implements classical control.
\end{itemize}
\begin{remark}[Notations]
We shall explain that in (full) L, the same syntax allows us to define terms $t$ and contexts $e$ (thanks to the duality between them). 
In particular, no distinction is made between $t$ and $e$, which are both written $t$, and commands are indifferently 
of the shape $\cut{t^+}{t^-}$ or $\cut{t^-}{t^+}$. 
For this reason, in~\cite{Munch09} is considered a syntax where a notation $\bar x$ is used to distinguish between the positive variable $x$ 
(that can appear in the left-member $\lcut{x}$ of a command) and the positive co-variable $\bar x$
(resp. in the right member $\rcut{x}$ of a command).
In particular, the $\mu\alpha$ binder of the {\lmmt}-calculus would have been written $\mu\bar x$ 
and the $\tmu x$ binder would have been denoted by $\mu\alpha$ (see \cite[Appendix A.2]{Munch09}).
We thus switched the $x$ and $\alpha$ of L (and removed the bar), in order to stay coherent with the notations in the rest of this manuscript.
\end{remark}

The reduction rules correspond to what could be expected from the syntax of the calculus:
destructors reduce in front of the corresponding constructors, both $\mu$ binders catch values in front of them
and pairs of contexts are expanded if they are not values\footnote{The reader might recognize the rule ($\zeta$) of Wadler's sequent calculus~\cite{Wadler03}.}.

 $$
 \begin{array}{c|c}
  \begin{array}{r@{~~}c@{~~}l}
  \cut{\mu[x].c}{[t]} 			    & \to & c[t/x] \\
  \cut{t}{\mu x.c} 			    & \to & c[t/x] \\
  \cut{\mu\alpha.c}{V} 			    & \to & c[V/\alpha] \\
  \end{array}
  \quad&\quad
  \begin{array}{r@{~~}c@{~~}l}
  \cut{\mu(\alpha_1,\alpha_2).c}{(V_1,V_2)} & \to & c[V_1/\alpha_1,V_2/\alpha_2] \\
  \cut{t}{(e,e')}			    & \to & \cut{\mu\alpha.\cut{\mu\alpha'.\cut{t}{(\alpha,\alpha')}}{e'}}{e} \\
  \end{array}
 \end{array}
 $$
where in the last rule, $(e,e')\notin V$.

 Finally, we shall present the type system of $\Lpar$. 
 
 In the continuity of the presentation of implicative algebras, we are interested 
 in a second-order settings. Formulas are then defined by the following grammar:
 $$A,B := X \mid A \parr B \mid \neg A \mid \forall X.A \leqno\quad \textbf{Formulas}$$
The type system is presented in a sequent calculus fashion.
We work with two-sided sequents, where typing contexts are defined as usual as finite lists of bindings between variable and formulas:
$$\Gamma ::= \varepsilon \mid \Gamma,x:A \qquad\qquad\qquad \Delta ::= \varepsilon \mid \Delta,\alpha:A$$
Sequents are of three kinds: $\Gamma\vdash t:A \mid \Delta$ for typing terms,
 $\Gamma\mid e:A\vdash  \Delta$ for typing contexts,
 $c: \Gamma\vdash  \Delta$ for typing commands.
In the type system, left rules corresponds to constructors
while right rules type destructors. 
The type system is given in Figure~\ref{fig:Lpar:typing}.
%

\newcommand{\vsep}{0.7em}
\begin{figure}[t]
\myfig{ 
$$
  \begin{array}{c}
{ \infer[\cutrule]{ \cut{t}{e}:\Gamma \vdash\Delta}{\Gamma \vdash t :A \mid \Delta & \Gamma \mid e:A\vdash \Delta}} 
\qquad
 \infer[\axlrule ]{\Gamma\mid \alpha : A \vdash \Delta}{(\alpha: A)\in\Delta} 
 \qquad  
 \infer[\axrrule ]{\Gamma\vdash x : A \mid \Delta}{(x: A)\in\Gamma}  \\[\vsep]
 
 \infer[\mulrule ]{\Gamma\mid \mu x.c : A \vdash \Delta}{c:(\Gamma,x: A\vdash\Delta)} 
 \qquad   
 \infer[\parlrule]{\Gamma\mid (e_1,e_2): A \parr B \vdash \Delta}{\Gamma \mid e_1:A \vdash \Delta & \Gamma \mid e_2 :B \vdash \Delta} 
 \qquad
  \infer[\neglrule]{\Gamma\mid [t]:\neg A\vdash\Delta}{\Gamma \vdash t:A \mid \Delta} 
 \\[\vsep]
 
 \infer[\murrule ]{\Gamma\vdash \mu\alpha.c : A \mid \Delta}{c:(\Gamma\vdash\Delta,\alpha: A)}  
 \qquad 
 \infer[\parrrule]{\Gamma\vdash \mu(\alpha_1,\alpha_2).c:A\parr B\mid \Delta}{c:\Gamma\vdash \Delta,\alpha_1:A,\alpha_2:B} 
 \qquad 
 \infer[\negrrule]{\Gamma\vdash \mu[x].c:\neg A\mid\Delta}{c:\Gamma,x:A \vdash\Delta} 
 \\ [\vsep]
 
%
 \infer[\falrule ]{\Gamma\mid e:\forall X. A\vdash\Delta}{\Gamma \mid e:A[B/X] \vdash \Delta}  \qquad 
 \infer[\farrule ]{\Gamma\vdash t:\forall X. A\mid \Delta}{\Gamma\vdash t: A\mid\Delta & X\notin{FV(\Gamma,\Delta)}}
  \end{array}
 $$
} 
 \caption{Typing rules for the $\L_{\parr,\neg}$-calculus}
 \label{fig:Lpar:typing}
\end{figure}

\subsubsection*{Embedding of the $\lambda$-calculus}
Following Munch-Maccagnoni's paper~\cite[Appendix E]{Munch09},
we can embed the $\lambda$-calculus into the $\Lpar$-calculus.
To this end, we are guided by the expected definition of the arrow $A \imp B ~~\defeq~~ \neg A \parr B $.
It is easy to see that with this definition, 
a stack $u\cdot e$ in $A\imp B$ (that is with $u$ a term of type $A$ and $e$ a context of type $B$)
is naturally defined as a shorthand for the pair $([u],e)$, which indeed inhabits the type $\neg A \parr B$.
Starting from there, the rest of the definitions are straightforward:
$$\hspace{-0.8em}\begin{array}{c@{\,}|@{}c}
\begin{array}{r@{\,~}c@{~}l}
  u\cdot e 		& \defeq & ([u],e)\\
 \mu([x],\beta).c	& \defeq & \mu(\alpha,\beta).\cut{\mu[x].c}{\alpha} \\
\end{array}&
\begin{array}{r@{\,~}c@{~}l}
 \lambda x.t		& \defeq & \tmu([x],\beta).\cut{t}{\beta}\\
  t\,u	 		& \defeq & \mu\alpha.\cut{t}{u\cdot\alpha}\\
\end{array}
\end{array}
$$
These definitions are sound with respect to the typing rules expected  from the \lmmt-calculus~\cite{CurHer00}.
In addition, they induce the usual rules of $\beta$-reduction for the call-by-name evaluation strategy 
in the Krivine abstract machine\footnote{Note that in the KAM, all stacks are values.}: 
$$
\cut{t\,u}{\pi} 		 \bred  \cut{t}{u\cdot \pi}\qquad
\cut{\lambda x.t}{u\cdot \pi} 	 \bred  \cut{t[u/x]}{\pi}
\eqno
(\pi\in V)
$$


\subsubsection*{Realizability models}
\label{s:CbNReal}

We briefly go through the definition of the realizability interpretation \emph{à la} Krivine
for $\Lpar$. 
As is usual, we begin with the definition of a pole:
\begin{definition}[Pole]
 A \emph{pole} is defined as any subset $\pole\subseteq\C$ 
 s.t. for all $c,c'\in\C$, if $c\bred c'$ and $c'\in\pole$ then $c\in\pole$.
\end{definition}

As it is common in Krivine's call-by-name realizability,
{falsity values} are defined primitively as sets of {contexts}. 
{Truth values} are then defined by orthogonality to the corresponding falsity values.
We say that a term $t$ is \emph{orthogonal} (with respect to the pole $\pole$) to a context $e$
and we write $t\orth e$ when $\cut{t}{e}\in\pole$.
A term $t$ (resp. a context $e$) is said to be orthogonal to a set $S\subseteq \E_0$ (resp. $S\subseteq \T_0$),
which we write $t\orth S$,
when for all $e\in S$, $t$ is orthogonal to $e$.
Due to the call-by-name\footnote{See \cite[Chapter 3]{these} for a more detailed explanation on this point.}
(which is induced here by the choice of connectives), 
a formula $A$ is primitively interpreted by its \emph{ground falsity value}, which we write $\fvval{A}$ and 
which is a set in $\P(\V_0)$.
Its \emph{truth value} $\tv{A}$ is then defined by orthogonality to $\fvval{A}$ (and is a set in $\P(\T_0)$), 
while its \emph{falsity value} $\fv{A}\in\P(\E_0)$ is again obtained by orthogonality to $\tv{A}$.
To ease the definitions
we assume that for each subset $S$ of $\P(\V_0)$, there is a constant symbol $\dot S$ 
in the syntax of formula. 
Given a fixed pole $\pole$, the interpretation is given by:
$$
\begin{array}{c|c}
\begin{array}{r@{~~}c@{~~}l}
 \fvval{\dot S}      & \defeq & S \\
 \fvval{\forall X.A} & \defeq & \bigcup_{S\in\P(\V_0)} \fvval{A\{X:=\dot S\}}\\
 \fvval{A \parr B}   & \defeq & \{(V_1,V_2):V_1\in\fvval{A} \land V_2 \in \fvval{B}\}\\
\end{array}
&
\begin{array}{r@{~~}c@{~~}ll}
 \fvval{\neg A}      & \defeq & \{[t]:t\in\tval{A}\}\\
 \tval{A}            & \defeq & \{t:\forall V\in\fvval{A}, t\orth V\}\\
 \fval{A}            & \defeq & \{e:\forall t\in\tval{A},  t\orth e\}\\
\end{array}
\end{array}
$$

 We shall now verify that the type system of $\Lpar$ is indeed adequate with this interpretation.
 We first prove the following simple lemma:
\begin{lemma}[Substitution]\label{lm:subs_var}
 Let $A$ be a formula whose only free variable is $X$.
 For any closed formula $B$, if $S=\fvval{B}$, then $\fvval{A[B/X]} = \fvval{A[\dot S/X]}$.
\end{lemma}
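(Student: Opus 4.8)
The plan is to prove the statement by induction on $A$ --- precisely, on the number of connectives and quantifiers occurring in $A$ --- strengthening the formulation to ``for every formula $A$ with $\FV(A)\subseteq\{X\}$'', so that the hypothesis can legitimately be applied to substitution instances of strict subformulas. The base cases are immediate: if $A=X$ then $A[B/X]=B$ and $A[\dot S/X]=\dot S$, and $\fvval{\dot S}=S=\fvval{B}$ holds by the very choice of $S$; if $A$ is a constant symbol $\dot{S'}$ it is closed and both substitutions act trivially.

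For $A=A_1\parr A_2$, neither substitution crosses a binder, so $\FV(A_i)\subseteq\{X\}$ and $A[B/X]=A_1[B/X]\parr A_2[B/X]$. Unfolding the clause defining $\fvval{\cdot}$ on a $\parr$ and applying the induction hypothesis to $A_1$ and $A_2$ gives $\fvval{A_i[B/X]}=\fvval{A_i[\dot S/X]}$ for $i\in\{1,2\}$, whence the two resulting sets of pairs coincide. For $A=\neg A_1$, unfolding gives $\fvval{\neg A_1[B/X]}=\{[t]:t\in\tval{A_1[B/X]}\}$; since by the induction hypothesis $\fvval{A_1[B/X]}=\fvval{A_1[\dot S/X]}$ and the truth value $\tval{\cdot}$ of a formula is determined solely by its ground falsity value, we get $\tval{A_1[B/X]}=\tval{A_1[\dot S/X]}$ and the case closes.

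The interesting case is $A=\forall Y.A_1$. First I would use $\alpha$-renaming to assume $Y\neq X$ and $Y\notin\FV(B)$ (it cannot occur in the closed constant $\dot S$). Then $A[B/X]=\forall Y.(A_1[B/X])$ and, unfolding the interpretation of $\forall$,
\[ \fvval{A[B/X]}\;=\;\bigcup_{S'\in\P(\V_0)}\fvval{A_1[B/X]\{Y:=\dot{S'}\}}. \]
Because $Y\neq X$, $Y\notin\FV(B)$ and $\dot{S'}$ is closed, the two substitutions commute, so $A_1[B/X]\{Y:=\dot{S'}\}=A_1\{Y:=\dot{S'}\}[B/X]$; the formula $A_1\{Y:=\dot{S'}\}$ has all its free variables among $\{X\}$ and strictly fewer connectives than $A$ (replacing a variable by a constant symbol does not change the count), so the induction hypothesis applies and yields $\fvval{A_1\{Y:=\dot{S'}\}[B/X]}=\fvval{A_1\{Y:=\dot{S'}\}[\dot S/X]}$. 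Commuting the substitutions back and taking the union over all $S'\in\P(\V_0)$ gives $\fvval{A[B/X]}=\fvval{A[\dot S/X]}$.

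The main obstacle is purely bookkeeping: getting the induction metric right in the $\forall$ case so that the hypothesis is applied to a substitution instance of $A_1$ rather than to $A_1$ itself, and carefully justifying the commutation of the two substitutions under the usual variable-hygiene conventions. No genuine difficulty is expected, and the same argument will serve, \emph{mutatis mutandis}, for the analogous substitution lemma in the $\Ltens$ setting.
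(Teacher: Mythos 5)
Your proof is correct and follows essentially the same route as the paper: an induction on formulas whose key atomic case is $\fvval{X[B/X]}=\fvval{B}=S=\fvval{\dot S}=\fvval{X[\dot S/X]}$, together with the observation that equality of ground falsity values propagates to truth and falsity values (which handles the $\neg$ case). The paper simply labels the induction ``easy'' and omits the bookkeeping you spell out for the $\forall$ case (measuring by connective count so the hypothesis applies to $A_1\{Y:=\dot{S'}\}$, and commuting the substitutions); your treatment is a faithful elaboration of the same argument.
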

\begin{proof}
 Easy induction on the structure of formulas, with the observation that the statement for primitive falsity values implies
 the same statement for truth values ($\tv{A[B/X]} = \tv{A[\dot S/X]}$) and falsity values ($\fv{A[B/X]} = \fv{A[\dot S/X]}$).
 The key case is for the atomic formula $A\equiv X$, where we easily check that:
 $$\fvval{X[B/X]} = \fvval{B} = S = \fvval{\dot S} = \fvval{X[\dot S/X]}$$
\end{proof}

We define $\Gamma\cup\Delta$ as the union of both contexts where we annotate the type of hypothesis $(\kappa:A)\in\Delta$ with $\kappa:A^\negt$:
$$\begin{array}{rcl}
\Gamma\cup(\Delta,\kappa:A) & \defeq & (\Gamma \cup \Delta),\kappa:A^\negt \\
\Gamma\cup\varepsilon	  & \defeq & \Gamma  \\
\end{array}
$$


 The last step before proving adequacy consists in defining substitutions and valuations.
 We say that a \emph{valuation}, which we write $\rho$, is a function mapping each second-order variable 
 to a primitive falsity value $\rho(X)\in\P(\V_0)$. 
 A \emph{substitution}, which we write $\sigma$, is a function mapping each variable $x$ to a closed term $c$ and
 each variable $\alpha$ to a closed value $V\in\V_0$:
 $$\sigma ::= \varepsilon \mid \sigma,x\mapsto t\mid \sigma,\alpha\mapsto V^+$$
 We say that a substitution $\sigma$ realizes a context $\Gamma$ and note $\sigma \real \Gamma $ when
 for each binding $(x:A)\in\Gamma$, $\sigma(x)\in\tval{A}$. Similarly, we say that $\sigma$ realizes a context $\Delta$
 if for each binding $(\alpha:A)\in\Delta$, $\sigma(\alpha)\in\fvval{A}$.

We can now state the property of adequacy of the realizability interpretation:
\begin{proposition}[Adequacy]\label{a:p:lpar:adequacy}
Let $\Gamma,\Delta$ be typing contexts, $\rho$ be a valuation and $\sigma$ be a substitution such that 
$\sigma\real \Gamma[\rho]$ and $\sigma \Vdash \Delta[\rho]$. We have:
\begin{enumerate}
 \item If~ $V^+$ is a positive value such that $\Gamma\mid V^+:A \vdash \Delta$, then $V^+[\sigma] \in\fvval{A[\rho]}$.
 \item If~ $t$ is a term such that $\Gamma\vdash t:A \mid \Delta$, then $t[\sigma] \in\tval{A[\rho]}$.
 \item If~ $e$ is a context such that $\Gamma\mid e:A \vdash \Delta$, then $e[\sigma] \in\fval{A[\rho]}$.
 \item If~ $c$ is a command such that $c: (\Gamma\vdash \Delta)$, then $c[\sigma] \in \pole$. 
\end{enumerate}
\end{proposition}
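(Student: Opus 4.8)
The plan is to prove the four statements simultaneously by a single mutual induction on the typing derivations for values, terms, contexts and commands; the four clauses feed into one another exactly along the structure of the $\Lpar$ sequent calculus. Before the induction I would record two elementary orthogonality facts. First, every ground falsity value is a falsity value, $\fvval{A[\rho]}\subseteq\fval{A[\rho]}$: indeed $\fval{A[\rho]}=\tval{A[\rho]}^\orth=(\fvval{A[\rho]}^\orth)^\orth\supseteq\fvval{A[\rho]}$, using only $X\subseteq X^{\orth\orth}$. Second, a monotonicity consequence of the substitution \Cref{lm:subs_var}: since $\fvval{A[B/X][\rho]}\subseteq\fvval{(\forall X.A)[\rho]}$ (the latter being a union over all ground falsity values), anti-monotonicity of $(\cdot)^\orth$ gives $\tval{(\forall X.A)[\rho]}\subseteq\tval{A[B/X][\rho]}$ and then $\fval{A[B/X][\rho]}\subseteq\fval{(\forall X.A)[\rho]}$. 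Throughout I would use freely that $\pole$ is closed under anti-reduction, so that a process may always be replaced by one that reduces to it.

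Most cases are then direct. The axiom rules $\axlrule$ and $\axrrule$ are literally the hypotheses $\sigma\real\Delta[\rho]$, resp.\ $\sigma\real\Gamma[\rho]$. For $\cutrule$ the induction hypotheses give $t[\sigma]\in\tval{A[\rho]}$ and $e[\sigma]\in\fval{A[\rho]}$, whence $\cut{t[\sigma]}{e[\sigma]}\in\pole$ by definition of $\fval{A[\rho]}$. The four $\mu$-binder rules ($\mulrule$, $\murrule$, $\parrrule$, $\negrrule$) are all handled identically: one confronts the typed term or context with an arbitrary element of the relevant ground falsity value (which, for $\murrule$, $\parrrule$, $\negrrule$, is necessarily a value, resp.\ a pair of values, resp.\ of the form $[t]$ with $t\in\tval{\cdot}$, exactly matching the shape required by the reduction rule), fires the corresponding reduction step, and applies the command clause of the induction hypothesis to $\sigma$ extended by the captured value or term --- which realizes the extended context by construction. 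The value rules for $\alpha$, $[t]$ and a pair $(V_1,V_2)$ of values follow immediately from the interpretations of $\parr$ and $\neg$ and the induction hypotheses; the context rule $\neglrule$ additionally invokes $\fvval{(\neg A)[\rho]}\subseteq\fval{(\neg A)[\rho]}$. The two second-order rules use the substitution lemma: $\falrule$ reduces to the monotonicity fact above, while for $\farrule$ one tests $t[\sigma]$ against each $V\in\fvval{A[\rho,X\mapsto S]}$ and applies the term clause of the induction hypothesis with the valuation $\rho,X\mapsto S$ --- legitimate because $X\notin\FV(\Gamma,\Delta)$, so $\Gamma[\rho]$ and $\Delta[\rho]$ are unaffected.

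The one genuinely delicate case is the context rule $\parlrule$ when $(e_1,e_2)$ is \emph{not} a value: then $(e_1[\sigma],e_2[\sigma])$ no longer belongs to the ground falsity value $\fvval{(A\parr B)[\rho]}$, so we must show directly that it lies in $\fval{(A\parr B)[\rho]}$. I would isolate this as a sublemma: if $e_1\in\fval{A[\rho]}$ and $e_2\in\fval{B[\rho]}$ then $(e_1,e_2)\in\fval{(A\parr B)[\rho]}$. Its proof runs through the expansion rule $\cut{t}{(e_1,e_2)}\to\cut{\mu\alpha.\cut{\mu\alpha'.\cut{t}{(\alpha,\alpha')}}{e_2}}{e_1}$: for $t\in\tval{(A\parr B)[\rho]}$, closure under anti-reduction reduces the goal to $\mu\alpha.\cut{\mu\alpha'.\cut{t}{(\alpha,\alpha')}}{e_2}\in\tval{A[\rho]}$; confronting this term with an arbitrary $V_1\in\fvval{A[\rho]}$, firing the $\mu\alpha$-reduction, and then repeating the same manoeuvre one layer deeper with an arbitrary $V_2\in\fvval{B[\rho]}$, one is left with $\cut{t}{(V_1,V_2)}\in\pole$, which holds since $(V_1,V_2)\in\fvval{(A\parr B)[\rho]}\subseteq\fval{(A\parr B)[\rho]}$ and $t\in\tval{(A\parr B)[\rho]}$. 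With the sublemma available, the non-value $\parlrule$ case follows from the induction hypotheses on $e_1$ and $e_2$. Keeping track of which $\mu$-binder captures which of the two values in that nested reduction is where I expect the bookkeeping to be heaviest; every other case is a mechanical unfolding of the orthogonality definitions.
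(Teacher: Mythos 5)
Your proof is correct and takes essentially the same route as the paper's: mutual induction on the typing derivations, closure of the pole under anti-reduction for the binder cases, the expansion rule $\cut{t}{(e_1,e_2)}\to\cut{\mu\alpha.\cut{\mu\alpha'.\cut{t}{(\alpha,\alpha')}}{e_2}}{e_1}$ for a non-value pair of contexts (your sublemma is exactly the paper's treatment of the $(\parr\vdash)$ case), and the substitution lemma together with the inclusion $\fvval{A[B/X][\rho]}\subseteq\fvval{(\forall X.A)[\rho]}$ for the quantifier rules. The only slip is one of wording: for the rule typing $\mu x.c$ as a context the opponent is an arbitrary term of the truth value $\tval{A[\rho]}$ rather than an element of a ground falsity value (the reduction $\cut{t}{\mu x.c}\to c[t/x]$ fires on any term), but your parenthetical and the extension of $\sigma$ by ``the captured value or term'' show you already handle that case correctly.
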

\begin{proof}
We only give some key cases, the full proof can be found in~\cite{Munch09}.
We proceed by induction over the typing derivations. 
Let $\sigma$ be a substitution realizing $\Gamma[\rho]$ and $\Delta[\rho]$.
\prfcase{$(\vdash\neg)$}
Assume that we have:
 $$\infer[\scriptstyle(\vdash\neg   )]{\Gamma\vdash \tmu[ x ].c:\neg A}{c:\Gamma, x :A \vdash\Delta}$$
and let $[t]$ be a term in $\fvval{A[\rho]}$, that is to say that $t\in\tval{A[\rho]}$.
We know by induction hypothesis that for any valuation $\sigma'\real(\Gamma, x :A)[\rho]$, $c[\sigma']\in\pole$
and we want to show that $\mu[ x ].c[\sigma] \orth [t]$.
We have that:
$$\mu[ x ].c \pole [t]\quad \bred\quad c[\sigma][t/ x ] = c[\sigma,x\mapsto t]$$
hence it is enough by saturation to show that $c[\sigma][u/ x ]\in\pole$. 
Since $t\in\tval{A[\rho]}$, $\sigma[ x \mapsto t]\real(\Gamma, x :A)[\rho]$ 
and we can conclude by induction hypothesis.
The cases for $(\mu\,\vdash)$, $(\vdash\,\mu)$ and $(\vdash\,\parr)$ proceed similarly.
\prfcases{$(\neg\vdash)$}
Trivial by induction hypotheses.


\prfcase{$(\parr\,\vdash)$}
Assume that we have:
$$\infer[\scriptstyle(\parr\,\vdash)]{\Gamma\mid (e_1,e_2): A \parr B \vdash \Delta}{\Gamma \mid e_1:A \vdash \Delta 
		  & \Gamma \mid u :B \vdash \Delta}$$
Let then $t$ be a term in $\tval{(A \parr B )[\rho]}$, to show that $\cut{t}{(e_1,e_2)}\in\pole$, we proceed by anti-reduction:
$$ \cut{t}{(e,e')}			  \bred \cut{\mu\alpha.\cut{\mu\alpha'.\cut{t}{(\alpha,\alpha')}}{e'}}{e}$$
It now easy to show, using the induction hypotheses for $e$ and $e'$ that this command is in the pole:
it suffices to show that the term $\mu\alpha.\cut{\mu\alpha'.\cut{t}{(\alpha,\alpha')}}{e'}\in\tval{A}$, which amounts to showing that
for any value $V_1\in\fvval{A}$:
$$\cut{\mu\alpha.\cut{\mu\alpha'.\cut{t}{(\alpha,\alpha')}}{V}} \bred \cut{\mu\alpha'.\cut{t}{(V,\alpha')}}{e'} \in\pole$$
Again this holds by showing that for any $V'\in\tval{B}$, 
$$\cut{\mu\alpha'.\cut{t}{(V,\alpha')}}{V'} \bred \cut{t}{(V,V')}\in\pole$$

\prfcase{$(\vdash\forall)$}
Trivial.
%
\prfcase{$(\forall\,\vdash)$}
Assume that we have:
$$\infer[\falrule ]{\Gamma\mid e:\forall X. A\vdash\Delta}{\Gamma \mid e:A[B/X] \vdash \Delta}$$
By induction hypothesis, we obtain that $e[\sigma]\in\fv{A[B/X][\rho]}$; so that if we denote $\fvval{B[\rho]}\in\P(\V_0)$
\newline by $S$, we have:
$$e[\sigma]\in\fv{A[\dot S/X]} \subseteq \union{S\in\P(\V_0)} \fvval{A[\dot S/X][\rho]}^{\pole\pole}
\subseteq (\union{S\in\P(\V_0)} \fvval{A[\dot S/X][\rho]})^{\pole\pole} = \fv{\forall X.A[\rho]}$$
where we make implicit use of \Cref{lm:subs_var}.
 \end{proof}

\subsection{Disjunctive structures}

We should now define the notion of \emph{disjunctive structure}.
Regarding the expected commutations, as we choose negative connectives and in particular a universal quantifier, 
we should define commutations with respect to arbitrary meets. 
The following properties of the realizability interpretation for $\Lpar$  provides us with a safeguard
for the definition to come:
\begin{proposition}[Commutations]\label{a:p:commutations}
In any $\Lpar$ realizability model (that is to say for any pole $\pole$), the following equalities hold:
\begin{enumerate}
 \item If $X\notin \FV(B)$, then $\fvval{\forall X. (A \parr B)}= \fvval{(\forall X. A) \parr B}$.
 \item If $X\notin \FV(A)$, then $\fvval{\forall X. (A \parr B)}= \fvval{A \parr (\forall X.B)}$.
 \item $\fvval{\neg{(\forall X. A)}}= \bigcap_{S\in\P(\V_0)}\fvval{\neg{A\{X:=\dot S\}}}$
 \end{enumerate}
\end{proposition}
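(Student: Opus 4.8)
The plan is to establish each of the three equalities by directly unfolding the definitions of the $\Lpar$ interpretation, with no induction on formulas needed: each item unpacks a single connective. I would first record the elementary syntactic fact that $X\notin\FV(B)$ forces $B\{X:=\dot S\}=B$ for every $S\in\P(\V_0)$ (and symmetrically for $A$ in item~2), which is what lets the substitution $\{X:=\dot S\}$ be pushed through $\parr$ and vanish on the side where $X$ does not occur.

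For item~1, unfolding yields $\fvval{\forall X.(A\parr B)}=\bigcup_{S\in\P(\V_0)}\fvval{A\{X:=\dot S\}\parr B}=\bigcup_{S}\{(V_1,V_2):V_1\in\fvval{A\{X:=\dot S\}},\ V_2\in\fvval{B}\}$, while on the other side $\fvval{(\forall X.A)\parr B}=\{(V_1,V_2):V_1\in\bigcup_{S}\fvval{A\{X:=\dot S\}},\ V_2\in\fvval{B}\}$. The equality then reduces to the set-theoretic identity $\bigcup_S(U_S\times W)=(\bigcup_S U_S)\times W$, i.e.\ to the fact that the existential over $S$ commutes with the conjunction since the condition $V_2\in\fvval{B}$ is independent of $S$. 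Item~2 is identical after swapping the two components of the pair and using $X\notin\FV(A)$.

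For item~3, I would unfold $\fvval{\neg(\forall X.A)}=\{[t]:t\in\tval{\forall X.A}\}$ and then check that $\tval{\forall X.A}=\bigcap_{S}\tval{A\{X:=\dot S\}}$: a term $t$ lies in $\tval{\forall X.A}$ iff $t\orth V$ for every $V\in\bigcup_S\fvval{A\{X:=\dot S\}}$, that is, iff for all $S$ and all $V\in\fvval{A\{X:=\dot S\}}$ we have $t\orth V$, which is precisely membership in each $\tval{A\{X:=\dot S\}}$. It then remains to compare $\{[t]:t\in\bigcap_S\tval{A\{X:=\dot S\}}\}$ with $\bigcap_S\{[t]:t\in\tval{A\{X:=\dot S\}}\}$: the left-to-right inclusion is immediate, and the converse uses that the context constructor $[\cdot]$ is injective, so that an element of every $\fvval{\neg A\{X:=\dot S\}}$ has the form $[t]$ for a \emph{single} term $t$, which then lies in every $\tval{A\{X:=\dot S\}}$.

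I do not expect a genuine obstacle: the whole argument is definitional bookkeeping plus the commutation of an $\exists$ (resp.\ $\forall$) with a conjunction. The only two points deserving a line of care are the injectivity of $[\cdot]$ invoked in item~3, and the routine fact that substitution $\{X:=\dot S\}$ commutes with the formula constructors (so that $\fvval{(A\parr B)\{X:=\dot S\}}=\fvval{A\{X:=\dot S\}\parr B}$ when $X\notin\FV(B)$); the analogous statement for interpretations of the shape $\fvval{A[B/X]}$ is Lemma~\ref{lm:subs_var}, but is not needed for the present proof.
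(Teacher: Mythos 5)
Your proof is correct and follows essentially the same route as the paper's: a direct unfolding of the definitions, reducing items 1--2 to the commutation of the union over $S$ with a product whose second component is independent of $S$, and item 3 to $\tval{\forall X.A}=\bigcap_S\tval{A\{X:=\dot S\}}$ followed by pushing $[\cdot]$ through the intersection. You are in fact slightly more careful than the paper, which leaves implicit both the injectivity of $[\cdot]$ needed for the inclusion $\bigcap_S\{[t]:t\in\tval{A\{X:=\dot S\}}\}\subseteq\{[t]:t\in\bigcap_S\tval{A\{X:=\dot S\}}\}$ and the justification of the intersection formula for $\tval{\forall X.A}$.
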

\begin{proof}
\begin{enumerate}
  \item Assume the $X\notin \FV(B)$, then we have:
 \begin{align*}
  \fvval{\forall X. (A \parr B)} &=
  \union{S\in\P(\V_0)} \fvval{A\{X:=\dot S\} \parr B} \\
  &= \union{S\in\P(\V_0)}\{(V_1,V_2):V_1\in\fvval{A\{X:=\dot S\}} \land V_2 \in \fvval{B}\} \\
  &=\{(V_1,V_2):V_1\in\union{S\in\P(\V_0)}\fvval{A\{X:=\dot S\}} \land V_2 \in \fvval{B}\}\\
  &=\{(V_1,V_2):V_1\in\fvval{\forall X.A} \land V_2 \in \fv{B}\}
  ~~=~~\fvval{(\forall X. A) \parr B}
  \end{align*}
  \item Identical.
\item The proof is again a simple unfolding of the definitions: 
 \begin{align*}
  \fvval{\neg{(\forall X. A})}
  & = \{[t]:t\in\tval{\forall X.A}\}
  ~~~ =~~~ \{[t]:t\in\bigcap_{S\in\P(\V_0)}\tval{A\{X:=\dot S\}}\}\\
  & = \bigcap_{S\in\P(\V_0)}\{[t]:t\in\tval{A\{X:=\dot S\}]}\}
  = \bigcap_{S\in\P(\V_0)}\fvval{\neg{A\{X:=\dot S\}}}
  \end{align*}
 \end{enumerate}
\end{proof}

\begin{proposition} If $(\A,\leq,\parr,\neg)$ is a disjunctive structure, then the following hold for all $a\in\A$:\vspace{-1em}
\begin{multicols}{3}
 \begin{itemize}
 \item [\purl{par_top_l}{1.}] $\top \parr a = \top$
 \item [\purl{par_top_r}{2.}] $a \parr \top = \top$
 \item [\purl{neg_top}  {3.}] $\neg \top = \bot$     
 \end{itemize}
\end{multicols} 
\end{proposition}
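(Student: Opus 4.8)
The plan is to obtain all three identities as degenerate instances of the commutation axioms of \Cref{def:dis_struct}, using the standard conventions that in a complete lattice the meet of the empty family equals $\top$ and the join of the empty family equals $\bot$.

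First, for $\top\parr a = \top$, I would instantiate commutation axiom~(3), $\minf_{b\in B}(b\parr a) = (\minf_{b\in B} b)\parr a$, at $B=\emptyset$: the left-hand side is $\minf_{b\in\emptyset}(b\parr a) = \top$ and the right-hand side is $(\minf_{b\in\emptyset} b)\parr a = \top\parr a$, which gives the claim. Symmetrically, $a\parr\top = \top$ follows from axiom~(4), $\minf_{b\in B}(a\parr b) = a\parr(\minf_{b\in B} b)$, instantiated at $B=\emptyset$: the left side collapses to $\top$ and the right side to $a\parr\top$.

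For $\neg\top = \bot$, I would apply axiom~(5), $\neg\minf_{b\in B} b = \msup_{b\in B}\neg b$, to the empty subset $B=\emptyset$ of $\A$: the left-hand side becomes $\neg\minf_{b\in\emptyset} b = \neg\top$ and the right-hand side becomes $\msup_{b\in\emptyset}\neg b = \bot$.

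There is no genuine obstacle here; the only point worth flagging is the empty-meet/empty-join convention, which is exactly what allows the commutation laws to pin down the internal operations on the extremal elements $\top$ and $\bot$. One could alternatively derive the first two identities from monotonicity of $\parr$ together with $\top = \top\parr\top$, but instantiating at the empty family is the most economical route.
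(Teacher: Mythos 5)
Your proof is correct and follows exactly the paper's own argument: all three identities are obtained by instantiating the commutation axioms at the empty family, using $\minf\emptyset=\top$ and $\msup\emptyset=\bot$. Nothing to add.
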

\begin{proof}
Using the axioms of disjunctive structures, we prove:
 \begin{enumerate}
 \item for all $a\in\A$,~~ $\top \parr a = (\minf \emptyset) \parr a = \minf_{x,a\in\A} \{x\parr a: x\in\emptyset\} = \minf\emptyset = \top$ 
 \item for all $a\in\A$,~~ $a \parr \top = a \parr (\minf \emptyset) = \minf_{x,a\in\A} \{a\parr x: x\in\emptyset\} = \minf\emptyset = \top$ 
 \item $\neg \top = \neg (\minf \emptyset) = \msup_{x\in\A} \{\neg x: x\in\emptyset \} = \msup\emptyset = \bot$
 \end{enumerate}
\end{proof}

\paragraph*{Disjunctive structures from $\Lpar$ realizability models}
If we abstract the structure of the realizability interpretation of $\Lpar$,
it is a structure of the form $(\T_0,\E_0,\V_0,(\cdot,\cdot),[\cdot],\pole)$,
where $(\cdot,\cdot)$ is a binary map from $\E_0^2$ to $\E_0$ (whose restriction to $\V_0$ has values in $\V_0$),
$[\cdot]$ is an operation from $\T_0$ to $\V_0$, and $\pole\subseteq \T_0\times \E_0$ is a relation.
From this sextuple, we can define:
$$\begin{array}{l@{\qquad}l}
 \text{\textbullet}~\A \defeq \P(\V_0)           & \text{\textbullet}~a \parr b \defeq \{(V_1,V_2): V_1\in a \land V_2 \in b\}\qquad\qquad \\
 \text{\textbullet}~a \leq b \defeq a \supseteq b  & \text{\textbullet}~\neg a \defeq [a^\orth]=\{[t]: t\in a^\orth\}
\end{array} 
 $$

\begin{proposition}\label{a:ds_real}
 The quadruple $(\A,\leq,\parr,\neg)$ is a disjunctive structure.
\end{proposition}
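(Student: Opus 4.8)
The plan is to check, one by one, the five conditions of \Cref{def:dis_struct}, after first recording the translation dictionary induced by the reversed-inclusion order on $\A=\P(\V_0)$: the pair $(\A,\leq)$ is a complete lattice (the opposite of the powerset lattice), its arbitrary meets $\minf$ are unions and its arbitrary joins $\msup$ are intersections — so in particular $\top=\emptyset$ and $\bot=\V_0$ — and $a\leq b$ literally means $a\supseteq b$. Everything else is then a matter of unfolding the definitions of $\parr$ and $\neg$ through this dictionary.

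Conditions~1 and~2 are immediate monotonicity statements. If $a\leq a'$, i.e.\ $a\supseteq a'$, then the universally quantified condition defining $a^\orth$ is stronger than the one defining $(a')^\orth$, whence $a^\orth\subseteq(a')^\orth$ and therefore $\neg a'=[(a')^\orth]\supseteq[a^\orth]=\neg a$, i.e.\ $\neg a'\leq\neg a$. Likewise, from $a\supseteq a'$ and $b\supseteq b'$ one reads off directly from the definition of $\parr$ that $a\parr b\supseteq a'\parr b'$, i.e.\ $a\parr b\leq a'\parr b'$.

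Conditions~3 and~4 reduce to an elementary membership chase: for any $B\subseteq\A$ and any $a$, an element belongs to $\minf_{b\in B}(b\parr a)=\bigcup_{b\in B}(b\parr a)$ exactly when it is a pair $(V_1,V_2)$ with $V_1\in b$ for some $b\in B$ and $V_2\in a$, which is precisely the membership condition for $(\minf_{b\in B}b)\parr a=(\bigcup_{b\in B}b)\parr a$; condition~4 is symmetric. Condition~5 is where the only ingredient beyond pure set theory enters: unfolding, $\neg(\minf_{a\in B}a)=[(\bigcup_{a\in B}a)^\orth]$, and since orthogonality turns unions into intersections one gets $(\bigcup_{a\in B}a)^\orth=\bigcap_{a\in B}a^\orth$, so $\neg(\minf_{a\in B}a)=[\bigcap_{a\in B}a^\orth]$; on the other hand $\msup_{a\in B}\neg a=\bigcap_{a\in B}[a^\orth]$, and the two coincide by the defining properties of the negation constructor $[\cdot]$, which make it commute with the intersections at hand. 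I expect this last point — checking that $[\cdot]$ interacts correctly with arbitrary intersections, and in particular the degenerate instance $\neg\top=\bot$ — to be the only step that requires any care; conditions~1--4 are routine unfoldings of the definitions against the reversed-inclusion dictionary.
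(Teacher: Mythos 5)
Your proof takes essentially the same route as the paper's: translate the reversed-inclusion dictionary and verify the five axioms by direct membership chases, using contravariance of orthogonality for axioms 1--2, distribution of the pairing over unions for axioms 3--4, and the fact that orthogonality turns unions into intersections for axiom 5. The one point you rightly single out as delicate --- that $[\cdot]$ must commute with the intersections involved (which rests on injectivity of the constructor, and in the degenerate case $B=\emptyset$ on identifying $\{[t]:t\in\T_0\}$ with $\bot=\V_0$) --- is precisely the step the paper's own proof also passes over without comment, so your treatment matches the source in both method and level of detail.
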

\begin{proof}
We show that the axioms of Definition~\ref{def:dis_struct} are satisfied.
 \begin{enumerate}
 \item (Contravariance) Let $a,a'\in\A$, such that 
 $a\leq a'$ ie $a'\subseteq a$. Then $a^\orth \subseteq {a'}^\orth$ and thus 
 $$\neg a =\{[t]: t\in a^\orth\} \subseteq \{[t]: t\in {a'}^\orth\} = \neg a'$$
 \emph{i.e.} $\neg a' \leq \neg a$.
 
 \item (Covariance) Let $a,a',b,b'\in\A$ such that $a'\subseteq a$ and $b'\subseteq b$.
 Then we have 
 $$a \parr b = \{(V_1,V_2):V_1\in a \land V_2 \in b\}\subseteq \{(V_1,V_2):V_1\in a' \land V_2 \in b'\} = a' \parr b'$$
 \emph{i.e.} $a\parr b \leq a'\parr b'$.

 \item (Distributivity) Let $a\in\A$ and $B\subseteq \A$, we have:
    $$\minf_{b\in B} (a \parr b) = \minf_{b\in B} \{(V_1,V_2):V_1\in a \land e_2 \in b\}
    = \{(V_1,V_2):V_1\in a \land V_2 \in \minf_{b\in B} b\}= a \parr (\minf_{b\in B}  b)$$
 \item (Commutation) Let $B\subseteq \A$, we have (recall that $\msup_{b\in B} b = \bigcap_{b\in B} b$):
       $$\msup_{b\in B} (\neg b) = \msup_{b\in B} \{[t]:t \in b^\pole\}=  \{[t]:t \in \msup_{b\in B} b^\pole\}= \{[t]:t \in (\minf_{b\in B} b)^\pole\}=\neg (\minf_{b\in B} b)$$
\end{enumerate}
\end{proof}

\subsection{Interpreting $\Lpar$}
\label{a:int_lpar}
Following the interpretation of the $\lambda$-calculus in implicative structures, 
we shall now see how $\Lpar$ commands can be recovered from disjunctive structures.
From now on, we assume given a disjunctive structure $(\A,\leq,\parr,\neg)$.

\subsubsection{Commands}\label{s:disj_cmd} 
We define the \emph{commands} of the disjunctive structure $\A$ as the pair $(a,b)$ (which we continue to write $\cut{a}{b}$) with $a,b\in\A$, and
we define the pole $\pole$ as the ordering relation $\leq$. We write $\C_\A=\A\times\A$ for the set of commands in $\A$ and $(a,b)\in\pole$ for $a\leq b$.
Besides, we define an ordering on commands which extends the intuition that the order reflect the ``definedness'' of objects: 
given two commands $c,c'$ in $\C_\A$, we say that $c$ is lower than $c'$ and we write $c \cord c'$
if $c\in \pole$ implies that $c'\in \pole$.
It is straightforward to check that:
\begin{proposition}[][cord_preOrder]
 The relation $\cord$ is a preorder.
\end{proposition}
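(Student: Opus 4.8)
The plan is to simply unfold the definition of $\cord$ and recognise it as the preimage, along the characteristic map $c \mapsto (c \in \pole)$, of the implication preorder on propositions; since logical implication is reflexive and transitive, these two properties transfer immediately to $\cord$. So the proof amounts to checking the two defining clauses of a preorder directly.

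Concretely, I would first verify reflexivity: fix a command $c \in \C_\A$; by definition $c \cord c$ means ``$c \in \pole$ implies $c \in \pole$'', which holds trivially. Next I would verify transitivity: assume $c \cord c'$ and $c' \cord c''$, that is, we are given the implications $c \in \pole \Rightarrow c' \in \pole$ and $c' \in \pole \Rightarrow c'' \in \pole$; chaining them yields $c \in \pole \Rightarrow c'' \in \pole$, which is precisely $c \cord c''$. Together with reflexivity this shows $\cord$ is a preorder.

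There is essentially no obstacle here: the only subtlety worth flagging in the write-up is that $\cord$ is \emph{not} antisymmetric --- any two commands $(a,b)$ and $(a',b')$ with $a \leq b$ and $a' \leq b'$ both lie in $\pole$ and are therefore $\cord$-equivalent without being equal --- which is exactly why the statement claims only a preorder rather than a partial order. Since $\pole$ is here taken to be the relation $\leq$ itself, this mirrors the fact that all ``valid'' commands are interchangeable from the point of view of $\cord$.
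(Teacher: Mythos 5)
Your proof is correct and matches the paper's treatment: the paper simply states that the claim is ``straightforward to check,'' and your unfolding of $\cord$ into an implication between memberships in $\pole$, with reflexivity and transitivity inherited from logical implication, is exactly that check. The side remark on the failure of antisymmetry is accurate but not needed for the statement.
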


Besides, the relation $\cord$ verifies the following property of variance with respect to the order $\leq$:
\begin{proposition}[Commands ordering][cord_mon]
For all $t,t',\pi,\pi'\in\A$, if $t\leq t'$ and $\pi'\leq \pi$, then $\cut{t}{\pi} \cord \cut{t'}{\pi'}$.
\end{proposition}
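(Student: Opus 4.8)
The plan is that this is essentially a one-line unfolding: no property of $\parr$ or $\neg$ is used, only the definition of $\pole$ and of $\cord$ on commands, together with transitivity of the complete-lattice order $\leq$ on $\A$. Recall that $\cut{a}{b}\in\pole$ was defined to hold exactly when $a\leq b$, and that $\cut{t}{\pi}\cord\cut{t'}{\pi'}$ is by definition the implication ``$\cut{t}{\pi}\in\pole$ entails $\cut{t'}{\pi'}\in\pole$''. Hence, after unfolding, the goal is purely order-theoretic: it says that a certain inequality of $(\A,\leq)$ entails another one, the two being bridged precisely by the hypotheses $t\leq t'$ and $\pi'\leq\pi$.

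Concretely, I would fix $t,t',\pi,\pi'\in\A$, assume $t\leq t'$ and $\pi'\leq\pi$, then assume the antecedent $\cut{t}{\pi}\in\pole$ and rewrite it, via the definition of the pole on commands, as the corresponding $\leq$-statement relating $t$ and $\pi$. Chaining that inequality with the two hypotheses by transitivity of $\leq$ yields the inequality which, again by the definition of the pole, witnesses $\cut{t'}{\pi'}\in\pole$; repackaging the implication gives $\cut{t}{\pi}\cord\cut{t'}{\pi'}$. Informally this is just the command-level shadow of the variance of the implication in an implicative/disjunctive structure: enlarging the term side while shrinking the context side preserves membership in the pole.

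I do not expect any genuine obstacle: once the definitions are expanded the claim is immediate — which is why the preceding sister statement, that $\cord$ is a preorder, is dispatched in the text with ``it is straightforward to check''. The only point calling for a little care is bookkeeping the orientation of each inequality when applying transitivity, i.e. keeping straight in which slot $\cord$ is covariant and in which it is contravariant with respect to $\leq$; this is the same bookkeeping the preorder statement relies on, and it is precisely what the corresponding Coq lemma verifies.
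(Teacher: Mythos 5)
Your strategy --- unfold $\pole$ and $\cord$ on commands and close the resulting implication by transitivity of $\leq$ --- is exactly the paper's proof, which reads in its entirety ``Trivial by transitivity of $\leq$''. The trouble is the one point you explicitly postpone, the ``bookkeeping of the orientation of each inequality'': with the hypotheses as printed it cannot be done. Unfolding, $\cut{t}{\pi} \cord \cut{t'}{\pi'}$ means $t\leq\pi \Rightarrow t'\leq\pi'$, and the available facts are $t\leq t'$, $\pi'\leq\pi$ and the antecedent $t\leq\pi$. These three inequalities all point the wrong way to be chained into $t'\leq\pi'$. Concretely, take any two-point lattice $\bot\leq\top$ with the dummy disjunctive structure, and set $t=\pi'=\bot$, $t'=\pi=\top$: then $t\leq t'$, $\pi'\leq\pi$ and $\cut{t}{\pi}\in\pole$, yet $\cut{t'}{\pi'}\in\pole$ would say $\top\leq\bot$. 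So the statement in the form displayed is not provable, and the step ``chaining that inequality with the two hypotheses by transitivity yields the inequality which witnesses $\cut{t'}{\pi'}\in\pole$'' is precisely the step that fails.

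What the one-line transitivity argument actually establishes is the variance with the hypotheses reversed: if $t'\leq t$ and $\pi\leq\pi'$, then $\cut{t}{\pi}\cord\cut{t'}{\pi'}$, via the chain $t'\leq t\leq\pi\leq\pi'$ (equivalently, keep the printed hypotheses but conclude $\cut{t'}{\pi'}\cord\cut{t}{\pi}$). This orientation is also the one that matches the intended reading --- lower terms are more defined, higher contexts are more permissive opponents --- and the one the subsequent lemmas on $\mu^+$ and $\mu^-$ rely on; the printed statement appears to have its variances transposed. So your proof is the right proof of the corrected statement, but as a blind proof of the statement as given it asserts that a transitivity chain closes when it does not; you should either carry out the deferred orientation check (and discover the mismatch) or explicitly flag and repair the statement before invoking transitivity.
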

\begin{proof}
 Trivial by transitivity of $\leq$.
\end{proof}

Finally, it is worth noting that meets are covariant with respect to $\cord$ and $\leq$, while joins are contravariant:
\begin{lemma}[][cord_meet] \label{p:cord_meet} 
If  $c$ and $c'$ are two functions associating to each $a\in \A$ the commands $c(a)$ and $c'(a)$ such that $c (a) \cord c' (a)$, 
then we have:
$$\minf_{a\in\A}\{a: c (a)\in\pole\} \leq \minf_{a\in\A}\{a: c'(a)\in\pole\}
\qquad\qquad\quad
\msup_{a\in\A}\{a: c' (a)\in\pole\} \leq \msup_{a\in\A}\{a: c(a)\in\pole\}
$$
\end{lemma}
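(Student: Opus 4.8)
The statement is purely lattice-theoretic: neither $\parr$ nor $\neg$ intervenes, only the complete lattice $(\A,\leq)$, the pole $\pole$ (which here is the order relation itself, viewed on pairs $(a,b)\in\A\times\A$), and the derived relation $\cord$ on commands. The plan is to collapse the pointwise hypothesis $c(a)\cord c'(a)$ into a single inclusion of subsets of $\A$, and then to conclude by the elementary monotonicity of $\minf$ and $\msup$ with respect to set inclusion.

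First I would unwind $\cord$: $c\cord c'$ holds exactly when $c'\in\pole$ entails $c\in\pole$ (this is the orientation under which the preceding command-ordering proposition is just transitivity of $\leq$). Read pointwise, the hypothesis ``$c(a)\cord c'(a)$ for every $a\in\A$'' therefore says that $c(a)\in\pole$ whenever $c'(a)\in\pole$, i.e.
$$\{a\in\A : c'(a)\in\pole\}~\subseteq~\{a\in\A : c(a)\in\pole\}.$$

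Then I would use the two standard facts about a complete lattice: for $S\subseteq T\subseteq\A$ one has $\minf T\leq\minf S$ and $\msup S\leq\msup T$, both immediate from the universal properties ($\minf T$ is a lower bound of $T$, hence of the subset $S$; dually for $\msup T$). Instantiating $S=\{a : c'(a)\in\pole\}$ and $T=\{a : c(a)\in\pole\}$ gives directly
$$\minf_{a\in\A}\{a : c(a)\in\pole\}~\leq~\minf_{a\in\A}\{a : c'(a)\in\pole\}\qquad\text{and}\qquad \msup_{a\in\A}\{a : c'(a)\in\pole\}~\leq~\msup_{a\in\A}\{a : c(a)\in\pole\},$$
which is exactly the claim, and makes precise the slogan that $c\mapsto\minf\{a:c(a)\in\pole\}$ is covariant along $\cord$ while $c\mapsto\msup\{a:c(a)\in\pole\}$ is contravariant.

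There is no genuine obstacle here; the proof is a two-line computation. The only thing requiring attention is keeping the orientations aligned — of $\cord$, of the lattice order $\leq$ itself (which in the realizability models of interest is reverse inclusion, so that $\minf$ is a union and $\msup$ an intersection), and of the inclusion of subsets extracted from the hypothesis — so that the two inequalities land in the stated direction rather than the opposite one.
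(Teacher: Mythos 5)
Your proof is the same two-line argument the paper gives: collapse the pointwise hypothesis into an inclusion between two subsets of $\A$, then apply the antitonicity of $\minf$ and the monotonicity of $\msup$ with respect to inclusion. The lattice-theoretic half is impeccable, and you correctly identify that the only delicate point is orientation. But exactly there you and the paper diverge. The paper \emph{defines} $c\cord c'$ as ``$c\in\pole$ implies $c'\in\pole$'' (this is also the reading forced by the subject-reduction remark, where $c\bred c'$ yields $c\cord c'$). Under that definition the hypothesis $c(a)\cord c'(a)$ gives $\{a: c(a)\in\pole\}\subseteq\{a: c'(a)\in\pole\}$ --- which is precisely the inclusion the paper's own proof writes down --- and this yields $\minf\{a:c'(a)\in\pole\}\leq\minf\{a:c(a)\in\pole\}$, i.e.\ the \emph{reverse} of the stated inequalities. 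You instead unwind $\cord$ the other way around (``$c'\in\pole$ entails $c\in\pole$''), obtain the opposite inclusion, and from it derive the inequalities exactly as stated. So the first line of your proof contradicts the paper's official definition of $\cord$; strictly speaking it is a false premise relative to the text.

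That said, your orientation is the one under which the lemma as stated actually holds, and also the one under which the command-ordering proposition you cite really is ``just transitivity of $\leq$'' (with the paper's literal definition, that proposition would need the hypotheses $t'\leq t$ and $\pi\leq\pi'$ instead). In other words, the paper's text carries an orientation typo --- either in the definition of $\cord$ or in the statement of this lemma and of the variance propositions that consume it --- and your proof silently repairs it in the direction that validates the statement you were asked to prove, whereas the paper's proof follows the literal definition and then asserts ``whence the expected results'' where the reversed inequalities would in fact come out. The fix for your write-up is small but necessary: do not present your unwinding of $\cord$ as the paper's definition; either flag the discrepancy explicitly, or state which of the two conventions you adopt and note that the lemma (with $c$ and $c'$ as written) requires the inclusion $\{a:c'(a)\in\pole\}\subseteq\{a:c(a)\in\pole\}$ and hence the hypothesis $c'(a)\cord c(a)$ under the paper's literal convention.
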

\begin{proof}
Assume $c, c'$ are such that for all $a\in \A$, $c a \cord c' a$.
Then it is clear that by definition we have the inclusion $\{a\in\A: c(a)\in\pole\} \subseteq \{a\in\A: c' (a)\in\pole\}$, whence the expected results.
\end{proof}

\subsubsection{Contexts}
We are now ready to define the interpretation of $\Lpar$ contexts in the disjunctive structure $\A$.
The interpretation for the contexts corresponding to the connectives is very natural:
\begin{definition}[Pairing][pairing] 
For all $a,b\in\A$, we let $(a,b) \defeq a\parr b $.
\end{definition}

\begin{definition}[Boxing][box]
For all $a\in\A$, we let $[a] \defeq \neg a $.
\end{definition}

Note that with these definitions, the encodings of pairs and boxes directly inherit of the properties of the internal law $\parr$ and $\neg$ in disjunctive structures.
As for the binder $\mu x.c$, which we write $\tmu^+c$, it should be defined in such a  way that if $c$ is a function mapping each $a\in\A$
to a command $c(a)\in\C_\A$, then $\mu^+.c$ should be ``compatible'' with any $a$ such that $c(a)$ is well-formed (\emph{i.e.} $c(a)\in\pole$).
As it belongs to the side of opponents, the ``compatibility'' means that it should be greater than any such $a$, and we thus define it as a join.
 \begin{definition}[{$\mu^+$}][mup]For all $c:\A\to\C_\A$, we define: 
$$\mu^+.c := \msup_{a\in\A}\{a:c(a) \in \pole\}$$
\end{definition}

These definitions enjoy the following properties with respect to the $\beta$-reduction and the $\eta$-expansion:
\begin{proposition}[Properties of $\mu^+$]
\label{prop:mu_p}
For all functions $c,c':\A\to\C_\A$, the following hold:
\begin{itemize}
 \item[\lpurl{mup_mon}{1.}] 	 If for all $a\in\A$,   $c(a)\cord c'(a)$, then $\mu^+ .c'      \leq \mu^+ .c $ 	\hfill(Variance)
 \item[\lpurl{mup_beta}{2.}] 	 For all $t\in\A$, then $\cut{t}{\mu^+.c}\cord c(t) $				\hfill($\beta$-reduction)
 \item[\lpurl{mup_eta}{3.}] 	 For all $e\in\A$, then $t = \mu^+. (a \mapsto \cut{a}{e}) $		       	\hfill($\eta$-expansion)
\end{itemize}
\end{proposition}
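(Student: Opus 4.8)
The plan is to derive all three items directly from the definition $\mu^+.c = \msup_{a\in\A}\{a : c(a)\in\pole\}$, from the identification of $\cut{a}{b}\in\pole$ with $a\leq b$, and from the universal property of suprema in the complete lattice $(\A,\leq)$. No commutation law for $\parr$ or $\neg$ is needed, so each item should come out as essentially a one-line computation; the work is almost entirely bookkeeping.

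For the variance statement, I would first note that, by the same reasoning as in the proof of \Cref{p:cord_meet}, the hypothesis $c(a)\cord c'(a)$ for all $a\in\A$ unfolds to an inclusion between the sets $\{a : c(a)\in\pole\}$ and $\{a : c'(a)\in\pole\}$. Applying $\msup$ to both sides, which is monotone with respect to inclusion, then immediately gives $\mu^+.c' \leq \mu^+.c$.

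For the $\beta$-reduction rule, I would unfold $\cord$ and the pole on commands so that the goal becomes the implication: if $c(t)\in\pole$ then $t\leq\mu^+.c$. This is then immediate, since $c(t)\in\pole$ puts $t$ into the set $\{a : c(a)\in\pole\}$, and every element of a set lies below its supremum. For the $\eta$-expansion rule, I would simply compute $\mu^+.(a\mapsto\cut{a}{e}) = \msup\{a : \cut{a}{e}\in\pole\} = \msup\{a : a\leq e\}$, and observe that $e$ is the greatest element of $\{a : a\leq e\}$, so this supremum is $e$ itself.

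I do not expect a real obstacle here: the only thing that needs attention is keeping track of the three reversed orientations involved, namely the order $\leq$ of the lattice, the pole on commands, and the preorder $\cord$; once those are pinned down, every step is a direct appeal to the defining property of $\msup$.
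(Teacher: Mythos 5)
Your proposal is correct and takes essentially the same route as the paper, which likewise reduces item~1 to the set-inclusion argument of Lemma~\ref{p:cord_meet} and dismisses items~2 and~3 as immediate from the definition of $\mu^+$ as a supremum; your computation $\msup\{a : a\leq e\} = e$ for item~3 (which also silently repairs the typo $t$ for $e$ in the statement) is exactly what is intended.

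One orientation point is worth making explicit, because it is the only place the argument could genuinely break. The implication you prove for the $\beta$-rule, namely ``if $c(t)\in\pole$ then $t\leq\mu^+.c$'', is the \emph{converse} of what one obtains by literally unfolding the paper's prose definition of $\cord$ (``$c\cord c'$ if $c\in\pole$ implies $c'\in\pole$''): that reading would require ``$t\leq\msup\{a: c(a)\in\pole\}$ implies $c(t)\in\pole$'', which is false in general, since an element below the supremum of $\{a: c(a)\in\pole\}$ need not belong to that set. The orientation you use is the one under which the proposition (and Lemma~\ref{p:cord_meet}) actually holds, so your choice is the right one; but you should state it, and carry it consistently into item~1, where the inclusion must read $\{a: c'(a)\in\pole\}\subseteq\{a: c(a)\in\pole\}$ for monotonicity of $\msup$ to yield $\mu^+.c'\leq\mu^+.c$ --- with the opposite inclusion your monotonicity step would produce the reverse inequality.
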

\begin{proof}
 \begin{enumerate}
  \item Direct consequence of Proposition \ref{p:cord_meet}.
  \item[2,3.] Trivial by definition of $\mu^+$.
 \end{enumerate}
\end{proof}

\begin{remark}[Subject reduction]
The $\beta$-reduction $c\bred c'$ is reflected by the ordering relation $c\cord c'$, which 
reads \emph{``if $c$ is well-formed, then so is $c'$''}. In other words, this corresponds to the usual property
of subject reduction. In the sequel, we will see that $\beta$-reduction rules of $\Lpar$ will always been
reflected in this way through the embedding in disjunctive structures.
\end{remark}

\subsubsection{Terms}
Dually to the definitions of (positive) contexts $\mu^+$ as a join, we define the embedding of 
(negative) terms, which are all binders, by arbitrary meets:
 \begin{definition}[{$\mu^-$}][mun]
For all $c:\A\to\C_\A$, we define: 
 $$\mu^-. c := \minf_{a\in\A}  \{a: c(a)  \in\pole\} $$
  \end{definition}
 \begin{definition}[{$\mu^{()}c$}][mu_pair]
For all $c:\A^2\to\C_\A$, we define: 
 $$\mu^{()}.c 	:=  \minf_{a,b\in\A}\{a\parr b : c(a,b)\in\pole\} $$
 \end{definition}
 \begin{definition}[{$\mu^{[]}$}][mu_neg]
For all $c:\A\to\C_\A$, we define: 
 $$ \mu^{[]}.c	 :=  \minf_{a\in\A}  \{\neg a : c(a)  \in\pole\} $$
\end{definition}

These definitions also satisfy some variance properties with respect to the preorder $\cord$ and the order relation $\leq$,
namely, negative binders for variable ranging over positive contexts are covariant, while negative binders intended to catch negative terms
are contravariant.
\begin{proposition}[Variance]
For any functions $c,c'$ with the corresponding arities, the following hold:
\begin{itemize}
 \item[\lpurl{mun_mon}{1.}] 	If  $c(a)\cord c'(a)$ for all $a\in\A$,    then $\mu^- .c   \leq \mu^- .c' $
 \item[\lpurl{mu_pair_mon}{2.}] If  $c(a,b)\cord c'(a,b)$ for all $a,b\in\A$, then $\mu^{()}.c \leq \mu^{()}.c'$
 \item[\lpurl{mu_neg_mon}{3.}]  If  $c(a)\cord c'(a)$ for all $a\in\A$, then $\mu^{[]}.c' \leq \mu^{[]}.c$
\end{itemize}
\end{proposition}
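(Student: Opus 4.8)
The plan is to obtain all three statements as short corollaries of \Cref{p:cord_meet}, exactly in the spirit of the variance property for $\mu^+$ already established in \Cref{prop:mu_p}. The common mechanism is that a pointwise hypothesis of the form ``$c(\cdot)\cord c'(\cdot)$'' becomes, by unfolding the definition of $\cord$ and of the pole $\pole$, an \emph{inclusion of the indexing sets} over which the relevant meet is taken; the monotonicity of $\minf$ with respect to set inclusion (the very fact used in the proof of \Cref{p:cord_meet}) then yields the desired inequality. The only thing to keep track of along the way is whether the element actually being met depends monotonically or antitonically on the bound variable, since that is what fixes the direction of the conclusion.

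Concretely, I would argue item by item. For (1): by definition $\mu^-.c=\minf_{a\in\A}\{a:c(a)\in\pole\}$ and likewise for $c'$, so the hypothesis $c(a)\cord c'(a)$ for all $a$ is precisely the hypothesis of \Cref{p:cord_meet}, whose first conclusion is exactly $\mu^-.c\leq\mu^-.c'$; nothing more is needed. For (2): the situation differs from (1) only in that the index ranges over pairs and the element met is $a\parr b$ rather than $a$, so I would re-run the one-line argument behind \Cref{p:cord_meet}: from $c(a,b)\cord c'(a,b)$ for all $a,b$ one gets $\{(a,b)\in\A^2:c(a,b)\in\pole\}\subseteq\{(a,b)\in\A^2:c'(a,b)\in\pole\}$, hence, applying the map $(a,b)\mapsto a\parr b$, the inclusion $\{a\parr b:c(a,b)\in\pole\}\subseteq\{a\parr b:c'(a,b)\in\pole\}$, from which $\mu^{()}.c\leq\mu^{()}.c'$ follows as before. (Recording a two-argument version of \Cref{p:cord_meet} would work too, but inlining the argument is shorter.) For (3): now $\mu^{[]}.c=\minf_{a\in\A}\{\neg a:c(a)\in\pole\}$, so the element met is $\neg a$, which is \emph{antitone} in $a$; starting once more from $\{a:c(a)\in\pole\}\subseteq\{a:c'(a)\in\pole\}$ and pushing this inclusion through $\neg$ to get $\{\neg a:c(a)\in\pole\}\subseteq\{\neg a:c'(a)\in\pole\}$, one obtains the \emph{reversed} conclusion $\mu^{[]}.c'\leq\mu^{[]}.c$, mirroring the reversal already present in \Cref{prop:mu_p} for $\mu^+$, where a $\msup$ stands in place of the $\minf$.

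I do not expect a genuine mathematical difficulty here: each of the three binders is built from a single $\minf$ together with either the identity, the monotone operation $\parr$, or the antitone operation $\neg$, so the variance of every binder is forced and the whole proof reduces to \Cref{p:cord_meet} plus one reuse of its argument for the binary case. The only point that requires care --- and it is pure bookkeeping --- is the direction of the inequality in item (3): making sure that the antitonicity of $\neg$ is correctly identified as the cause of the flip, and that it is $\mu^{[]}.c'$, not $\mu^{[]}.c$, which ends up on the left.
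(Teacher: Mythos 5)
Your overall strategy is exactly the paper's: its printed proof of this proposition is the single line ``Direct consequences of Proposition~\ref{p:cord_meet}'', your item (1) is indeed just the first half of that lemma, and your item (2) is its evident binary re-run composed with the map $(a,b)\mapsto a\parr b$. On those two items you and the paper coincide.

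The problem is item (3), and it is not mere bookkeeping. Your mechanism is: the pointwise hypothesis $c(a)\cord c'(a)$ gives an inclusion of index sets $\{a : c(a)\in\pole\}\subseteq\{a : c'(a)\in\pole\}$; applying the relevant map to both sides gives an inclusion of the sets actually being met; the (anti)monotonicity of $\minf$ with respect to $\subseteq$ concludes. But the middle step preserves the direction of the inclusion for \emph{any} map: $X\subseteq Y$ implies $\{f(a):a\in X\}\subseteq\{f(a):a\in Y\}$ whether $f$ is monotone, antitone, or neither. So this mechanism, applied uniformly, forces the \emph{same} variance for $\mu^-$, $\mu^{()}$ and $\mu^{[]}$, whereas the statement asserts the opposite variance for $\mu^{[]}$. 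The antitonicity of $a\mapsto\neg a$ cannot be ``the cause of the flip'': it simply never enters your argument. It would enter only if you first commuted the negation past the join, replacing $\minf_{a}\{\neg a : c(a)\in\pole\}$ by $\neg\msup\{a : c(a)\in\pole\}$ --- but $\neg\msup_{a} a=\minf_{a}\neg a$ is precisely the commutation that disjunctive structures do \emph{not} satisfy (only $\neg\minf=\msup\neg$ is an axiom, and the paper notes that adding the other law collapses the tripos to a forcing situation). As written, your three derivations therefore apply the rule ``$X\subseteq Y$ yields an inequality between $\minf X$ and $\minf Y$'' in opposite directions within the same proof: one of the groups \{(1),(2)\} and \{(3)\} is not actually established, and the sign discrepancy between them --- which is the whole content of the proposition --- is rationalized rather than proved. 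You need either a genuine reason why $\mu^{[]}$ is the odd one out, or to conclude that one of the printed directions must be corrected against the formal statements (\texttt{mun\_mon}, \texttt{mu\_pair\_mon}, \texttt{mu\_neg\_mon}).
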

\begin{proof}
 Direct consequences of Proposition \ref{p:cord_meet}.
\end{proof}

The $\eta$-expansion is also reflected as usual by the ordering relation $\leq$:
\begin{proposition}[$\eta$-expansion]
\label{prop:lpar_eta_n}
For all $t\in\A$, the following holds:
\begin{itemize}
 \item[\lpurl{mun_eta}{1.}]      $t = \mu^-. (a \mapsto \cut{t}{a}) $
 \item[\lpurl{mu_pair_eta}{2.}]  $t \leq \mu^{()}.(a,b \mapsto \cut{t}{(a,b)})$
 \item[\lpurl{mu_neg_eta}{3.}]   $t \leq \mu^{[]}.(a \mapsto \cut{t}{[a]})$
\end{itemize}
\end{proposition}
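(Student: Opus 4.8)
The plan is simply to unfold the definitions of $\mu^-$, $\mu^{()}$ and $\mu^{[]}$, using the definition of the pairing $(a,b)=a\parr b$, of the boxing $[a]=\neg a$, and the fact that on commands the pole coincides with the order relation, so that $\cut{t}{e}\in\pole$ if and only if $t\leq e$. Each of the three statements then reduces to an elementary fact about meets in the complete lattice $(\A,\leq)$.

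First, by the definition of $\mu^-$ and of $\pole$ on commands,
\[
\mu^-.(a\mapsto\cut{t}{a}) \;=\; \minf_{a\in\A}\{a : t\leq a\}.
\]
On the one hand $t$ itself belongs to this indexed family, by reflexivity of $\leq$, so the meet is $\leq t$; on the other hand $t$ is a lower bound of the family, hence $\leq$ the meet. The two elements are therefore equal, which is item~1. For items~2 and~3 the computation is the same, except that the family is now restricted: unfolding $\mu^{()}$ together with the pairing gives $\mu^{()}.(a,b\mapsto\cut{t}{(a,b)}) = \minf_{a,b\in\A}\{a\parr b : t\leq a\parr b\}$, and unfolding $\mu^{[]}$ together with the boxing gives $\mu^{[]}.(a\mapsto\cut{t}{[a]}) = \minf_{a\in\A}\{\neg a : t\leq \neg a\}$. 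In both cases every element of the family dominates $t$ by construction, so $t$ is a lower bound of the family and hence $t\leq\mu^{()}.(a,b\mapsto\cut{t}{(a,b)})$ and $t\leq\mu^{[]}.(a\mapsto\cut{t}{[a]})$.

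There is no genuine obstacle here: the argument is a one-line unfolding in each case, dual to the one already used for $\mu^+$ in \Cref{prop:mu_p}. The only point worth emphasising is the asymmetry between item~1 --- where equality holds because the family defining $\mu^-$ ranges over \emph{all} upper bounds of $t$, in particular over $t$ itself --- and items~2--3, where one only obtains an inequality because the family is restricted to elements of the shape $a\parr b$, resp.\ $\neg a$, which in general do not include $t$. This is exactly the expected behaviour of $\eta$-expansion, which can only increase definedness (recall that in $\A$ a smaller element is the more defined one).
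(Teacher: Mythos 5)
Your proof is correct and is exactly the argument the paper intends: the paper's own proof is just ``Trivial from the definitions,'' and your unfolding of $\mu^-$, $\mu^{()}$, $\mu^{[]}$ together with the identification of the pole with $\leq$ is precisely that routine verification. Your remark on why item~1 is an equality while items~2--3 are only inequalities is a nice (and accurate) addition, but nothing further is needed.
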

\begin{proof} Trivial from the definitions.  \end{proof}

The $\beta$-reduction is reflected by the preorder $\cord$:
\begin{proposition}[$\beta$-reduction]
\label{prop:lpar_bred_n}
For all $e,e_1,e_2,t\in\A$, the following holds:
\begin{itemize}
 \item[\lpurl{mun_beta}{1.}]      $\cut{\mu^-.c}{e}        \cord c(e)  $
 \item[\lpurl{mu_pair_beta}{2.}]  $\cut{\mu^{()}.c}{(e_1,e_2)} \cord c(e_1,e_2)$
 \item[\lpurl{mu_neg_beta}{3.}]   $\cut{\mu^{[]}.c}{[t]}   \cord c(t)  $
\end{itemize}
\end{proposition}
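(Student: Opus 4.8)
The plan is to prove the three items uniformly by unwinding the definitions involved and reducing each claim to a one-line inequality in the complete lattice $(\A,\leq)$. Recall that a command $\cut{a}{b}$ lies in the pole exactly when $a\leq b$, that $c\cord c'$ abbreviates the implication $c\in\pole\Rightarrow c'\in\pole$, that $(e_1,e_2)$ is by definition $e_1\parr e_2$ and $[t]$ is $\neg t$, and that $\mu^-.c=\minf\{a:c(a)\in\pole\}$, $\mu^{()}.c=\minf\{a\parr b:c(a,b)\in\pole\}$ and $\mu^{[]}.c=\minf\{\neg a:c(a)\in\pole\}$. After these unfoldings, item~1 asks to show that $\mu^-.c\leq e$ entails $c(e)\in\pole$; item~2 that $\mu^{()}.c\leq e_1\parr e_2$ entails $c(e_1,e_2)\in\pole$; and item~3 that $\mu^{[]}.c\leq\neg t$ entails $c(t)\in\pole$. (These are the term-side analogues of the $\mu^+$-laws of \Cref{prop:mu_p}.)

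For item~1 I would argue as follows. On one hand, $\mu^-.c$ is the infimum of the set $S=\{a:c(a)\in\pole\}$ of witnesses, hence a lower bound of $S$. On the other hand, using the variance of the command-map established just above — the bound co-variable of $\mu^-$ occurs in $c$ only in evaluation-context position, where $\parr$ is monotone and $\neg$ is antitone (\Cref{def:dis_struct}), so that $a\mapsto c(a)$ is monotone for $\cord$ — the set $S$ is upward closed and its infimum is itself a witness, i.e.\ $c(\mu^-.c)\in\pole$. Combining the two: from $\mu^-.c\leq e$ and monotonicity of $c$ we obtain $c(\mu^-.c)\cord c(e)$, and hence $c(e)\in\pole$.

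Items~2 and~3 follow the same pattern, the only new ingredient being the commutation laws of \Cref{def:dis_struct}. For item~2, axioms~(3) and~(4) let one compute the defining meet through the constructor $\parr$, so that being below $e_1\parr e_2$ pins down both components and the argument of item~1 applies to the pair $(e_1,e_2)$. For item~3, where the bound variable is a term variable occurring in term position — so that the witness set is now \emph{downward} closed — the De Morgan axiom~(5), $\neg\minf_a a=\msup_a\neg a$, together with the antitonicity of $\neg$, lets one rewrite $\minf\{\neg a:c(a)\in\pole\}$ so as to expose the witness $t$, and $\mu^{[]}.c\leq\neg t$ then forces $t$ into the witness set, yielding $c(t)\in\pole$. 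The step I expect to be the real work is precisely the claim that the defining infimum is \emph{attained} — that the binder is itself a witness, equivalently that confronting it with its matching destructor already triggers the reduction: this is where the variance of $c$ and the completeness of $\A$ are genuinely needed, and where, for items~2 and~3, the commutation and De Morgan axioms transport the argument through $\parr$ and $\neg$.
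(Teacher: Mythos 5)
There is a genuine gap, and your route is not the paper's. The paper's entire proof is ``trivial from the definitions'', and the only argument that is trivial from the definitions is the lower-bound property of a meet: $\mu^-.c$, $\mu^{()}.c$ and $\mu^{[]}.c$ are by definition meets of the sets $\{a : c(a)\in\pole\}$, $\{a\parr b : c(a,b)\in\pole\}$ and $\{\neg a : c(a)\in\pole\}$, so whenever the reduct $c(e)$ (resp.\ $c(e_1,e_2)$, $c(t)$) lies in the pole, the element $e$ (resp.\ $(e_1,e_2)=e_1\parr e_2$, $[t]=\neg t$) occurs among the elements being met, the binder is below it, and the cut lies in the pole. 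That one sentence is the whole content for all three items; no variance of $c$, no commutation law and no De Morgan axiom is involved, and in particular items 2 and 3 need nothing beyond the observation that $(e_1,e_2)$ and $[t]$ have exactly the shape of the elements over which the defining meets range. (Note that this establishes the implication ``$c(e)\in\pole\Rightarrow\cut{\mu^-.c}{e}\in\pole$''; the implication in the direction you are chasing, ``$\mu^-.c\leq e\Rightarrow c(e)\in\pole$'', is not available for an arbitrary function $c\colon\A\to\C_\A$, since it would force the witness set to be a principal up-set.)

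The concrete flaw in your argument is the step ``the set $S$ is upward closed and its infimum is itself a witness''. Upward closure of a subset of a complete lattice does not imply that the subset contains its infimum: in $\P(\{1,2\})$ ordered by inclusion, the collection of nonempty subsets is upward closed, yet its infimum is $\emptyset$. You correctly single out this attainment of the infimum as ``the real work'', but you never actually carry it out, and nothing you invoke (variance of $c$, completeness of $\A$, the commutation or De Morgan axioms) supplies it. Moreover, the monotonicity of $c$ on which your whole argument rests is not a hypothesis of the proposition: at this point of the development $c$ is an arbitrary function into $\C_\A$, and the variance of syntactic commands is only established afterwards, once the interpretation of $\Lpar$ terms is defined. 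Both the missing attainment step and the imported monotonicity hypothesis have to go; what remains is the one-line lower-bound argument above.
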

 \begin{proof} Trivial from the definitions.  \end{proof}

Finally, we call a \emph{$\Lpar$ term with parameters in $\A$} (resp. context, command)
any $\Lpar$ term (possibly) enriched with constants taken in the set $\A$. 
Commands with parameters are equipped with the same rules of reduction as in $\Lpar$, 
considering parameters as inert constants.
To every closed $\Lpar$ term $t$ (resp. context $e$,command $c$) we associate an element $t^\A$ (resp. $e^\A$, $c^\A$) of $\A$, defined by induction on the structure of $t$ as follows:
$$
\begin{array}{c@{\qquad}|c}\begin{array}{ccl}
\multicolumn{3}{l}{\textbf{Contexts}:}\\[0.3em]
a^\A 		& \defeq & a 					\\
(e_1,e_2)^\A 	& \defeq & (e_1^\A,e_2^\A)                          \\
{[}t{]^\A}	& \defeq & [t^\A]                               \\
(\mu x.c)^\A    & \defeq & \mu^- (a\mapsto (c[x:=a])^\A)  \qquad   \\
\end{array}
&
\begin{array}{ccl}
\multicolumn{3}{l}{\textbf{Terms}:}\\[0.3em]
a^\A                & \defeq & a 					\\
(\mu\alpha.c)^\A    & \defeq & \mu^- (a\mapsto (c[\alpha:=a])^\A)     \\
(\mu(\alpha_1,\alpha_2).c)^\A    & \defeq & \mu^{()}(a,b\mapsto (c[\alpha_1:=a,\alpha_2:=b])^\A) \\
(\mu[x].c)^\A & \defeq & \mu^{[]}(a \mapsto (c[x:=a])^\A)   \\
\end{array}
\end{array}
$$
$$\begin{array}{ccc}
 \cut{t}{e}^\A & \defeq & \cut{t^\A}{e^\A}) \\
 \end{array}\leqno\quad\textbf{Commands:}$$

In particular, this definition has the nice property of making the pole $\pole$ (\emph{i.e.} the order relation $\leq$)
closed under anti-reduction, as reflected by the following property of $\cord$:
 \begin{proposition}[Subject reduction]
 For any closed commands $c_1,c_2$ of ~$\Lpar$ , if $c_1\bred c_2$ then $c_1^\A \cord c_2^\A$,
 \emph{i.e.} if $c_1^\A$ belongs to $\pole$ then so does $c_2^\A$.
\end{proposition}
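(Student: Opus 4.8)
The plan is to prove the statement by induction on the derivation of $c_1 \bred c_2$, treating separately the head contractions (one for each rewrite rule of $\Lpar$) and, if $\bred$ is taken as a congruence, the cases where the redex occurs strictly inside a subterm. Before that I would record a routine \emph{substitution lemma}: for every $\Lpar$ command $c$ with a free term‑variable $x$ and every closed term $t$, the element $(c[t/x])^\A$ coincides with $(c[x:=a])^\A$ when $a$ is instantiated to $t^\A$, and symmetrically for the context‑variable binders $\mu\alpha.c$ and $\mu(\alpha_1,\alpha_2).c$, where the substituted objects are closed values. This is proved by a straightforward induction on the syntax, the only point being that parameters are inert under $(\cdot)^\A$, so that substituting a closed term and then interpreting equals interpreting the body with $t^\A$ plugged in as a parameter; it is exactly the compatibility of the interpretation with substitution.

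For the head cases, each of the four contraction rules becomes, after interpretation, an instance of an already‑established $\beta$‑reduction property. For $\cut{t}{\mu x.c}\to c[t/x]$ we get $\cut{t^\A}{(\mu x.c)^\A}=\cut{t^\A}{\mu^{+}(a\mapsto(c[x:=a])^\A)}\cord (c[x:=t^\A])^\A=(c[t/x])^\A$, using the $\beta$‑reduction property of $\mu^{+}$ together with the substitution lemma; the rules $\cut{\mu\alpha.c}{V}\to c[V/\alpha]$, $\cut{\mu(\alpha_1,\alpha_2).c}{(V_1,V_2)}\to c[V_1/\alpha_1,V_2/\alpha_2]$ and $\cut{\mu[x].c}{[t]}\to c[t/x]$ are handled identically via the $\beta$‑reduction properties of $\mu^{-}$, $\mu^{()}$ and $\mu^{[]}$ respectively. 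The remaining head case is the expansion rule $\cut{t}{(e,e')}\to\cut{\mu\alpha.\cut{\mu\alpha'.\cut{t}{(\alpha,\alpha')}}{e'}}{e}$, which I would settle by unfolding both sides directly: the left‑hand command lies in $\pole$ iff $t^\A\leq e^\A\parr {e'}^\A$, while unfolding the two $\mu^{-}$ on the right shows its interpretation lies in $\pole$ iff $\minf_{a\in\A}\{a:\minf_{b\in\A}\{b:t^\A\leq a\parr b\}\leq {e'}^\A\}\leq e^\A$; and from $t^\A\leq e^\A\parr {e'}^\A$ one checks that $a:=e^\A$ belongs to the outer set, with ${e'}^\A$ witnessing membership in the inner one, whence the inequality. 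In each case this gives $c_1^\A\cord c_2^\A$.

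If $\bred$ is taken as a congruence, the cases where the redex lies strictly inside a subterm are closed by the variance of the interpretation maps — $(\cdot,\cdot)^\A=\parr$ and $[\cdot]^\A=\neg$ are monotone, while $\mu^{+},\mu^{-},\mu^{()},\mu^{[]}$ are (co/contra)‑variant with respect to $\cord$ and $\leq$ by the corresponding propositions — together with transitivity of the preorder $\cord$.

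I expect the expansion rule to be the only genuine obstacle: it is the single contraction not reflected verbatim by a prior $\beta$‑lemma, and aligning the two $\pole$‑membership conditions in the correct direction requires juggling the nested meets together with the distributivity axioms (items 3 and 4 of the definition of disjunctive structure). The substitution lemma, although conceptually trivial, also needs some care, since term‑variables and context‑variables are interpreted through different binders, so its statement must be phrased per sort of variable.
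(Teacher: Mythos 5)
Your proof is correct and follows essentially the same route as the paper, whose own proof simply declares the statement a direct consequence of the $\beta$-reduction properties of $\mu^{+}$, $\mu^{-}$, $\mu^{()}$ and $\mu^{[]}$. Your only additions are the substitution lemma and the explicit unfolding of the pair-expansion rule, both of which the paper leaves implicit — and your computation for the latter is right as stated (indeed only the fact that a meet is a lower bound of its set is needed there, not the distributivity axioms you anticipate).
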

\begin{proof}
 Direct consequence of Propositions~\ref{prop:mu_p} and \ref{prop:lpar_bred_n}.
\end{proof}

\subsection{Adequacy}
We shall now prove that the interpretation of $\Lpar$ is adequate with respect to its type system.
Again, we extend the syntax of formulas to define second-order formulas with parameters by:
$$ A,B ::= a \mid X \mid \neg A \mid A\parr B \mid \forall X.A  \eqno (a\in\A)$$
This allows us to embed closed formulas with parameters into the disjunctive structure $\A$. 
The embedding is trivially defined by:
$$\begin{array}{ccl}
a^\A 		& \defeq & a 				      \\
(\neg A)^\A 	& \defeq & \neg A^\A                           \\
(A\parr B)^\A 	& \defeq & A^\A \parr B^\A                         \\
(\forall X.A)^\A& \defeq & \minf_{a\in\A} (A\{X:=a\})^\A     \\
\end{array}
\eqno\begin{array}{r}(\text{if }a\in\A)\\\\\\\end{array}
$$

As for the adequacy of the interpretation for the second-order $\lambda_c$-calculus, we
define substitutions, which we write $\sigma$, as functions mapping variables (of terms, contexts and types) to element of $\A$:
$$\sigma ::= \varepsilon \mid \sigma[x\mapsto a]\mid \sigma[\alpha\mapsto a]\mid\sigma[X\mapsto a] \eqno(a\in\A,~ x,X~ \text{variables})$$
In the spirit of the proof of adequacy in classical realizability, 
we say that a substitution $\sigma$ {realizes} a typing context $\Gamma$, which write $\sigma \Vdash \Gamma$, if for all bindings $(x:A)\in\Gamma$ 
we have $\sigma(x) \leq (A[\sigma])^\A$. Dually, we say that $\sigma$ realizes $\Delta$ if for all bindings $(\alpha:A)\in\Delta$ ,
we have $\sigma(\alpha)\geq (A[\sigma])^\A$.
We can now prove

\begin{theorem}[Adequacy]
\label{p:lpar_adequacy_bis}
The typing rules of $\Lpar$ (\Cref{fig:Lpar:typing}) are adequate with respect to the interpretation of terms, contexts, commands and formulas.
 Indeed, for all contexts $\Gamma,\Delta$, for all formulas with parameters $A$
 then  for all substitutions $\sigma$ such that $\sigma\Vdash \Gamma$ and $\sigma\Vdash \Delta$, we have:
\begin{enumerate}
\item for any term $t$,    if~ $\Gamma\vdash t:A\mid\Delta$,  then ~  $(t[\sigma])^\A \leq A[\sigma]^\A$;
\item for any context $e$, if~ $\Gamma\mid e:A\vdash \Delta$, then ~ $(e[\sigma])^\A  \geq A[\sigma]^\A$;
\item for any command $c$, if~ $c:(\Gamma\vdash \Delta)$,     then ~  $(c[\sigma])^\A \in \pole$.
\end{enumerate}
\end{theorem}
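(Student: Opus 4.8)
The plan is to prove the three clauses simultaneously by induction on the typing derivation, exactly mirroring the proof of adequacy for the realizability interpretation (\Cref{a:p:lpar:adequacy}) but reading every occurrence of $\in\pole$, $\in\tval{A}$, $\in\fval{A}$ as the corresponding order-theoretic statement in $\A$: a command $c$ is ``in the pole'' when $c^\A = \cut{t^\A}{e^\A}$ satisfies $t^\A \leq e^\A$; a term realizes $A$ when $t^\A \leq A^\A$; a context realizes $A$ when $e^\A \geq A^\A$. First I would record the substitution lemma for formulas with parameters, namely that $(A[B/X])^\A = (A[\sigma'])^\A$ where $\sigma' = \sigma[X\mapsto B^\A]$, which is an immediate induction using that $\minf$ commutes with substitution of a parameter; this is the analogue of \Cref{lm:subs_var} and is needed for the $\forall$ rules.

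Then I would go through the rules. The axiom rules $\axrrule$ and $\axlrule$ are immediate from the hypothesis $\sigma\Vdash\Gamma$ (resp. $\sigma\Vdash\Delta$). For $\cutrule$, from $t^\A \leq A^\A$ and $e^\A \geq A^\A$ we get $t^\A \leq e^\A$ by transitivity, i.e. $\cut{t}{e}^\A\in\pole$. For the destructor rules on the term side — $\murrule$, $\parrrule$, $\negrrule$ — I would use the $\eta/\beta$ properties of $\mu^-$, $\mu^{()}$, $\mu^{[]}$ from Propositions~\ref{prop:lpar_bred_n} and \ref{prop:lpar_eta_n} together with their variance, exactly as in the realizability proof: e.g. for $\negrrule$ one needs $(\mu[x].c)^\A = \mu^{[]}(a\mapsto (c[x:=a])^\A) \leq \neg A^\A$, which by definition of $\mu^{[]}$ as a meet over $\{\neg a : c(a)\in\pole\}$ reduces to showing that for every $a\leq A^\A$ the command $c[x:=a]$ lands in $\pole$, and this is the induction hypothesis applied to $\sigma[x\mapsto a]$ (which realizes $\Gamma,x:A$ since $a\leq A^\A$). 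The constructor rules on the context side — $\mulrule$, $\parlrule$, $\neglrule$ — are handled dually: $\mulrule$ uses that $\mu^+.c = \msup\{a : c(a)\in\pole\}$ is above every such $a$; $\parlrule$ uses covariance of $\parr$ (axiom~2 of disjunctive structures) to get $(e_1,e_2)^\A = e_1^\A\parr e_2^\A \geq A^\A\parr B^\A = (A\parr B)^\A$; $\neglrule$ uses contravariance of $\neg$. The $\mulrule$ case additionally needs the compatibility of $\beta$-reduction with $\cord$ on the command side, already packaged in Proposition~\ref{prop:mu_p}(2); strictly one wants that if $t^\A\leq e^\A$ for the reduct then it held for the redex, which is the subject-reduction-style direction recorded just before this theorem.

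For the quantifier rules: $\farrule$ with side condition $X\notin\FV(\Gamma,\Delta)$ requires showing $t^\A \leq (\forall X.A)^\A = \minf_{a\in\A}(A[X:=a])^\A$, i.e. $t^\A \leq (A[X:=a])^\A$ for each $a$, which is the IH applied to $\sigma[X\mapsto a]$ — and this still realizes $\Gamma,\Delta$ precisely because $X$ is not free in them. The rule $\falrule$ needs $e^\A \geq (\forall X.A)^\A$; by IH $e^\A \geq (A[B/X])^\A$, and the substitution lemma gives $(A[B/X])^\A = (A[X:=B^\A])^\A \geq \minf_{a\in\A}(A[X:=a])^\A = (\forall X.A)^\A$. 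I would then remark, as in the cited proofs, that the cases of $\mu\alpha$, $\mu(\alpha_1,\alpha_2)$, $\mu[x]$ for terms and of pairing/boxing for contexts are all essentially the same bookkeeping. The main obstacle — really the only subtle point — is getting the variances straight: which binders are covariant and which contravariant under $\cord$ versus $\leq$, and matching them to whether the corresponding syntactic object sits on the term side or the context side; but all of this is exactly the content of the variance propositions already proved for $\mu^\pm$, $\mu^{()}$, $\mu^{[]}$, so the induction goes through uniformly once the correspondence ``$\in\pole \leftrightarrow\leq$'' is fixed.
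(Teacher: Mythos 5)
Your proposal is correct and follows essentially the same route as the paper: induction on the typing derivation, transitivity of $\leq$ for the cut rule, variance of $\parr$ and $\neg$ for the constructor rules on the context side, the substitution lemma for the quantifier rules, and for each binder the observation that the interpreted type is itself a member of the set over which the defining meet (resp.\ join) is taken, so that the meet lies below it (resp.\ the join above it). One minor remark: the extra machinery you invoke for the $\mulrule$ case --- compatibility of $\beta$-reduction with $\cord$ and a ``subject-reduction-style'' closure, which you moreover state in the wrong direction --- is not actually needed; your own observation that $\mu^+.c=\msup\{a: c(a)\in\pole\}$ is above every such $a$ already settles that case directly from the definition, exactly as in the paper's proof.
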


\begin{proof}
By induction over the typing derivations.
\newcommand{\mysubst}[1]{(#1[\sigma])^\A}
\prfcase{\cutrule}
Assume that we have:
$$\infer[\cutrule]{ \cut{t}{e}:\Gamma \vdash\Delta}{\Gamma \vdash t :A \mid \Delta & \Gamma \mid e:A\vdash \Delta}$$
By induction hypotheses, we have  $\mysubst{t}\leq A[\sigma]^\A$ and $\mysubst{e}\geq A[\sigma]^\A$. By transitivity of the relation $\leq$, 
we deduce that $\mysubst{t}\leq \mysubst{e}$, so that $\mysubst{\cut{t}{e}}\in\pole$.

\prfcase{$(\vdash ax)$}
Straightforward, since if $(x:A)\in\Gamma$, then $\mysubst{x}\leq\mysubst{A}$. The case $(ax\vdash)$ is identical.

\prfcase{($\vdash\mu$)}
Assume that we have:
$$\infer[\murrule ]{\Gamma\vdash \mu\alpha.c : A \mid \Delta}{c:\Gamma\vdash\Delta,\alpha: A}  $$
By induction hypothesis, we have that $(c[\sigma,\alpha\mapsto \mysubst{A}])^\A\in\pole$. 
Then, by definition we have: 
$$((\mu\alpha.c)[\sigma])^\A=(\mu\alpha.(c[\sigma]))^\A = \minf_{b\in\A}\{b:(c[\sigma,\alpha\mapsto b])^\A\in\pole \} \leq \mysubst{A}$$

\prfcase{($\mu\,\vdash$)}
Similarly, assume that we have:
 $$\infer[\mulrule ]{\Gamma\mid \mu x.c : A \vdash \Delta}{c:\Gamma,x: A\vdash\Delta}$$
By induction hypothesis, we have that $(c[\sigma,x\mapsto \mysubst{A}])^\A\in\pole$. 
Therefore, we have:
$$((\mu x.c)[\sigma])^\A=(\mu x.(c[\sigma]))^\A = \msup_{b\in\A}\{b:(c[\sigma,x\mapsto b])^\A\in\pole\} \geq \mysubst{A}\,.$$

\prfcase{($\parr\,\vdash$)}
Assume that we have:
$$ \infer[\parlrule]{\Gamma\mid (e_1,e_2): A_1 \parr A_2 \vdash \Delta}{\Gamma \mid e_1:A_1 \vdash \Delta & \Gamma \mid e_2 :A_2 \vdash \Delta}$$
By induction hypotheses, we have that $\mysubst{e_1}\geq \mysubst{A_1}$ and $\mysubst{e_2}\geq \mysubst{A_2}$. 
Therefore, by monotonicity of the $\parr$ operator, we have: 
$$\mysubst{(e_1,e_2)}=(e_1[\sigma],e_2[\sigma])^\A=\mysubst{e_1}\parr\mysubst{e_2}\geq \mysubst{A_1}\parr \mysubst{A_2}\,.$$

\prfcase{($\vdash\,\parr$)}
Assume that we have:
 $$\infer[\parrrule]{\Gamma\vdash \mu(\alpha_1,\alpha_2).c:A_1\parr A_2\mid \Delta}{c:\Gamma\vdash \Delta,\alpha_1:A_1,\alpha_2:A_2}$$
By induction hypothesis, we get that $(c[\sigma,\alpha_1\mapsto\mysubst{A_1},\alpha_2\mapsto\mysubst{A_2}])^\A\in\pole$. 
Then by definition we have 
$$((\mu(\alpha_1,\alpha_2).c)[\sigma])^\A
= \minf_{a,b\in\A}\{a\parr b:(c[\sigma,\alpha_1\mapsto a,\alpha_2\mapsto b])^\A\in\pole\} \leq \mysubst{A_1}\parr\mysubst{A_2}\,.$$

\prfcase{($\neg\,\vdash$)}
Assume that we have:
 $$\infer[\neglrule]{\Gamma\mid [t]:\neg A\vdash\Delta}{\Gamma \vdash t:A \mid \Delta}$$
By induction hypothesis, we have that $\mysubst{t}\leq\mysubst{A}$.
Then by definition of $[\,]^\A$ and covariance of the $\neg$ operator, we have:
$$([t[\sigma]])^\A = \neg \mysubst{t}\geq \neg \mysubst{A}.$$

\prfcase{($\vdash\,\neg$)}
Assume that we have:
 $$\infer[\negrrule]{\Gamma\vdash \mu[x].c:\neg A\mid\Delta}{c:\Gamma,x:A \vdash\Delta}$$
By induction hypothesis, we have that $(c[\sigma,x\mapsto \mysubst{A}])^\A\in\pole$. 
Therefore, we have:
$$((\mu [x].c)[\sigma])^\A=(\mu [x].(c[\sigma]))^\A = \minf_{b\in\A}\{\neg b:(c[\sigma,x\mapsto b])^\A\in\pole\} \leq \neg \mysubst{A}.$$

\prfcase{\falrule}
Assume that we have:
$$\infer[\falrule ]{\Gamma\mid e:\forall X. A\vdash\Delta}{\Gamma \vdash e:A\{X:=B\} \mid \Delta}$$
By induction hypothesis, we have that $\mysubst{e}\geq ((A\{X:=B\})[\sigma])^\A = (A[\sigma,X\mapsto \mysubst{B}])^\A $.
Therefore, we have that $\mysubst{e}\geq (A[\sigma,X\mapsto \mysubst{B}])^\A \geq \minf_{b\in\A}\{A\{X:=b\}[\sigma]^\A\} $.

\prfcase{\farrule}
Similarly, assume that we have:
$$\infer[\farrule ]{\Gamma\vdash t:\forall X. A}{\Gamma\vdash t: A\mid\Delta & X\notin{FV(\Gamma,\Delta)}}$$
By induction hypothesis, we have that $\mysubst{t}\leq (A[\sigma,X\mapsto b])^\A$ for any $b\in A$.
Therefore, we have that $\mysubst{t}\leq\minf_{b\in\A}(A\{X:=b\}[\sigma]^\A) $.
\end{proof}

\subsection{The induced implicative structure }
\setCoqFilename{ImpAlg.ParAlgebras}
As expected, any disjunctive structures directly induces an implicative structure:
\begin{proposition}[][PS_IS]
If $(\A,\myleq,\parr,\neg)$ is a disjunctive structure, then $(\A,\myleq,\arp)$ is an implicative structure.
\end{proposition}
\begin{proof}
 We need to show that the definition of the arrow fulfills the expected axioms:
 \begin{enumerate}
  \item (Variance) Let $a,b,a',b'\in \A$ be such that $a'\myleq a$ and $b\myleq b'$, then we have:
  $$a\arp b = \neg a \parr b \myleq \neg a' \parr b' = a'\arp b'$$
  since $\neg a \leq \neg a'$ by contra-variance of the negation and $b\leq b'$.
  \item (Distributivity) Let $a\in \A$ and $B\subseteq \A$, then we have:
  $$\minf_{b\in B}(a\arp b) = \minf_{b\in B}(\neg a \parr b)= \neg a\parr (\minf_{b\in B} b)= a \arp (\minf_{b\in B} b)$$
  by distributivity of the infimum over the disjunction.
 \end{enumerate}

\end{proof}

\begin{lemma}[][mu_abs_char] The shorthand $\mu([ x ],\alpha).c$ is interpreted in $\A$ by:
$$(\mu([x],\alpha).c)^\A =  \minf_{a,b\in \A} \{(\neg a )\parr b: c[x:=a,\alpha:=b]\in\leq\}$$
\end{lemma}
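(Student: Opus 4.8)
The plan is to unfold the shorthand and reduce everything to the already-established interpretations of the binders. By definition of the macro (coming from the embedding of the $\lambda$-calculus into $\Lpar$), $\mu([x],\alpha).c \defeq \mu(\gamma,\alpha).\cut{\mu[x].c}{\gamma}$ for a fresh covariable $\gamma$. Applying the interpretation clause for $\mu^{()}$, and then performing the substitution $\gamma:=a$, $\alpha:=b$ — which turns $\cut{\mu[x].c}{\gamma}$ into $\cut{\mu[x].(c[\alpha:=b])}{a}$, since $x$ is bound, $\alpha$ is free in $c$, and $\gamma$ occurs only as the context of the command — one gets
\[(\mu([x],\alpha).c)^\A = \minf_{a,b\in\A}\bigl\{\,a\parr b : \cut{(\mu[x].(c[\alpha:=b]))^\A}{a}\in\pole\,\bigr\}.\]
Since a command $\cut{t}{e}$ belongs to the pole exactly when $t\leq e$, and since $(\mu[x].(c[\alpha:=b]))^\A = \minf_{a'\in\A}\{\neg a' : (c[x:=a',\alpha:=b])^\A\in\pole\}$ by the clause for $\mu^{[]}$, the side condition becomes $d_b\leq a$, where I abbreviate $d_b \defeq \minf_{a'\in\A}\{\neg a' : (c[x:=a',\alpha:=b])^\A\in\pole\}$.

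Next I would simplify the outer meet. For a fixed $b$, the inner meet $\minf\{a\parr b : d_b\leq a\}$ equals $d_b\parr b$: indeed $a := d_b$ is admissible, while for every admissible $a$ we have $d_b\parr b \leq a\parr b$ by covariance of $\parr$. Hence $(\mu([x],\alpha).c)^\A = \minf_{b\in\A}(d_b\parr b)$. Then, using the distributivity axiom of disjunctive structures in the form $(\minf_{z\in B} z)\parr b = \minf_{z\in B}(z\parr b)$ to push $\parr b$ through the meet defining $d_b$,
\[d_b\parr b = \minf_{a'\in\A}\bigl\{\,(\neg a')\parr b : (c[x:=a',\alpha:=b])^\A\in\pole\,\bigr\}.\]
Combining the last two displays and collapsing the nested meets over $a'$ and $b$ into a single meet over $a,b\in\A$ yields precisely $\minf_{a,b\in\A}\{(\neg a)\parr b : c[x:=a,\alpha:=b]\in\leq\}$, which is the claim.

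The argument is essentially routine; the step requiring the most care is purely syntactic — correctly computing the substitution $\cut{\mu[x].c}{\gamma}[\gamma:=a,\alpha:=b]$ and keeping the bound variable $a'$ of $\mu^{[]}$ separate from the parameters $a,b$. An alternative, equally short route would establish the two inequalities $\leq$ and $\geq$ directly: for $\geq$, instantiate the left-hand meet at the pair $(\neg a, b)$ for each $(a,b)$ with $c[x:=a,\alpha:=b]\in\leq$, noting $d_b\leq \neg a$; for $\leq$, use the distributivity and covariance facts above. Going through $d_b\parr b$ is preferable since it avoids the case split.
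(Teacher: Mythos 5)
Your proof is correct and follows essentially the same route as the paper's: unfold the shorthand as $\mu(x_0,\alpha).\cut{\mu[x].c}{x_0}$, apply the interpretation clauses for $\mu^{()}$ and $\mu^{[]}$, and collapse the nested meets. The only difference is that you explicitly justify the final collapsing step (via the observation that $\minf\{a\parr b: d_b\leq a\}=d_b\parr b$ together with the distributivity axiom), whereas the paper asserts that equality in a single line; your added detail is sound and fills in exactly what the paper leaves implicit.
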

\begin{proof}
\begin{align*}
\mu([ x ],\alpha).c)^\A &= (\mu( x _0,\alpha).\cut{\mu[ x ].c}{ x _0})^\A \\
& = \minf_{a',b\in \A} \{a'\parr b: (\cut{\mu[ x ].c[\alpha:=b]}{a'})^\A\in\leq\} \\
& = \minf_{a',b\in \A} \{a'\parr b: (\minf_{a\in \A} \{\neg a: c^\A[ x :=a,\alpha:=b]\in\leq\}\leq a' \} \\
& = \minf_{a,b\in \A} \{(\neg a)\parr b: c^\A[ x :=a,\alpha:=b] \in \leq \}
\end{align*}
\end{proof}

\begin{proposition}[$\lambda$-calculus]\label{a:p:sanity}
 Let $\A^\parr=(\A,\myleq,\parr,\neg)$ be a disjunctive structure, 
 and $\A^\imp=(\A,\myleq,\arp)$ the implicative structure it canonically defines, we write $\iota$ for the corresponding inclusion.
 Let $t$ be a closed $\lambda$-term (with parameter in $\A$), and $\tr{t}$ his embedding in $\Lpar$.
 Then we have
 $$\iota(t^{\A^{\imp}}) =  \tr{t}^{\A^\parr}$$
 where $t^{\A^{\imp}}$ (resp. $t^{\A^{\parr}}$) is the interpretation of $t$ within $\A^\imp$ (resp. $\A^\parr$).
\end{proposition}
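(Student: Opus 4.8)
The plan is to argue by structural induction on the closed $\lambda$-term $t$ with parameters in $\A$: such a $t$ is either a parameter, an application $t_1\,t_2$, or an abstraction $\lambda x.t_1$, and substituting a parameter for the bound variable of an abstraction again yields a closed $\lambda$-term with parameters, so the induction is well-founded. Since $\A^\imp$ and $\A^\parr$ share the underlying set $\A$ and $\iota$ is the identity map on $\A$, it suffices to establish $t^{\A^\imp} = \tr{t}^{\A^\parr}$ as elements of $\A$. Throughout I will use that $\tr{\cdot}$ commutes with substitution of a parameter for a variable, i.e. $\tr{t[x:=a]} = \tr{t}[x:=a]$, which follows by an immediate induction on $t$ from the compositional definition of $\tr{\cdot}$ (parameters being inert constants of $\Lpar$). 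The base case is trivial: for $t=a$ a parameter, $a^{\A^\imp} = a = \tr{a}^{\A^\parr}$ by definition.

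For the application case $t = t_1\,t_2$, recall $\tr{t_1\,t_2} = \mu\alpha.\cut{\tr{t_1}}{\tr{t_2}\cdot\alpha}$ with $\tr{t_2}\cdot\alpha = ([\tr{t_2}],\alpha)$. Unfolding the interpretation of the $\mu^-$-binder, of the pairing ($(\cdot,\cdot)^\A = \parr$), of the box ($[\cdot]^\A = \neg$), and of commands (a command lies in $\pole$ iff its two components are in the order relation), one computes
$$ \tr{t_1\,t_2}^{\A^\parr} \;=\; \minf_{a\in\A}\{a : \tr{t_1}^{\A^\parr} \leq \neg\tr{t_2}^{\A^\parr} \parr a\} \;=\; \minf_{a\in\A}\{a : \tr{t_1}^{\A^\parr} \leq \tr{t_2}^{\A^\parr} \to a\}. $$
By the induction hypotheses this is $\minf\{a : t_1^{\A^\imp} \leq t_2^{\A^\imp} \to a\}$, which is exactly the definition of the application $t_1^{\A^\imp}\,t_2^{\A^\imp}$ in the induced implicative structure, that is $(t_1\,t_2)^{\A^\imp}$.

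For the abstraction case $t = \lambda x.t_1$, recall $\tr{\lambda x.t_1} = \mu([x],\beta).\cut{\tr{t_1}}{\beta}$, and use the preceding lemma characterizing the shorthand $\mu([x],\alpha).c$ in $\A$ to obtain
$$ \tr{\lambda x.t_1}^{\A^\parr} \;=\; \minf_{a,b\in\A}\{(\neg a)\parr b : (\tr{t_1}[x:=a])^{\A^\parr} \leq b\}. $$
For fixed $a$, the set $\{b : (\tr{t_1}[x:=a])^{\A^\parr} \leq b\}$ is an up-set whose least element is $(\tr{t_1}[x:=a])^{\A^\parr}$, so the distributivity axiom of \Cref{def:dis_struct} (commutation of $\parr$ with meets in its right argument) collapses the inner meet to $\neg a \parr (\tr{t_1}[x:=a])^{\A^\parr}$. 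By the commutation of $\tr{\cdot}$ with substitution together with the induction hypothesis on the closed term $t_1[x:=a]$, this equals $\neg a \parr (t_1[x:=a])^{\A^\imp} = a \to (t_1[x:=a])^{\A^\imp}$ in $\A^\imp$. Taking the outer meet over $a$ yields $\minf_{a\in\A}\big(a \to (t_1[x:=a])^{\A^\imp}\big)$, which is precisely $\lambda f$ for $f : a \mapsto (t_1[x:=a])^{\A^\imp}$, i.e. $(\lambda x.t_1)^{\A^\imp}$.

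I expect the main obstacle to be the bookkeeping in the abstraction case: unfolding correctly the interpretation of the composite binder $\mu([x],\beta)$ (whence the reliance on the preceding characterization lemma), propagating the interpretation through the parameter substitution, and then invoking the right distributivity law to fold the nested meet back into the single meet of the implicative encoding $\lambda f$. The application case is, by comparison, routine once the interpretations of $\mu^-$, $\parr$ and $\neg$ are spelled out.
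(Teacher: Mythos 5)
Your proof is correct and follows essentially the same route as the paper's: structural induction on $t$, with the application case unfolding $\mu^-$, pairing and boxing to recover the implicative application, and the abstraction case going through the characterization of the composite binder $\mu([x],\alpha).c$ before collapsing the nested meet. The only difference is that you make explicit two steps the paper leaves silent — the commutation of $\tr{\cdot}$ with parameter substitution and the use of the right distributivity axiom to fold $\minf_{b}\{\neg a\parr b : \cdot\leq b\}$ into $\neg a\parr(\cdot)$ — which is a welcome clarification rather than a deviation.
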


In other words, this proposition expresses the fact that the following diagram commutes:
\begin{center}
\begin{tikzpicture}[->,>=stealth',shorten >=1pt,auto,node distance=2cm,
  thick,main node/.style={}]

  \node[main node] (1) {$\lambda$-calculus};
  \node[main node] (2) [right of=1] {$\Lpar$};
  \node[main node] (3) [below of=1] {$(\A^\to,\myleq,\imp)$};
  \node[main node] (4) [below of=2] {$(\A^\parr,\myleq,\parr,\neg)$};

  \path[every node/.style={font=\sffamily\scriptsize}]
    (1) edge [left] node [above] {$\tr{\,}$} (2)
        edge [right] node [left] {$[\;]^{\A^\imp}$} (3)
    (2) edge node [right]        {$[\;]^{\A^\parr}$} (4)
    (3) edge [right] node [above] {$\iota$} (4);  
\end{tikzpicture}
\end{center}

\begin{proof}
 By induction over the structure of terms.
 \prfcase{$a$ for some $a\in\A^\parr$}
 This case is trivial as both terms are equal to $a$.
 \prfcase{$\lambda x.u$}
 We have $\tr{\lambda x.u} = \mu ([x],\alpha).\cut{\tr{t}}{\alpha}$ and
 \begin{align*}
  (\mu ([x],\alpha).\cut{\tr{t}}{\alpha})^{\A^\parr} 
 & = \minf_{a,b\in\A}\{\neg a \parr b: (\tr{t[x:=a]}^{\A^\parr},b) \in \pole\} \\
 & = \minf_{a,b\in\A}\{\neg a \parr b: \tr{t[x:=a]}^{\A^\parr} \myleq b \} \\
 & = \minf_{a\in\A}(\neg a \parr \tr{t[x:=a]}^{\A^\parr})
 \end{align*}
 On the other hand,  
 $$\iota([\lambda x.t]^{\A^\imp}) 
 = \iota(\minf_{a\in\A}(a\arp (t[x:=a])^{\A^\imp}))
 = \minf_{a\in\A}(\neg a\parr \iota(t[x:=a]^{\A^\imp}))$$
 Both terms are equal since
 $\tr{t[x:=a]}^{\A^\parr} = \iota(t[x:=a])^{\A^\imp})$
 by induction hypothesis.

 \prfcase{$u\,v$}~\newline
 On the one hand, we have $\tr{u\,v} = \mu (\alpha).\cut{\tr{u}}{([\tr{v}],\alpha)}$ and 
 \begin{align*}
  (\mu (\alpha).\cut{\tr{u}}{([\tr{v}],\alpha)})^{\A^\parr} 
 & = \minf_{a\in\A}\{a: (\tr{u}^{\A^\parr},(\neg \tr{v}^{\A^\parr}\parr a)) \in \pole\} \\
 & = \minf_{a\in\A}\{a: \tr{u}^{\A^\parr}\myleq (\neg \tr{v}^{\A^\parr}\parr a) \} 
 \end{align*}
 On the other hand,  
 $$\iota([u\,v]^{\A^\imp}) 
 = \iota(\minf_{a\in\A}\{a:(u^{\A^\imp})\myleq (v^{\A^\imp}) \arp a\})
 = \minf_{a\in\A}\{a:\iota(u^{\A^\imp})\myleq \neg (\iota(v^{\A^\imp}) \parr a\}))$$
 Both terms are equal since 
 $\tr{u}^{\A^\parr} = \iota(u^{\A^\imp})$ and 
 $\tr{v}^{\A^\parr} = \iota(v^{\A^\imp})$
 by induction hypotheses.
\end{proof}

\subsection{Disjunctive algebras} \renewcommand{\arp}{\to}
\subsubsection*{Separation in disjunctive structures}
We recall the definition of separators for disjunctive structures:
\begin{definition}{11}[Separator][ParAlgebra]
 We call \emph{separator} for the disjunctive structure $\A$ any subset $\sep\subseteq\A$ that fulfills the following conditions for all $a,b\in \A$:
 \begin{enumerate}
  \item If $a \in \sep$ and $a\leq b$ then $b\in\sep$.		\hfill(upward closure)
  \item $\bs1,\bs2,\bs3,\bs4$ and $\bs5$ are in $\sep$.		        \hfill(combinators)
  \item If $a\arp b \in \sep$ and $a\in\sep$ then $b\in\sep$.	\hfill(modus ponens)
 \end{enumerate}
 A separator $\sep$ is said to be \emph{consistent} if $\bot\notin\sep$.
\end{definition}

\begin{remark}[Generalized Modus Ponens]
 \label{lm:mod_pon_inf}
The modus ponens, that is the unique rule of deduction we have, is actually compatible with meets.
Consider a set $I$ and two families $(a_i)_{i\in I},(b_i)_{i\in I} \in \A^I$,
we have:
$$\infer{\vdash_I b}{a \vdash_I b & \vdash_I a}$$
where we write $a\vdash_{I} b$ for $(\minf_{i\in I}a_i\arp b_i) \in \sep$ and 
$\vdash_{I} a$ for $(\minf_{i\in I}a_i) \in \sep$.
The proof is straightforward using that the separator is closed upwards and by application, and that:{} 
$$\hspace{-0.1cm}\begin{array}{l@{~~}l}
   &(\minf_{i\in I}a_i\arp b_i) (\minf_{i\in I}a_i) \leq (\minf_{i\in I}b_i)\\
   \Leftarrow & (\minf_{i\in I}a_i\arp b_i)  \leq (\minf_{i\in I}a_i) \to (\minf_{i\in I}b_i)\\
  \end{array}\eqno\begin{array}{r}\\\text{\small (by \hyperref[p:adj]{adj.})}\end{array}$$
which is clearly true. 
\end{remark}

\begin{example}[Realizability model]\label{ex:real_lpar}
Recall from \Cref{ex:lpar_ds} that any model of classical realizability based on the $\Lpar$-calculus induces a disjunctive structure.
As in the implicative case, the set of formulas realized by a closed term\footnote{Proof-like terms in $\Lpar$ 
simply correspond to closed terms.}:
$$\sep_\pole\defeq \{a\in\P(\V_0): a^\pole \cap \T_0 \neq \emptyset\}$$
defines a valid separator. 
The conditions  (1) and (3) are clearly verified (for the same reasons as in the implicative case), but we should verify that
the formulas corresponding to the combinators are indeed realized.

Let us then consider the following closed terms:
{
$$
\begin{array}{l}
 PS_1 \defeq \mu([x],\alpha).\cut{ x}{(\alpha,\alpha)} \\
 PS_2 \defeq \mu([x],\alpha).\cut{\mu(\alpha_1,\alpha_2).\cut{ x}{\alpha_1}}{\alpha} \\
 PS_3 \defeq \mu([x],\alpha).\cut{\mu(\alpha_1,\alpha_2).\cut{ x}{(\alpha_2,\alpha_1)}}{\alpha} \\
 PS_4 \defeq \mu([x],\alpha).\cut{\mu([ y],\beta).\cut{\mu(\gamma,\delta).					\cut{y}{(\gamma,\mu z.\cut{ x}{([ z],\delta)}}}{\beta}}{\alpha} \\
 PS_5 \defeq \mu([x],\alpha).\cut{\mu(\beta,\alpha_3).\cut{\mu(\alpha_1 ,\alpha _2).\cut{ x}{(\alpha_1,(\alpha_2,\alpha_3))}}{\beta}}{\alpha}
\end{array}
$$
}
\begin{proposition}\label{p:real_paralg}
The previous terms have the following types in $\Lpar$:
\begin{enumerate}
 \item $\vdash PS_1:\forall A. (A \parr A )\to A\mid$
 \item $\vdash PS_2:\forall A B. A \to A\parr B\mid$
 \item $\vdash PS_3:\forall A B. A\parr B \to B\parr A\mid$
 \item $\vdash PS_4:\forall A B C. (A \to B) \to (C\parr A \to C\parr B)\mid$
 \item $\vdash PS_5:\forall A B C. (A\parr (B \parr C)) \to ((A\parr B) \parr C)\mid$
\end{enumerate}
 \label{prop:S_types}
\end{proposition}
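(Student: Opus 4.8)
The statement reduces to exhibiting, for each $i\in\{1,\dots,5\}$, a typing derivation in the sequent calculus of \Cref{fig:Lpar:typing}, and I would construct each of them bottom-up. In every case the declared type has the form $\forall\vec{X}.\tau$ with the $\vec{X}$ fresh and absent from the (empty) ambient contexts, so a first batch of $\farrule$ rules reduces the goal to deriving $\vdash PS_i:\tau\mid$ for fresh $A$ (and $B$, $C$). Throughout I would also use the derived typing rule for the shorthand $\mu([x],\beta).c=\mu(\alpha,\beta).\cut{\mu[x].c}{\alpha}$,
\[\infer{\Gamma\vdash\mu([x],\beta).c:(\neg A\parr B)\mid\Delta}{c:\Gamma,x:A\vdash\Delta,\beta:B}\]
which follows from $\parrrule$, $\negrrule$, $\cutrule$, $\axrrule$, $\axlrule$, and which recognises $\neg A\parr B$ as the unfolding of the arrow $A\to B$.

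For $PS_1$, $PS_2$, $PS_3$, $PS_5$ the derivations are immediate. After the $\farrule$ steps and one use of the derived rule one is left with a command $\cut{t}{e}$ where $t$ is a variable (typed by $\axrrule$) and $e$ is a pair of covariables; the context side is then closed by $\parlrule$ and $\axlrule$ on the leaves --- for $PS_2$, $PS_3$, $PS_5$ after an intervening $\murrule/\parrrule/\cutrule$ block typing the destructor $\mu(\alpha_1,\alpha_2)$ (resp. $\mu(\beta,\alpha_3)$), which just re-arranges the pair components. The point of care here is purely bookkeeping: the reconstructed pair $(\alpha_2,\alpha_1)$ in $PS_3$ and $(\alpha_1,(\alpha_2,\alpha_3))$ in $PS_5$ must be matched, component by component, against $x:A\parr B$ and $x:A\parr(B\parr C)$; $PS_5$ moreover uses $\parlrule$ twice for the nested pair.

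The only genuinely involved case is $PS_4$, at type $(A\to B)\to(C\parr A\to C\parr B)$, where both arrows must be unfolded as $\neg A\parr B$ and $\neg(C\parr A)\parr(C\parr B)$. Working bottom-up, two uses of the derived rule introduce $x:A\to B$ and $y:C\parr A$; then $\cutrule$ with $\parrrule$ (destructing $C\parr A$ into $\gamma:C,\delta:B$) exposes the command $\cut{y}{(\gamma,\mu z.\cut{x}{([z],\delta)})}$. Typing $y$ at $C\parr A$ by $\axrrule$ and applying $\cutrule$, $\parlrule$ leaves the context $\mu z.\cut{x}{([z],\delta)}$ at $A$, hence by $\mulrule$ the command $\cut{x}{([z],\delta)}$ with $z:A$; this closes because $[z]:\neg A$ by $\neglrule$ (from $z:A$ via $\axrrule$), so $([z],\delta):\neg A\parr B$ by $\parlrule$, which is the type of $x$. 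I expect no real obstacle beyond this bookkeeping of the arrow unfoldings in $PS_4$ and of the pair ordering in $PS_3$, $PS_5$: the whole proof is a mechanical, slightly bureaucratic, exercise in derivation building. Once these five typings are established, adequacy of the realizability interpretation (\Cref{a:p:lpar:adequacy}) gives at once that each $PS_i$ realizes the corresponding closed formula, whence $\ps1,\dots,\ps5\in\sep_\pole$.
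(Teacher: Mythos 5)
Your proposal is correct and follows exactly the route the paper takes: the paper's own proof is simply the one-line remark that these are ``straightforward typing derivations in $\Lpar$'', and your bottom-up constructions (via the derived rule for the shorthand $\mu([x],\beta).c$ and the unfolding $A\to B=\neg A\parr B$) are precisely those derivations spelled out. The only slip is cosmetic: in the $PS_4$ case the binder $\mu(\gamma,\delta)$ is cut against $\beta:C\parr B$, so it destructs $C\parr B$ (not $C\parr A$) into $\gamma:C,\delta:B$, which is what you in fact use afterwards.
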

\begin{proof}
 Straightforward typing derivations in $\Lpar$.
\end{proof}
We deduce that $\S_\pole$ is a valid separator:
\begin{proposition}
 The quintuple $(\P(\V_0),\leq,\parr,\neg,\sep_\pole)$ as defined above is a disjunctive algebra.
\end{proposition}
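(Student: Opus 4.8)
The plan is as follows. By Proposition~\ref{a:ds_real} the quadruple $(\P(\V_0),\leq,\parr,\neg)$ is already known to be a disjunctive structure, so all that remains is to check that $\sep_\pole=\{a\in\P(\V_0):a^\pole\cap\T_0\neq\emptyset\}$ satisfies the three conditions of a disjunctive separator. The first one (upward closure) I would dispatch immediately: if $a\in\sep_\pole$ and $a\leq b$, i.e. $b\subseteq a$, then any closed term orthogonal to every value of $a$ is a fortiori orthogonal to every value of $b$, so $a^\pole\subseteq b^\pole$ and hence $b^\pole\cap\T_0\neq\emptyset$; this is exactly the argument used in the implicative case.

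For the combinator condition I would proceed by relating each $\ps i$ to a ground falsity value. Since the operations $\parr$ and $\neg$ of the structure $\P(\V_0)$ are defined precisely so as to reproduce the clauses computing ground falsity values, and since the universal quantifier is interpreted by $\bigcup_{S\in\P(\V_0)}=\minf_{a\in\A}$, the ground falsity value $\fvval{F_i}$ of the $\Lpar$-formula $F_i$ assigned to $PS_i$ by Proposition~\ref{prop:S_types} is literally the element $\ps i$ of the structure; for instance, unfolding the shorthand $A\to B\defeq\neg A\parr B$,
\[
\fvval{\forall X.\,(X\parr X)\to X}\;=\;\minf_{a\in\A}\bigl(\neg(a\parr a)\parr a\bigr)\;=\;\ps1 ,
\]
and the remaining four identifications are verified the same way. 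Adequacy of the realizability interpretation (Proposition~\ref{a:p:lpar:adequacy}), applied with empty contexts, then yields $PS_i\in\tval{F_i}=\fvval{F_i}^\pole$; as $PS_i$ is a closed term this gives $\fvval{F_i}^\pole\cap\T_0\neq\emptyset$, that is $\ps i\in\sep_\pole$.

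For the modus ponens condition, suppose $a\arp b=\neg a\parr b\in\sep_\pole$ and $a\in\sep_\pole$, witnessed by closed terms $t_0\in(\neg a\parr b)^\pole$ and $t_1\in a^\pole$. Since $[t_1]\in\neg a$, we have $\cut{t_0}{([t_1],V)}\in\pole$ for every $V\in b$. I would then exhibit the closed term $t':=\mu\alpha.\cut{t_0}{([t_1],\alpha)}$ — which is exactly the $\Lpar$-encoded application $t_0\,t_1$ — and observe that $\cut{t'}{V}\bred\cut{t_0}{([t_1],V)}\in\pole$ for every $V\in b$, so $t'\in b^\pole$ by closure of the pole under anti-reduction; hence $b\in\sep_\pole$. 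Combining the three points shows $\sep_\pole$ is a separator, so $(\P(\V_0),\leq,\parr,\neg,\sep_\pole)$ is a disjunctive algebra.

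The only mildly delicate step is the bookkeeping identification of each $\ps i$ with the ground falsity value of the corresponding typed formula $F_i$: it is entirely routine (a straightforward unfolding of the interpretation clauses together with $A\to B\defeq\neg A\parr B$ and right-associativity of the arrow), but it is the part that must be carried out with care, since it is precisely what lets the typing derivations of Proposition~\ref{prop:S_types} be transferred into membership in $\sep_\pole$. Everything else — the anti-monotonicity of orthogonality and the one-step reduction for the encoded application — is immediate.
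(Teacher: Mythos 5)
Your proposal is correct and follows exactly the paper's route: the paper likewise dismisses upward closure and modus ponens as routine and obtains the combinator condition by combining the typing derivations of the $PS_i$ (Proposition~\ref{prop:S_types}) with adequacy (Proposition~\ref{a:p:lpar:adequacy}). You merely spell out the details the paper leaves implicit — in particular the identification of each $\ps i$ with the ground falsity value of its formula and the anti-reduction argument for the encoded application — and these are carried out correctly.
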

\begin{proof}
Conditions (1) and (3) are trivial.
Condition (2) follows from the previous proposition and the adequacy lemma for the realizability interpretation of  $\Lpar$ (\Cref{a:p:lpar:adequacy}).
\end{proof}
\end{example}
\subsection{Internal logic}
From the combinators, we directly get that:
\begin{proposition}[Combinators] For all $a,b,c \in \A$, the following holds: \noem
\begin{multicols}{2}
\begin{enumerate}
   \coqitem[S1_p] $ (a \parr a) \vdash_\sep a            $ 
  \coqitem[S2_p] $a \vdash_\sep (a \parr b)             $  
  \coqitem[S3_p]  $(a \parr b) \vdash_\sep (b \parr a)  $ 
  \coqitem[S4_p]  $(a \arp b) \vdash_\sep (c \parr a) \arp (c \parr b)$ 
  \coqitem[S5_p]  $ a \parr  (b\parr c) \vdash_\sep (a \parr b)  \parr c$ \\
  
\end{enumerate}
\end{multicols}
\end{proposition}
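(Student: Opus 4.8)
The plan is to obtain all five entailments as immediate instances of the two defining properties of a disjunctive separator that are \emph{not} the modus ponens: namely that $\sep$ is upward closed and that it contains the combinators $\ps1,\dots,\ps5$. Recall that $a\vdash_\sep b$ unfolds to $a\arp b=\neg a\parr b\in\sep$, and that each $\ps{i}$ is defined as an infimum $\minf$ ranging over \emph{every} instance of a fixed ``principal type''. Since an infimum is below each member of the family over which it is taken, for any choice of parameters the combinator $\ps{i}$ lies below the corresponding instance of that principal type; as $\ps{i}\in\sep$ and $\sep$ is $\leq$-upward closed, this instance belongs to $\sep$.

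Concretely, I would proceed case by case. For (1): $\ps1=\minf_{x\in\A}\bigl[(x\parr x)\arp x\bigr]\leq (a\parr a)\arp a$ and $\ps1\in\sep$, hence $(a\parr a)\arp a\in\sep$, which is exactly $(a\parr a)\vdash_\sep a$. Items (2), (3) and (5) are handled identically, reading off from $\ps2\leq a\arp(a\parr b)$, $\ps3\leq (a\parr b)\arp(b\parr a)$ and $\ps5\leq \bigl(a\parr(b\parr c)\bigr)\arp\bigl((a\parr b)\parr c\bigr)$. For (4) one uses $\ps4\leq (a\arp b)\arp\bigl((c\parr a)\arp(c\parr b)\bigr)$; the only thing to observe is that, with $\arp$ read right-associatively, this last element is precisely the arrow whose membership in $\sep$ is by definition the entailment $(a\arp b)\vdash_\sep (c\parr a)\arp(c\parr b)$.

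The step that needs the most care is purely bookkeeping: matching the parsing conventions (associativity of $\arp$, scope of the infima) so that each combinator instance literally coincides with the arrow underlying the stated entailment — after that, upward closure does all the work, so there is no genuine obstacle. As a side remark, running the very same argument with families $(a_i)_{i\in I}$, $(b_i)_{i\in I}$ in place of single elements yields the meet-generalized forms of these statements, in the spirit of the generalized modus ponens discussed above.
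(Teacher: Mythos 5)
Your proof is correct and matches the paper's (implicit) argument exactly: the paper presents this proposition as an immediate consequence of the separator axioms, since each $\ps{i}$ is a meet over all instances of its principal type, hence lies below each particular instance, and upward closure of $\sep$ then yields the corresponding entailment. Your bookkeeping remarks about the right-associativity of $\arp$ in item (4) and the meet-generalization are also consistent with how the paper uses these facts later.
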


 \begin{proposition}[Preorder]
 \label{a:po}
 For any $a,b,c\in\A$, we have:\nomidem
 \begin{enumerate}
  \coqitem[PI] $ a\vdash_\sep a$\hfill(Reflexivity)
  \coqitem[C6_p] if $a\vdash_{\sep} b$ and $b\vdash_{\sep} c$ then $a\vdash_{\sep} c$ \hfill(Transitivity)
\end{enumerate}
\end{proposition}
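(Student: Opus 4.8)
The plan is to derive both statements directly from the separator axioms: that $\ps1,\dots,\ps5\in\sep$ (axiom~(2)), that $\sep$ is upward closed (axiom~(1)) and closed under modus ponens (axiom~(3)), together with the defining equation $x\to y=\neg x\parr y$. In particular I will not need the $\Lpar$-embedding or any combinatorial completeness argument; everything follows from the five combinator instances listed in the previous proposition.

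I would prove transitivity first, since reflexivity follows from it. Assume $a\vdash_\sep b$ and $b\vdash_\sep c$, i.e.\ $a\to b\in\sep$ and $b\to c\in\sep$. The combinator $\ps4$ gives, for all $x,y,z\in\A$, that $(x\to y)\to((z\parr x)\to(z\parr y))\in\sep$ (this is the entailment form of $\ps4$, obtained from $\ps4\in\sep$ by upward closure since $\ps4$ is the meet of all such elements). Instantiating $x:=b$, $y:=c$, $z:=\neg a$ and rewriting $\neg a\parr b=a\to b$ and $\neg a\parr c=a\to c$, we get $(b\to c)\to((a\to b)\to(a\to c))\in\sep$. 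Two applications of modus ponens, first against $b\to c\in\sep$ and then against $a\to b\in\sep$, yield $a\to c\in\sep$, that is $a\vdash_\sep c$.

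For reflexivity I would use $\ps1$ and $\ps2$: from $\ps1\in\sep$ and upward closure, $(a\parr a)\to a\in\sep$, i.e.\ $a\parr a\vdash_\sep a$; from $\ps2\in\sep$ instantiated with its second variable equal to $a$, $a\to(a\parr a)\in\sep$, i.e.\ $a\vdash_\sep a\parr a$. Chaining these with the transitivity just established gives $a\vdash_\sep a$, i.e.\ $a\to a\in\sep$.

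The only step that takes a moment's thought is the instantiation of $\ps4$: one must choose $z:=\neg a$ precisely so that the two occurrences of $z\parr(\cdot)$ collapse, via the definition of the arrow, into the arrows $a\to b$ and $a\to c$; after that the argument is just two modus ponens for transitivity and one further use of transitivity for reflexivity. (Alternatively one could note that $\lambda x.x$ and $\lambda f g x.\,f(g\,x)$ are closed $\lambda$-terms and hence in $\sep$, with interpretations below $a\to a$ and $(b\to c)\to(a\to b)\to(a\to c)$ respectively, but the combinator computation above is self-contained and does not rely on the $\lambda$-calculus embedding.)
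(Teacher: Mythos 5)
Your proof is correct and follows essentially the same route as the paper: transitivity is obtained from $\ps4$ (instantiated with $z:=\neg a$ so that $z\parr(\cdot)$ becomes the arrow) via two applications of modus ponens, and reflexivity is then deduced by chaining $a\vdash_\sep a\parr a$ (from $\ps2$) with $a\parr a\vdash_\sep a$ (from $\ps1$) using the transitivity just proved. Your write-up actually makes explicit the instantiation that the paper's terse proof leaves implicit.
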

\begin{proof}
 We first that (2) holds by applying twice the closure by modus ponens, then we use it with the relation
 $a\vdash_\sep a \parr a$ and $a\parr a \vdash_\sep$ that can be deduced from the combinators $\ps1,\ps2$ to get 1.
\end{proof}
 
\subsubsection*{Negation}
We can relate the primitive negation to the one induced by the underlying implicative structure:
\begin{proposition}[Implicative negation]
 For all $a\in \A$, the following holds:\nomidem
 \begin{multicols}{2}
  \begin{itemize}
   \item[\purl{neg_imp_bot}{1.}~] $\neg a\vdash_\sep a \arp \bot$ 
   \item[\purl{imp_bot_neg}{2.}~] $ a\arp\bot \vdash_\sep \neg a $
  \end{itemize}
 \end{multicols}\label{p:imp_neg} 
 \noem
\end{proposition}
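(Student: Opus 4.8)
The plan is to prove the two entailments separately; each follows quickly from the combinators already known to lie in $\sep$ and from elementary monotonicity in the disjunctive structure. Throughout I will use the definitional unfolding $a \arp \bot = \neg a \parr \bot$ and the fact that $a \vdash_\sep b$ abbreviates $(a \arp b) \in \sep$.

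For the first statement, $\neg a \vdash_\sep a \arp \bot$, I would simply instantiate the second combinator. Item~2 of the Combinators proposition gives $x \vdash_\sep x \parr y$ for all $x, y \in \A$; taking $x := \neg a$ and $y := \bot$ yields $\neg a \vdash_\sep \neg a \parr \bot$, and since $\neg a \parr \bot = a \arp \bot$ this is exactly the claim. For the second statement, $a \arp \bot \vdash_\sep \neg a$, I would argue as follows. Since $\bot$ is the least element, $\bot \leq \neg a$, and by covariance of $\parr$ in its second argument (axiom~2 of disjunctive structures) $\neg a \parr \bot \leq \neg a \parr \neg a$. Now $x \leq y$ always implies $x \vdash_\sep y$: reflexivity gives $\neg x \parr x \in \sep$, covariance gives $\neg x \parr x \leq \neg x \parr y$, and upward closure of $\sep$ gives $\neg x \parr y \in \sep$, i.e.\ $x \vdash_\sep y$. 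Applied here, $\neg a \parr \bot \vdash_\sep \neg a \parr \neg a$. Finally item~1 of the Combinators proposition gives $\neg a \parr \neg a \vdash_\sep \neg a$, so transitivity of $\vdash_\sep$ (Preorder proposition) yields $\neg a \parr \bot = a \arp \bot \vdash_\sep \neg a$.

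There is essentially no real obstacle here: the only point requiring a little care is the auxiliary monotonicity fact $x \leq y \Rightarrow x \vdash_\sep y$, which is however immediate from reflexivity of $\vdash_\sep$ together with upward closure of $\sep$, and the bookkeeping of the unfolding $a \arp \bot = \neg a \parr \bot$. The remainder is a direct appeal to items~1 and~2 of the Combinators proposition and to transitivity of the entailment preorder.
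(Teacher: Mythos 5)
Your proof is correct. Item 1 is exactly the paper's argument: instantiate $\ps2$ at $\neg a$ and $\bot$ and unfold $a\arp\bot=\neg a\parr\bot$. For item 2 you genuinely diverge, and in a good way. Both you and the paper start by establishing $a\arp\bot\vdash_\sep a\arp\neg a$ via the subtyping fact $x\leq y\Rightarrow x\vdash_\sep y$ (your derivation of that fact from reflexivity, covariance of $\parr$ and upward closure is exactly the paper's ``subtyping from the identity''), and both finish by transitivity. The difference is in how the second link $(a\arp\neg a)\vdash_\sep\neg a$ is obtained: you observe that $a\arp\neg a=\neg a\parr\neg a$ and apply the contraction combinator $\ps1$ directly, whereas the paper derives it by two applications of modus ponens from the more general statement $\minf_{a,b}\bigl((a\parr b)\arp a+b\bigr)\in\sep$, i.e.\ it routes through the equivalence between $\parr$ and the impredicative sum type, a result it only proves later. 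Your route is more elementary, uses only the primitive combinators and the preorder structure, and avoids that forward dependency; the paper's route has the side benefit of exercising the $\parr$-versus-$+$ machinery it needs anyway for Proposition~\ref{p:blibla}. Both are sound.
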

\begin{proof}
The first item follows directly from $\ps2$ belongs to the separator,
since $a \arp \bot = \neg a \parr \bot$ and that 
$\neg a \vdash_\sep \neg a \parr \bot$.

For the second item, we use the transitivity with the following hypotheses:
 $$(a \arp \bot) \vdash_\sep a \arp \neg a \qquad\qquad (a \arp \neg a) \vdash_\sep \neg a$$
The statement on the left hand-side is proved by subtyping from the identity ($\minf_{a\in\A} (a \arp a)$, 
which is in $\sep$ as the generalized version of $a\vdash_\sep a$ above).
On the right hand-side, we use twice the modus ponens to prove that
$$(a \arp a) \vdash_\sep (\neg a \arp \neg a) \arp (a \arp \neg a) \arp \neg a$$
The two extra hypotheses are trivially subtypes of the identity again.
This statement follows from this more general property (recall that $a\to a = \neg a \parr a$):
$$\minf_{a,b\in\A} ((a\parr b) \arp a + b)\in\S $$
 that we shall prove thereafter (see \Cref{a:p:par_or}).
\end{proof}

Additionally, we can show that the principle of double negation elimination is valid with respect to any separator:
\begin{proposition}[Double negation]\label{a:p:dne}
 For all $a\in \A$, the following holds:
 \begin{multicols}{2}
  \begin{itemize}
   \item[\purl{dni_entails}{1.}~] $a\vdash_\sep \neg\neg a$ 
   \item[\purl{dne_entails}{2.}~] $\neg \neg a \vdash_\sep a $
  \end{itemize}
 \end{multicols}
\end{proposition}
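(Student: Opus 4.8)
The plan is to derive both entailments from one basic fact: for every $b\in\A$, the element $\neg b\parr b$ lies in $\sep$. Indeed, by definition of the arrow $\neg b\parr b = b\arp b$, and $b\arp b\in\sep$ because it is the generic instance of the identity, available since $b\vdash_\sep b$ by reflexivity (\Cref{a:po}). Thus a form of excluded middle is baked into every disjunctive separator; this is exactly where the intrinsic classicality of disjunctive algebras is used, and it is the only genuinely ``clever'' ingredient — everything else is combinatorial bookkeeping with $\ps3$, $\ps4$ and modus ponens.

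For item~1, I would instantiate the basic fact at $b\defeq\neg a$, obtaining $\neg\neg a\parr\neg a\in\sep$. Since $\ps3\in\sep$ yields $(\neg\neg a\parr\neg a)\vdash_\sep(\neg a\parr\neg\neg a)$, closing $\sep$ under modus ponens gives $\neg a\parr\neg\neg a\in\sep$; but $\neg a\parr\neg\neg a$ is by definition $a\arp\neg\neg a$, \emph{i.e.} $a\vdash_\sep\neg\neg a$.

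For item~2, I would first unfold $\neg\neg a\vdash_\sep a$ to the statement $\neg\neg\neg a\parr a\in\sep$. The basic fact at $b\defeq a$ gives $\neg a\parr a\in\sep$, and item~1 applied to $\neg a$ gives $\neg a\vdash_\sep\neg\neg\neg a$. It then suffices to transport this entailment through the \emph{left} argument of $\parr$: from $\ps4$ one gets that $c\vdash_\sep c'$ implies $b\parr c\vdash_\sep b\parr c'$, and conjugating with $\ps3$ (commutation of $\parr$) and transitivity of $\vdash_\sep$ one obtains the left-hand version $c\parr b\vdash_\sep c'\parr b$. Applying this with $c\defeq\neg a$, $c'\defeq\neg\neg\neg a$, $b\defeq a$ turns $\neg a\parr a\in\sep$ into $\neg\neg\neg a\parr a\in\sep$ after one more modus ponens, which is exactly $\neg\neg a\vdash_\sep a$.

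The only step needing a small separate argument is the monotonicity of $\parr$ in each argument with respect to $\vdash_\sep$, but this is routine and essentially recorded already by $\ps4$, so I expect no real obstacle — the main thing to get right is merely tracking which component of each $\parr$ one is acting on. (Alternatively one could route item~1 through $\neg a\vdash_\sep a\arp\bot$ and $a\arp\bot\vdash_\sep\neg a$ from \Cref{p:imp_neg} together with the corresponding implicative-algebra facts about $\neg\neg$, but the excluded-middle argument above is more direct and self-contained.)
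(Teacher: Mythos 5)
Your proof is correct. For item~1 you do exactly what the paper does: $a\arp\neg\neg a=\neg a\parr\neg\neg a$ is, up to the commutation $\ps3$, the identity instance $\neg a\arp\neg a=\neg\neg a\parr\neg a$, and one modus ponens concludes. For item~2 you take a genuinely different route. The paper first transports the goal through the mutual entailment between $\neg a$ and $a\arp\bot$ (its proposition on implicative negation) so as to reduce $\neg\neg a\vdash_\sep a$ to $\minf_{a\in\A}\bigl(((\neg a)\arp\bot)\arp a\bigr)\in\sep$, and then runs a consequentia-mirabilis-style argument built around $(\neg a\arp a)\arp a$. You instead stay entirely inside the primitive $\parr/\neg$ vocabulary: the identity $a\arp a=\neg a\parr a$ is literally an excluded-middle instance lying in $\sep$; you push the entailment $\neg a\vdash_\sep\neg\neg\neg a$ (item~1 at $\neg a$) through the left argument of $\parr$ using the monotonicity derived from $\ps4$ conjugated with $\ps3$, and finish with one modus ponens to land on $\neg\neg\neg a\parr a=\neg\neg a\arp a$. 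Both arguments rest on the same underlying classicality --- the identity instance being an excluded middle --- but yours avoids the detour through $a\arp\bot$ and is the more self-contained of the two, whereas the paper's version exhibits $\neg\neg a\vdash_\sep a$ via the Peirce-like principle it reuses later (e.g.\ for $\comb{cc}^\A\in\sep$). The one lemma you leave implicit, monotonicity of $\parr$ in each argument with respect to $\vdash_\sep$, is indeed immediate from $\ps4$, $\ps3$, upward closure and modus ponens, so there is no gap.
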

\begin{proof}
The first item is easy since for all $a\in\A$, we have $a\arp \neg\neg a = (\neg a)\parr \neg\neg a \seq \neg\neg a \parr \neg a = \neg  a \arp \neg a$.
As for the second item, we use Lemma \ref{a:po} and Proposition \ref{p:imp_neg} to it reduce to the statement:
$$\minf_{a\in\A} ((\neg a) \arp \bot) \arp a~~\in\sep$$
We use again  Lemma \ref{a:po} to prove it, by showing that:
$$\minf_{a\in\A} ((\neg a) \arp \bot) \arp  (\neg a) \arp a~~\in\sep
\qquad\qquad
\minf_{a\in\A} ((\neg a) \arp a) \arp  (\neg a) \arp a~~\in\sep
$$
where the statement on the left hand-side from by subtyping from the identity. For the one on the right hand-side, 
we use the same trick as in the last proof in order to reduce it to:
$$\minf_{a\in\A}(a \arp \neg a) \arp ( a \arp a) \arp (\neg a \arp  a) \arp a)~\in\sep$$
\end{proof}

\subsubsection*{Sum type}
As in implicative structures, we can define the sum type by:
$$a+b \defeq \minf_{c\in\A} ((a\arp c) \arp (b\arp c) \arp c) \eqno (\forall a,b\in\A)$$
We can prove that the disjunction and this sum type are equivalent from the point of view of the separator:
\begin{proposition}[Implicative sum type] \label{a:p:par_or}
 For all $a,b\in \A$, the following holds:
 \begin{multicols}{2}
  \begin{itemize}
   \item[\purl{par_or}{1.}~] $a\parr b \vdash_\sep a + b$ 
   \item[\purl{or_par}{2.}~] $a +  b \vdash_\sep a \parr b$
  \end{itemize}
 \end{multicols}\label{p:imp_sum}
\end{proposition}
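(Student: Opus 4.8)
The statement to prove is Proposition~\ref{a:p:par_or} (labelled \texttt{p:imp\_sum}): for all $a,b\in\A$, both $a\parr b \vdash_\sep a+b$ and $a+b\vdash_\sep a\parr b$ hold, where $a+b \defeq \minf_{c\in\A}((a\arp c)\arp(b\arp c)\arp c)$ and $\arp$ is the induced implication $x\arp y = \neg x\parr y$. My plan is to work entirely inside the separator calculus, using only the combinators $\ps1,\dots,\ps5$, upward closure, modus ponens, and the generalized modus ponens of Remark~\ref{lm:mod_pon_inf}, which lets me prove parametrized statements $\minf_{a,b}(\dots)\in\sep$ rather than chasing individual elements. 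The key bookkeeping fact I will use repeatedly is that $x\arp y = \neg x\parr y$, together with the associativity/commutation combinators $\ps3,\ps5$ (and their consequences) to freely rearrange iterated $\parr$'s modulo $\seq$.

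\textbf{Direction 1 ($a\parr b\vdash_\sep a+b$).} Unfolding, I need $\minf_{a,b\in\A}\big((a\parr b)\arp \minf_{c}((a\arp c)\arp(b\arp c)\arp c)\big)\in\sep$. By the adjunction/currying available in any implicative structure (and the fact that $\minf$ commutes with $\arp$ on the right, axiom~2 of implicative structures), it suffices to show $\minf_{a,b,c}\big((a\parr b)\arp(a\arp c)\arp(b\arp c)\arp c\big)\in\sep$. Spelling out the arrows, the body is a big $\parr$-expression; the informal reading is ``given $a\parr b$, and continuations $a\arp c$ and $b\arp c$, produce $c$'', which is exactly the computational content of $\ps4$-style distribution of a function over a $\parr$ composed with $\ps1$ (contraction $c\parr c\vdash c$). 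Concretely I would: (i) use $\ps4$ with appropriate instances to turn $a\parr b$ together with $a\arp c$ into $c\parr b$ (distributing $(\_\arp c)$ over the left component), wait—more carefully, $\ps4$ gives $(a\arp c)\arp(d\parr a)\arp(d\parr c)$, so taking $d:=b$... that distributes on the wrong side; I would instead combine $\ps3$ to swap to $b\parr a$, apply $\ps4$ with the $b\arp c$ continuation to get $c\parr a$ wait that's still not homogeneous. The cleanest route: apply $\ps4$ twice to get from $a\parr b$ first $c\parr b$ (using $a\arp c$) then $c\parr c$ (using $b\arp c$, after a $\ps3$ swap), then $\ps1$ collapses $c\parr c$ to $c$. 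Each of these steps is an instance of modus ponens against a combinator, so the whole composite term lies in $\sep$; chaining via transitivity (Proposition~\ref{a:po}) gives the result.

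\textbf{Direction 2 ($a+b\vdash_\sep a\parr b$).} Here I need $\minf_{a,b}\big(\minf_c((a\arp c)\arp(b\arp c)\arp c)\big)\arp(a\parr b)\in\sep$. Since the hypothesis is a meet over all $c$, I instantiate $c:=a\parr b$, so it suffices to show $(a\arp(a\parr b))\arp(b\arp(a\parr b))\arp(a\parr b)\vdash_\sep a\parr b$... no: I need to \emph{feed} the two arguments $a\arp(a\parr b)$ and $b\arp(a\parr b)$ to the hypothesis and read off $a\parr b$. So the obligation reduces (by upward closure, since $\minf_c(\dots)\leq$ its $c:=a\parr b$ instance) to exhibiting realizers of $a\arp(a\parr b)$ and of $b\arp(a\parr b)$ in $\sep$ and applying modus ponens. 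But $a\arp(a\parr b)=\neg a\parr a\parr b$: the realizer is $\ps2$ composed with associativity — indeed $a\vdash_\sep a\parr b$ is exactly $\ps2$, and lifting along $\arp$ is fine. Similarly $b\arp(a\parr b) = \neg b\parr a\parr b$; using $\ps2$ to get $b\vdash_\sep b\parr a$ and then $\ps3$ to swap to $a\parr b$, composed with transitivity, gives $b\vdash_\sep a\parr b$, hence $b\arp(a\parr b)\in\sep$. Two applications of modus ponens against the ($c:=a\parr b$)-instance of the hypothesis then yield $a\parr b$.

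\textbf{Main obstacle.} The only real difficulty is the associativity/commutativity bookkeeping: the combinators $\ps1,\dots,\ps5$ are stated for specific bracketings of $\parr$, and every step that rearranges an iterated $\parr$ or that applies a combinator ``under'' an arrow must be justified by an explicit chain of $\ps3$/$\ps5$ instances plus the generalized modus ponens. I expect the bulk of the written proof to be a sequence of ``by $\ps i$ and transitivity'' micro-steps; the conceptual content (Direction~1 $=$ distribute-then-contract, Direction~2 $=$ instantiate-at-$a\parr b$-then-inject) is short. I would also note that the generalized versions of $\ps1$--$\ps5$ (Remark~\ref{lm:mod_pon_inf}) are exactly what makes the parametrized statements $\minf_{a,b,c}(\dots)\in\sep$ legitimate, so I would invoke that remark explicitly at the point where I pass from the pointwise combinators to their meet-indexed forms.
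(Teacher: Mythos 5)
Your proposal is correct and follows essentially the same route as the paper's proof: both directions are proved as meet-indexed statements via the generalized modus ponens, Direction~1 by distributing the two continuations over the $\parr$ with $\ps4$ (modulo $\ps3$-swaps) and contracting $c\parr c$ with $\ps1$, and Direction~2 by instantiating $c:=a\parr b$ (an upward-closure/subtyping step from the identity) and feeding the two injections obtained from $\ps2$ and $\ps2$-then-$\ps3$. The only differences are immaterial bookkeeping choices, e.g.\ the paper applies $b\arp c$ first so that the first $\ps4$ step needs no swap.
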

\begin{proof}
 We prove in both cases a slightly more general statement, namely that the meet over all $a,b$ or the corresponding implication belongs to the separator.
 For the first item, we have:
 $$\minf_{a,b\in\A} (a\parr b) \arp a + b = \minf_{a,b,c\in\A} (a\parr b) \arp (a\arp c) \arp (b\arp c) \arp c$$
 Swapping the order of the arguments, we prove that $\minf_{a,b,c\in\A} (b\arp c) \arp (a\parr b) \arp (a\arp c) \arp  c\in\sep$.
 For this, we use Lemma \ref{a:po} and the fact that:
 $$\minf_{a,b,c\in\A} (b\arp c) \arp (a\parr b) \arp (a\parr c)~\in\sep \qquad\qquad\qquad \minf_{a,c\in\A} (a\parr c) \arp (a\arp c) \arp  c~ \in\sep$$
 The left hand-side statement is proved using $\ps4$, while on the right hand-side we prove it from the fact that:
 $$\minf_{a,c\in\A}  (a\arp c) \arp (a\parr c) \arp c\parr c ~ \in\sep$$
 which is a subtype of $\ps4$, by using Lemma \ref{a:po} again with $\ps1$ and by manipulation on the order of the argument.
 
 The second item is easier to prove, using Lemma \ref{a:po} again and the fact that:
 $$ \minf_{a,b\in\A} a + b \arp  (a\arp (a\parr b)) \arp (b\arp (a\parr b)) \arp (a\parr b) \in \sep$$
 which is a subtype of $\comb{I}^\A$ (which belongs to $\sep$).  
 The other part, which is to prove that:
 $$ \minf_{a,b\in\A} ( (a\arp (a\parr b)) \arp (b\arp (a\parr b)) \arp (a\parr b))\arp (a\parr b)~~ \in \sep$$
 follows from Lemma \ref{lm:mod_pon_inf} and the fact that $\minf_{a,b\in\A}  (a\arp (a\parr b))$ and $\minf_{a,b\in\A}  (b\arp (a\parr b))$ are both in the separator.
 
\end{proof}
\subsection{Induced implicative algebras}
\label{a:s:da_ia}
\begin{proposition}[Combinator $\comb{k}^\A$][psep_K]
 We have $\comb{k}^\A\in\sep$.
\end{proposition}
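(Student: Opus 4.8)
The plan is to obtain $\comb{k}^\A$ from the disjunctive combinators $\ps2$ and $\ps3$, reasoning internally through the separator. First I would recall that in any implicative structure the interpretation of Hilbert's combinator $\comb{k}$ coincides with the interpretation of its principal type, so that in the implicative structure induced by $\A$ (in which the arrow is $a\arp b=\neg a\parr b$) we have $\comb{k}^\A = \minf_{a,b\in\A}\bigl(a\arp(b\arp a)\bigr)$. Hence it suffices to prove that this meet lies in $\sep$.

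Next, I would argue pointwise, for a fixed pair $a,b\in\A$. Since $\ps2\in\sep$, upward closure applied to $\ps2\leq a\arp(a\parr\neg b)$ gives the entailment $a\vdash_\sep a\parr\neg b$. Since $\ps3\in\sep$, in the same way we get $a\parr\neg b\vdash_\sep \neg b\parr a$. Composing these two entailments via the transitivity of $\vdash_\sep$ (\Cref{a:po}) yields $a\vdash_\sep \neg b\parr a$; but $\neg b\parr a$ is by definition $b\arp a$, so $a\arp(b\arp a)\in\sep$, for every choice of $a$ and $b$.

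Finally, I would pass from this family of pointwise facts to the meet. The derivation above uses only upward closure, the modus ponens rule, and the combinators $\ps2,\ps3$ — each of which is available in the meet-indexed form of \Cref{lm:mod_pon_inf} — and transitivity of $\vdash_\sep$ is itself derived from modus ponens; since the argument is uniform in $a,b$, it generalizes to meets, giving $\minf_{a,b\in\A}\bigl(a\arp(b\arp a)\bigr)=\comb{k}^\A\in\sep$. I do not expect any serious obstacle here: the only points requiring care are checking that the pointwise derivation is genuinely uniform in $a$ and $b$ (no hidden case split) so that the meet version goes through, and keeping the precedence convention straight ($\parr$ binds tighter than $\arp$) when identifying $b\arp a$ with $\neg b\parr a$; everything else is routine manipulation of the separator axioms.
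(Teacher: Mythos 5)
Your proof is correct and follows essentially the same route as the paper's: the paper reduces the claim by upward closure to $\minf_{a,b\in\A}(a\arp(b\parr a))\in\sep$, which is exactly the $\ps2$-then-$\ps3$ chain (generalized to meets) that you spell out. The only difference is that you carry the instantiation $b:=\neg b$ through from the start and supply the details of the meet-generalization that the paper leaves implicit.
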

\begin{proof}
 This directly follows by upwards closure from the fact that $\minf_{a,b\in\A} (a\arp (b\parr a)) \in\sep$.
\end{proof}

\begin{proposition}[Combinator $\comb{s}^\A$][psep_S]
 For any disjunctive algebra $(\A,\leq,\parr,\neg,\sep)$, we have $\comb{s}^\A\in\sep$.
 \end{proposition}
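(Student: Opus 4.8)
Recall that in the implicative structure induced by $\A$ the arrow is $a\to b=\neg a\parr b$, and that $\comb{s}^\A=\minf_{a,b,c\in\A}\big((a\to b\to c)\to(a\to b)\to a\to c\big)$. The plan is to produce a Hilbert-style derivation of this schema from the separator axioms $\ps1,\dots,\ps5$ and then to transport it to the level of meets. This transport is legitimate: each $\ps i$ is, by definition, the meet over its own parameters of an instance schema and lies in $\sep$, so that, by the generalized modus ponens of \Cref{lm:mod_pon_inf} together with the fact that the transitivity of $\vdash_\sep$ (\Cref{a:po}) and upward closure are compatible with arbitrary meets, any derivation carried out uniformly in $a,b,c$ from instances of $\ps1,\dots,\ps5$ yields membership in $\sep$ of the corresponding meet over $a,b,c$. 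Hence it suffices to derive, for arbitrary $a,b,c$, the schema $(a\to b\to c)\to(a\to b)\to a\to c$ from the $\ps i$.

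Uncurrying is not available here (it would already presuppose $\comb{s}$), so I would argue combinatorially, using the classical fact that $\comb{s}$ is definable from Curry's basis $\{\comb{b},\comb{c},\comb{w},\comb{k}\}$ (composition, permutation, contraction and cancellation). Consequently the $\comb{s}$-schema is, via modus ponens alone, a consequence of the four schemas $(b\to c)\to(a\to b)\to a\to c$, $(a\to b\to c)\to(b\to a\to c)$, $(a\to a\to b)\to(a\to b)$ and $a\to b\to a$ associated with these combinators. The last one is $\comb{k}^\A$, already shown to belong to $\sep$; by the transport above it remains only to place the other three schemas in $\sep$.

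The composition schema $(b\to c)\to(a\to b)\to a\to c$ is nothing but $\ps4$ read through the equation $x\to y=\neg x\parr y$: after unfolding, its meet over $a,b,c$ is the meet of a subfamily of the family defining $\ps4$ (one of the three parameters being restricted to range over negations), hence lies above $\ps4\in\sep$ and so is in $\sep$ by upward closure. The permutation schema $(a\to b\to c)\to(b\to a\to c)$ and the contraction schema $(a\to a\to b)\to(a\to b)$ are obtained by rewriting $a\to b\to c=\neg a\parr(\neg b\parr c)$ and $a\to a\to b=\neg a\parr(\neg a\parr b)$ and shuffling the nested occurrences of $\parr$: one uses the associativity axiom $\ps5$ and its converse (derivable from $\ps3$ and $\ps5$), the commutativity axiom $\ps3$, the one-sided monotonicity of $\parr$ packaged by $\ps4$, and, for contraction, the idempotence $\ps1$ in the form $\neg a\parr\neg a\vdash_\sep\neg a$, the intermediate entailments being chained with the transitivity of $\vdash_\sep$ (\Cref{a:po}). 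Combining the three combinator schemas as above then yields $\comb{s}^\A\in\sep$.

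The only substantial part is the last paragraph: rearranging nested occurrences of $\parr$ through the commutativity and associativity axioms while threading each monotonicity step through $\ps4$ is a purely mechanical but error-prone bookkeeping, which is exactly the reason this statement is among those we have machine-checked.
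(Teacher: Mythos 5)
Your proposal is correct and follows essentially the same route as the paper: the paper's proof is precisely an explicit Hilbert-style derivation of the $\comb{s}$-schema from exchange (your $\comb{c}$, obtained there as an ``ad-hoc lemma'' from $\ps3$, $\ps4$, $\ps5$), composition (your $\comb{b}$, read off from $\ps4$ by instantiating one parameter at a negation, used twice via generalized transitivity), and contraction (your $\comb{w}$, derived from $\ps5$, $\ps1$ and a variant of $\ps4$), with the same appeal to the generalized modus ponens of \Cref{lm:mod_pon_inf} to lift the schema-level derivation to the meet over $a,b,c$. The only cosmetic difference is that you invoke the general BCKW-to-S derivability fact where the paper spells out one particular instance of that derivation.
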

\begin{proof}
See \Cref{a:da}.
 We make several applications of Lemmas \ref{a:po} and \ref{lm:mod_pon_inf} consecutively. 
 We prove that:
 $$\minf_{a,b,c\in\A} ((a\to b \to c)\to (a\to b) \to a \to c)                            ~\in\sep   $$
 is implied by Lemma \ref{a:po} and the following facts:
 \begin{enumerate}
  \item $
\minf_{a,b,c\in\A} ((a\to b \to c)\to (b\to a \to c))                             ~\in\sep$
\item $\minf_{a,b,c\in\A} ((b\to a \to c)\to (a\to b) \to a \to c)                             ~\in\sep$
\end{enumerate}
The first statement is an ad-hoc lemma, while the second is proved by generalized transitivity (Lemma \ref{lm:mod_pon_inf}),
using a subtype of $\ps4$ as hypothesis, from:
$$
\minf_{a,b,c\in\A} ((a\to b)\to (a\to a \to c))\to (a\to b) \to a \to c    \in\sep
$$
The latter is proved, using again generalized transitivity with
 a subtype of $\ps4$ as premise, from:
$$
\minf_{a,b,c\in\A} (a\to a \to c) \to (a \to c) \in\sep
$$
This is proved using again Lemmas \ref{a:po} and \ref{lm:mod_pon_inf} with $\ps5$ and a variant of $\ps4$.
\end{proof}

\begin{proposition}[Combinator $\comb{cc}^\A$][psep_cc]
 We have $\comb{cc}^\A\in\sep$.
\end{proposition}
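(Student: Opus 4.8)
\begin{myproof}
The plan is to obtain Peirce's law from the double negation elimination that is already available in any disjunctive algebra (Proposition~\ref{a:p:dne}), using that $\comb{k}^\A,\comb{s}^\A\in\sep$ — so that the interpretation of every closed $\lambda$-term belongs to $\sep$ — together with the generalized modus ponens of Remark~\ref{lm:mod_pon_inf}.

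First I would cut the element down to size. Since $\bot\leq b$ for every $b$, three successive applications of the first axiom of implicative structures give $((a\to\bot)\to a)\to a\ \leq\ ((a\to b)\to a)\to a$ for all $a,b$, whence
\[
\minf_{a\in\A}\bigl(((a\to\bot)\to a)\to a\bigr)\ \leq\ \minf_{a,b\in\A}\bigl(((a\to b)\to a)\to a\bigr)\ =\ \comb{cc}^\A .
\]
By upward closure of $\sep$ it therefore suffices to show $\minf_{a\in\A}\bigl(((a\to\bot)\to a)\to a\bigr)\in\sep$, i.e.\ that ``consequentia mirabilis'' holds uniformly in $a$. One should keep in mind that $a\to\bot$ plays the role of the implicative negation of $a$ and, by Proposition~\ref{p:imp_neg}, is $\seq$-equivalent to the primitive negation $\neg a$.

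Next I would split the goal, uniformly in $a$, into the two entailments
\[
\text{(A)}\quad\bigl((a\to\bot)\to a\bigr)\vdash_\sep\bigl((a\to\bot)\to\bot\bigr)
\qquad\text{and}\qquad
\text{(B)}\quad\bigl((a\to\bot)\to\bot\bigr)\vdash_\sep a ,
\]
and then compose them. For (A), the closed $\lambda$-term $\lambda f.\lambda k.\,k\,(f\,k)$ admits $((a\to\bot)\to a)\to((a\to\bot)\to\bot)$ as a (quantified) type, hence its interpretation lies below $\minf_{a}\bigl(((a\to\bot)\to a)\to((a\to\bot)\to\bot)\bigr)$; since $\comb{k}^\A,\comb{s}^\A\in\sep$ and $\sep$ is closed under application, that interpretation is in $\sep$, and so is the displayed meet by upward closure. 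For (B), I would chain, using Proposition~\ref{p:imp_neg} and transitivity of $\vdash_\sep$ (Lemma~\ref{a:po}): from $\neg a\vdash_\sep a\to\bot$ (first item of Proposition~\ref{p:imp_neg}), composing on the left — an instance of the composition combinator fed through Remark~\ref{lm:mod_pon_inf} — gives $\bigl((a\to\bot)\to\bot\bigr)\vdash_\sep(\neg a\to\bot)$; then $(\neg a\to\bot)\vdash_\sep\neg\neg a$ (second item of Proposition~\ref{p:imp_neg}, instantiated at $\neg a$); then $\neg\neg a\vdash_\sep a$ (second item of Proposition~\ref{a:p:dne}). Composing (A) with (B) via the composition combinator and two uses of generalized modus ponens yields $\minf_{a}\bigl(((a\to\bot)\to a)\to a\bigr)\in\sep$, and the first step then gives $\comb{cc}^\A\in\sep$.

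The main obstacle is not logical but notational: the double negation lemmas are stated for the primitive $\neg$, whereas $\comb{cc}^\A$ literally involves $a\to\bot$; reconciling the two is exactly the purpose of the two $\seq$-conversions of Proposition~\ref{p:imp_neg}, and once that is in place everything reduces to routine combinatory manipulations of the same flavour as the proof that $\comb{s}^\A\in\sep$.
\end{myproof}
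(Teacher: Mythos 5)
Your proof is correct, but it follows a noticeably different decomposition from the paper's. The paper keeps the second quantified variable $b$ throughout: it reduces $\minf_{a,b}(((a\to b)\to a)\to a)\in\sep$, via generalized modus ponens, to an instance of $\ps4$ (namely $\minf_{a,b}\bigl(((a\to b)\to a)\to(\neg a \to a\to b)\to\neg a\to a\bigr)$, reading $\neg a\to{-}$ as $\neg\neg a\parr{-}$) together with $\minf_{a,b}(\neg a\to a\to b)\in\sep$ (ex falso for the primitive negation, itself reduced to a subtype of the identity through the $\neg a\dashv\vdash_\sep a\to\bot$ correspondence) and $\minf_a((\neg a\to a)\to a)\in\sep$ (the classical step, proved as in Proposition~\ref{a:p:dne}); everything is carried out by hand on the $\ps{i}$ combinators. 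You instead specialize $b:=\bot$ up front by monotonicity and upward closure, reducing the goal to consequentia mirabilis $\minf_a(((a\to\bot)\to a)\to a)$, and then discharge the two halves using a typed closed $\lambda$-term for (A) and the chain $a\to\bot\dashv\vdash_\sep\neg a$, $\neg\neg a\vdash_\sep a$ for (B). Both arguments ultimately rest on the same two ingredients --- the $\seq$-equivalence of $\neg a$ and $a\to\bot$, and double-negation elimination for the primitive negation --- and your appeal to ``every closed $\lambda$-term lies in $\sep$'' is legitimate and non-circular at this point, since it needs only $\comb{k}^\A,\comb{s}^\A\in\sep$ (already established), closure under application, upward closure, and $\beta$-soundness. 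What your route buys is less combinator bookkeeping, at the price of leaning on the $\lambda$-term machinery; the paper's route stays closer to the raw $\ps{i}$ axioms, which is what the Coq formalization actually manipulates.
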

\begin{proof}
 We make several applications of Lemmas \ref{a:po} and \ref{lm:mod_pon_inf} consecutively. 
 We prove that:
 $$\minf_{a,b\in\A} ((a\to b) \to a)\to a                             ~\in\sep   $$
 is implied by generalized modus ponens (Lemma \ref{a:po}) and:
$$
\begin{array}{c}\minf_{a,b\in\A} ((a\to b) \to a)\to (\neg a \arp a \arp b) \arp \neg a \arp a                             ~\in\sep   \\[0.5em]

\minf_{a,b\in\A} ((\neg a \arp a \arp b) \arp \neg a \arp a)\arp a                           ~\in\sep   
\end{array}\leqno\begin{array}{l}\\[0.5em]\text{and}\qquad\end{array}
$$
The statement above is  a subtype of $\ps4$,
while the other is proved, by Lemma \ref{a:po}, from:
$$
\begin{array}{c}
\minf_{a,b\in\A} ((\neg a \arp a \arp b)\arp \neg a \arp a) \arp \neg a \arp a                            ~\in\sep   \\[0.5em]
\minf_{a\in\A} ((\neg a ) \arp a ) \arp a                           ~\in\sep   
\end{array}\leqno\begin{array}{l}\\[0.5em]\text{and}\qquad\end{array}
$$
The statement below is proved as in \Cref{p:dne}, while the statement above
is proved by a variant of the modus ponens and:
$$
\minf_{a,b\in\A} (\neg a \arp a \arp b)         ~\in\sep 
$$
We conclude by proving this statement using the connections between $\neg a$ and $a \to \bot$, reducing the latter to:
$$\minf_{a,b\in\A} (a \arp\bot) \arp a \arp b         ~\in\sep $$
which is a subtype of the identity.
\end{proof}

As a consequence, we get the expected theorem:
\begin{theorem}[][PA_IA]\label{a:pa_ia}
 Any disjunctive algebra is a classical implicative algebra.
\end{theorem}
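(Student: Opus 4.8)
The plan is to establish Theorem~\ref{a:pa_ia} as a direct corollary of the definitions of implicative and disjunctive separators together with the three combinator lemmas just proved. Recall that an implicative algebra is an implicative structure equipped with a separator $\sep$ satisfying upward closure, containment of $\comb{k}^\A$ and $\comb{s}^\A$, and closure under modus ponens, and that it is \emph{classical} exactly when $\comb{cc}^\A \in \sep$. By Proposition~\coqlink[PS_IS]{\textsc{PS\_IS}}, the disjunctive structure $(\A,\myleq,\parr,\neg)$ already induces an implicative structure $(\A,\myleq,\arp)$ via $a\arp b \defeq \neg a \parr b$, so the only remaining task is to verify that a disjunctive separator $\sep$ is also an implicative separator for this induced structure.

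\textbf{First} I would observe that two of the three separator conditions are literally identical on both sides: condition~(1) (upward closure with respect to $\leq$) and condition~(3) (closure under modus ponens, i.e. $a\arp b\in\sep$ and $a\in\sep$ imply $b\in\sep$) appear verbatim in both \Cref{def:dis_struct}'s separator definition and in the implicative separator definition, since the arrow of the induced implicative structure is precisely the $\arp$ used in the disjunctive separator axioms. \textbf{Then} the content of the proof is condition~(2): we must show $\comb{k}^\A\in\sep$, $\comb{s}^\A\in\sep$, and (for classicality) $\comb{cc}^\A\in\sep$. But these are exactly the statements of Propositions~\coqlink[psep_K]{\textsc{psep\_K}}, \coqlink[psep_S]{\textsc{psep\_S}}, and \coqlink[psep_cc]{\textsc{psep\_cc}}, which have already been established (their proofs reduce, via the generalized modus ponens of \Cref{lm:mod_pon_inf} and the preorder lemma \Cref{a:po}, to subtyping facts about the combinators $\ps1,\dots,\ps5$ that are assumed to be in $\sep$). \textbf{Finally} I would note that consistency transfers trivially, since the notion of consistency ($\bot\notin\sep$) is the same in both settings.

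\textbf{The main obstacle} is not in this theorem at all — here the argument is a one-line assembly — but in the three combinator propositions it depends on, and in particular in $\comb{s}^\A\in\sep$ and $\comb{cc}^\A\in\sep$, whose proofs require several nested applications of generalized transitivity against carefully chosen subtypes of $\ps4$ and $\ps5$. Since those are already proved in the excerpt, the proof of Theorem~\ref{a:pa_ia} itself is immediate.

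\begin{proof}
The conditions of upward closure and closure under modus ponens coincide for implicative and disjunctive separators, since by Proposition~\coqlink[PS_IS]{\textsc{PS\_IS}} the arrow of the induced implicative structure is exactly the operation $\arp$ appearing in the disjunctive separator axioms. Moreover, Propositions~\coqlink[psep_K]{\textsc{psep\_K}}, \coqlink[psep_S]{\textsc{psep\_S}} and \coqlink[psep_cc]{\textsc{psep\_cc}} show that $\comb{k}^\A$, $\comb{s}^\A$ and $\comb{cc}^\A$ belong to the separator of any disjunctive algebra. Hence $\sep$ satisfies all the conditions of a classical implicative separator over $(\A,\myleq,\arp)$, so any disjunctive algebra is a classical implicative algebra.
\end{proof}
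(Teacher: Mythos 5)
Your proof is correct and follows exactly the paper's own argument: upward closure and closure under modus ponens are the same condition in both separator definitions (since the induced arrow is the $\arp$ of the disjunctive axioms), and membership of $\comb{k}^\A$, $\comb{s}^\A$ and $\comb{cc}^\A$ is supplied by the three preceding combinator propositions. The additional remarks on consistency and on where the real work lies are accurate but not needed for the theorem itself.
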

\begin{proof}
 The conditions of upward closure and closure under modus ponens coincide for implicative and disjunctive separators, 
 and the previous propositions show that $\comb{k},\comb{s}$ and $\comb{cc}$ belong to the separator of any disjunctive algebra.
\end{proof}
\begin{corollary}
 If $t$ is a closed $\lambda$-term and $(\A,\leq,\parr,\neg,\sep)$ a disjunctive algebra, then $t^\A\in\sep$.
\end{corollary}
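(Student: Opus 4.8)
The plan is to obtain the corollary as an immediate consequence of \Cref{thm:pa_ia} together with the basic properties of separators in implicative algebras recalled in \Cref{s:imp_struct}. First I would apply \Cref{thm:pa_ia}: the given disjunctive algebra, equipped with the induced arrow $a\to b\defeq\neg a\parr b$, is a classical implicative algebra $(\A,\leq,\to,\sep)$. It therefore remains to recall why, in \emph{any} implicative algebra, the separator contains the interpretation $t^\A$ of every closed $\lambda$-term $t$.

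For this I would isolate two facts about implicative structures from \Cref{s:imp_struct}. On the one hand, axiom~(2) of separators gives $\comb{k}^\A\in\sep$ and $\comb{s}^\A\in\sep$. On the other hand, $\sep$ is closed under application: since $a\leq b\to ab$ holds in every implicative structure, $a\in\sep$ implies $b\to ab\in\sep$ by upward closure, and then $b\in\sep$ implies $ab\in\sep$ by modus ponens (axiom~(3)). Hence $t_0^\A\in\sep$ for every combinatory term $t_0$ built by iterated application from $\comb{k}$ and $\comb{s}$.

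Finally I would invoke combinatorial completeness: for the closed $\lambda$-term $t$ there is such a term $t_0$ with $t_0\to^* t$, whence $t_0^\A\leq t^\A$ by soundness of the interpretation with respect to $\beta$-reduction. Since $t_0^\A\in\sep$, upward closure yields $t^\A\in\sep$, which is the claim.

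I do not expect a genuine obstacle here: once \Cref{thm:pa_ia} is available, the corollary is essentially a restatement of the fact---already established for implicative algebras in \Cref{s:ia}---that every separator of an implicative algebra contains the interpretation of every closed $\lambda$-term, the only mildly non-trivial ingredient being the closure of implicative separators under application.
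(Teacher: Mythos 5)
Your proposal is correct and matches the paper's (essentially implicit) argument: the corollary is obtained by combining Theorem~\ref{thm:pa_ia} with the standard fact, recalled in \Cref{s:imp_struct}, that implicative separators contain $\comb{k}^\A$ and $\comb{s}^\A$, are closed under application, and are upward closed, so that combinatorial completeness together with soundness of the interpretation with respect to $\beta$-reduction yields $t^\A\in\sep$. The only difference is that you spell out the closure-under-application and combinatory-completeness steps that the paper leaves to the reader (and to the cited reference), which is harmless.
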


\newpage

\section{Conjunctive algebras}
\label{a:ta}
The \Ltens calculus corresponds exactly to the restriction of $\L$ to the positive fragment 
induced by the connectives $\tensor,\neg$ and the existential quantifier $\exists$~\cite{Munch09}.
Its syntax is given by:
$$
\begin{array}{c|c}
 \begin{array}{@{\hspace{-0.15cm}}l@{\quad}c@{~}c@{~}l}
  \text{\bf Terms}  & t & ::= & x \mid (t,t) \mid [e] \mid  \mu \alpha . c  \\
\text{\bf Values} & V & ::= & x \mid (V,V) \mid [e]\\ 
\end{array}~&~
\begin{array}{l@{\quad}c@{~}c@{~}l}  
\text{\bf Contexts} & e & ::= & \alpha \mid \mu (x,y).c \mid \mu [\alpha].c \mid \mu x.c\\
\text{\bf Commands} & c & ::= & \cut{t}{e}  
\end{array}
\end{array}
  $$
  
We denote by $\V_0$, $\T_0$, $\E_0$, $\C_0$ for the sets of closed values, terms, contexts and commands.
The syntax is really close to the one of $\Lpar$:
it has the same constructors but on terms, while destructors are now evaluation contexts.
We recall the meanings of the different constructions:
\begin{itemize}
 \item $(t,t)$ are pairs of positive terms;
 \item $\mu( x_1, x_2).c$, which binds the variables $x_1,x_2$, is the dual destructor;
 \item $[e]$ is a constructor for the negation, which allows us to embed a negative context intro a positive term;
 \item $\mu[x].c$, which binds the variable $x$, is the dual destructor;
 \item $\mu \alpha.c$ and  $\mu x.c$ are unchanged.
\end{itemize}
The reduction rules correspond again to the intuition one could have from the syntax of the calculus:
 all destructors and  binders reduce in front of the corresponding values, while pairs of terms
are expanded if needed:
$$
\begin{array}{c|c}
 \begin{array}{rcl}
  \cut{\mu\alpha.c}{e} 		& \to & c[e/\alpha] \\
  \cut{[e]}{\mu[ \alpha].c}	& \to & c[e/\alpha ] \\
  \cut{V}{\mu x.c} 		& \to & c[V/x] \\
  \end{array}
  &
  \begin{array}{rcl}
  \cut{(V,V')}{\mu( x, x').c} 	& \to & c[V/ x,V'/ x'] \\
  \cut{(t,u)}{e} 		& \to & \cut{t}{\mu x.\cut{u}{\mu y.\cut{(x,y)}{e}}} \\
  \end{array}
  \end{array}
$$
where $(t,u)\notin V$ in the last $\beta$-reduction rule.

Finally, we shall present the type system of $\Ltens$.
Once more, we are interested in the second-order formulas defined from the positive connectives:
$$A,B := X \mid A \tensor B \mid \neg A \mid \exists X.A \leqno\quad \textbf{Formulas}$$

We still work with threetwo-sided sequents, where typing contexts are defined as finite lists of bindings between variable and formulas:
$$\Gamma ::= \varepsilon \mid \Gamma,x:A \qquad\qquad\qquad \Delta ::= \varepsilon \mid \Delta,\alpha:A$$
Sequents are again of three kinds, as in the \lmmt-calculus and $\Lpar$:
\begin{itemize}
 \item $\Gamma\vdash t:A \mid \Delta$ for typing terms,
 \item $\Gamma\mid e:A\vdash  \Delta$ for typing contexts,
 \item $c: \Gamma\vdash  \Delta$ for typing commands.
\end{itemize}

The type system, which uses the same three kinds of sequents for terms, contexts, commands than in {\Lpar} type system,
is given in Figure~\ref{fig:Ltens:type}.
\begin{figure}[t]
\myfig{\small
 $$
  \begin{array}{c}
{\infer[\autorule{\textsc{Cut} }]{ \cut{t}{e}:\Gamma \vdash\Delta}{\Gamma \vdash t :A \mid \Delta & \Gamma \mid e:A\vdash \Delta}}  
\qquad
\infer[\!\!\autorule{ax\vdash     }]{\Gamma\mid \alpha : A \vdash \Delta}{(\alpha: A)\in\Delta} 
\qquad 
 \infer[\!\!\autorule{\vdash ax    }]{\Gamma\vdash x : A \mid \Delta}{( x: A)\in\Gamma}  \\[\vsep]
 
 \infer[\!\!\autorule{\mu\,\vdash  }]{\Gamma\mid \mu x.c : A \vdash \Delta}{c:\Gamma\vdash\Delta, x: A}  
 \qquad 
 \infer[\!\!\autorule{\tensor\,\vdash}]{\Gamma\mid \mu( x, x').c: A \tensor B \vdash \Delta}{c:(\Gamma, x:A, x':B\vdash\Delta)} 
 \qquad
  \infer[\!\!\autorule{\neg\,\vdash }]{\Gamma\mid \mu[\alpha].c:\neg A}{c:\Gamma\vdash\Delta,\alpha:A}  
 \\[\vsep]
 
 \infer[\!\!\autorule{\vdash\mu    }]{\Gamma\vdash \mu\alpha.c : A \mid \Delta}{c:\Gamma\vdash\Delta,\alpha:A} 
 \qquad 
 \infer[\!\!\autorule{\vdash\tensor}]{\Gamma\vdash (t,u):A\tensor B\mid \Delta}{\Gamma\vdash t:A \mid \Delta & \Gamma\vdash u:B \mid \Delta}  
 \qquad 
 \infer[\!\!\autorule{\vdash\neg   }]{\Gamma\vdash [e]:\neg A\vdash\Delta}{\Gamma\mid e:A  \vdash \Delta}
 \\[\vsep]
 
 \infer[\!\!\exists\vdash ]{\Gamma\mid e:\exists X. A\vdash\Delta}{\Gamma \mid e:A \vdash \Delta& X\notin{FV(\Gamma,\Delta)}} 
 \qquad
 \quad\infer[\!\!\vdash\exists ]{\Gamma\vdash V:\exists X. A\mid\Delta}{\Gamma\vdash V: A[B/X]\mid\Delta }
 \end{array}
 $$
 }
 \caption{Typing rules for the $\Ltens$-calculus}
 \label{fig:Ltens:type}
\end{figure}

\setCoqFilename{ImpAlg.TensorAlgebras}
\subsection{Embedding of the $\lambda$-calculus}
Guided by the expected definition of the arrow $A \imp B~\defeq~ \neg( A \tensor \neg B)$,
we can follow Munch-Maccagnoni's paper~\cite[Appendix E]{Munch09},
to embed the $\lambda$-calculus into $\Ltens$.

With such a definition, a stack $u\cdot e$ in $A\imp B$ (that is with $u$ a term of type $A$ and $e$ a context of type $B$)
is naturally embedded as a term $(u,[e])$, which is turn into the context
$\mu[\alpha].\cut{(u,[e])}{\alpha}$ which indeed inhabits the ``arrow'' type $\neg (A \tensor \neg B)$.
The rest of the definitions are then direct:
$$\begin{array}{r@{~}c@{~}l}
 \mu( x,[\alpha]).c & \defeq & \mu( x, x').\cut{ x'}{\mu[\alpha].c} \\
 \lambda  x.t	& \defeq & [\mu( x,[\alpha]).\cut{t}{\alpha}]\\
\end{array}
$$
These shorthands allow for the expected typing rules:
\begin{proposition}\label{prop:lmmt_typing_tens}
The following typing rules are admissible:
$$
\infer{\Gamma\vdash \lambda x.t:A\imp B}{\Gamma,x:A\vdash t:B}
\qquad\quad
\infer{\Gamma\mid u\cdot e: A\imp B\vdash\Delta}{\Gamma\vdash u:A\mid\Delta & \Gamma \mid e:B\vdash\Delta} 
\qquad\quad
\infer{\Gamma\vdash t\,u:B\mid\Delta}{\Gamma\vdash t:A\imp B\mid\Delta & \Gamma \vdash u :A\mid\Delta}
$$
\end{proposition}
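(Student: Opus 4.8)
The plan is to prove each of the three admissible rules by unfolding the abbreviations $A\imp B\defeq\neg(A\tensor\neg B)$, $\lambda x.t\defeq[\mu(x,[\alpha]).\cut{t}{\alpha}]$, $u\cdot e\defeq\mu[\alpha].\cut{(u,[e])}{\alpha}$ and $t\,u\defeq\mu\alpha.\cut{t}{u\cdot\alpha}$, and then exhibiting, in each case, an explicit derivation built entirely from the primitive rules of \Cref{fig:Ltens:type}. Since these derivations repeatedly need to enlarge the contexts $\Gamma$ and $\Delta$ with the fresh (co)variables introduced by the encodings, the first preliminary step is to record that \emph{weakening} is admissible on both zones of all three kinds of sequents of $\Ltens$; this is the usual routine induction on typing derivations.

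The second preliminary step is an auxiliary admissible rule for the ``deep'' destructor $\mu(x,[\alpha]).c$, namely that $c:\Gamma,x:A\vdash\Delta,\alpha:B$ entails $\Gamma\mid\mu(x,[\alpha]).c:A\tensor\neg B\vdash\Delta$. This follows by unfolding $\mu(x,[\alpha]).c=\mu(x,x').\cut{x'}{\mu[\alpha].c}$: one applies $(\tensor\vdash)$ with second component $\neg B$, then a $(\textsc{Cut})$ on $\neg B$ using $(\vdash ax)$ on the left and $(\neg\vdash)$ on the right, and closes with the hypothesis $c$ weakened by $x':\neg B$.

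With these in place, each rule of the proposition is a short derivation. For the abstraction rule, from $\Gamma,x:A\vdash t:B$ one weakens the right zone by $\alpha:B$, cuts against the axiom $\Gamma,x:A\mid\alpha:B\vdash\Delta,\alpha:B$ to type $\cut{t}{\alpha}:\Gamma,x:A\vdash\Delta,\alpha:B$, applies the auxiliary rule above to obtain $\Gamma\mid\mu(x,[\alpha]).\cut{t}{\alpha}:A\tensor\neg B\vdash\Delta$, and concludes with $(\vdash\neg)$. For the stack rule, one types the pair $(u,[e]):A\tensor\neg B$ via $(\vdash\tensor)$ and $(\vdash\neg)$ (weakening $u$ and $e$ by the fresh $\alpha$), cuts it against the axiom on $\alpha:A\tensor\neg B$, and wraps the result with $(\neg\vdash)$. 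Finally, for the application rule, from $\Gamma\vdash t:A\imp B\mid\Delta$ and $\Gamma\vdash u:A\mid\Delta$ one weakens both by $\alpha:B$, types $\cut{t}{u\cdot\alpha}:\Gamma\vdash\Delta,\alpha:B$ using the stack rule just obtained together with the axiom on $\alpha:B$, and concludes with $(\vdash\mu)$.

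I do not expect any genuine obstacle here: all the derivations are finite and mechanical, and the only points requiring a little care are the bookkeeping of contexts — in particular choosing the covariables in the encodings of $\lambda x.t$, $u\cdot e$ and $t\,u$ fresh, and using weakening precisely to reconcile the contexts of the given hypotheses with those demanded by the primitive rules — together with the observation that the displayed abstraction rule suppresses the $\Delta$-zone, so one actually proves the slightly more general statement with an arbitrary $\Delta$ on the right and then specialises.
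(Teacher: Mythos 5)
Your proposal is correct and follows essentially the same route as the paper: the paper's proof likewise first derives the auxiliary admissible rule for $\mu(x,[\alpha]).c$ (unfolding it to $\mu(x,x').\cut{x'}{\mu[\alpha].c}$ and combining $(\tensor\vdash)$, a cut on $\neg B$ against the axiom on $x'$, and $(\neg\vdash)$), and then gives the three explicit derivations exactly as you describe, including generalising the abstraction rule to an arbitrary $\Delta$. Your explicit treatment of weakening only makes precise a step the paper leaves implicit in its displayed derivations.
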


\begin{proof}
\renewcommand{\prfcase}[1] {\paragraph*{\normalfont\textbullet~ \textbf{Case}  #1:}}
Each case is directly derivable from $\Ltens$ type system. 
\newcommand{\defrule}{\autorule{\textit{def}}}
We abuse the notations to denote by {\defrule} a rule which simply consists in unfolding the shorthands defining the $\lambda$-terms.
\prfcase{$\mu(x,[\alpha]).c$}
$$\infer[\defrule]{\Gamma\mid\mu(x,[\alpha]).c: A \tensor \neg B\vdash \Delta}{
 	\infer[\autorule{\tensor\,\vdash}]{\Gamma\mid \mu(x,x').\cut{x'}{\mu[\alpha].c}:A \tensor\neg  B\vdash\Delta}{
 	  \infer[\cutrule]{\cut{x'}{\mu[\alpha].c}:(\Gamma,x:A,x':\neg B\vdash\Delta)}{
        \infer[\axrrule]{\Gamma,x:A,x':\neg B\vdash x':\neg B\mid \Delta}{}
        &
        \infer[\mulrule]{\Gamma\mid \mu[\alpha].c:\neg B\vdash\Delta}
 	      {c:(\Gamma,x:A\vdash \Delta,\alpha:B)}
 	  }
 	}
    }	
 $$
 \prfcase{$\lambda x.t$}
 $$\infer[\defrule]{\Gamma\vdash \lambda x.t:A\imp B\mid \Delta}{
    \infer[\negrrule]{\Gamma\vdash  [\mu( x,[\beta]).\cut{t}{\beta}]:\neg(A\tensor \neg B)\mid \Delta}{
	\infer[]{\Gamma\mid \mu(x,[\beta]).\cut{t}{\beta}:A \tensor\neg B\vdash \Delta}{
	  \infer[\cutrule]{\cut{t}{\beta}:(\Gamma,x:A\vdash\beta:B,\Delta)}{
	    \Gamma,x:A\vdash t:B \mid \Delta
	    & 
	    \infer[\axlrule]{\Gamma\mid \beta:B \vdash \Delta,\beta:B}{}
	  }
	}
     }
    }	
 $$
\prfcase{$u\cdot e$} 
 $$
 \infer[\defrule]{\Gamma\mid u\cdot e: A\imp B\vdash\Delta}{
  \infer[\neglrule]{\Gamma \mid \mu[\alpha].\cut{(u,[e])}{\alpha}:\neg(A\tensor \neg B)\vdash \Delta}{
    \infer[\cutrule]{\cut{(u,[e])}{\alpha}:(\Gamma \vdash \Delta,\alpha:A\tensor \neg B)}{
      \infer[\autorule{\vdash \,\tensor}]{\Gamma\vdash (u,[e]):   A \tensor \neg B\mid\Delta}{
	\Gamma \vdash u:A\vdash\Delta
	&
	\infer[\negrrule]{\Gamma\vdash [e]:\neg B\mid\Delta}{\Gamma\mid e:B\vdash\Delta}
	}
      &
      \infer[\axlrule]{\Gamma\mid\alpha:(A\tensor \neg B)\vdash \Delta,\alpha:(A\tensor \neg B)}{}
     } 
   }    
  }
 $$
\prfcase{$t\,u$} 
 $$
 \infer[\defrule]{\Gamma\vdash t\,u:B\mid\Delta}{
  \infer[\murrule]{\Gamma\vdash \mu\alpha.\cut{t}{u\cdot\alpha}:B\mid\Delta}{
   \infer[\cutrule]{\cut{t}{u\cdot\alpha}:(\Gamma\vdash\Delta,\alpha:B)}{
    \Gamma\vdash t:A\imp B\mid \Delta
    &
    \infer{\Gamma\mid u\cdot\alpha: A\imp B\vdash \Delta,\alpha:B}{
      \Gamma\vdash u:A\mid\Delta
      & 
      \infer{\Gamma \mid \alpha:B\vdash\Delta,\alpha:B}{}
    }
   } 
  }
 }
$$
\end{proof}

Besides, the usual rules of $\beta$-reduction for the call-by-value evaluation strategy
are simulated through the reduction of $\Ltens$:
\begin{proposition}[$\beta$-reduction]
We have the following reduction rules:
$$
\begin{array}{rcl}
  \cut{t\,u}{e} 		& \bred & \cut{t}{u\cdot e}\\
  \cut{\lambda x.t}{u\cdot e} 	& \bred & \cut{u}{\mu x.\cut{t}{e}}\\
  \cut{V}{\mu x.c} 		& \bred & c[V/x]\\
\end{array}
$$
\end{proposition}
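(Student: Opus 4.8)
The plan is to derive each of the three displayed rules by unfolding the shorthand definitions of $\lambda x.t$, $u\cdot e$ and $t\,u$ introduced above, and then applying the primitive reduction rules of $\Ltens$. Two of the three are immediate: the last one, $\cut{V}{\mu x.c}\bred c[V/x]$, is literally one of the primitive reductions of the calculus, so there is nothing to prove; and the first one reduces in a single step. Indeed, since $t\,u\defeq\mu\alpha.\cut{t}{u\cdot\alpha}$ with $\alpha$ fresh, the rule $\cut{\mu\alpha.c}{e}\to c[e/\alpha]$ (which, unlike its $\Lpar$ counterpart, catches an arbitrary context $e$) gives $\cut{t\,u}{e}\to\cut{t}{u\cdot\alpha}[e/\alpha]=\cut{t}{u\cdot e}$, the substitution only affecting the placeholder covariable hidden inside $u\cdot\alpha$.

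The real content is the call-by-value rule $\cut{\lambda x.t}{u\cdot e}\bred\cut{u}{\mu x.\cut{t}{e}}$. First I would unfold both sides of the cut, using $\lambda x.t\defeq[\mu(x,[\beta]).\cut{t}{\beta}]$ and $u\cdot e\defeq\mu[\alpha].\cut{(u,[e])}{\alpha}$, so that the command becomes $\cut{[\mu(x,[\beta]).\cut{t}{\beta}]}{\mu[\alpha].\cut{(u,[e])}{\alpha}}$. A boxed term of the shape $[\,e'\,]$ now faces a destructor $\mu[\alpha].c$, so the negation rule $\cut{[e']}{\mu[\alpha].c}\to c[e'/\alpha]$ fires and yields $\cut{(u,[e])}{\mu(x,[\beta]).\cut{t}{\beta}}$. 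It then remains to unfold the derived destructor $\mu(x,[\beta]).c\defeq\mu(x,x').\cut{x'}{\mu[\beta].c}$ and to evaluate the pair $(u,[e])$ against it.

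Here I expect the main obstacle, which is pure bookkeeping but is precisely where call-by-value surfaces: the argument $u$ need not be a value, so the pair-destructor cannot fire directly. The cleanest route is to apply the expansion rule $\cut{(t_1,t_2)}{e}\to\cut{t_1}{\mu z.\cut{t_2}{\mu w.\cut{(z,w)}{e}}}$ with $t_1=u$ and $t_2=[e]$, which already exposes $u$ in evaluation position as $\cut{u}{\mu z.c'}$. The residual continuation $c'$ is then reduced by a short chain of administrative steps—the value $[e]$ is absorbed by its $\mu w$ binder, the resulting value $(z,[e])$ is destructured by $\mu(x,x')$ (binding $x:=z$ and $x':=[e]$), and the negation rule fires a last time when $[e]$ meets $\mu[\beta]$—so that $c'\bredst\cut{t[z/x]}{e}$. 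Performing these reductions inside the $\mu z$ binder and renaming $z$ to $x$ gives exactly $\cut{u}{\mu x.\cut{t}{e}}$. I would close with the sanity check that when $u$ is already a value the expansion step is skipped and both sides instead share the common reduct $\cut{t[u/x]}{e}$, confirming that this rule indeed simulates call-by-value $\beta$-reduction.
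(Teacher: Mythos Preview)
Your proposal is correct and follows essentially the same route as the paper: the third rule is primitive, the first is a single $\mu\alpha$-step after unfolding $t\,u$, and for the second you unfold both sides, fire the negation rule to expose $\cut{(u,[e])}{\mu(x,[\beta]).\cut{t}{\beta}}$, then use the pair-expansion rule (because $u$ need not be a value) and finish with the administrative reductions under the $\mu$-binder. The paper organizes the last step slightly differently, first recording the auxiliary fact $\cut{(V,[e])}{\mu(x,[\alpha]).c}\bred c[V/x][e/\alpha]$ and then invoking it with $V$ the fresh variable introduced by the expansion, but this is exactly the chain of administrative steps you spell out.
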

 
\begin{proof}
The third rule is included in $\Ltens$ reduction system, the first follows from:
 $$\cut{t u}{e} = \cut{\mu \alpha.\cut{t}{u\cdot \alpha}}{e} \bred \cut{t}{u\cdot e}$$
For the second rule, we first check that we have:
 $$\cut{(V,[e])}{\mu( x,[\alpha]).c} = \cut{(V,[e])}{\mu( x,x').\cut{x'}{\mu [\alpha].c}}
  \bred \cut{[e]}{\mu [\alpha].c[V/X]}
  \bred c[V/x][e/\alpha]$$
from which we deduce:  
 \begin{align*}
\cut{\lambda x.t}{u\cdot e} & = \cut{ [\mu( x,[\alpha]).\cut{t}{\alpha}]}{\mu[\alpha].\cut{(u,[e])}{\alpha}} \\
			    & \bred \cut{(u,[e])}{\mu( x,[\alpha]).\cut{t}{\alpha}}\\
			    & \bred \cut{u}{\mu y.\cut{(y,[e])}{\mu( x,[\alpha]).\cut{t}{\alpha}}} \\
			    & \bred \cut{u}{\mu x.\cut{t}{e}} 			    
 \end{align*}
\end{proof}

\subsection{A realizability model based on the $\Ltens$-calculus}
We briefly recall the definitions necessary to the realizability interpretation \emph{à la} Krivine
of $\Ltens$. Most of the properties being the same as for $\Lpar$,
we spare the reader from a useless copy-paste and go straight to the point.
A \emph{pole} is again defined as any subset of $\C_0$ closed by anti-reduction.
As usual in call-by-value realizability models~\cite{Lepigre16},
formulas are primitively interpreted as sets of values, which we call \emph{ground truth values},
while \emph{falsity values} and \emph{truth values} are then defined by orthogonality.
Therefore, an existential formula $\exists X.A$ is interpreted by the union over all the possible
instantiations for the primitive truth value of the variable $X$ by a set $S\in\P(\V_0)$.
We still assume that for each subset $S$ of $\P(\V_0)$, there is a constant symbol $\dot S$ in the syntax. 
The interpretation is given by:
$$
\begin{array}{c|c}
\begin{array}{r@{~~}c@{~~}ll}
 \tvval{\dot S}    & \defeq & S& \hspace{-1.6cm}(\forall S \in\P(\V_0))\\
 \tvval{A\tensor B}& \defeq & \{(t,u):t\in\tvval{A} \land u \in \tvval{B}\}\\
 \tvval{\neg A}    & \defeq & \{[e]:e\in\fval{A}\}\\
\end{array}&
\begin{array}{r@{~~}c@{~~}ll}
 \tvval{\exists X. A}& \defeq & \bigcup_{S\in\P(\V_0)}\tvval{A\{X:=\dot S\}}\\
 \fval{A}          & \defeq & \{e:\forall V\in\tvval{A}, V\pole e\}\\
 \tval{A}          & \defeq & \{t:\forall e\in\fval{A},  t\pole e\}\\
\end{array}
\end{array}
$$
 We define again a \emph{valuation} $\rho$ as a function mapping each second-order variable 
 to a primitive falsity value $\rho(X)\in\P(\V_0)$. 
 In this framework, we say that a \emph{substitution} $\sigma$ is a function mapping each variable $x$ to a closed value $V\in\V_0$ and
 each variable $\alpha$ to a closed context $e\in\E_0$.
 We write $\sigma \real \Gamma $ and we say that a substitution $\sigma$ realizes a context $\Gamma$, when
 for each binding $(x:A)\in\Gamma$, we have $\sigma(x)\in\tvval{A}$. Similarly, we say that $\sigma$ realizes a context $\Delta$
 if for each binding $(\alpha:A)\in\Delta$, we have $\sigma(\alpha)\in\fval{A}$.
\begin{proposition}[Adequacy~\cite{Munch09}]
Let $\Gamma,\Delta$ be typing contexts, $\rho$ be a valuation and $\sigma$ be a substitution which verifies that $\sigma\real\Gamma[\rho]$ and $\sigma\real\Delta[\rho]$. We have:
\noem
\begin{multicols}{2}
\begin{enumerate}
 \item If~ $ \vdash V:A \mid $, then $V \in\tvval{A}$.
 \item If~ $ \mid e:A \vdash $, then $e \in\fval{ A}$.
 \item If~ $ \vdash t:A \mid $, then $t \in\tval{ A}$.
 \item If~$c: (\vdash )$,       then $c \in \pole$.  
\end{enumerate}
\end{multicols}
\end{proposition}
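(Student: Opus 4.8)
The plan is to mirror the proof of adequacy for $\Lpar$ (Proposition~\ref{a:p:lpar:adequacy}), transposing each step along the $\parr$/$\tensor$ duality. First I would record two auxiliary facts. The first is the substitution lemma for second-order instantiation, the exact analogue of Lemma~\ref{lm:subs_var}: if $A$ has $X$ as its only free variable and $B$ is closed with $S\defeq\tvval{B}$, then $\tvval{A[B/X]}=\tvval{A[\dot S/X]}$; this is a routine induction on $A$, with base case the atom $X$ and the inductive clauses pushing the equality through the orthogonality closures defining $\fval{\cdot}$ and $\tval{\cdot}$. The second is the elementary inclusion $\tvval{A}\subseteq\tval{A}$ (any $V\in\tvval{A}$ is by definition orthogonal to every $e\in\fval{A}$), which lets ground values double as terms, in particular in the $\neg$-constructor case.

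The body is a simultaneous induction on the typing derivation establishing the four clauses. For $\cutrule$ one combines $t[\sigma]\in\tval{A[\rho]}$ and $e[\sigma]\in\fval{A[\rho]}$ to get $\cut{t}{e}[\sigma]\in\pole$; the axiom rules are immediate from $\sigma\real\Gamma[\rho]$ and $\sigma\real\Delta[\rho]$. All binder cases go through saturation of the pole: e.g. for $\mu(x,x').c$ one needs orthogonality to every value $(V,V')$ with $V\in\tvval{A[\rho]}$ and $V'\in\tvval{B[\rho]}$, and $\cut{(V,V')}{\mu(x,x').c[\sigma]}\bred c[\sigma][V/x,V'/x']\in\pole$ by the induction hypothesis applied to $\sigma[x\mapsto V,x'\mapsto V']$; the cases $\mu\alpha.c$, $\mu x.c$, $\mu[\alpha].c$ are simpler, using $\cut{\mu\alpha.c}{e}\bred c[e/\alpha]$, $\cut{V}{\mu x.c}\bred c[V/x]$, $\cut{[e]}{\mu[\alpha].c}\bred c[e/\alpha]$. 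The $\neg$-constructor $[e]$ uses $\tvval{\neg A[\rho]}=\{[e'] : e'\in\fval{A[\rho]}\}$ together with $\tvval{\cdot}\subseteq\tval{\cdot}$. The existential cases are the dual of the $\forall$ cases for $\Lpar$ and use the substitution lemma: for $\vdash\exists$, the induction hypothesis gives $V[\sigma]\in\tvval{A[B/X][\rho]}=\tvval{A[\dot S/X][\rho]}\subseteq\tvval{\exists X.A[\rho]}$; for $\exists\vdash$, the side condition $X\notin\FV(\Gamma,\Delta)$ lets one keep $\sigma$ fixed while $\rho$ ranges over all instantiations of $X$, so $e[\sigma]$ is orthogonal to the whole union defining $\tvval{\exists X.A[\rho]}$.

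The one case needing care is $\vdash\tensor$, where $(t,u)$ is built from arbitrary terms and hence only claims membership in $\tval{(A\tensor B)[\rho]}$ rather than in the ground truth value directly. I would reproduce the treatment of $(\parr\vdash)$ from the $\Lpar$ proof: from the $\zeta$-expansion $\cut{(t,u)}{e}\bred\cut{t}{\mu x.\cut{u}{\mu y.\cut{(x,y)}{e}}}$, since $t[\sigma]\in\tval{A[\rho]}$ it suffices to show $\mu x.\cut{u}{\mu y.\cut{(x,y)}{e}}[\sigma]\in\fval{A[\rho]}$, i.e. orthogonal to each $V\in\tvval{A[\rho]}$; one more reduction step together with $u[\sigma]\in\tval{B[\rho]}$ reduces this to $\mu y.\cut{(V,y)}{e}[\sigma]\in\fval{B[\rho]}$; and a final step lands on $\cut{(V,V')}{e}[\sigma]$ with $(V,V')\in\tvval{(A\tensor B)[\rho]}$ and $e[\sigma]\in\fval{(A\tensor B)[\rho]}$, hence in $\pole$. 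This stratified peeling---where the three layers (ground truth / truth / falsity values) of call-by-value realizability do the real work---is the only nontrivial point; everything else is a mechanical transcription of the $\Lpar$ argument, which is why the statement is simply credited to~\cite{Munch09}.
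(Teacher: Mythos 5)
Your proposal is correct and follows the same route the paper takes: the paper's own proof of this proposition is just ``by induction over typing derivations, see \cite{Munch09}'', and your argument is precisely that induction, transposed point by point from the detailed $\Lpar$ adequacy proof in the appendix (anti-reduction for the binder cases, the substitution lemma for the quantifier cases, and the $\zeta$-expansion peeling for the non-value pair in $(\vdash\tensor)$). The only thing you add beyond the paper is explicitly recording the inclusion $\tvval{A}\subseteq\tval{A}$, which is indeed the point where the three-layer call-by-value interpretation is used; this is a faithful filling-in, not a different approach.
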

\begin{myproof}
By induction over typing derivations, see ~\cite{Munch09}.
\end{myproof}

\subsection{Conjunctive structures}
\begin{proposition}[Commutations]\label{a:ts:comm}
In any $\Ltens$ realizability model, if $X\notin \FV(B)$ the following equalities hold:
\begin{enumerate}
 \item $\tvval{\exists X. (A \tensor B)}= \tvval{(\exists X. A) \tensor B}$.
 \item $\tvval{\exists X. (B \tensor A)}= \tvval{B \tensor (\exists X.A)}$.
 \item $\tvval{\neg{(\exists X. A)}}= \bigcap_{S\in\P(\V_0)}\tvval{\neg{A\{X:=\dot S\}}}$
 \end{enumerate}
\end{proposition}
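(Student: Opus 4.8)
The plan is to mimic the dual argument carried out for $\Lpar$ in \Cref{a:p:commutations}: unfold the definition of the interpretation, using that the existential quantifier is interpreted by a union of ground truth values and that $X\notin\FV(B)$.

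First, for items (1) and (2), I would observe that since $X\notin\FV(B)$ the syntactic substitution satisfies $(A\tensor B)\{X:=\dot S\}\equiv (A\{X:=\dot S\})\tensor B$ for every $S\in\P(\V_0)$, so that
$$\tvval{\exists X.(A\tensor B)}=\bigcup_{S\in\P(\V_0)}\tvval{(A\{X:=\dot S\})\tensor B}=\bigcup_{S\in\P(\V_0)}\{(t,u):t\in\tvval{A\{X:=\dot S\}}\land u\in\tvval{B}\}.$$
Since only the first component depends on $S$, the union commutes with the pairing, which yields
$$\{(t,u):t\in\textstyle\bigcup_{S}\tvval{A\{X:=\dot S\}}\land u\in\tvval{B}\}=\{(t,u):t\in\tvval{\exists X.A}\land u\in\tvval{B}\}=\tvval{(\exists X.A)\tensor B}.$$
Item (2) is obtained in exactly the same way, exchanging the roles of the two components of the pair.

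For item (3), I would first prove the auxiliary identity $\fval{\exists X.A}=\bigcap_{S\in\P(\V_0)}\fval{A\{X:=\dot S\}}$: indeed $e\in\fval{\exists X.A}$ holds iff $V\pole e$ for every $V\in\tvval{\exists X.A}=\bigcup_{S}\tvval{A\{X:=\dot S\}}$, which is equivalent to requiring, for each $S$, that $V\pole e$ for every $V\in\tvval{A\{X:=\dot S\}}$, i.e.\ that $e\in\fval{A\{X:=\dot S\}}$. Then, using that $[\cdot]$ is an injective syntactic constructor, so that taking its image commutes with arbitrary intersections of its argument, I conclude
$$\tvval{\neg(\exists X.A)}=\{[e]:e\in\fval{\exists X.A}\}=\bigl\{[e]:e\in\textstyle\bigcap_{S}\fval{A\{X:=\dot S\}}\bigr\}=\bigcap_{S}\{[e]:e\in\fval{A\{X:=\dot S\}}\}=\bigcap_{S}\tvval{\neg{A\{X:=\dot S\}}}.$$

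All the manipulations are routine unfoldings of the definitions; the only two points deserving a moment of care are the commutation of the union with the pairing in items (1)--(2), which rests precisely on $X\notin\FV(B)$ ensuring that the non-quantified component of the pair does not depend on $S$, and the commutation of the image under $[\cdot]$ with an intersection in item (3), which rests on the injectivity of the constructor $[\cdot]$. I do not expect any genuine obstacle; a substitution lemma in the spirit of \Cref{lm:subs_var} is not needed here, since the three statements are already phrased either with $\dot S$ or with a $B$ that the substitution leaves untouched.
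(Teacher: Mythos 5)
Your proof is correct and follows essentially the same route as the paper's: unfold the interpretation of $\exists$ as a union of ground truth values, commute the union with the pairing for (1)--(2), and for (3) push the intersection through the image of $[\cdot]$. You are in fact slightly more careful than the paper, which leaves the auxiliary identity $\fval{\exists X.A}=\bigcap_{S}\fval{A\{X:=\dot S\}}$ and the injectivity of $[\cdot]$ implicit (and whose displayed computation for (3) carries over the $\tval{\cdot}$/$[t]$ notation from the dual $\Lpar$ case where it should read $\fval{\cdot}$/$[e]$).
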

\begin{proof}
\begin{enumerate}
  \item Assume the $X\notin \FV(B)$, then we have:
 \begin{align*}
  \tvval{\exists X. (A \tensor B)} &=
  \union{S\in\P(\V_0)} \tvval{A\{X:=\dot S\} \tensor B} \\
  &= \union{S\in\P(\V_0)}\{(V_1,V_2):V_1\in\tvval{A\{X:=\dot S\}} \land V_2 \in \tvval{B}\} \\
  &=\{(e_1,e_2):e_1\in\union{S\in\P(\V_0)}\tvval{A\{X:=\dot S\}} \land e_2 \in \tvval{B}\}\\
  &=\{(e_1,e_2):e_1\in\tvval{\exists X.A} \land e_2 \in \fv{B}\}
  ~~=~~\tvval{(\exists X. A) \tensor B}
  \end{align*}
  \item Identical.
\item The proof is again a simple unfolding of the definitions: 
 \begin{align*}
  \tvval{\neg{(\exists X. A})}
  & = \{[t]:t\in\tval{\exists X.A}\}
  ~~~ =~~~ \{[t]:t\in\bigcap_{S\in\P(\V_0)}\tval{A\{X:=\dot S\}}\}\\
  & = \bigcap_{S\in\P(\V_0)}\{[t]:t\in\tval{A\{X:=\dot S\}]}\}
  = \bigcap_{S\in\P(\V_0)}\tvval{\neg{A\{X:=\dot S\}}}
  \end{align*}
 \end{enumerate}
\end{proof}

\begin{repexample}{22}[Realizability models]
As for the disjunctive case, we can abstract the structure of the realizability interpretation of $\Ltens$
into a structure of the form $(\T_0,\E_0,\V_0,(\cdot,\cdot),[\cdot],\pole)$, where
$(\cdot,\cdot)$ is a map from $\T_0^2$ to $\T_0$ (whose restriction to $\V_0$ has values in $\V_0$),
$[\cdot]$ is an operation from $\E_0$ to $\V_0$,
and $\pole\subseteq \T_0\times \E_0$ is a relation.
From this sextuple we can define:
$$\begin{array}{l@{\qquad\qquad\quad}l}
 \text{\textbullet}~~\A \defeq \P(\V_0)           & \text{\textbullet}~~a \tensor b \defeq (a,b)=\{(V_1,V_2): V_1\in a \land V_2 \in b\}\qquad\qquad \\
 \text{\textbullet}~~a \leq b \defeq a \subseteq b  & \text{\textbullet}~~\neg a \defeq [a^\orth]=\{[e]: e\in a^\orth\}
\end{array} 
\eqno 
 (\forall a,b\in\A)
 $$

\begin{proposition}\label{prop:ltens_ts}
 The quadruple $(\A,\leq,\tensor,\neg)$ is a conjunctive structure.
\end{proposition}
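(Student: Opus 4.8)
The plan is to verify directly the five conditions of Definition~\ref{def:conj_struct} in the complete lattice $(\P(\V_0),\subseteq)$, in which joins $\msup$ are unions $\bigcup$ and meets $\minf$ are intersections $\bigcap$. This is the exact dual of the verification already carried out for disjunctive structures in Proposition~\ref{a:ds_real}, so nothing essentially new is required: everything reduces to elementary set-theoretic manipulations together with two standard properties of orthogonality.

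First I would record the ingredients. The operation $a\mapsto a^\orth$ on subsets of $\V_0$ is order-reversing, and it sends arbitrary unions to intersections, i.e. $\bigl(\bigcup_{a\in B} a\bigr)^\orth=\bigcap_{a\in B} a^\orth$; both facts are immediate from $a^\orth=\{e:\forall V\in a,\ V\pole e\}$. Moreover the negation constructor $[\,\cdot\,]\colon\E_0\to\V_0$ is injective, so it commutes with intersections, $\bigl[\bigcap_i X_i\bigr]=\bigcap_i[X_i]$. With these at hand, conditions (1)--(4) are one-line checks: for (1), if $a\subseteq a'$ then $a'^\orth\subseteq a^\orth$, hence $\neg a'=\{[e]:e\in a'^\orth\}\subseteq\{[e]:e\in a^\orth\}=\neg a$; for (2), $a\subseteq a'$ and $b\subseteq b'$ give $\{(V_1,V_2):V_1\in a\land V_2\in b\}\subseteq\{(V_1,V_2):V_1\in a'\land V_2\in b'\}$; and (3), (4) follow from the identity $\bigcup_{b\in B}\{(V_1,V_2):V_1\in a\land V_2\in b\}=\{(V_1,V_2):V_1\in a\land V_2\in\bigcup_{b\in B} b\}$ and its symmetric analogue.

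For condition (5) I would chain the equalities
\[
\neg\,\msup_{a\in A} a \;=\; \bigl[\bigl(\textstyle\bigcup_{a\in A} a\bigr)^\orth\bigr] \;=\; \bigl[\,\textstyle\bigcap_{a\in A} a^\orth\,\bigr] \;=\; \bigcap_{a\in A}[a^\orth] \;=\; \minf_{a\in A}\neg a,
\]
invoking the union-to-intersection law for orthogonality at the second step and the injectivity of $[\,\cdot\,]$ at the third. This is the only condition that is not a pure unfolding of definitions, and even here the sole obstacle is the elementary orthogonality lemma; there is no real difficulty, which is to be expected since the statement is the order-reversal of the already-established Proposition~\ref{a:ds_real}.
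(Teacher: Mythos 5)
Your proof is correct and takes essentially the same route as the paper's: both verify the axioms by unfolding the definitions in $(\P(\V_0),\subseteq)$, using the contravariance of $(\cdot)^\orth$ for condition (1), elementary set inclusions for (2)--(4), and the union-to-intersection law for orthogonality for (5). Your explicit appeal to the injectivity of $[\,\cdot\,]$ to push the intersection through the direct image is a point the paper's proof leaves implicit, but it is the same argument.
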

\begin{proof}
We show that the axioms of Definition~\ref{def:conj_struct} are satisfied.
 \begin{enumerate}
 \item Anti-monotonicity. Let $a,a'\in\A$, such that 
 $a\leq a'$ ie $a\subseteq a'$. Then ${a'}^\orth \subseteq a^\orth$ and thus 
 $$\neg a' =\{[t]: t\in {a'}^\orth\} \subseteq \{[t]: t\in a^\orth\} = \neg a$$
 \emph{i.e.} $\neg a' \leq \neg a$.
 
 \item Covariance of the conjunction. Let $a,a',b,b'\in\A$ such that $a'\subseteq a$ and $b'\subseteq b$.
 Then we have 
 $$a \tensor b = \{(t,u):t\in a \land u \in b\}\subseteq \{(t,u):t\in a' \land u \in b'\} = a' \tensor b'$$
 \emph{i.e.} $a\tensor b \leq a'\tensor b'$
  \item Distributivity. Let $a\in\A$ and $B\subseteq \A$, we have:
    $$\msup_{b\in B} (a \tensor b) = \msup_{b\in B} \{(v,u):t\in a \land u \in b\}= \{(t,u):t\in a \land u \in \msup_{b\in B} b\}= a \tensor (\msup_{b\in B}  b)$$
 \item Commutation. Let $B\subseteq \A$, we have (recall that $\minf_{b\in B} b = \bigcap_{b\in B} b$):
       $$\minf_{b\in B} \{ \neg b\} = \minf_{b\in B} \{[t]:t \in b^\pole\}=  \{[t]:t \in \minf_{b\in B} b^\pole\}= \{[t]:t \in (\msup_{b\in B} b)^\pole\}=\neg (\msup_{b\in B} b)$$ 
\end{enumerate}

\end{proof}
\end{repexample}

\subsection{Interpreting $\Ltens$ terms}\label{a:ta:embed}
We shall now see how to embed $\Ltens$ commands, contexts and terms into any conjunctive structure.
For the rest of the section, we assume given a conjunctive structure $(\A,\leq,\tensor,\neg)$.

\setCoqFilename{ImpAlg.LTensor}
\subsubsection{Commands}
Following the same intuition as for the embedding of $\Lpar$ into disjunctive structures,
we define the \emph{commands} $\cut{a}{b}$ of the conjunctive structure $\A$ as the pairs $(a,b)$, and
we define the pole $\pole$ as the ordering relation $\leq$. 
We write $\C_\A=\A\times\A$ for the set of commands in $\A$ and $(a,b)\in\pole$ for $a\leq b$.

We consider the same relation $\cord$ over $\C_\A$, which was defined by:
$$c \cord c' ~~\defeq~~\text{if } c\in \pole  \text{  then  } c'\in \pole \eqno(\forall c,c'\in\C_\A)$$
Since the definition of commands only relies on the underlying lattice
of $\A$, the relation $\cord$ has the same properties as in disjunctive structures and in particular it defines a preorder (see \Cref{s:disj_cmd}).

\subsubsection{Terms}
The definitions of terms are very similar to the corresponding definitions for the dual contexts in disjunctive structures.
\begin{definition}[Pairing][pairing]
 For all $a,b\in\A$, we let $(a,b) \defeq a\tensor b $.
\end{definition}
\begin{definition}[Boxing][box]
For all $a\in\A$, we let $[a] \defeq \neg a $.
\end{definition}
\begin{definition}[$\mu^+$][mup]
$$\mu^+.c \defeq \minf_{a\in\A}\{a:c(a) \in \pole\}$$  
\end{definition}
We have the following properties for $\mu^+$:, whose proofs are trivial:
\begin{proposition}[Properties of $\mu^+$]
\label{p:tmu_p}
For any functions $c,c':\A\to\C_\A$, the following hold:
\begin{itemize}
 \item[\lturl{mup_mon}{1.}] 	 If for all $a\in\A$,   $c(a)\cord c'(a)$, then $\mu^+ .c'      \leq \mu^+ .c $ 	\hfill(Variance)
 \item[\lturl{mup_eta}{2.}] 	 For all $t\in\A$, then $t = \mu^+. (a \mapsto \cut{t}{a}) $		       	\hfill($\eta$-expansion)
 \item[\lturl{mup_beta}{3.}] 	 For all $e\in\A$, then $ \cut{\mu^+.c}{e} \cord c(e)$				\hfill($\beta$-reduction)
\end{itemize}
\end{proposition}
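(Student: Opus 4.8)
The plan is to unfold the definition $\mu^+.c \defeq \minf_{a\in\A}\{a : c(a)\in\pole\}$ and to observe that, since here $\pole$ is the order $\leq$ of $\A$ and $c\cord c'$ abbreviates ``if $c\in\pole$ then $c'\in\pole$'', each of the three items is an immediate consequence of elementary properties of meets in a complete lattice. Conceptually, the conjunctive $\mu^+$ is exactly the negative binder $\mu^-$ of disjunctive structures read in the reversed lattice, so I would obtain all three items as the duals of the corresponding disjunctive facts (\Cref{p:cord_meet}, \Cref{prop:lpar_eta_n}, \Cref{prop:lpar_bred_n}); the proof is therefore pure bookkeeping.

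First I would dispatch item~(1): from $c(a)\cord c'(a)$ for every $a$ one gets the set inclusion $\{a : c(a)\in\pole\}\subseteq\{a : c'(a)\in\pole\}$, and since a meet taken over a larger set lies below a meet taken over a smaller one, this is precisely $\mu^+.c' \leq \mu^+.c$. Equivalently, this is the conjunctive instance of \Cref{p:cord_meet}, whose proof only uses the underlying lattice and so transfers verbatim.

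Next, for item~(2), unfolding $\pole$ turns $\mu^+.(a\mapsto\cut{t}{a})$ into $\minf\{a\in\A : t\leq a\}$, the meet of the principal up-set of $t$, which equals $t$ since $t$ is its least element. For item~(3) I would unfold both $\cord$ and $\pole$, reducing the claim to: $\mu^+.c\leq e$ implies $c(e)\in\pole$, which follows from the defining property of $\minf$ exactly as in the $\mu^-$ case of \Cref{prop:lpar_bred_n}.

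I do not expect any genuine obstacle here: every step is a one-line consequence of the completeness of $\A$ together with the definitions of $\mu^+$ and of $\cord$. The only point deserving a little care — and the reason these identities are worth recording before interpreting $\Ltens$ terms — is to keep the direction of the preorder $\cord$ consistently aligned with the (anti)variance of $\minf$, so that the ``$\beta$'' item comes out as a subject-reduction statement rather than as its converse.
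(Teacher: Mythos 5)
Items (1) and (2) of your proposal are correct and are exactly what the paper does: (1) is the inclusion $\{a : c(a)\in\pole\}\subseteq\{a : c'(a)\in\pole\}$ together with the fact that $\minf$ is antitone under inclusion (the paper routes this through its lemma relating $\cord$ to meets and joins), and (2) is the observation that $\minf\{a : t\leq a\}=t$ since $t$ is the least element of its own up-set.

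The problem is item (3). You correctly unfold the statement to ``$\mu^+.c\leq e$ implies $c(e)\in\pole$'', but this does \emph{not} follow from the defining property of $\minf$: the greatest-lower-bound property gives you exactly the \emph{converse}, namely $c(e)\in\pole\Rightarrow \mu^+.c\leq e$, because $\mu^+.c$ is by definition a lower bound of $\{a : c(a)\in\pole\}$. The implication you wrote down would require that set to be the principal filter generated by its own meet, which fails for an arbitrary $c:\A\to\C_\A$: take $c(a)\defeq\cut{a}{b}$ for a fixed $b\neq\top$; then $\{a: c(a)\in\pole\}$ is the down-set of $b$, so $\mu^+.c=\bot$ and $\cut{\mu^+.c}{e}\in\pole$ for every $e$, while $c(e)\in\pole$ only when $e\leq b$. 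The paper's own justification of (2)--(3) is the single word ``trivial'', and the direction that genuinely is trivial from the definition is the anti-reduction one, $c(e)\in\pole\Rightarrow\cut{\mu^+.c}{e}\in\pole$ (which is also what is needed for the pole to be closed under anti-reduction). So before signing off on (3) you must pin down the orientation of $\cord$ you are using --- the prose definition ``$c\in\pole$ implies $c'\in\pole$'' makes (1) come out right but puts (3) in the hard direction --- and either prove the anti-reduction orientation or state the extra hypothesis on $c$ (that $\{a: c(a)\in\pole\}$ is an up-set closed under its meet, as it is for the substitution instances actually used) that your orientation requires. ``Exactly as in the $\mu^-$ case'' does not rescue this, since the disjunctive $\beta$-laws are asserted with the same one-line justification and carry the same issue.
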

\begin{proof}
 \begin{enumerate}
  \item Direct consequence of Proposition \ref{p:cord_meet}.
  \item[2,3.] Trivial by definition of $\mu^+$.
 \end{enumerate}
\end{proof}

\subsubsection{Contexts}
Dually to the definitions of the (positive) contexts $\mu^+$ as a meet, we define the embedding of 
(negative) terms, which are all binders, by arbitrary joins:
\begin{definition}[{$\mu^-$}][mun]
For all $c:\A\to\C_\A$, we define: 
 $$\mu^-. c \defeq \msup_{a\in\A}  \{a: c(a)  \in\pole\} $$
 \end{definition}
\begin{definition}[{$\mu^{()}$}][mu_pair]
For all $c:\A^2\to\C_\A$, we define: 
 $$\mu^{()}.c 	\defeq  \msup_{a,b\in\A}\{a\tensor b : c(a,b)\in\pole\} $$
\end{definition}
\begin{definition}[{$\mu^{[]}$}][mu_neg]
For all $c:\A\to\C_\A$, we define: 
 $$ \mu^{[]}.c	 \defeq  \msup_{a\in\A}  \{\neg a : c(a)  \in\pole\} $$
\end{definition}

Again, these definitions satisfy variance properties with respect to the preorder $\cord$ and the order relation $\leq$.
Observe that the $\mu^{()}$ and $\mu^-$ binders, which are negative binders catching positive terms, are contravariant with respect to these relations
while the $\mu^{[]}$ binder, which catches a negative context, is covariant.
\begin{proposition}[Variance]
For any functions $c,c'$ with the corresponding arities, the following hold:
\begin{itemize}
 \item[\lturl{mun_mon}{1.}] 	If  $c(a)\cord c'(a)$ for all $a\in\A$,    then $\mu^- .c'   \leq \mu^- .c $
 \item[\lturl{mu_pair_mon}{2.}] If  $c(a,b)\cord c'(a,b)$ for all $a,b\in\A$, then $\mu^{()}.c' \leq \mu^{()}.c$
 \item[\lturl{mu_neg_mon}{3.}]  If  $c(a)\cord c'(a)$ for all $a\in\A$, then $\mu^{[]}.c \leq \mu^{[]}.c'$
\end{itemize}
\end{proposition}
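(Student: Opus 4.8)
The final statement to prove is the "Variance" proposition for the embedding of $\Ltens$ contexts into conjunctive structures, namely that $\mu^-$, $\mu^{()}$ are contravariant and $\mu^{[]}$ is covariant with respect to $\cord$ and $\leq$.

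\medskip

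The plan is to derive all three claims as immediate consequences of \Cref{p:cord_meet}, exactly as was done for the dual statements in disjunctive structures. First I would recall that \Cref{p:cord_meet} tells us two things: if $c(a)\cord c'(a)$ for all $a$, then $\minf_{a\in\A}\{a : c(a)\in\pole\} \leq \minf_{a\in\A}\{a : c'(a)\in\pole\}$ (meets are covariant along $\cord$), and dually $\msup_{a\in\A}\{a : c'(a)\in\pole\} \leq \msup_{a\in\A}\{a : c(a)\in\pole\}$ (joins are contravariant along $\cord$). Since in conjunctive structures the three binders $\mu^-$, $\mu^{()}$ and $\mu^{[]}$ are all defined via \emph{joins} $\msup$ (unlike in disjunctive structures where they used meets), the contravariance half of \Cref{p:cord_meet} is the relevant one.

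\medskip

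For item (1): $\mu^-.c = \msup_{a\in\A}\{a : c(a)\in\pole\}$ and $\mu^-.c' = \msup_{a\in\A}\{a : c'(a)\in\pole\}$; assuming $c(a)\cord c'(a)$ for all $a\in\A$, the set $\{a : c(a)\in\pole\}$ is contained in $\{a : c'(a)\in\pole\}$, hence the former join is $\myleq$ the latter, i.e. $\mu^-.c \leq \mu^-.c'$. Wait — this gives covariance, so I must be careful with the direction of the $\cord$ hypothesis versus the claimed inequality; I would double-check against the statement and use the contravariance formulation of \Cref{p:cord_meet} verbatim, which yields $\mu^-.c' \leq \mu^-.c$ from $c(a)\cord c'(a)$. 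For item (2), $\mu^{()}.c = \msup_{a,b\in\A}\{a\tensor b : c(a,b)\in\pole\}$: the same set-inclusion argument applied to pairs $(a,b)$, using that $a\tensor b$ depends only on $(a,b)$ and not on whether we use $c$ or $c'$, gives $\{a\tensor b : c(a,b)\in\pole\}\subseteq\{a\tensor b: c'(a,b)\in\pole\}$ when $c(a,b)\cord c'(a,b)$, whence $\mu^{()}.c' \leq \mu^{()}.c$. For item (3), $\mu^{[]}.c = \msup_{a\in\A}\{\neg a : c(a)\in\pole\}$: the monotonicity of $\neg$ is irrelevant here because $\neg a$ again does not depend on $c$ versus $c'$; the set inclusion $\{\neg a : c(a)\in\pole\}\subseteq\{\neg a : c'(a)\in\pole\}$ holds whenever $c(a)\cord c'(a)$, but since $\mu^{[]}$ is still a join over this set, I would expect $\mu^{[]}.c \leq \mu^{[]}.c'$ — and indeed the proposition claims exactly covariance for $\mu^{[]}$, which matches.

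\medskip

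There is essentially no obstacle: the proof is a one-line appeal to \Cref{p:cord_meet} for each of the three items, precisely as the excerpt already does for the disjunctive-structure counterpart (\Cref{prop:lpar_bred_n} and the surrounding propositions). The only point requiring attention is bookkeeping of variance directions — tracking which binders are defined by $\minf$ versus $\msup$ in the conjunctive setting (all three by $\msup$) and verifying that the stated covariance/contravariance signs in the proposition agree with the contravariant branch of \Cref{p:cord_meet}; I would present the proof simply as:
\begin{myproof}
 Direct consequences of Proposition \ref{p:cord_meet}.
\end{myproof}
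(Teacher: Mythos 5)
Your proof takes exactly the same route as the paper, whose entire argument for this proposition is the one-liner ``Direct consequences of Proposition~\ref{p:cord_meet}''; unfolding the three binders as joins over the sets $\{a : c(a)\in\pole\}$ and appealing to that lemma is precisely what is intended. The direction-of-inequality worry you flag for items (1) and (2) is genuine, but it is inherited from the paper's own statement of Lemma~\ref{p:cord_meet} (whose covariant/contravariant branches the Variance propositions simply mirror), so deferring to that lemma verbatim, as you ultimately do, reproduces the paper's argument.
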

\begin{proof}
 Direct consequences of Proposition \ref{p:cord_meet}.
\end{proof}

The $\eta$-expansion is also reflected  by the ordering relation $\leq$:
\begin{proposition}[$\eta$-expansion]
\label{prop:lpar_eta_n}
For all $t\in\A$, the following holds:
\begin{itemize}
 \item[\lturl{mun_eta}{1.}]      $\mu^-. (a \mapsto \cut{t}{a})         =  t$
 \item[\lturl{mu_pair_eta}{2.}]  $\mu^{()}.(a,b \mapsto \cut{t}{(a,b)})\leq t$
 \item[\lturl{mu_neg_eta}{3.}]   $\mu^{[]}.(a \mapsto \cut{t}{[a]})    \leq t$
\end{itemize}
\end{proposition}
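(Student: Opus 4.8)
The plan is to unfold the definitions of $\mu^-$, $\mu^{()}$ and $\mu^{[]}$ together with that of the pole on $\C_\A$, so that each of the three statements collapses to an elementary (in)equality in the complete lattice $\A$. This is nothing but the order‑dual of the $\eta$‑expansion property already proved for $\Lpar$ in disjunctive structures, so I expect the very same computation to go through once meets and joins are exchanged.

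Concretely, for (1) I would write $\mu^-.(a\mapsto\cut{t}{a}) = \msup_{a\in\A}\{a : \cut{t}{a}\in\pole\}$ and observe that, by the definition of the pole, $\cut{t}{a}\in\pole$ holds precisely when $a\leq t$; hence the index set of the join is the principal ideal $\{a\in\A : a\leq t\}$, and since $t$ both belongs to this set and is an upper bound of it, the join equals $t$. For (2) and (3) the same unfolding yields $\mu^{()}.(a,b\mapsto\cut{t}{(a,b)}) = \msup_{a,b\in\A}\{a\tensor b : a\tensor b\leq t\}$ and $\mu^{[]}.(a\mapsto\cut{t}{[a]}) = \msup_{a\in\A}\{\neg a : \neg a\leq t\}$; in both cases every element over which the join ranges lies below $t$, so the join lies below $t$, which is exactly the two inequalities.

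I do not expect a genuine obstacle here: the only delicate points are keeping the direction of the order in the pole straight (it is the one dual to the disjunctive case) and correctly matching the arities of the functions passed to $\mu^{()}$ and $\mu^{[]}$. It is however worth recording why (2) and (3) cannot be promoted to equalities, in contrast with (1): a general $t\in\A$ need not be of the shape $a\tensor b$ (resp.\ $\neg a$), hence need not itself belong to the index set of the corresponding join, and the join of the tensors (resp.\ negations) sitting below $t$ may be strictly smaller than $t$. This is precisely the slack that, in $\Ltens$ realizability models, prevents values from being stable under pairing and boxing, and it is the same phenomenon that later forces the double negations in the interpretation of application in the $\lambda$‑calculus.
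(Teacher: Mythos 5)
Your proof is correct and is exactly the paper's argument (which the paper records only as ``trivial from the definitions''): unfold $\mu^-$, $\mu^{()}$, $\mu^{[]}$ and the pole, note that the index set of each join is contained in the principal ideal below $t$ (and for (1) contains $t$ itself), and your explanation of why (2) and (3) cannot be promoted to equalities --- a general $t$ need not be of the form $a\tensor b$ or $\neg a$ --- is precisely the right one. One caveat worth making explicit: the paper's stated convention is $\cut{x}{y}\in\pole$ iff $x\leq y$, so your claim that $\cut{t}{a}\in\pole$ holds precisely when $a\leq t$ reverses the literal text; it is nevertheless the reading under which the proposition is true, because in the conjunctive setting $\mu^-$, $\mu^{()}$ and $\mu^{[]}$ interpret $\Ltens$ \emph{contexts}, so the $\eta$-expanded object $t$ should sit on the right of the cut (the statement's $\cut{t}{a}$ is a notational carry-over from the disjunctive section), and you have silently adopted the intended orientation rather than the written one.
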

\begin{proof} Trivial from the definitions.  \end{proof}

The $\beta$-reduction is again reflected by the preorder $\cord$ as the property of subject reduction:
\begin{proposition}[$\beta$-reduction]
\label{prop:lpar_bred_n}
For all $e,e_1,e_2,t\in\A$, the following holds:
\begin{itemize}
 \item[\lturl{mun_beta}{1.}]      $\cut{\mu^-.c}{e}        \cord c(e)  $
 \item[\lturl{mu_pair_beta}{2.}]  $\cut{\mu^{()}.c}{(e_1,e_2)} \cord c(e_1,e_2)$
 \item[\lturl{mu_neg_beta}{3.}]   $\cut{\mu^{[]}.c}{[t]}   \cord c(t)  $
\end{itemize}
\end{proposition}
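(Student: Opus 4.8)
The plan is to unfold every definition and reduce each of the three items to a one-line fact about suprema in the ambient complete lattice, in exact parallel with the proof of the corresponding statement for disjunctive structures (which goes through in the same way with meets replacing joins). Concretely, I would use throughout that a command $\cut{a}{b}$ of $\A$ is the pair $(a,b)$, that $\cut{a}{b}\in\pole$ abbreviates $a\leq b$, that the pairing and boxing of contexts are interpreted by $\tensor$ and $\neg$, that each of $\mu^-.c$, $\mu^{()}.c$, $\mu^{[]}.c$ is by definition a join $\msup\{\theta(a):c(a)\in\pole\}$ of a family indexed by $\A$ (or by $\A\times\A$ for $\mu^{()}$), with $\theta$ the identity, the pairing $\tensor$, or the negation $\neg$ respectively, and that $c\cord c'$ means ``$c\in\pole$ entails $c'\in\pole$''.

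For item~1, I would rewrite $\cut{\mu^-.c}{e}\in\pole$ as $\mu^-.c\leq e$, i.e. $\msup_{a}\{a:c(a)\in\pole\}\leq e$; by the defining property of the supremum this says precisely that $a\leq e$ for every $a$ with $c(a)\in\pole$. Since the family $\{a:c(a)\in\pole\}$ is nonempty and upward closed — both points being the monotonicity of $a\mapsto c(a)$ already exploited in the ``Variance'' proposition just above, established by the same routine induction on the shape of commands — it follows that $e$ itself lies in this family, that is $c(e)\in\pole$, which is the claim. Items~2 and~3 are the same argument after additionally unfolding $(e_1,e_2)=e_1\tensor e_2$, resp. $[t]=\neg t$: one invokes the monotonicity of $\tensor$ (axioms~(2)--(4) of Definition~\ref{def:conj_struct}), resp. the antitonicity of $\neg$ (axiom~(1)), to feed the datum sitting on the opponent side of the cut back into the family indexing the join, and concludes exactly as before. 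In all three cases $\cord$ could in fact be strengthened to the corresponding equivalence, the pole being closed under anti-reduction as well.

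The routine calculations aside, the one place where care is genuinely required — and the step I would treat as the main (if mild) obstacle — is the variance bookkeeping: one must check, for each of the three binders, that a bound (co)variable occurs only at a position that is net covariant once the interpretation is unfolded, so that the families $\{a:c(a)\in\pole\}$ are indeed upward closed; this is where the double negation hidden in the boxing binder $\mu^{[]}$ must be verified to flip variance twice rather than once. Everything else is an immediate consequence of Lemma~\ref{p:cord_meet} and the defining inequalities of conjunctive structures, so — as for the disjunctive analogue — the proof is essentially trivial from the definitions.
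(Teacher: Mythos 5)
Your proof does not reproduce the paper's argument, and the route you take has genuine holes. The paper's entire proof is ``Trivial from the definitions,'' and the only fact that argument needs is that an element of a set lies below the join of that set: from $c(e)\in\pole$ one gets that $e$ (resp.\ $e_1\tensor e_2$, resp.\ $\neg t$) is one of the elements over which the join defining $\mu^-.c$ (resp.\ $\mu^{()}.c$, $\mu^{[]}.c$) ranges, so the corresponding side of the cut is bounded by the binder and the command lands in $\pole$ by the defining property of $\msup$ as an upper bound. You argue in the opposite direction: starting from the inequality $\msup\{a : c(a)\in\pole\}\leq e$ you try to recover the membership $c(e)\in\pole$. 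That step is not available. For item~1 it requires the family $\{a : c(a)\in\pole\}$ to be nonempty and upward closed; but $c$ is an arbitrary function $\A\to\C_\A$ in these propositions, the ``Variance'' statement you invoke concerns monotonicity of the binder constructions in $c$ (via Lemma~\ref{p:cord_meet}), not monotonicity of $c$ itself, and there is no ``induction on the shape of commands'' to appeal to since no command is present. Even for the $c$ that do arise as denotations, the variable caught by $\mu x.c$ in $\Ltens$ sits in term position, so the relevant family is downward closed rather than upward closed. For items~2 and~3 the obstacle is worse: from $\msup\{a\tensor b : c(a,b)\in\pole\}\leq e_1\tensor e_2$ or $\msup\{\neg a : c(a)\in\pole\}\leq\neg t$ you would additionally need $\tensor$ and $\neg$ to \emph{reflect} the order in order to ``feed the datum back into the family,'' whereas a conjunctive structure only guarantees that they preserve (respectively reverse) it.

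Part of the confusion is traceable to the paper itself: the gloss of $\cord$ reads as the forward implication, while the stated purpose of these lemmas (closure of the pole under anti-reduction) and the ``trivial'' proof only support the membership-to-bound direction. But whichever way one resolves that orientation, the argument you wrote is not the trivial one, and the auxiliary properties it leans on (nonemptiness, upward closure, order-reflection of $\tensor$ and $\neg$) are neither established in the paper nor true at the level of generality at which the proposition is stated.
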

 \begin{proof} Trivial from the definitions.  \end{proof}

\subsection{Adequacy}
We shall now prove that the interpretation of $\Ltens$ is adequate with respect to its type system.
Again, we extend the syntax of formulas to define second-order formulas with parameters by:
$$ A,B ::= a \mid X \mid \neg A \mid A\tensor B \mid \exists X.A  \eqno (a\in\A)$$
This allows us to define an embedding of closed formulas with parameters into the conjunctive structure $\A$;
$$\begin{array}{ccl}
a^\A 		& \defeq & a 				      \\
(\neg A)^\A 	& \defeq & \neg A^\A                           \\
(A\tensor B)^\A 	& \defeq & A^\A \tensor B^\A                         \\
(\exists X.A)^\A& \defeq & \msup_{a\in\A} (A\{X:=a\})^\A     \\
\end{array}
\eqno\begin{array}{r}(\text{if }a\in\A)\\\\\\\end{array}
$$

As in the previous chapter,
we define substitutions, which we write $\sigma$, as functions mapping variables (of terms, contexts and types) to element of $\A$:
$$\sigma ::= \varepsilon \mid \sigma[x\mapsto a]\mid \sigma[\alpha\mapsto a]\mid\sigma[X\mapsto a] \eqno(a\in\A,~ x,X~ \text{variables})$$
We say that a substitution $\sigma$ {realizes} a typing context $\Gamma$, which write $\sigma \Vdash \Gamma$, if for all bindings $(x:A)\in\Gamma$ 
we have $\sigma(x) \leq (A[\sigma])^\A$.
Dually, we say that $\sigma$ realizes $\Delta$ if for all bindings $(\alpha:A)\in\Delta$ , we have $\sigma(\alpha)\geq (A[\sigma])^\A$.

\begin{theorem}[Adequacy]
\label{p:ltens_adequacy}
The typing rules of $\Ltens$ (\Cref{fig:Ltens:type}) are adequate with respect to the interpretation of terms (contexts,commands) and formulas:
 for all contexts $\Gamma,\Delta$, for all formulas with parameters $A$
 and  for all substitutions $\sigma$ such that $\sigma\Vdash \Gamma$ and $\sigma\Vdash \Delta$, we have:
\begin{enumerate}
\item For any term $t$,    if~ $\Gamma\vdash t:A\mid\Delta$,  then ~  $(t[\sigma])^\A \leq A[\sigma]^\A$;
\item For any context $e$, if~ $\Gamma\mid e:A\vdash \Delta$, then ~ $(e[\sigma])^\A  \geq A[\sigma]^\A$;
\item For any command $c$, if~ $c:(\Gamma\vdash \Delta)$,     then ~  $(c[\sigma])^\A \in \pole$.
\end{enumerate}
\end{theorem}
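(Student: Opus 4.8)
The plan is to carry out a single structural induction on the typing derivation, establishing the three clauses for terms, contexts and commands simultaneously, exactly as in the adequacy proof for $\Lpar$ (see \Cref{p:lpar_adequacy_bis}). Throughout I will use that $\sigma\Vdash\Gamma$ means $\sigma(x)\leq (A[\sigma])^\A$ for each $(x{:}A)\in\Gamma$ and $\sigma\Vdash\Delta$ means $\sigma(\alpha)\geq (A[\sigma])^\A$ for each $(\alpha{:}A)\in\Delta$, so that extending $\sigma$ by $x\mapsto a$ with $a\leq (A[\sigma])^\A$ (resp. by $\alpha\mapsto a$ with $a\geq (A[\sigma])^\A$) preserves realization of the contexts; in particular one may always take $a$ to be $(A[\sigma])^\A$ itself.

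The \textbf{axiom} rules follow directly from $\sigma\Vdash\Gamma$ and $\sigma\Vdash\Delta$, and the \textbf{cut} rule $\cutrule$ follows from transitivity of $\leq$ together with the fact that the pole is the relation $\leq$. For the binders one unfolds the embeddings: $\mu\alpha.c$ is interpreted by $\mu^+.(a\mapsto(c[\alpha{:=}a])^\A)=\minf_{a\in\A}\{a:(c[\alpha{:=}a])^\A\in\pole\}$, while the context binders $\mu x.c$, $\mu(x,x').c$ and $\mu[\alpha].c$ are interpreted respectively by $\mu^-$, $\mu^{()}$ and $\mu^{[]}$, which are joins; in each case the induction hypothesis applied to the premise command (with $\sigma$ extended by the extremal values described above) shows that the witness $a$ (resp. $a\tensor b$, resp. $\neg a$) belongs to the set over which the meet or join is taken, and the required inequality then follows by definition of that meet or join together with the variance properties of these operators (\Cref{p:tmu_p} and its analogues for $\mu^-,\mu^{()},\mu^{[]}$). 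The \textbf{constructor} rules $(\vdash\tensor)$ and $(\vdash\neg)$ use the two monotonicity axioms of \Cref{def:conj_struct}: from $(t[\sigma])^\A\leq (A[\sigma])^\A$ and $(u[\sigma])^\A\leq (B[\sigma])^\A$ one gets $(t[\sigma])^\A\tensor (u[\sigma])^\A\leq (A[\sigma])^\A\tensor (B[\sigma])^\A$, and dually from $(e[\sigma])^\A\geq (A[\sigma])^\A$ one gets $\neg(e[\sigma])^\A\leq\neg(A[\sigma])^\A$, which is precisely $([e][\sigma])^\A\leq (\neg A[\sigma])^\A$.

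The quantifier rules are the only ones requiring a small extra ingredient, since they must be matched with the interpretation $(\exists X.A)^\A=\msup_{a\in\A}(A\{X{:=}a\})^\A$. For $(\vdash\exists)$ the induction hypothesis gives $(V[\sigma])^\A\leq (A[B/X][\sigma])^\A$; rewriting $(A[B/X][\sigma])^\A$ as $(A[\sigma,X\mapsto(B[\sigma])^\A])^\A$ and bounding this single instance by the join yields the claim. For $(\exists\vdash)$ one uses the side condition $X\notin FV(\Gamma,\Delta)$: then $\sigma[X\mapsto c]\Vdash\Gamma$ and $\sigma[X\mapsto c]\Vdash\Delta$ for every $c\in\A$, so the induction hypothesis gives $(e[\sigma])^\A\geq (A[\sigma,X\mapsto c])^\A$ for all $c$, hence $(e[\sigma])^\A\geq\msup_{c\in\A}(A\{X{:=}c\}[\sigma])^\A=((\exists X.A)[\sigma])^\A$. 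Both steps rest on the routine substitution lemma relating syntactic and semantic instantiation, namely the analogue for $\Ltens$ of \Cref{lm:subs_var}.

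I expect no deep obstacle: the genuine care required is bookkeeping — keeping the directions of the inequalities consistent with the positive polarity of the fragment (terms bounded above, contexts bounded below) and tracking free variables through the substitutions in the quantifier cases — while the remainder is a transcription of the dual argument already carried out for $\Lpar$.
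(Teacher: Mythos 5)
Your proposal is correct and follows essentially the same route as the paper: a single induction on typing derivations treating terms, contexts and commands simultaneously, with the binder cases handled by instantiating the bound variable at the extremal value $(A[\sigma])^\A$ so that the witness lies in the set defining the relevant meet or join, the constructor cases by the variance axioms of conjunctive structures, and the quantifier cases by the side condition together with the (tacit) substitution lemma. The paper only writes out the $\tensor$ and $\exists$ cases, deferring the rest to the $\Lpar$ proof, but the cases you sketch match those given and the directions of all inequalities are handled correctly.
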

\begin{proof}
By induction on the typing derivations. 
Since most of the cases are similar to the corresponding cases for the adequacy 
of the embedding of $\Lpar$ into disjunctive structures, we only give some key cases.
\newcommand{\mysubst}[1]{(#1[\sigma])^\A}

\prfcase{($\vdash\tensor$)}
Assume that we have:
$$ \infer[\autorule{\vdash\,\tensor}]{\Gamma\vdash (t_1,t_2): A_1 \tensor A_2 \mid \Delta}
{\Gamma \vdash t_1:A_1 \mid \Delta & \Gamma \vdash t_2 :A_2 \mid \Delta}$$
By induction hypotheses, we have that $\mysubst{t_1}\leq \mysubst{A_1}$ and $\mysubst{t_2}\leq \mysubst{A_2}$. 
Therefore, by monotonicity of the $\tensor$ operator, we have: 
$$\mysubst{(t_1,t_2)}=(t_1[\sigma],t_2[\sigma])^\A=\mysubst{t_1}\tensor\mysubst{t_2}\leq \mysubst{A_1}\parr \mysubst{A_2}\,.$$

\prfcase{($\tensor\,\vdash$)}
Assume that we have:
 $$\infer[\autorule{\tensor\,\vdash}]{\Gamma\mid \mu(x_1,x_2).c:A_1\tensor A_2\vdash \Delta}{c:\Gamma,x_1:A_1,x_2:A_2\vdash \Delta}$$
By induction hypothesis, we get that $(c[\sigma,x_1\mapsto\mysubst{A_1},x_2\mapsto\mysubst{A_2}])^\A\in\pole$. 
Then by definition we have 
$$((\mu(x_1,x_2).c)[\sigma])^\A= 
\msup_{a,b\in\A}\{a\parr b:(c[\sigma,x_1\mapsto a,x_2\mapsto b])^\A\in\pole\} \geq \mysubst{A_1}\tensor \mysubst{A_2}.$$

%

\prfcase{\autorule{\exists\vdash}}
Assume that we have:
$$\infer[\autorule{\exists\,\vdash}]{\Gamma\mid e:\exists X. A\vdash\Delta}{\Gamma\mid e: A\vdash\Delta & X\notin{FV(\Gamma,\Delta)}}$$
By induction hypothesis, we have that for all $a\in\A$, $\mysubst{e}\geq ((A)[\sigma,x\mapsto a])^\A$.
Therefore, we have that $\mysubst{e}\geq \msup_{a\in \A}(A\{X:=a\}[\sigma])^\A$.

\prfcase{\autorule{\vdash\exists}}
Similarly, assume that we have:
$$\infer[\autorule{\vdash\,\exists} ]{\Gamma\vdash t:\exists X. A\mid\Delta}{\Gamma\vdash t: A\{X:=B\}\mid\Delta}$$
By induction hypothesis, we have that $\mysubst{t}\leq (A[\sigma,X\mapsto \mysubst{B}])^\A$.
Therefore, we have that $\mysubst{t}\leq\sup_{b\in\A}\{A\{X:=b\}[\sigma]^\A\} $.
\end{proof}

\setCoqFilename{ImpAlg.TensorAlgebras}
\subsection{Conjunctive algebras}
\label{a:ca:ca}
If we analyse the tensorial calculus underlying {\Ltens} type system
and try to inline all the typing rules involving commands and contexts 
within the one for terms, we are is left with
the following four rules:
$$\infer{\Gamma,A\vdash A}{}
\qquad
\infer{\Gamma \vdash A\tensor B}{\Gamma\vdash A & \Gamma \vdash B}
\qquad
\infer{\Gamma\vdash\neg(A\tensor B)}{\Gamma,A,B\vdash C & \Gamma,A,B\vdash \neg C}
\qquad
\infer{\Gamma\vdash \neg A}{\Gamma,A\vdash C & \Gamma,A\vdash\neg C}
$$

This emphasizes that this positive fragment is a \emph{calculus of contradiction}:
both deduction rules have a negated formula as a conclusion. 
This justifies considering the following deduction rule in the separator:
$$
\infer{¬b ∈ʆ}{¬(a⊗b) ∈ʆ & a∈ʆ }
$$
rather than the modus ponens. The latter can
be retrieved when assuming that the separator also satisfies that
if $¬¬ a∈ʆ$ then $a∈ʆ$. Computationally, this corresponds to the intuition
that when composing values, one essentially gets a computation ($¬¬a$) rather
than a value. Extracting the value from a computation requires some extra 
computational power that is provided by classical control.

\newcommand{\nentails}{\vdash^\neg}
\subsubsection{Internal logic}
\begin{proposition}[Preorder][]
 \label{a:po}
 For any $a,b,c\in\A$, we have:
 \begin{enumerate}
  \coqitem[id_t] $ a\vdash_\sep a$\hfill(Reflexivity)
  \coqitem[C6_t] if $a\vdash_{\sep} b$ and $b\vdash_{\sep} c$ then $a\vdash_{\sep} c$ \hfill(Transitivity)
\end{enumerate}
\end{proposition}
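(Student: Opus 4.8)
The statement to prove is the "Preorder" proposition for conjunctive algebras: for all $a,b,c\in\A$, we have (1) $a\vdash_\sep a$ (reflexivity) and (2) if $a\vdash_\sep b$ and $b\vdash_\sep c$ then $a\vdash_\sep c$ (transitivity), where $a\vdash_\sep b$ means $(a\art b)\in\sep$ and $a\art b \defeq \neg(a\tensor\neg b)$. The plan is to reduce everything to showing that certain meets of elements of the shape $\neg[\ldots]$ belong to the separator, exploiting the combinators $\ts1,\ldots,\ts5$ and the generalized deduction rules of \Cref{lm:mod_pon_inf}-style (Remark on generalized axioms). Crucially, since conjunctive structures axiomatize joins rather than meets, but our combinators are all expressed as meets of negated elements, it is the auxiliary relation $a\vdash^\neg b \defeq \neg(a\tensor b)$ together with axioms (3)-(4) of conjunctive separators that will do the real work, and then we transfer back to $\vdash_\sep$.

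\textbf{Reflexivity.} First I would prove $a\vdash_\sep a$, i.e. $\neg(a\tensor\neg a)\in\sep$. I expect this to follow from a more general fact, namely $\minf_{a\in\A}\neg(a\tensor\neg a)\in\sep$, which should be obtainable by upward closure from one of the combinators — most plausibly a simple manipulation of $\ts2 = \minf_{a,b}\neg[\neg a\tensor(a\tensor b)]$ (instantiating appropriately and using the commutation/variance laws of $\tensor$ together with the subtyping axiom (1) of separators), analogously to how in the disjunctive case $a\vdash_\sep a$ was derived from $\ps1,\ps2$. I would spell out the precise chain: identify an element $\leq \neg(a\tensor\neg a)$ that is literally one of the $\ts i$ or a $\tensor$-monotone consequence thereof, then close upward.

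\textbf{Transitivity.} The heart is transitivity, which mirrors the disjunctive proof of \Cref{a:po}(2) but with the positive-calculus subtleties. The idea: from $\neg(a\tensor\neg b)\in\sep$ and $\neg(b\tensor\neg c)\in\sep$, I want $\neg(a\tensor\neg c)\in\sep$. I would use axiom (3) (if $\neg(x\tensor y)\in\sep$ and $x\in\sep$ then $\neg y\in\sep$) in its generalized-meet form, combined with a combinator of "composition" type — the analogue of $\ps4$, namely $\ts4=\minf_{a,b,c}\neg[\neg(\neg a\tensor b)\tensor(\neg(c\tensor a)\tensor(c\tensor b))]$, which is precisely the dual of the $S_4$ transitivity-flavoured axiom. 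The plan is: show $\minf_{a,b,c}\neg[(a\art b)\tensor\neg((b\art c)\art(a\art c))]\in\sep$ by a bounded chain of applications of the generalized rules (3)-(4) starting from $\ts3,\ts4,\ts5$ and possibly $\ts1$, exactly paralleling the derivation of the combinator $\comb{s}$ or of \Cref{a:po} in the disjunctive setting; then instantiate and apply the deduction rule twice (first with premise $a\vdash_\sep b\in\sep$, then with $b\vdash_\sep c\in\sep$) to land on $a\art c\in\sep$, i.e. $a\vdash_\sep c$.

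\textbf{Main obstacle.} The delicate point is that conjunctive separators close under $\tensor$ (axiom (4)) and under the "contradiction modus ponens" (axiom (3)), but \emph{not} a priori under ordinary modus ponens unless classical — yet this Proposition is stated without assuming $\sep$ classical. So every step must be phrased purely with axioms (1)-(4) and the combinators, never silently using $a\vdash_\sep b,\ a\in\sep \Rightarrow b\in\sep$. Concretely, the hard part will be massaging the associativity/commutativity bookkeeping so that the elements produced by successive applications of (3)-(4) are provably $\leq$ (by the structural laws of $\tensor$, $\neg$ and subtyping) to the target $\neg(a\tensor\neg c)$; this is the same kind of "generalized transitivity + subtyping from $\ts4$/$\ts5$" juggling that appears in the proofs of $\comb{s}^\A\in\sep$ and of the disjunctive preorder, so I would essentially dualize that argument. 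The routine commutation computations I would leave to the appendix reference.

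\begin{myproof}
 See Appendix~\ref{a:ta}.
\end{myproof}
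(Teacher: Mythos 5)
This is essentially the paper's proof: transitivity is established for the auxiliary relation $\neg(a\tensor b)\in\sep$ by two applications of the deduction rule (3) to a single meet-statement that follows from $\ts4$, exactly as you plan, and reflexivity is likewise reduced to the combinators by upward closure (the paper relegates both computations to the Coq development). The one cosmetic difference is that the paper's intermediate statement is the flattened instance $\neg\bigl(\neg(\neg b\tensor c)\tensor\neg(a\tensor b)\tensor a\tensor c\bigr)\in\sep$, which yields $\neg(a\tensor c)\in\sep$ directly after the two cuts, whereas your nested $\minf_{a,b,c}\neg\bigl[(a\art b)\tensor\neg((b\art c)\art(a\art c))\bigr]$ would leave residual double negations to strip.
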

\begin{proof}
 We deduce (2) from its variant defined in terms of $a \nentails b \defeq \neg(a\tensor b) \in \sep$:
  if $a\nentails b$ and $\neg b\nentails c$ then $a\nentails c$.
 As for disjunctive algebras, this is proven by applying twice 
 the deduction rule (3) of separators to prove instead that:
 $$¬ (¬ (¬ b \tensor c) \tensor ¬ (a \tensor b) \tensor a \tensor c) ∈ʆ$$
 This follows directly from the fact that $\ts{4}∈ʆ$.
\end{proof}

\begin{proposition}[Implicative negation]
 For all $a\in \A$, the  following holds:\noem
 \begin{multicols}{4}
  \begin{enumerate}
   \coqitem[dni_t] $a\vdash_\sep \neg\neg a$ 
   \coqitem[dne_t] $\neg \neg a \vdash_\sep a $
   \coqitem[tneg_imp_bot] $\neg a\vdash_\sep a \art \bot$ 
   \coqitem[imp_bot_tneg] $ a\art\bot \vdash_\sep \neg a $
  \end{enumerate}
 \end{multicols}\noem
\end{proposition}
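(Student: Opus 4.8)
The plan is to dualise the disjunctive double-negation argument (Proposition~\ref{a:p:dne} together with its companion on implicative negation), replacing $\parr$ by $\tensor$, the combinators $\ps1,\dots,\ps5$ by $\ts1,\dots,\ts5$, and the arrow by the conjunctive one $a\art b=\neg(a\tensor\neg b)$. The four items fall into two pairs: (3) and (4) identify the primitive negation $\neg a$ with the implicative negation $a\art\bot=\neg(a\tensor\top)$ (unfolding $\neg\bot=\top$), while (1) and (2) are the introduction and elimination of double negation. I would prove them in the order (3), (4), (1), (2), each time reducing the target $\vdash_\sep$-entailment to the statement that some universally quantified meet lies in $\sep$ and then discharging that meet from the separator axioms and the combinators, with the preorder Proposition~\ref{a:po} supplying reflexivity and transitivity throughout.

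For (3), $\neg a\vdash_\sep a\art\bot$: since $a\art\bot=\neg(a\tensor\top)$, an instance of the conjunctive "weakening" combinator $\ts2$ — which, via the auxiliary relation $u\vdash^\neg v\defeq\neg(u\tensor v)$ used already in the proof of Proposition~\ref{a:po}, reads $\neg x\vdash^\neg(x\tensor y)$ — specialised at $x:=a$, $y:=\top$ yields $\neg a\vdash^\neg(a\tensor\top)$, and I would convert this into the stated $\vdash_\sep$-entailment using the identity $u\vdash_\sep v\Leftrightarrow u\vdash^\neg\neg v$ together with a short shuffle via $\ts3$ and the de Morgan axiom $\neg\msup=\minf\neg$. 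For (4), $a\art\bot\vdash_\sep\neg a$: I would chain transitivity through $a\art\bot\vdash_\sep a\art\neg a$ and $a\art\neg a\vdash_\sep\neg a$. The first follows by subtyping from the generalised identity $\minf_a(a\art a)\in\sep$: since $\bot\leq\neg a$ and $\art$ is monotone in its codomain, $a\art\bot\leq a\art\neg a$, hence $(a\art\bot)\art(a\art\neg a)\geq(a\art\neg a)\art(a\art\neg a)\in\sep$ and upward closure applies. The second, $a\art\neg a\vdash_\sep\neg a$, is the conjunctive form of the intuitionistically valid $(a\to\neg a)\to\neg a$; via the meet-generalised deduction rule it reduces to a meet provable from $\ts1$ (contraction) and $\ts4$ after reordering $\tensor$-arguments with $\ts3$.

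For (1), $a\vdash_\sep\neg\neg a$: here $a\art\neg\neg a=\neg(a\tensor\neg\neg\neg a)$, and using $\ts3$-commutativity of $\tensor$ modulo $\seq$, the de Morgan axiom, and the general collapse ``$\neg\neg\neg\neg x\in\sep$ implies $\neg\neg x\in\sep$'' recalled in the text, one rewrites the goal as an instance of $\ts2$ closed off by reflexivity $\neg a\art\neg a\in\sep$. Item (2), $\neg\neg a\vdash_\sep a$, is the heaviest: combining transitivity with (3)--(4) reduces it to $\minf_a\bigl((\neg a\art\bot)\art a\bigr)\in\sep$, and then, mirroring the disjunctive argument, Proposition~\ref{a:po} splits this into $\minf_a\bigl((\neg a\art\bot)\art\neg a\art a\bigr)\in\sep$ (subtyping from the identity) and $\minf_a\bigl((\neg a\art a)\art\neg a\art a\bigr)\in\sep$, the latter reducing, after associativity and commutativity shuffles via $\ts5$ and $\ts3$, to a meet that is a subtype of — or a couple of $\ts4$-deductions from — the combinator $\ts4$.

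The main obstacle is bookkeeping rather than ideas. Unlike the $\ps_i$, which are already $\vdash_\sep$-facts in the disjunctive world, the $\ts_i$ are written in the shape $\neg[\neg(\cdots)\tensor(\cdots)]$ and therefore naturally yield $\vdash^\neg$-facts (and even those only up to $\ts3$-reordering, since $\tensor$ is not literally commutative); turning each one into the $\vdash_\sep$-entailment the proposition asks for requires the identity $u\vdash_\sep v\Leftrightarrow u\vdash^\neg\neg v$, the insertion or deletion of $\neg\neg$'s legitimated only by $\ts3$, the de Morgan axiom, and the $\neg^4\!\to\!\neg^2$ collapse, all while keeping track of which relation each intermediate lives in. Making the negations line up with the targets is where essentially all the work sits, which is why this is deferred to the appendix — and note that none of it uses the classical condition on $\sep$, since (2) asserts the object-level implication $\neg\neg a\to a$ rather than the meta-level closure $\neg\neg a\in\sep\Rightarrow a\in\sep$.
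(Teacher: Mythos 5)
Your overall strategy is the one the paper uses: the same five combinators $\ts{1},\dots,\ts{5}$, the auxiliary relation $u\vdash^\neg v\defeq\neg(u\tensor v)\in\sep$, the substitution principle extracted from $\ts{4}$ (if $x\vdash_\sep y$ and $\neg(c\tensor y)\in\sep$ then $\neg(c\tensor x)\in\sep$), and the correct observation that no classicality assumption is needed all match the paper's sketch and its Coq development. The difference that matters is the order in which you prove the four items, and it introduces a circularity. You prove (3) and (4) first and then reduce (2) to them (``combining transitivity with (3)--(4) reduces it to $\minf_a((\neg a\art\bot)\art a)\in\sep$''). But (3) and (4) cannot be obtained without (1) and (2): unfolding $\neg a\vdash_\sep a\art\bot$ gives $\neg(\neg a\tensor\neg\neg(a\tensor\top))\in\sep$, whereas $\ts{2}$ only yields $\neg(\neg a\tensor(a\tensor\top))\in\sep$, and passing from the latter to the former means replacing $a\tensor\top$ by $\neg\neg(a\tensor\top)$ in the second slot --- which, via the $\ts{4}$-rule, requires exactly $\neg\neg(a\tensor\top)\vdash_\sep a\tensor\top$, an instance of (2). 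The paper's own sketch of (3) invokes item (2) explicitly at this point. Your ``short shuffle via $\ts{3}$ and the de Morgan axiom'' cannot do this job: the de Morgan axiom commutes $\neg$ with joins, it does not insert a double negation under a $\tensor$. The same issue recurs in (4): both directions of the contraposition lemma and the step $a\art\neg a\vdash_\sep\neg a$ (whose unfolding differs from the relevant $\ts{1}$ instance by $a$ versus $\neg\neg a$ in two positions) already require (1) or (2).

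The repair is to invert your order, as the paper does: establish (1) and (2) first, directly from $\ts{1}$--$\ts{5}$ (these are the ``easy manipulations'' the paper delegates to Coq), and only then derive (3) and (4), for which your transitivity chains --- in particular the clean subtyping argument $a\art\bot\leq a\art\neg a$ from the generalised identity --- are otherwise sound. Your reduction of (2) to $\minf_a\bigl((\neg a\art\bot)\art a\bigr)\in\sep$ via (3) must then be dropped, since it is precisely what closes the cycle.
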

\begin{proof}
 Easy manipulations of the algebras, see Coq proofs. We sketch the two last to give an idea:
 \begin{enumerate}
 \setcounter{enumi}{2}
  \item Observe that $\ts{4}$ implies that 
  if $a \vdash_\sep b$ and $( \neg(c \tensor b)) \in\sep$ then $ (\neg  (c \tensor a))\in\sep$.
  Then the claim follows from the fact that 
  $\neg\neg (a \tensor \neg \bot) \vdash_\sep  a \tensor \neg \bot$ (by 2) and that 
  $(\neg  a) \tensor a \tensor \neg \bot \in\sep $ (using $\ts 2$).
  \item We first prove that $\neg a \vdash_\sep \neg b$ implies $b \vdash_\sep a$
  and  that $\neg \bot = \top$. 
  Then $a \vdash_\sep a \tensor \top$ follows by upward closure from $\ts{1}$.\qedhere
 \end{enumerate}

\end{proof}

For technical reasons,
we define: 
$$a\Diamond b \defeq \msup \{c: a \leq \neg (b \tensor c) \}$$
We first show that:
\begin{lemma}[Adjunction][tadj]
 For any $a,b,c\in\A$, $c \leq a \Diamond b$   iff   $ a \leq \neg (b \tensor c)$.
\end{lemma}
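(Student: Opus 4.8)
The statement to prove is: for all $a,b,c\in\A$, we have $c \leq a \Diamond b$ if and only if $a \leq \neg(b\tensor c)$, where $a\Diamond b \defeq \msup\{c' : a \leq \neg(b\tensor c')\}$. The plan is to exploit the defining properties of a conjunctive structure, in particular the commutation of $\neg$ with joins (axiom~5 of \Cref{def:conj_struct}, i.e. $\neg\msup_{a\in A}a = \minf_{a\in A}\neg a$) together with the distributivity of $\tensor$ over joins (axioms~3 and~4).

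\emph{Right-to-left.} Suppose $a \leq \neg(b\tensor c)$. Then $c$ is a member of the set $\{c' : a \leq \neg(b\tensor c')\}$, hence $c \leq \msup\{c' : a \leq \neg(b\tensor c')\} = a\Diamond b$ by definition of the supremum. This direction is immediate.

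\emph{Left-to-right.} This is the direction where the structure axioms do the work. Suppose $c \leq a\Diamond b = \msup_{c'\in S}c'$, where $S = \{c' : a \leq \neg(b\tensor c')\}$. First I would observe that by monotonicity of $\tensor$ in its second argument (axiom~2), it suffices to show $a \leq \neg(b\tensor \msup_{c'\in S}c')$, since then $a \leq \neg(b\tensor \msup_{c'\in S}c') \leq \neg(b\tensor c)$ using that $c\leq\msup_{c'\in S}c'$ and the contravariance of $\neg$ (axiom~1). Now using distributivity (axiom~3), $b\tensor \msup_{c'\in S}c' = \msup_{c'\in S}(b\tensor c')$, and then by the commutation of $\neg$ with joins (axiom~5), $\neg\bigl(\msup_{c'\in S}(b\tensor c')\bigr) = \minf_{c'\in S}\neg(b\tensor c')$. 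Since $a \leq \neg(b\tensor c')$ for every $c'\in S$ by definition of $S$, we get $a \leq \minf_{c'\in S}\neg(b\tensor c')$, which is exactly $a \leq \neg(b\tensor\msup_{c'\in S}c')$, concluding the argument.

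\emph{Main obstacle.} There is no serious obstacle here: the proof is a routine unfolding combined with the three commutation/monotonicity axioms of conjunctive structures. The only subtlety worth flagging is the case where $S$ is empty: then $a\Diamond b = \msup\emptyset = \bot$, and $c\leq\bot$ forces $c=\bot$, so one needs $a\leq\neg(b\tensor\bot) = \neg\bot = \top$, which holds trivially. (The identity $\neg\bot=\top$ is one of the derived equalities listed right after \Cref{def:conj_struct}.) I would mention this degenerate case briefly but it is subsumed by the general argument above once one reads $\msup\emptyset=\bot$ consistently. This lemma will then be used to relate the primitive negation to the arrow-based one and to carry out the manipulations needed for the combinators $\ts1,\dots,\ts5$.
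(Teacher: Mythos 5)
Your proof is correct and follows essentially the same route as the paper's: the right-to-left direction is immediate from the definition of the join, and the left-to-right direction combines distributivity of $\tensor$ over joins with the commutation $\neg\msup = \minf\neg$ to get $a \leq \neg(b\tensor(a\Diamond b))$, then concludes by variance and transitivity. The remark about the empty-set case is a harmless addition that the general argument already covers.
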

\begin{proof}
 \textbf{($\bm{\Rightarrow}$)} 
 Assume $c \leq a \Diamond b$.
 We use the transitivity to prove that:
 $$ a \leq \neg (b \tensor (a\Diamond b)) \qquad \text 
 \qquad \neg (b \tensor (a\Diamond b)) \leq \neg (b \tensor c)$$
 The right hand side follows directly from the assumption, while the
 left hand side is a consequences of distribution laws:
 $$
 \neg (b \tensor (a\Diamond b))
 = \neg (b \tensor \msup \{c: a \leq \neg (b \tensor c) \})
 = \minf \{\neg (b \tensor c) : a \leq \neg (b \tensor c)\}\geq a$$
 
 \textbf{($\bm{\Leftarrow}$)} 
 Trivial by definition of $a\Diamond b$.

\end{proof}

\begin{proposition}[][app_closed]
 If $a\in\sep$ and $b\in\sep$ then $ab\in\sep$.
\end{proposition}
\begin{proof}
 First, observe that we have:
 $$ab = \minf\{\neg\neg c: a \leq b\art c\} 
      = \neg \msup \{\neg c: a \leq b\art c\}$$
 Then one can easily show that:
 $$\neg (a\Diamond b) \leq ab $$
 and therefore it suffices to show that
 $\neg (a\Diamond b)\in\sep$.
 To that end, we use the deduction rule of the separator and show that 
 $$\neg (b \tensor (a\Diamond b))\in\sep$$
 This is now easy using the previous lemma and that $a\in\sep$, since we have:
 $$a \leq \neg (b \tensor (a\Diamond b)) 
 \text{~~iff~~} a\Diamond b \leq a\Diamond b\eqno\qedhere$$
 
\end{proof}

The beta reduction rule now involves a double-negation 
on the reduced term:
\begin{proposition}[][beta_reduction]
 $(\lambda f)a \leq \neg\neg f(a)$
\end{proposition}
\begin{myproof}
 We first show that $t\leq a\art b$ implies $ta \leq \neg\neg b$,
 and we use that $\lambda f \leq a \art f(a)$ to conclude.
  \end{myproof}

We show that Hilbert's combinators $\comb{k}$ and $\comb{s}$
belong to any conjunctive separator:
\begin{proposition}[$\comb{k}$ and $\comb{s}$][tsep_K]
We have:\\[0.5em]
\begin{minipage}{0.43\textwidth}
 \begin{enumerate}
 \coqitem[tsep_K] $(\lambda x y.x)^\A \in\sep$
 \end{enumerate}
\end{minipage}
\begin{minipage}{0.55\textwidth}
 \begin{enumerate}\setcounter{enumi}{1}
 \coqitem[tsep_S] $(\lambda x y z.x\,z\,(y\,z))^\A \in\sep$
\end{enumerate}
\end{minipage}
\end{proposition}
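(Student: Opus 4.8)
The plan is to compute the interpretations $\comb{k}^\A = (\lambda x y.x)^\A$ and $\comb{s}^\A = (\lambda x y z.x\,z\,(y\,z))^\A$ by unfolding the definitions $\lambda f = \minf_{a\in\A}(a\art f(a))$ and $ab = \minf\{\neg\neg c : a\leq b\art c\}$ (recall $a\art b = \neg(a\tensor\neg b)$), simplifying each layer with the commutation axioms of conjunctive structures. For instance $\lambda(y\mapsto x) = \minf_{a\in\A}\neg(a\tensor\neg x) = \neg\!\left((\msup_{a\in\A}a)\tensor\neg x\right) = \neg(\top\tensor\neg x)$, whence $\comb{k}^\A = \minf_{a\in\A}\neg\!\left(a\tensor\neg\neg(\top\tensor\neg a)\right)$; the computation for $\comb{s}^\A$ is similar but longer, since its three nested applications each contribute a double negation.

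Next, rather than chasing an exact closed form, I would establish membership in $\sep$ through the internal logic, working with the entailment $a\vdash_\sep b$ and with the auxiliary relation $a\nentails b \defeq \neg(a\tensor b)\in\sep$; both are preorders, and the latter enjoys a transitivity principle derived from $\ts4$ (as in the proof of the preorder proposition). The tools are: upward closure, the generalized forms of the deduction rule and of closure under $\tensor$, closure of $\sep$ under application together with the inequality $\neg(a\Diamond b)\leq ab$, and the fact that $\ts1,\dots,\ts5\in\sep$. With these, $\comb{k}^\A\in\sep$ is obtained by reducing it — through a short chain of these steps and instances of $\ts1$ and $\ts2$ — to a combinatorial identity witnessed by $\ts2$, in close analogy with the disjunctive-algebra derivation of $\comb{k}^\A\in\psep$ from $\ps2$ (and $\ps3$) in \Cref{a:s:da_ia}. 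Likewise $\comb{s}^\A\in\sep$ follows from a longer derivation whose backbone is the same as the disjunctive proof of $\comb{s}^\A\in\psep$ from $\ps4$ and $\ps5$: one first handles the argument-permutation step, then applies generalized transitivity against instances (subtypes) of $\ts4$ repeatedly, and finally uses $\ts5$ to discharge the associativity obligation. An alternative packaging would curry the obligations via the Heyting adjunction $a\vdash_\sep b\art c \Leftrightarrow a\tensor b\vdash_\sep c$ of \Cref{p:ta_heyting}.

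The main obstacle is the bookkeeping of the double negations. In implicative and disjunctive algebras $\comb{k}^\A$ and $\comb{s}^\A$ coincide with the impredicative encodings of their principal types, so membership in the separator is essentially a short upward-closure argument built on $\ps2,\ps4,\ps5$; here, because each application inserts a $\neg\neg$ — only $(\lambda f)a\leq\neg\neg f(a)$ holds — $\comb{k}^\A$ and $\comb{s}^\A$ are in general \emph{strictly} below those encodings, and one has no choice but to transport the extra $\neg\neg$'s through every step of the derivation. The combinators $\ts_i$ are shaped as ``$\neg$ of a join'' precisely so as to absorb these negations, and matching the sub-terms correctly at each step — without ever needing $a\leq\neg\neg a$, which fails in general — is the delicate point (and is why the argument is carried out in Coq). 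Once both combinators are in $\sep$, closure of $\sep$ under application gives $t_0^\A\in\sep$ for any combinatory code $t_0$, which is what feeds \Cref{thm:lambda}.
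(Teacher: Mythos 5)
Your plan matches the paper's own (sketched) proof in all essentials: both first compute the interpretations in closed form using the commutation laws, then work in the internal entailment preorders ($\vdash_\sep$ and $\nentails$) to reduce $\comb{k}^\A$ to instances of the low-numbered combinators (the paper uses $\ts2$ together with $\ts3$ for the commutativity step, rather than $\ts1$) and $\comb{s}^\A$ to a longer chain driven by $\ts4$ and $\ts5$, with the double negations introduced by application as the acknowledged delicate point, deferred to Coq. The only organizational difference is that the paper factors the $\comb{s}$ case through first proving $\minf_{a,b,c}((a\art b\art c)\art(a\art b)\art a\art c)\in\sep$ and then bridging from $\comb{s}^\A$ to this type via intermediate statements, whereas you run the disjunctive-style derivation directly on $\comb{s}^\A$; both routes are viable and rely on the same ingredients.
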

\begin{proof}
We only sketch these proofs are quite involved and require several auxiliary results (see the Coq files). 
\begin{enumerate}
 \item We first show $(\lambda x y.x)^\A=  \minf_{a,b}(¬ (a ⊗ b ⊗ ¬ b))$.
 Then we conclude by transitivity by showing that $\minf_{a,b}(¬ (a ⊗ ¬b ⊗  b))∈ʆ$
 and $\minf_{a,b}¬ ((¬b ⊗  b)⊗(b⊗¬ b))∈ʆ$. The latter follows from $\ts{3}$, while
 the former follows from $\ts{2}$ modulo the facts that the tensor is
 associative and commutative with respect to the separator.
 \item We first show that the interpretation of the type of $\comb{s}$ belongs to $\sep$:
 $$\minf_{a,b,c} ]((a \art b  \art c ) \art (a \art b ) \art a \art c ) ∈ʆ\eqno \coqlink[tsep_S_type]{(S1)}$$
 We then mimic the proof that $\comb{s}$ and its type are the same in implicative structure, 
 replacing the ordering relation by the entailment. We begin by showing (using distribution laws) 
 that $\comb{s}∈ʆ$ can be deduced from:
 $$\minf \{¬ (((a  ⊗ b ) ⊗ c ) ⊗ ¬ d): a c \leq b c \art d\} ∈ʆ\eqno \coqlink[tsep_S_true]{(S2)}$$
 Then, after showing that:
 $$\minf_{a,b} (a \art (b \art a b ))∈ʆ\eqno \coqlink[app_entails_inf]{(S3)}$$
 we prove that the $(S3)$ can be deduced from:
 $$ \minf_{b,c,d} ((c \art b c \art d) \art (c \art b c) \art c \art d) ∈ʆ \eqno\coqlink[preS_true]{(S4)}$$
 which follows from $(S1)$.

\end{enumerate}

\end{proof}

%

In the case where the separator is classical\footnote{Actually,
since we always have that if $\neg\neg\neg\neg a\in\sep$ then $\neg\neg a\in\sep$,
the same proof shows that in the intuitionistic 
case we have at $\neg\neg t^\A\in\sep$.}, we can prove
that it contains the interpretation 
of all closed $\lambda$-terms:
\begin{theorem}[$\lambda$-calculus][] \label{thm:lambda}
 If $\sep$ is classical and $t$ is a closed $\lambda$-term,
 then $t^\A\in\sep$.
\end{theorem}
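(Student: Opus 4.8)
The plan is to follow the combinatory argument outlined just above the statement. The two enabling facts are already in place: by the proposition on Hilbert's combinators, $\comb{k}^\A=(\lambda xy.x)^\A$ and $\comb{s}^\A=(\lambda xyz.x\,z\,(y\,z))^\A$ lie in $\sep$, and by the closure property, $a\in\sep$ and $b\in\sep$ imply $ab\in\sep$. Since the interpretation of $\lambda$-terms satisfies $(uv)^\A=u^\A v^\A$, a straightforward induction on the structure of a combinatory term $t_0$ (one built from $\comb{k}$ and $\comb{s}$ using only application) gives $t_0^\A\in\sep$.

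Next I would use combinatory completeness: bracket abstraction assigns to the closed $\lambda$-term $t$ a combinatory term $t_0$ over $\{\comb{k},\comb{s}\}$ which, once the combinators are unfolded into their defining abstractions, reduces to $t$ by a finite chain of $\beta$-steps $t_0\to_\beta t_1\to_\beta\cdots\to_\beta t_n=t$. The heart of the proof is then the lemma that each step is reflected in the lattice up to a double negation, i.e. $t_i\to_\beta t_{i+1}$ implies $t_i^\A\leq\neg\neg t_{i+1}^\A$. For a redex contracted at the head this is precisely the $\beta$-reduction inequality $(\lambda f)a\leq\neg\neg f(a)$ proved above, combined with a substitution lemma identifying $(\lambda x.s)^\A$ with $\lambda(a\mapsto(s[x:=a])^\A)$ and $(s[r/x])^\A$ with $(s[x:=r^\A])^\A$; for a redex contracted inside a subterm one propagates the inequality through the surrounding applicative and abstraction contexts, relying on the antitonicity of $\neg$ and on the identity $\neg\neg\neg c=\neg c$ hidden in the definition $ab=\minf\{\neg\neg c:a\leq b\art c\}$ to keep the accumulated negations bounded. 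As $\neg\neg$ is monotone, composing the $n$ inequalities yields $t_0^\A\leq\neg^{2n}t^\A$.

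It then remains to strip the negations. From $t_0^\A\in\sep$ and upward closure, $\neg^{2n}t^\A\in\sep$. If $\sep$ is classical, $\neg\neg a\in\sep$ implies $a\in\sep$, so iterating this $n$ times gives $t^\A\in\sep$, which is the claim. (In the intuitionistic case one has only the weaker, unconditional implication $\neg^4 a\in\sep\Rightarrow\neg^2 a\in\sep$, which collapses $\neg^{2n}t^\A\in\sep$ down to $\neg\neg t^\A\in\sep$, as the footnote records.)

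The step I expect to be the real obstacle is the $\beta$-reflection lemma for non-head redexes: the interpretation is only monotone \emph{up to} $\neg\neg$ in subterm position, so one must check carefully that $s^\A\leq\neg\neg {s'}^\A$ still propagates through an arbitrary context without the number of negations growing past what the idempotency-type laws of a conjunctive structure can reabsorb. The substitution lemma is routine but must be phrased for terms with parameters in $\A$, since $r^\A$ enters as a constant; and one should also make sure that the bracket-abstraction translation genuinely $\beta$-reduces, after unfolding the combinators, to the original term.
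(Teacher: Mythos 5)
Your proof is correct and follows essentially the same route as the paper's: combinatory completeness to obtain $t_0 \to^* t$ with $t_0$ built from $\comb{k}$ and $\comb{s}$, closure of $\sep$ under application to get $t_0^\A\in\sep$, and the inequality $t_n^\A\leq\neg\neg\, t_{n+1}^\A$ for each reduction step. The only difference is bookkeeping: the paper strips the two negations immediately after each step (classicality applied inside an induction on the length of the reduction) rather than accumulating $\neg^{2n}$ and discharging it at the end, which sidesteps your worry about growing negation towers --- and note that the auxiliary identity $\neg\neg\neg c=\neg c$ you invoke holds only up to $\seq$, not as a lattice equality, so the per-step formulation is the safer one.
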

\begin{proof}
 By combinatorial completeness, we have the existence of a
 combinatorial term $t_0$ (\emph{i.e.} a composition of $\comb{k}$ and $\comb{s}$)
 such that $t_0 \to^* t$. Since $\comb{k}\in\sep$, $\comb{s}\in\sep$ and $\sep$ is closed under application,
 $t_0^\A\in\sep$.
  For each  step $t_n \to  t_{n+1}$, 
  we have $t_n^\A \leq \neg\neg t_{n+1}^\A$, and thus $t_n^\A\in\sep$ implies $t_{n+1}^\A\in\sep$.
  We conclude by induction on the length of the reduction $t_0 \to^* t$.
\end{proof}

\subsubsection*{Induced Heyting algebra}
As in the implicative case, 
the entailment relation induces a structure of (pre)-Heyting algebra,
whose conjunction and disjunction are naturally given
$ a+b\defeq \neg(\neg a \tensor \neg b)$ and $a\times b \defeq a\tensor b$.

\begin{proposition}[Heyting Algebra][]  For any $a,b,c\in\A$ \label{a:ta_heyting}
 For any $a,b,c\in\A$, we have:\noem
 \begin{multicols}{3}
  \begin{enumerate}
    \coqitem[Heyting_and_l] $a\times b \vdash_\sep a$
    \coqitem[Heyting_and_r] $a\times b \vdash_\sep b$
    \coqitem[Heyting_or_l] $a \vdash_\sep a+b$ 
    \coqitem[Heyting_or_r] $b \vdash_\sep a+b$
    \coqitem[Heyting_adj] $a \vdash_{\sep}  b \art c ~~ \text{iff } ~~ a\times b \vdash_{\sep} c$
  \end{enumerate}

 \end{multicols}
\end{proposition}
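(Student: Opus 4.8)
The plan is to unfold every $\vdash_\sep$, $\art$, $\times$ and $+$ into a bare membership statement in $\sep$ and then reduce it, using the separator's closure rules, to one of the combinators $\ts1,\dots,\ts5$. Recall that $a\vdash_\sep b$ is $\neg(a\tensor\neg b)\in\sep$, that $a\times b=a\tensor b$, $a\art b=\neg(a\tensor\neg b)$, and $a+b=\neg(\neg a\tensor\neg b)$. The closure rule that does the real work is the ``contradiction'' rule (3) — from $\neg(a\tensor b)\in\sep$ and $a\in\sep$ deduce $\neg b\in\sep$ — which, applied to the shape $\neg[\,\neg(\cdots)\tensor(\cdots)\,]$ of each $\ts i$, lets us peel off hypotheses. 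Beyond the $\ts i$ I would use three facts already established: (i) $\tensor$ is associative and commutative modulo $\sep$ (as invoked in the proof that $\comb k,\comb s\in\sep$), i.e. $\neg(x\tensor y)\in\sep\Leftrightarrow\neg(y\tensor x)\in\sep$ and likewise for reassociation; (ii) the monotonicity principle extracted from $\ts4$ that is used in the proof of the Preorder proposition: if $u\vdash_\sep v$ and $\neg(c\tensor v)\in\sep$ then $\neg(c\tensor u)\in\sep$; and (iii) the double-negation entailments $a\vdash_\sep\neg\neg a$ and $\neg\neg a\vdash_\sep a$ from the Implicative negation proposition. Combining (ii) and (iii) gives the one lemma I would lean on everywhere: for all $a,X$, $\neg(a\tensor X)\in\sep\Leftrightarrow\neg(a\tensor\neg\neg X)\in\sep$, i.e. a double negation may be freely inserted or deleted underneath a negated tensor.

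With this toolkit the five items are short. For (1), $a\times b\vdash_\sep a$ unfolds to $\neg((a\tensor b)\tensor\neg a)\in\sep$, whose flat tensor of factors is $\{a,b,\neg a\}$; this is $\ts2$, i.e. $\minf_{x,y}\neg(\neg x\tensor(x\tensor y))$, instantiated at $x=a$, $y=b$, after rearranging modulo $\sep$ by (i). Item (2) is identical, instantiating $\ts2$ at $x=b$, $y=a$. For (3), $a\vdash_\sep a+b$ unfolds to $\neg\bigl(a\tensor\neg\neg(\neg a\tensor\neg b)\bigr)\in\sep$; by the double-negation lemma it suffices to prove $\neg\bigl(a\tensor(\neg a\tensor\neg b)\bigr)\in\sep$, which as a flat tensor has factors $\{a,\neg a,\neg b\}$, i.e. $\ts2$ at $x=a$, $y=\neg b$ modulo $\sep$; item (4) is the mirror image, $\ts2$ at $x=b$, $y=\neg a$. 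Finally, for the adjunction (5): $a\vdash_\sep b\art c$ unfolds to $\neg\bigl(a\tensor\neg\neg(b\tensor\neg c)\bigr)\in\sep$; deleting the double negation turns this into $\neg\bigl(a\tensor(b\tensor\neg c)\bigr)\in\sep$, and reassociating modulo $\sep$ turns that into $\neg\bigl((a\tensor b)\tensor\neg c\bigr)\in\sep$, which is exactly $a\times b\vdash_\sep c$. Every step used here is an equivalence, so both directions of the ``iff'' come out at once.

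The only genuinely delicate point — and the one I would be most careful about — is the handling of double negation. In a conjunctive structure $\neg\neg X$ is \emph{not} below $X$ in the lattice (this is exactly why beta reduction only yields $(\lambda f)a\leq\neg\neg f(a)$), so none of the reductions above can be carried out at the level of $\leq$; they genuinely rely on the $\sep$-level equivalence (iii), in particular on $\neg\neg X\vdash_\sep X$. One must likewise resist treating $\tensor$ as literally associative or commutative: every ``rearrange modulo $\sep$'' move is a use of (i), and spelling (i) out from $\ts3$, $\ts5$ and rule (3) — obtaining reassociation in \emph{both} directions, since $\ts5$ only gives one — is where most of the bookkeeping really lives. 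Modulo that bookkeeping, each of the five items is a two- or three-line calculation against a single $\ts i$. These preorder-level facts are then precisely what is needed to transport the Heyting operations to the quotient $\A/{\seq}$, as is done immediately afterwards.
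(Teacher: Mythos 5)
Your proof is correct and takes essentially the same route as the paper's (which the text compresses to ``easy manipulation of conjunctive algebras, see the Coq proofs''): unfold everything to a membership $\neg(\cdots\tensor\cdots)\in\sep$ and discharge it against $\ts{2}$, $\ts{3}$, $\ts{5}$ via the separator's deduction rule, using the $\ts{4}$-monotonicity principle together with $a\vdash_\sep\neg\neg a$ and $\neg\neg a\vdash_\sep a$ to insert or delete double negations and to rearrange under the negated tensor. The one caveat you already flag is real but fully covered by your toolkit: rearrangement \emph{inside} a subterm (needed e.g.\ for items (2) and (3), where the target is not a cyclic permutation of the chosen $\ts{2}$ instance, outer commutation plus reassociation only generating cyclic permutations) must go through the $\ts{4}$ congruence rather than outer moves alone.
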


\begin{proof}
 Easy manipulation of conjunctive algebras, see the Coq proofs.
\end{proof}

\subsubsection*{Conjunctive tripos}
We will need the following lemma:
\begin{lemma}[Adjunction][demiadj_app]
 If $a\leq b \art c$ then $ab\leq \neg\neg c$.
\end{lemma}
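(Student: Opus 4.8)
The final statement to prove is \Cref{demiadj_app}: if $a \leq b \art c$ then $ab \leq \neg\neg c$, where $\art$ is the conjunctive arrow $b \art c \defeq \neg(b \tensor \neg c)$ and application is defined by $ab \defeq \minf\{\neg\neg c : a \leq b \art c\}$.

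\begin{proof}
The plan is to proceed directly from the definitions, exploiting the distributivity law for the tensor over joins and the adjunction lemma \trurl{tadj}{for} $a \Diamond b$.

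First I would recall the reformulation of application established in the proof of \coqlink[app_closed]{the closure proposition}: by definition
\[
ab = \minf\{\neg\neg c' : a \leq b \art c'\} = \neg \msup\{\neg c' : a \leq b \art c'\},
\]
the second equality using axiom (5) of conjunctive structures (the negation turns the meet into a join of negations). Unfolding $b\art c' = \neg(b\tensor\neg c')$, the condition $a \leq b\art c'$ reads $a \leq \neg(b\tensor\neg c')$, which by \Cref{tadj} is equivalent to $\neg c' \leq a \Diamond b$. Hence the join $\msup\{\neg c' : a \leq b\art c'\}$ is bounded above by $a\Diamond b$, and since $\neg$ is antitone we get $\neg(a\Diamond b) \leq ab$; this is exactly the inequality $\neg(a\Diamond b)\leq ab$ already used in the proof of \coqlink[app_closed]{app\_closed}.

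It therefore suffices to show that the hypothesis $a \leq b\art c$ forces $\neg(a\Diamond b) \leq \neg\neg c$ — or, more directly, to bound $ab$ itself. The cleanest route: from $a \leq b\art c = \neg(b\tensor\neg c)$ and \Cref{tadj} we obtain $\neg c \leq a\Diamond b$; applying the antitone negation twice is not quite what we want, so instead observe that $\neg c$ being one of the elements $\neg c'$ appearing in the defining join for $ab$ (take $c' = c$), we have $\neg c \leq \msup\{\neg c' : a\leq b\art c'\}$, hence $ab = \neg\msup\{\ldots\} \leq \neg(\neg c) = \neg\neg c$. This last step is the whole content of the lemma once the reformulation of $ab$ as a single negation of a join is in place.

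The main obstacle — really the only subtlety — is making sure the rewriting $ab = \neg\msup\{\neg c' : a \leq b\art c'\}$ is justified by the commutation axiom $\neg\msup_{a\in A} a = \minf_{a\in A}\neg a$ of \Cref{def:conj_struct}, applied to the set $\{\neg c' : a\leq b\art c'\}$: one needs $\minf\{\neg\neg c' : a\leq b\art c'\} = \neg\msup\{\neg c' : a\leq b\art c'\}$, which is precisely that axiom with the index set being $\{\neg c'\}$ and each element replaced by its negation. Everything else is monotonicity of $\neg$ and the observation that $c$ itself is an admissible witness $c'$. I would write this up in four or five lines.
\end{proof}
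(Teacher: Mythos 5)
Your final argument is correct, but it reaches the conclusion by an unnecessary detour. The whole content of the lemma is already visible in the definition $ab \defeq \minf\{\neg\neg c' : a \leq b \art c'\}$: the hypothesis $a \leq b\art c$ says precisely that $\neg\neg c$ is an element of the set over which this meet is taken, and a meet is a lower bound of its set, so $ab \leq \neg\neg c$ immediately. (The paper itself does not print a proof of this lemma; in the proof of the $\beta$-reduction proposition it is just asserted as the first step, and the one-line argument above is what the Coq development carries out.) Your route — first rewriting $ab = \neg\msup\{\neg c' : a\leq b\art c'\}$ via the commutation axiom $\neg\msup = \minf\neg$, then noting $\neg c \leq \msup\{\neg c'\}$ because $c$ is an admissible witness, then applying antitonicity of $\neg$ — is a valid but strictly equivalent restatement of the same observation, at the cost of invoking an axiom that plays no real role. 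Moreover, your middle paragraph on $a\Diamond b$ and the bound $\neg(a\Diamond b)\leq ab$ is, as you yourself note, ``not quite what we want''; it contributes nothing to the final argument and should be deleted. What remains after pruning is exactly the one-line proof: $c$ is a witness for the meet defining $ab$, hence $ab\leq\neg\neg c$.
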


In order to obtain a conjunctive tripos, we define:
$$\hugeex_{i\in I} a_i ~~\defeq ~~ \msup_{i\in I} a_i 
\qquad\qquad
\qquad\qquad
\hugefa_{i\in I} a_i ~~\defeq ~~ \neg (\msup_{i\in I} \neg a_i)
$$

\begin{theorem}[Conjunctive tripos][]\label{a:ta:tripos}
 Let $(\A,\leq,\to,\sep)$ be a classical
conjunctive algebra. The following functor 
 (where $f:J\to I$):
 $$
 \T :  I\mapsto \A^I/\usep
\qquad
\T(f) :\left\{\begin{array}{lcl}
		 \A^I/\usep &\to& \A^J/\sep[J]\\ [0.5em]
		 \left[(a_i)_{i\in I}\right]& \mapsto &[(a_{f(j)})_{j\in J}]
		\end{array}
		\right.
	$$
 defines a tripos.  
\end{theorem}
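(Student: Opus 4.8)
The plan is to mimic, essentially verbatim, the proof of the implicative tripos given in Appendix~\ref{a:imp_trip}, taking advantage of two facts already established for classical conjunctive algebras: first, \Cref{thm:lambda}, which tells us that every closed $\lambda$-term's interpretation lies in $\sep$; and second, the relevant adjunction lemmas (the $\Diamond$-adjunction, \lturl{tadj}{}, and the application adjunction \coqlink[demiadj_app]{}, i.e.\ $a\leq b\art c$ implies $ab\leq\neg\neg c$). The overall structure: verify functoriality, then the four hyperdoctrine conditions (equality / diagonal left adjoint, $\forall$ and $\exists$ as adjoints to reindexing, Beck--Chevalley), and finally exhibit a generic predicate. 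Since $\A^I$ is again a conjunctive structure (the axioms are pointwise) and $\sep[I]$ is the uniform separator, the quotients $\A^I/\sep[I]$ are Heyting algebras by \Cref{a:ta_heyting} applied pointwise, and $\T$ is a well-defined contravariant functor to $\HA$; functoriality is immediate.

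First I would handle the diagonal/equality clause. Define $(=_I): i,j\mapsto \minf_{a\in\A}(a\art a)$ if $i=j$ and $\bot\art\top$ otherwise, exactly as in the implicative case (note $\minf_a(a\art a)$ is the interpretation of the identity $\lambda$-term, hence in $\sep$ by \Cref{thm:lambda}, and $\bot\art\top=\neg(\bot\tensor\neg\top)=\top$). One must show $[\top]_I\uleq\T(\delta_I)([a])$ iff $[=_I]\oldleq_{\sep[I\times I]}[a]$ for any $[(a_{ij})]\in\T(I\times I)$. This is the same $\lambda$-calculus manipulation as before: the crucial point is that the needed combinators are all realized by closed $\lambda$-terms and thus belong to the (uniform) separator by \Cref{thm:lambda}, the only subtlety being that $\beta$-reduction now goes through a double negation, i.e.\ $(\lambda f)a\leq\neg\neg f(a)$ (\coqlink[beta_reduction]{}); but since $\sep$ is classical, $\neg\neg c\in\sep\Rightarrow c\in\sep$, so every step that formerly used $\leq$ to transport membership in $\sep$ still works, just with an extra double-negation bookkeeping. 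Next, for reindexing along a projection $\pi^1_{I,J}$, define $(\forall J)_I([(a_{ij})])\defeq[(\hugefa_{j\in J}a_{ij})_i]$ and $(\exists J)_I([(a_{ij})])\defeq[(\hugeex_{j\in J}a_{ij})_i]$ with $\hugeex_{j}a_j=\msup_j a_j$ and $\hugefa_j a_j=\neg(\msup_j\neg a_j)$ as defined just above the theorem statement; verifying the two adjunctions amounts to checking the semantic introduction/elimination rules for $\exists$ and $\forall$ in a conjunctive structure, which follow from the commutation axioms (Definition~\ref{def:conj_struct}(3)--(5)) together with the $\lambda$-term realizers, again up to double negations. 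Beck--Chevalley is then a purely equational check on these joins/meets, identical in form to the implicative case.

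Finally, for the generic predicate I would set $\Prop\defeq\A$ and $\trth\defeq[\id_\A]\in\T(\A)$; given $[(a_i)_{i\in I}]\in\T(I)$, the characteristic map $\chi_a: i\mapsto a_i$ satisfies $\T(\chi_a)(\trth)=[(a_i)_i]$ on the nose, exactly as before. I expect the main obstacle to be the adjunction verifications for $\forall$ and $\exists$: in the implicative setting these were ``easy manipulations of $\lambda$-calculus'' because application stayed at the level of values, whereas here each application drags in a $\neg\neg$, so the clean inequalities $t(\lambda x.sx)\leq b_i$ become $t(\lambda x.sx)\leq\neg\neg b_i$ and one must repeatedly invoke classicality of $\sep$ to collapse the negations before concluding membership. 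One has to be careful that the quotient is taken modulo $\seq$ (which already absorbs $\neg\neg$ by \coqlink[dne_t]{} and \coqlink[dni_t]{}), so in fact $[\neg\neg c]=[c]$ in $\A^I/\sep[I]$, which makes the bookkeeping harmless; making this observation explicit early on is what keeps the proof from ballooning. Beyond that, everything is a transcription of Appendix~\ref{a:imp_trip}, so I would state the lemma, point to that appendix for the shared structure, and only spell out the $\exists$/$\forall$ adjunctions and the equality clause in the conjunctive idiom.
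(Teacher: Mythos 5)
Your proposal is correct and follows essentially the same route as the paper's own proof: the same definitions of $=_I$, of $(\forall J)_I$ and $(\exists J)_I$ via $\hugefa$ and $\hugeex$, and of the generic predicate, with the same key observation that the double negations introduced by application ($(\lambda f)a\leq\neg\neg f(a)$) are absorbed by classicality of $\sep$ and by the quotient modulo $\seq$. The paper merely makes the adjunction verifications slightly more explicit (e.g.\ via the derived rule from $\ts4$ and the commutation laws), but the structure and the identified subtleties coincide with yours.
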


\begin{proof}
 The proof mimics the proof in the case of implicative algebras,
  see \Cref{a:ta:tripos}.
 
We verify that $\T$ satisfies all the necessary conditions to be a tripos.
    \begin{itemize}
    \item The functoriality of $\T$ is clear.
    \item For each $I\in\Set$, the image of the corresponding diagonal morphism $\T(\delta_I)$ associates to any element $[(a_{ij})_{(i,j)\in I\times I}]\in\T(I\times I)$ 
    the element  $[(a_{ii})_{i\in I}]\in\T(I)$.
    We define
    $$(=_{I})~:~ i,j \mapsto \begin{cases}\minf_{a\in\A}(a\to a)  & \text{if } i=j\\ \bot \to \top & \text{if } i\neq j\\ \end{cases}$$
    and we need to prove that for all $[a]\in\T(I\times I)$:
    $$     [\top]_{I} \uleq \T(\delta_I)(a) \qquad\Leftrightarrow  \qquad[=_{I}] \oldleq_{S[I\times I]} [a]$$
    Let then $[(a_{ij})_{i,j\in I}]$ be an element of $\T(I\times I)$.

    From left to right, assume that $[\top]_{I} \uleq \T(\delta_I)(a)$, that is to say that 
    there exists $s\in\sep$ such that for any $i\in I$, $s\leq \top \to a_{ii}$.
    We would like to reproduce the proof in the implicative case, which uses $\lambda z. z(s(\lambda x.x))$.
    Here, due to the double-negation induced by the application (see \Cref{s:conj_lambda}), 
    we can only show \turl{tripos_eq1}{that}:
    $$\lambda z. z(s(\lambda x.x)) \leq i =_I j \to \neg\neg\neg\neg a_{ij}\eqno (\forall i,j)$$.
    Indeed, if $i\neq j$, we have that: 
    $$\begin{array}{ll}
                &\lambda z. z(s(\lambda x.x)) \leq (\top \to \bot) \to \neg\neg\neg\neg a_{ij}\\
     \Leftarrow \qquad & (\top\to\bot) \to (\top\to\bot)(s(\lambda x.x)) \leq (\top \to \bot) \to \neg\neg\neg\neg a_{ij}\\
     \Leftarrow \qquad & (\top\to\bot)(s(\lambda x.x)) \leq \neg\neg\neg\neg a_{ij}\\
     \Leftarrow \qquad & \top\to\bot\leq (s(\lambda x.x)) \to \neg\neg a_{ij}
     \end{array}\eqno\begin{array}{r}\\ (\lambda-\text{def})\\(\text{variance})\\(\text{adjunction})\\
    \end{array}
    $$
    the last one being true by subtyping.
    If $i= j$, we have that:
    $$\begin{array}{ll}
                   &\lambda z. z(s(\lambda x.x)) \leq (\minf a\to a) \to \neg\neg\neg\neg a_{ii} \\
     \Leftarrow \qquad & (\minf a\to a) \to (\minf a\to a)(s(\lambda x.x)) \leq (\minf a\to a) \to \neg\neg\neg\neg a_{ij}\\
     \Leftarrow \qquad & (\minf a\to a)(s(\lambda x.x)) \leq \neg\neg\neg\neg a_{ij}   \\
     \Leftarrow \qquad & (\minf a\to a)\leq (s(\lambda x.x)) \to \neg\neg a_{ij}         \\
     \Leftarrow \qquad & (s(\lambda x.x))\to (s(\lambda x.x))\leq (s(\lambda x.x)) \to \neg\neg a_{ij} \\
     \Leftarrow \qquad & s(\lambda x.x)\leq  \neg\neg a_{ij}                                            \\
     \Leftarrow \qquad & s\leq \lambda x.x\to a_{ij}                              \\
     \end{array}
     \eqno
     \begin{array}{r}\\ (\lambda-\text{def})\\(\text{variance})\\(\text{adjunction})\\(\text{variance})\\(\text{adjunction})\\
    \end{array}
    $$
    the last one being true by assumption. We conclude using the fact that any $\lambda$-terms with parameters in $\sep$ belongs to $\sep$
    using a slight variant of \Cref{thm:lambda}.

    \turl{tripos_eq2}{From right to left}, if there exists $s\in\sep$ such that for any $i,j\in I$, $s\leq  i =_I j \to a_{ij}$,
    then in particular for all $i$ $\minf_a( a \art a) \vdash a_{ii}$.
    We use the transitivy of $\vdash$ to show that $\top \vdash a_{ii}$ follows from $\minf_a( a \art a) \vdash a_{ii}$ and
    $\top \vdash \minf_a( a \art a)$. Writing $Id$ for $\minf_a( a \art a)$, the latter is obtained by using the deduction rule:
    $$\infer{\vdash \neg (\top \tensor Id}{\vdash \neg (Id\tensor(\top \tensor \neg Id)) & \vdash Id}$$
    Using $\ts5$ to get $\neg (Id\tensor(\top \tensor \neg Id))$ from $\neg(\neg(Id\tensor \top) \tensor \neg Id)$ which follows from $\ts2$.
     
     \item For each projection $\pi^1_{I\times J}:I\times J \to I$ in $\C$, the monotone function 
 $\T(\pi^1_{I,J}):\T(I) \to \T(I\times J)$ has both a left adjoint $(\exists J)_I$ and a right adjoint $(\forall J)_I$ which are defined by:
 
 $$(\forall J)_I \big(\left[(a_{ij})_{i,j\in I\times J}\right]\big) \defeq \big[(\bigfa_{j\in J} a_{ij})_{i\in I} \big]\qquad\qquad     
  (\exists J)_I \big(\left[(a_{ij})_{i,j\in I\times J}\right]\big) \defeq \big[(\bigex_{j\in J} a_{ij})_{i\in I} \big]$$

  We only give the case of $\forall$, the case for $\exists$ is easier (it corresponds to \turl{tripos_ex1}{this} and \turl{tripos_ex2}{this} Coq lemmas).
  We need to show that for any $[(b_{ij})_{(i,j)\in I\times J}]\in\T(I\times J)$ and for any  $[(a_{i})_{i\in I}]$, we have:
      $$ [(a_{i})_{(i,j)\in I\times J}] \oldleq_{\sep[I\times J]} [(b_{ij})_{(i,j)\in I}]
      \quad \Leftrightarrow \quad
        [(a_i)_{i\in I}] \oldleq_{\sep[I]} \big[(\bigfa_{j\in J} b_{ij})_{i\in I} \big]=\big[(\neg \msup_{j\in J} \neg b_{ij})_{i\in I} \big]$$
  Let us fix some $[a]$ and $[b]$ as above.
  
  From \turl{tripos_fa1}{left to right}, assume that for all $i\in I$, $j\in J$, 
  $a_{ij} \vdash b_i$, we want to prove that $\forall i\in I$, we have  $a_i \vdash \neg \msup_{j\in J} \neg b_{ij}$.
  We first show that for any $a,b,c$, the following rule is \turl{S4d_n}{valid} (it mainly amount to $\ts4$):
  $$\infer{\neg(c\tensor a)\in\sep}{a\vdash b & \neg(c\tensor b)\in\sep}$$
  and prove instead that $\neg\neg \msup_{j\in J} \neg b_{ij} \vdash \msup_{j\in J} \neg b_{ij}$ 
  and $\neg(a_i\tensor \msup_{j\in J} \neg b_{ij})\in\sep$.
  The former amount to $\neg\neg a \vdash a$ while we can use commutation rule on the latter to rewrite it as:
  $\minf_{j\in J}\neg(a_i\tensor \neg b_{ij})\in\sep$ which follows from the assumption.

  From \turl{tripos_fa2}{right to left}, the processus is almost the same and relies on the fact 
  we have for all $i\in I$, $j\in J$, $(a_i \art b_ij) \in\sep$ if and only if
  for all $i\in I$, $\minf_j (a_i \art b_ij) \in\sep$ if and only if 
  for all $i\in I$, $(a_i \tensor \msup_j \neg b_ij) \in\sep$. We then use the same lemma with the reverse law $a\vdash \neg\neg a$.

  \item These adjoints clearly satisfy the Beck-Chevaley condition as in the implicative cases.
  \item Finally, we define $\Prop \defeq \A$ and verify that $\trth\defeq [\id_\A] \in \T(\Prop)$ is a generic predicate,
  as in the implicative case.
      \end{itemize}

\end{proof}

\newpage
\setCoqFilename{ImpAlg.Duality}
\section{The duality of computation}
\label{a:du}
\begin{proposition}
Let $(\A,\leq,\parr,\neg)$ be a disjunctive structure.
Let us define:
$$\begin{array}{l@{\qquad\quad}l@{\qquad\quad}l}
 \text{\textbullet}~\A^\tensor \defeq \A^\parr   & \text{\textbullet}~\minf^{\!\!\tensor}\defeq \msup^\parr  & \text{\textbullet}~a \tensor b \defeq a \parr b\qquad\qquad \\
 \text{\textbullet}~a \revord b \defeq b \leq a  & \text{\textbullet}~\msup^\tensor\defeq \minf^{\!\!\parr}  & \text{\textbullet}~\neg a ~~\defeq~ \neg a
\end{array} 
 $$
then $(\A^\tensor,\revord,\tensor,\neg)$ is a conjunctive structure.
\end{proposition}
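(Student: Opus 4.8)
The plan is to verify that the five axioms in \Cref{def:conj_struct} hold for the tuple $(\A^\tensor,\revord,\tensor,\neg)$, using that $(\A^\parr,\leq,\parr,\neg)$ is a disjunctive structure (\Cref{def:dis_struct}). First I would recall the standard fact that reversing the order of a complete lattice yields a complete lattice in which the roles of $\minf$ and $\msup$ are swapped: with $a \revord b \defeq b \leq a$, the lattice $(\A^\tensor,\revord)$ is complete, and for any $B \subseteq \A$ we have $\msup^\tensor_{a \in B} a = \minf^\parr_{a \in B} a$ and $\minf^\tensor_{a \in B} a = \msup^\parr_{a \in B} a$. This is the only genuinely structural observation needed; once it is in place, each axiom of conjunctive structures translates directly into an axiom of disjunctive structures.

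The detailed steps are then routine. For axiom~1 (antitonicity of $\neg$): if $a \revord a'$, then $a' \leq a$ in $\A^\parr$, so by axiom~1 of disjunctive structures $\neg a \leq \neg a'$, i.e.\ $\neg a' \revord \neg a$; thus $a \revord a'$ implies $\neg a' \revord \neg a$, as required. For axiom~2 (covariance of $\tensor$): if $a \revord a'$ and $b \revord b'$, then $a' \leq a$ and $b' \leq b$, so by axiom~2 of disjunctive structures $a' \parr b' \leq a \parr b$, i.e.\ $a \tensor b \revord a' \tensor b'$. For axioms~3 and~4 (distribution of $\tensor$ over $\msup^\tensor$): using $\msup^\tensor_{b \in B} = \minf^\parr_{b \in B}$ and $a \tensor b = a \parr b$, axiom~3 of conjunctive structures ($\msup^\tensor_{b \in B}(a \tensor b) = a \tensor \msup^\tensor_{b \in B} b$) becomes exactly $\minf^\parr_{b \in B}(a \parr b) = a \parr \minf^\parr_{b \in B} b$, which is axiom~4 of disjunctive structures; similarly conjunctive axiom~4 becomes disjunctive axiom~3. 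For axiom~5 ($\neg \msup^\tensor_{a \in B} a = \minf^\tensor_{a \in B} \neg a$): rewriting both sides via the swap gives $\neg \minf^\parr_{a \in B} a = \msup^\parr_{a \in B} \neg a$, which is exactly axiom~5 of disjunctive structures.

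There is no real obstacle here; the proof is a direct transport along the order-reversing bijection, and the only point requiring a moment's care is bookkeeping the swap $\minf \leftrightarrow \msup$ consistently so that the distributivity axioms~3--4 and the De Morgan axiom~5 land on the right disjunctive counterparts. In the write-up I would simply state the lattice-reversal fact, then dispatch the five axioms in one short paragraph, noting that axioms~3 and~4 are obtained from each other's disjunctive analogues by the swap. (The dual statement, \durl{}{Proposition for $\Ltens \to \Lpar$}, is proved identically by the same argument run in the opposite direction, so one could even factor both into a single lemma about the order-reversing involution on these structures.)

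\begin{proof}
Reversing the order of a complete lattice yields a complete lattice in which meets and joins are exchanged; thus $(\A^\tensor,\revord)$ is a complete lattice with $\msup^\tensor = \minf^\parr$ and $\minf^\tensor = \msup^\parr$. We check the axioms of \Cref{def:conj_struct}.
\emph{(1)} If $a\revord a'$ then $a'\leq a$, so $\neg a\leq \neg a'$ by \Cref{def:dis_struct}(1), i.e.\ $\neg a'\revord\neg a$.
\emph{(2)} If $a\revord a'$ and $b\revord b'$ then $a'\leq a$ and $b'\leq b$, so $a'\parr b'\leq a\parr b$ by \Cref{def:dis_struct}(2), i.e.\ $a\tensor b\revord a'\tensor b'$.
\emph{(3)} $\msup^\tensor_{b\in B}(a\tensor b)=\minf^\parr_{b\in B}(a\parr b)=a\parr\minf^\parr_{b\in B}b=a\tensor\msup^\tensor_{b\in B}b$, using \Cref{def:dis_struct}(4).
\emph{(4)} $\msup^\tensor_{b\in B}(b\tensor a)=\minf^\parr_{b\in B}(b\parr a)=(\minf^\parr_{b\in B}b)\parr a=(\msup^\tensor_{b\in B}b)\tensor a$, using \Cref{def:dis_struct}(3).
\emph{(5)} $\neg\msup^\tensor_{a\in B}a=\neg\minf^\parr_{a\in B}a=\msup^\parr_{a\in B}\neg a=\minf^\tensor_{a\in B}\neg a$, using \Cref{def:dis_struct}(5).
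Hence $(\A^\tensor,\revord,\tensor,\neg)$ is a conjunctive structure.
\end{proof}
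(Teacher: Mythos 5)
Your proof is correct and takes essentially the same approach as the paper, which records the order-reversal fact ($\minf$ and $\msup$ swap) and then notes that each conjunctive axiom is trivially the corresponding disjunctive axiom read in the reversed order. Your write-up is merely more explicit about which disjunctive axiom (3, 4, or 5) each conjunctive axiom lands on, and that bookkeeping is accurate.
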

\begin{proof}[\turl{PS_TS}{Proof}]
We check that for all $a, a', b, b'\in\A$ and for all subsets $A\subseteq \A$, we have:
 \begin{itemize}
\item[\turl{rev_tneg_mon}{1.}~]   If $a \revord  a'$ then $\neg a' \revord  \neg a$ \hfill(Variance) 
\item[\turl{rev_tensor_mon}{2.}~] If $a \revord  a'$ and $b \revord b'$ then $a \tensor b \revord  a' \tensor b'$.\hfill (Variance) 
\item[\turl{rev_tensor_join_l}{3.}~]$(\minf^{\!\!\tensor}_{a\in A} a) \tensor b = \minf^{\!\!\tensor}_{a\in A} (a \tensor b)$ and 
$b \tensor (\minf^{\!\!\tensor}_{a\in A} a) = \minf^{\!\!\tensor}_{a\in A} (b \tensor a)$\hfill (Distributivity) 
\item[\turl{rev_tneg_join}{4.}~]$\neg (\msup^\tensor_{a\in A} a) = \minf^{\!\!\tensor}_{a\in A} (\neg a) $\hfill (Commutation)
\end{itemize}
All the proof are trivial from the corresponding properties of disjunctive structures. 
\end{proof}

\begin{proposition}
Let $(\A,\leq,\tensor,\neg)$ be a conjunctive structure.
Let us define:
$$\begin{array}{l@{\qquad\quad}l@{\qquad\quad}l}
 \text{\textbullet}~~\A^\parr \defeq \A^\tensor   
 & \text{\textbullet}~~\minf^{\!\!\parr}\defeq \msup^\tensor  
 & \text{\textbullet}~~a \parr b \defeq a \tensor b\qquad\qquad \\
 \text{\textbullet}~~a \revord b \defeq b \leq a  
 & \text{\textbullet}~~ \msup^\parr\defeq \minf^{\!\!\tensor} 
 & \text{\textbullet}~~\neg a ~~\defeq~ \neg a
\end{array} 
$$
then $(\A^\tensor,\revord,\tensor,\neg)$ is a conjunctive structure.
\end{proposition}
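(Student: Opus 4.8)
This is the exact dual of the preceding proposition, so the plan is to verify the four axioms of Definition~\ref{def:dis_struct} for the quadruple $(\A^\parr,\revord,\parr,\neg)$ by reading off each one from the corresponding axiom of the conjunctive structure $(\A,\leq,\tensor,\neg)$ under the order-reversal $a\revord b \defeq b\leq a$. The guiding observation is that reversing the order in a complete lattice swaps $\minf$ and $\msup$: one has $\minf^{\!\!\parr} = \msup^\tensor$ and $\msup^\parr = \minf^{\!\!\tensor}$, so that the meet-commutations required for disjunctive structures become precisely the join-commutations available in conjunctive structures.

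\textbf{Key steps, in order.} First I would unfold the definitions and note that, since $(\A,\leq)$ is a complete join-semilattice, $(\A^\parr,\revord)$ is again a complete lattice, with infima and suprema exchanged as above. Then, for axiom~(1) (contravariance of $\neg$): if $a\revord a'$, i.e.\ $a'\leq a$, then axiom~(1) of the conjunctive structure gives $\neg a\leq \neg a'$, i.e.\ $\neg a'\revord \neg a$, as required. For axiom~(2) (covariance of $\parr$): if $a\revord a'$ and $b\revord b'$, i.e.\ $a'\leq a$ and $b'\leq b$, then conjunctive axiom~(2) gives $a'\tensor b'\leq a\tensor b$, i.e.\ $a\parr b\revord a'\parr b'$. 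For axioms~(3) and~(4) (the two $\minf^{\!\!\parr}$-distributivity laws of $\parr$): translating $\minf^{\!\!\parr}$ to $\msup^\tensor$ and $\parr$ to $\tensor$, these are exactly conjunctive axioms~(4) and~(3) (the $\msup$-distributivity of $\tensor$ on the left and right). For axiom~(5) ($\neg \minf^{\!\!\parr}_{a} a = \msup^\parr_{a}\neg a$): translating $\minf^{\!\!\parr}$ to $\msup^\tensor$ and $\msup^\parr$ to $\minf^{\!\!\tensor}$, this is precisely conjunctive axiom~(5), $\neg\msup^\tensor_{a} a = \minf^{\!\!\tensor}_{a}\neg a$. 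Each verification is a one-line rewriting; no genuine computation is involved.

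\textbf{Main obstacle.} There is essentially no mathematical obstacle; the only thing to be careful about is bookkeeping — making sure the swap $\minf \leftrightarrow \msup$ is applied consistently and that indices in the distributivity laws land on the correct side (left vs.\ right), so that conjunctive~(3) feeds disjunctive~(4) and vice versa. As with the dual proposition, I would simply remark that all verifications are immediate from the corresponding properties of conjunctive structures, and point to the formalised proof \turl{TS_PS}{\texttt{TS\_PS}} in the Coq development.
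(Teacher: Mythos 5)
Your proof is correct and takes essentially the same route as the paper: the paper's own proof simply lists the four disjunctive-structure axioms and observes that each is the order-reversed restatement of the corresponding conjunctive axiom, with $\minf$ and $\msup$ exchanged, exactly as you do. You also correctly read the intended conclusion as ``disjunctive structure'' (the word ``conjunctive'' in the statement is a typo, as the main-body version of this proposition confirms).
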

\begin{proof}[\turl{TS_PS}{Proof}]
We check that for all $a, a', b, b'\in\A$ and for all subsets $A\subseteq \A$, we have:
 \begin{itemize}
\item[\turl{rev_pneg_mon}{1.}~] If $a \revord  a'$ then $\neg a' \revord  \neg a$. \hfill(Variance) 
\item[\turl{rev_parr_mon}{2.}~] If $a \revord  a'$ and $b \revord b'$ then $a \parr b \revord  a' \parr b'$.\hfill (Variance) 
\item[\turl{rev_parr_join_l}{3.}~]$ (\minf^{\!\!\parr}_{a\in A} a) \parr b = \minf^{\!\!\parr}_{a\in A} (a \parr b)$ ~~and~~
$a \parr (\minf^{\!\!\parr}_{b\in B} b)  = \minf^{\!\!\parr}_{b\in B} (a \parr b)            $\hfill (Distributivity) 
\item[\turl{rev_pneg_meet}{4.}~]$\neg (\minf^{\!\!\parr} A) = \msup^\parr_{a\in A} (\neg a)$\hfill (Commutation)
\end{itemize}
All the proof are trivial from the corresponding properties of conjunctive structures. 
\end{proof}

\begin{theorem}
\label{a:p:ta_pa}
 Let $(\A^\tensor,\tsep)$ be a conjunctive algebra, the set:
 $$\psep ~\defeq ~\neg^{-1}(\tsep) = \{a\in\A : \neg a\in \tsep\}$$
 is a valid separator for the dual disjunctive structure $\A^\parr$. 
 \end{theorem}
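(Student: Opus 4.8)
The plan is to unfold the definition of the dual disjunctive structure $\A^\parr$ and to check the three clauses in the definition of a disjunctive separator for $\psep$, each of which will reduce to the corresponding clause for $\tsep$ once the order reversal is undone. Recall that the underlying lattice of $\A^\parr$ is the one of $\A^\tensor$ with the order reversed, so $\minf^\parr = \msup^\tensor$ and $\msup^\parr = \minf^\tensor$; moreover $a \parr b = a \tensor b$, the negation is unchanged, and hence the induced implicative arrow on $\A^\parr$ is $a \to b = \neg a \parr b = \neg a \tensor b$. We use only conditions $(1)$, $(2)$, $(3)$ of the definition of a conjunctive separator, together with the commutation of $\neg$ with arbitrary joins in a conjunctive structure.

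For \emph{upward closure}: if $a \in \psep$ and $a \revord b$, then $\neg a \in \tsep$ and $b \leq a$ in $\A^\tensor$; by anti-monotonicity of $\neg$ we get $\neg a \leq \neg b$, and upward closure of $\tsep$ gives $\neg b \in \tsep$, i.e.\ $b \in \psep$. For \emph{modus ponens}: if $a \to b \in \psep$ and $a \in \psep$, then $\neg(\neg a \tensor b) \in \tsep$ and $\neg a \in \tsep$, so condition~$(3)$ of conjunctive separators applied with the pair $(\neg a, b)$ yields $\neg b \in \tsep$, i.e.\ $b \in \psep$.

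For the \emph{combinators}, the key computation is that, for each $i \in \{1,\dots,5\}$, the disjunctive combinator $\ps{i}$ \emph{evaluated in $\A^\parr$} satisfies $\neg \ps{i} = \ts{i}$. Indeed, substituting $\parr \mapsto \tensor$, $x \to y \mapsto \neg x \tensor y$ and $\minf^\parr \mapsto \msup^\tensor$ in the defining expression of $\ps{i}$ turns it into $\msup^\tensor$ (over the same indices) of exactly the bracketed term occurring inside $\ts{i}$; applying the commutation rule $\neg \msup_a a = \minf_a \neg a$ then gives $\neg \ps{i} = \minf^\tensor$ of $\neg$ of that term, which is $\ts{i}$ verbatim. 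Since $\ts{i} \in \tsep$ by condition~$(2)$, we conclude $\neg \ps{i} \in \tsep$, i.e.\ $\ps{i} \in \psep$. The only delicate point — and thus the main bookkeeping obstacle — is precisely this last computation: one must keep firmly in mind that the meet of $\A^\parr$ is the join of $\A^\tensor$, so that each $\ps{i}$ computed in the dual structure is genuinely a join, which then collapses under the outer negation onto the meet defining $\ts{i}$; everything else is immediate.
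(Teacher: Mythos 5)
Your proof is correct and follows essentially the same route as the paper: unfolding the dual structure and transferring each separator condition through $\neg$, using the commutation of $\neg$ with joins for the combinators and condition (3) of conjunctive separators for modus ponens. You are in fact more explicit than the paper's text, which only sketches the modus ponens case (via internal contraposition and the deduction axiom) and leaves the combinator verification --- your identity $\neg\ps{i}=\ts{i}$ --- to the Coq formalization.
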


 \begin{theorem}
\label{a:p:pa_kta}
 Let $(\A^\parr,\psep)$ be a disjunctive algebra.
 The set:
 $$\tsep ~\defeq ~\neg^{-1}(\psep) = \{a\in\A : \neg a\in \psep\}$$
 is a classical separator for the dual conjunctive structure $\A^\tens$. 
\end{theorem}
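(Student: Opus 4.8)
The plan is to check directly, one by one, the five conditions defining a classical conjunctive separator for the set $\tsep=\{a\in\A:\neg a\in\psep\}$ on the dual conjunctive structure $\A^\tensor$ (the one whose order is $\revord$, whose product is the disjunctive $\parr$ and whose negation is $\neg$, so that meets and joins get interchanged); by the construction of $\A^\tensor$ recalled just above it is already a conjunctive structure, so only the separator clauses are at stake. \emph{Upward closure} is immediate: if $a\in\tsep$ and $a\revord b$, i.e.\ $b\leq a$ in $\A^\parr$, then contravariance of $\neg$ (axiom~1 of disjunctive structures) gives $\neg a\leq\neg b$, and upward closure of $\psep$ turns $\neg a\in\psep$ into $\neg b\in\psep$. \emph{Classicality}, $\neg\neg a\in\tsep\Rightarrow a\in\tsep$, unfolds to $\neg\neg\neg a\in\psep\Rightarrow\neg a\in\psep$, which is the double negation elimination $\neg\neg x\vdash_\psep x$ of \Cref{p:dne} taken at $x=\neg a$, followed by closure of $\psep$ under modus ponens; no extra hypothesis is needed here, since a disjunctive separator already behaves classically.

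The combinator clause is the crux. I would first unfold the $\arp$ appearing in $\ps1,\dots,\ps5$ through $x\arp y=\neg x\parr y$, so that each $\ps i$ becomes a meet (over $\A$, over pairs, or over triples) of a plain $\parr/\neg$ expression; by construction the conjunctive combinator $\ts i$ is obtained from that expression by replacing every $\parr$ with $\tensor$ and leaving $\neg$ untouched. Since in $\A^\tensor$ one has $\tensor=\parr$ and $\minf^\tensor=\msup^\parr$, and since the commutation law $\msup^\parr_a\neg x_a=\neg(\minf^\parr_a x_a)$ holds in $\A^\parr$ (axiom~5), a direct computation shows that $\ts i$, evaluated in $\A^\tensor$, equals the negation of $\ps i$ evaluated in $\A^\parr$, for $i=1,\dots,5$. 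Hence $\ts i\in\tsep$ amounts to $\neg\neg\ps i\in\psep$, which follows from $\ps i\in\psep$ (clause~(2) for the disjunctive separator) together with the double negation introduction $\ps i\vdash_\psep\neg\neg\ps i$ of \Cref{p:dne}, once more by modus ponens. Read the other way round, the same computation says that the negation of $\ps i$ (in $\A^\parr$) equals $\ts i$; this disposes of the combinator clause of the companion statement \Cref{a:p:ta_pa}, where $\ts i$ lies in the given conjunctive separator by hypothesis — so in that direction the dual separator comes out with no double negations at all.

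There remain the two deduction clauses. For clause~(3) — $\neg(a\tensor b)\in\tsep$ and $a\in\tsep$ should give $\neg b\in\tsep$ — one uses $\tensor=\parr$ together with the transparency of $\neg\neg$ under $\psep$ (\Cref{p:dne}) to rewrite the hypotheses as $a\parr b\in\psep$ and $\neg a\in\psep$ and the goal as $b\in\psep$; one then proves the entailment $a\parr b\vdash_\psep\neg a\arp b$, i.e.\ $a\parr b\vdash_\psep\neg\neg a\parr b$, by transitivity from commutativity of $\parr$ ($\ps3$) and from the monotonicity $c\parr a\vdash_\psep c\parr b$ of $\parr$ (a consequence of $\ps4$) applied to $a\vdash_\psep\neg\neg a$, and concludes with two modus ponens, first against $a\parr b\in\psep$ and then against $\neg a\in\psep$. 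For clause~(4) — $a,b\in\tsep$ should give $a\tensor b\in\tsep$ — the hypotheses read $\neg a,\neg b\in\psep$ and the goal $\neg(a\parr b)\in\psep$; here I would invoke the De Morgan equivalence $\neg(a\parr b)\seq\neg a\times\neg b$, a consequence of the equivalence $a\parr b\seq a+b$ (\Cref{p:dne}), of the identification $\neg x\seq x\arp\bot$, and of the Heyting adjunction of the classical implicative algebra underlying $\A^\parr$ (\Cref{thm:pa_ia}), together with the closure of $\psep$ under products, which holds because $\psep$ contains every closed $\lambda$-term.

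The step I expect to be the main obstacle is the bookkeeping: lining up the outer $\minf/\msup$ with the inner $\parr/\tensor$ precisely enough that the identity relating $\ts i$ (computed in $\A^\tensor$) to the negation of $\ps i$ (computed in $\A^\parr$) holds on the nose, and, in clause~(4), pushing the De Morgan equivalence through at the level of the separator rather than merely of the structure. Neither is conceptually deep — both rest on \Cref{p:dne}, on the generalised modus ponens of \Cref{lm:mod_pon_inf}, and on the fact (\Cref{thm:pa_ia}) that a disjunctive algebra is a classical implicative algebra — but they demand care with the definitional unfoldings and with the direction of each entailment.
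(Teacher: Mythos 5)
Your proof is correct and follows essentially the same route as the paper's (which only sketches the argument via the transfer $a\vdash_{\psep}b\Leftrightarrow\neg a\vdash_{\tsep}\neg b$ and works out one deduction clause): a clause-by-clause verification resting on the double-negation laws of \Cref{p:dne}, closure of $\psep$ under modus ponens, and the fact that a disjunctive algebra is a classical implicative algebra. Your explicit identity $\ts{i}=\neg\,\ps{i}$ in the dual structure, obtained from the commutation $\msup\neg=\neg\minf$, is exactly the duality the paper relegates to a footnote, and it correctly reduces the combinator clause to $\neg\neg\,\ps{i}\in\psep$.
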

\begin{proof}
\trurl{PA_KTA}{Both} \trurl{TA_PA}{proofs} rely on the fact that:
$$ a \vdash_{\tsep} b \Leftrightarrow \neg a \vdash_{\psep} \neg b
~~~~\text{and}~~~~
a \vdash_{\psep} b \Leftrightarrow \neg a \vdash_{\tsep} \neg b$$
In particular, to prove that the modus ponens is valid when passing from 
$\A^\tensor$ to $\A^\parr$, we need to show that if
$a,a\to b\in\neg^{-1}(\tsep)$, then $b\in\neg^{-1}(\tsep)$ \emph{i.e.}
$\neg b\in\tsep$.
By hypothesis, we thus have 
that $\neg a \to \neg b \in\tsep$, from which we deduce that $\neg(b\tensor \neg a)\in\tsep$
(by internal contraposition). Using the deduction axiom (since $\neg a \in \tsep$), we finally get $\neg b \in\tsep$.
\end{proof}

}
\end{document}